\theoremstyle{plain}
\newtheorem{theorem}{Theorem}[section]
\newtheorem{lemma}[theorem]{Lemma}
\newtheorem{proposition}[theorem]{Proposition}
\newtheorem{claim}[theorem]{Claim}
\newtheorem{corollary}[theorem]{Corollary}
\theoremstyle{definition}
\newtheorem{definition}[theorem]{Definition}
\newtheorem{example}[theorem]{Example}
\declaretheorem[name=Theorem,numberlike=theorem]{restate-theorem}
\declaretheorem[name=Theorem,unnumbered]{restate-theorem*}
\newcommand{\cA}{\ensuremath{\mathcal{A}}}
\newcommand{\cB}{\ensuremath{\mathcal{B}}}
\newcommand{\cC}{\ensuremath{\mathcal{C}}}
\newcommand{\cD}{\ensuremath{\mathcal{D}}}
\newcommand{\cF}{\ensuremath{\mathcal{F}}}
\newcommand{\cH}{\ensuremath{\mathcal{H}}}
\newcommand{\cK}{\ensuremath{\mathcal{K}}}
\newcommand{\cP}{\ensuremath{\mathcal{P}}}
\newcommand{\cT}{\ensuremath{\mathcal{T}}}
\newcommand{\cX}{\ensuremath{\mathcal{X}}}
\newcommand{\bE}{\ensuremath{\mathbb{E}}}
\newcommand{\bN}{\ensuremath{\mathbb{N}}}
\newcommand{\bP}{\ensuremath{\mathbb{P}}}
\newcommand{\bR}{\ensuremath{\mathbb{R}}}
\newcommand{\bZ}{\ensuremath{\mathbb{Z}}}
\newcommand{\Ex}[1]{\bE \left[ #1 \right]}
\newcommand{\Exu}[2]{\underset{#1} \bE \left[ #2 \right] }
\renewcommand{\Pr}[1]{\bP \left[ #1 \right]} 
\newcommand{\Pru}[2]{\underset{ #1 }\bP \left[ #2 \right]}
\renewcommand{\exp}[1]{\mathrm{exp}\left( #1 \right)}
\newcommand{\floor}[1]{\ensuremath{\lfloor #1 \rfloor}}
\newcommand{\ceil}[1]{\ensuremath{\lceil #1 \rceil}}
\newcommand{\pmset}{\{\pm 1\}}
\newcommand{\zo}{\{0,1\}}
\newcommand{\define}{:=}
\renewcommand{\exp}[1]{\mathrm{exp}\left( #1 \right)}
\DeclareMathOperator{\supp}{supp}   
\DeclareMathOperator{\poly}{poly}
\DeclareMathOperator{\sign}{sign}
\newcommand{\dist}{\mathsf{dist}}
\newcommand{\ind}[1]{\mathbf{1} \left[ #1 \right] }
\newcommand{\inn}[1]{\langle #1 \rangle}
\newcommand{\unif}{\mathsf{unif}}
\newcommand{\bbs}{\mathsf{bbs}}
\newcommand{\block}{\mathsf{block}}
\newcommand{\coarse}{\mathsf{coarse}}
\newcommand{\blockpoint}{\mathsf{blockpoint}}
\newcommand{\TV}{\mathsf{TV}}
\title{Downsampling for Testing and Learning in Product Distributions}
\date{}
\author{Nathaniel Harms\thanks{Partly supported by NSERC\@. Much of this research was done while the
author was visiting the National Institute of Informatics, Japan.} \\ University of
Waterloo, Canada \\ \texttt{nharms@uwaterloo.ca}
\and Yuichi Yoshida\thanks{Supported by JSPS KAKENHI Grant Number JP17H04676 and 18H05291.
} \\ National Institute of Informatics, Japan \\
\texttt{yyoshida@nii.ac.jp}}
\begin{document}

\maketitle

\begin{abstract}
We study distribution-free property testing and learning problems where the unknown probability
distribution is a product distribution over $\bR^d$. For many important classes of functions,
such as intersections of halfspaces, polynomial threshold functions, convex sets, and
$k$-alternating functions, the known algorithms either have complexity that depends on the support
size of the distribution, or are proven to work only for specific examples of product distributions.
We introduce a general method, which we call \emph{downsampling}, that resolves these issues.
Downsampling uses a notion of ``rectilinear isoperimetry'' for product distributions, which further
strengthens the connection between isoperimetry, testing and learning. Using this technique, we
attain new efficient distribution-free algorithms under product distributions on $\bR^d$:
\begin{enumerate}
\item A simpler proof for non-adaptive, one-sided monotonicity testing of functions $[n]^d \to \zo$,
and improved sample complexity for testing monotonicity over unknown product distributions, from
$O(d^7)$ [Black, Chakrabarty, \& Seshadhri, SODA 2020] to $\widetilde O(d^3)$.
\item Polynomial-time agnostic learning algorithms for functions of a
constant number of halfspaces, and constant-degree polynomial threshold functions;
\item An $\exp{O(d\log(dk))}$-time agnostic learning algorithm, and an
$\exp{O(d\log(dk))}$-sample tolerant tester, for functions of $k$ convex sets; and a
$2^{\widetilde O(d)}$ sample-based one-sided tester for convex sets;
\item An $\exp{\widetilde O(k\sqrt{d})}$-time agnostic learning algorithm for
$k$-alternating functions, and a sample-based tolerant tester with the same complexity.
\end{enumerate}
\end{abstract}

\thispagestyle{empty}
\setcounter{page}{0}
\newpage

\section{Introduction}

In property testing and learning, the goal is to design algorithms that use as little information as
possible about the input while still being correct (with high probability). This includes using as
little information as possible about the probability distribution against which correctness is
measured. Information about the probability distribution could be in the form of guarantees on this
distribution (e.g.~it is guaranteed to be uniform, or Gaussian), or in the form of samples from the
distribution. So we want to minimize the requirements on this distribution, as well as the number of
samples used by the algorithm.

Progress on high-dimensional property testing and learning problems is usually made by studying
algorithms for the uniform distribution over the hypercube $\pmset^d$, or the standard Gaussian
distribution over $\bR^d$, as the simplest case. For example, efficiently learning
intersections of halfspaces is a major open problem in learning theory \cite{DKS18,KOS04}, and
progress on this problem has been made by studying the uniform distribution over the hypercube
$\pmset^d$ and the Gaussian as special cases \cite{KKMS08,KOS08,Vem10a}.
%
%
Another important example is the class of degree-$k$ polynomial threshold functions (PTFs). Unlike
intersections of halfspaces, these can be efficiently learned in the PAC model~\cite{KOS04}, but
\emph{agnostic}\footnote{See the Glossary in \cref{glossary} for standard definitions in
learning theory and property testing.} learning is more challenging. Again, progress has been made
by studying the hypercube \cite{DHK+10}. An even more extreme example is the class of convex sets,
which are not learnable in the distribution-free PAC model, because they have infinite VC dimension,
but which become learnable under the Gaussian \cite{KOS08}. The uniform distribution over the
hypercube and the Gaussian are both examples of \emph{product distributions}, so the next natural
question to ask is, can these results be generalized to any \emph{unknown} product distribution?
A partial answer was given by Blais, O'Donnell, \& Wimmer \cite{BOW10} for some of these classes; in
this paper we resolve this question.

Similar examples appear in the property testing literature. Distribution-free property testing and
testing functions with domain $\bR^d$ are emerging trends in the field (e.g.~\cite{BBBY12, DMN19,
Har19, CDS20, FY20, BFH21}). Testing monotonicity is one of the most well-studied problems in
property testing, and recent work \cite{BCS20} has extended this study to product distributions over
domain $\bR^d$. Work of Chakrabarty \& Seshadhri \cite{CS16}, Khot, Minzer, \& Safra \cite{KMS18},
and Black, Chakrabarty, \& Seshadhri \cite{BCS18,BCS20} has resulted in efficient $o(d)$-query
algorithms for the hypercube $\pmset^d$ \cite{KMS18} and the hypergrid $[n]^d$. Black, Chakrabarty,
\& Seshadhri \cite{BCS20} showed that testing monotonicity over unknown product distributions on
$\bR^d$ could be done with $\widetilde O(d^{5/6})$ queries and $O(d^7)$ samples.  Their ``domain
reduction'' method is intricate and specialized for the problem of testing monotonicity.  We
improve\footnote{An early version of this paper proved a weaker result, with two-sided error and
worse sample complexity.} the sample complexity to $\widetilde O(d^3)$ using a much simpler proof.
We also generalize the testers of \cite{CFSS17,CGG+19} for convex sets and $k$-alternating
functions, respectively, and provide new testers for arbitrary functions of convex sets.

This paper provides a general framework for designing distribution-free testing and learning
algorithms under product distributions on $\bR^d$, which may be finite or continuous. An algorithm
is distribution-free under product distributions if it does not require any prior knowledge of the
probability distribution, except the guarantee that it is a product distribution. The technique in
this paper, which we call \emph{downsampling}, improves upon previous methods (in particular,
\cite{BCS20,BOW10}), in a few ways. It is more general and does not apply only to a specific
type of algorithm \cite{BOW10} or a specific problem \cite{BCS20}, and we use it to obtain many
other results.  It is conceptually simpler. And it allows quantitative improvements over both
\cite{BOW10} and \cite{BCS20}.

\paragraph*{Organization.}
In \cref{section:results}, we describe the main results of this paper
in context of the related work. In \cref{section:techniques}, we briefly describe the main
techniques in the paper. \cref{section:downsampling} presents the definitions and lemmas required by
the main results. The following sections present the proofs of the results. For simplicity, the main
body of the paper handles only continuous product distributions; finite distributions are handled in
\cref{section:discrete distributions}.
Definitions of standard terminology in property testing and machine learning can be found in the
Glossary in Section~\ref{glossary}.

\subsection{Results}
\label{section:results}

See \cref{table:testing} for a summary of our results on property testing, and \cref{table:learning}
for a summary of our results on learning.

\subsubsection{Testing Monotonicity}

Testing monotonicity is the problem of testing whether an unknown function $f : X \to \zo$ is
monotone, where $X$ is a partial order. It is one of the most commonly studied problems in the field
of property testing. Previous work on this problem has mostly focused on uniform probability
distributions (exceptions include \cite{AC06,HK07,CDJS17,BFH21}) and finite domains.  However, there
is growing interest in property testing for functions on domain $\bR^d$
(\cite{BBBY12,DMN19,Har19,CDS20,FY20,BFH21}) and \cite{BCS20} generalized the problem to this
domain.

Testing monotonicity under product distributions has been studied a few times.
Ailon \& Chazelle \cite{AC06} gave a distribution-free monotonicity tester for real-valued functions
under product distributions on $[n]^d$, with query complexity $O(\tfrac{1}{\epsilon} d2^d \log n)$.
Chakrabarty \emph{et al.}~\cite{CDJS17} improved this to $O(\tfrac{1}{\epsilon} d \log n)$ and gave
a matching lower bound. This lower bound applies to the \emph{real-valued} case. For the
\emph{boolean-valued} case, monotonicity testers under the uniform distribution on $\pmset^d$
\cite{CS16,KMS18} and $[n]^d$ \cite{BCS18,BCS20} are known with query complexity $o(d)$. In
\cite{BCS20}, an $o(d)$-query tester was given for domain $\bR^d$. That paper showed
that there is a one-sided, non-adaptive, distribution-free monotonicity tester under product
distributions on $\bR^d$, with query complexity
$O\left(\frac{d^{5/6}}{\epsilon^{4/3}}\poly\log(d/\epsilon)\right)$ and sample complexity
$O((d/\epsilon)^7)$. In this paper we improve the sample complexity to $\widetilde
O((d/\epsilon)^3)$, while greatly simplifying the proof.

\begin{restatable}{restate-theorem}{thmmonotonicity}\label{thm:monotonicity}
There is a one-sided, non-adaptive $\epsilon$-tester for monotonicity of functions $\bR^d \to \zo$
that is distribution-free under (finite or continuous) product distributions, using
\[
O\left(\frac{d^{5/6}}{\epsilon^{4/3}}\poly\log(d/\epsilon)\right)
\]
queries and $O(\frac{d^3}{\epsilon^3}\log(d/\epsilon))$ samples.
\end{restatable}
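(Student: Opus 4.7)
The plan is to reduce the problem to the known monotonicity tester on the hypergrid $[n]^d$ of Black, Chakrabarty \& Seshadhri, whose query complexity is $O(d^{5/6}/\epsilon^{4/3}\,\poly\log(dn/\epsilon))$ and in particular depends only polylogarithmically on $n$. Concretely, I would pick $n = \poly(d/\epsilon)$ and perform the following downsampling. Given samples from the unknown product measure $\mu = \mu_1\times\cdots\times\mu_d$ on $\bR^d$, first use $\widetilde O(d^3/\epsilon^3)$ samples to estimate, for each coordinate $i\in[d]$, the $1/n$-quantiles of the marginal $\mu_i$, by a DKW-style uniform convergence bound combined with a union bound over the $d$ coordinates and $n$ quantiles. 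This yields a partition of $\bR^d$ into axis-aligned boxes (cells) indexed by $[n]^d$, each of approximately equal probability $1/n^d$ under $\mu$.

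Next, I would define a coarsened function $f_\coarse:[n]^d\to\zo$ by setting $f_\coarse(c) = f(X_c)$ where $X_c$ is a random point drawn from $\mu$ restricted to cell $c$. Crucially, the random points $\{X_c\}_{c\in[n]^d}$ should be coupled so that $c\le c'$ in $[n]^d$ implies $X_c\le X_{c'}$ coordinatewise in $\bR^d$; this is easily arranged by generating, for each cell in each coordinate slab, a random offset inside the slab independently across coordinates but shared across orthogonal positions. With this coupling, if $f$ is monotone then $f_\coarse$ is monotone, which preserves one-sided error. The BCS20 tester on $[n]^d$ is then simulated: every query it issues to $f_\coarse(c)$ is answered by one query to $f(X_c)$, giving total query complexity $O(d^{5/6}/\epsilon^{4/3}\,\poly\log(d/\epsilon))$ after absorbing $\log n = O(\log(d/\epsilon))$.

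The main step is the distance-preservation lemma: if $f$ is $\epsilon$-far from monotone under $\mu$, then $f_\coarse$ is $\Omega(\epsilon)$-far from monotone under the uniform distribution on $[n]^d$, with high probability over the coupling and the quantile estimates. This is where the \emph{rectilinear isoperimetry} of product distributions enters. The intuition is that any monotone function $g$ approximating $f_\coarse$ under the uniform measure on $[n]^d$ lifts to a monotone function $\tilde g$ on $\bR^d$ (defined by the cell containing each point) approximating $f$ under $\mu$, modulo an error term that counts mass lying in cells where $f$ and $f_\coarse$ disagree. Since the couplings preserve order, the disagreement mass is controlled by the total variation between $\mu$ and the idealized cell-uniform measure, which is made $O(\epsilon)$ by the quantile-estimation accuracy; any remaining loss is bounded by the boundary mass of monotone sets across rectilinear cuts, which rectilinear isoperimetry bounds by $O(\epsilon)$.

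The hard part will be making the isoperimetric estimate quantitatively tight enough that the coarsening costs only a constant factor in the distance parameter (so the $d^{5/6}/\epsilon^{4/3}$ rate is not degraded) and simultaneously choosing $n$ and the quantile accuracy $\alpha$ so that (i) the TV error from the product measure to the uniform-grid surrogate is $O(\epsilon)$, (ii) the isoperimetric correction at the grid scale is $O(\epsilon)$, and (iii) the sample complexity $\widetilde O(d/\alpha^2)$ per coordinate aggregates to $\widetilde O(d^3/\epsilon^3)$. The remaining work is bookkeeping: verifying that when the coupling is drawn jointly with the quantile estimates, the tester's failure probability over all three sources of randomness (samples, coupling, tester) can be boosted by a constant-factor increase in all parameters, and extending the argument to continuous $\mu$ via the usual quantile-based embedding into $[n]^d$.
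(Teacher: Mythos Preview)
Your high-level strategy --- estimate quantiles to build an $n\times\cdots\times n$ grid with $n=\Theta(d/\epsilon)$, define a coarsened function using one order-preserving representative per cell, and simulate the BCS hypergrid tester --- matches the paper. The coupling you describe (pick $p_{i,j}$ in each coordinate slab and set $X_c=(p_{1,c_1},\dotsc,p_{d,c_d})$) is exactly the paper's $\blockpoint$ map, and it does make $f_\coarse$ monotone whenever $f$ is, so one-sidedness is fine.

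The gap is in your distance-preservation step. You write that the disagreement between $f$ and its coarsening is ``bounded by the boundary mass of monotone sets across rectilinear cuts.'' But the relevant quantity is the block-boundary mass of the \emph{input function} $f$, not of monotone sets: if $g$ is the closest monotone function to $f_\coarse$ on $[n]^d$ and $\tilde g$ is its lift to $\bR^d$, then
\[
\dist_\mu(f,\tilde g)\;\le\;\dist_\mu(f,f^\coarse)+\dist_{\unif}(f_\coarse,g)+\|\block(\mu)-\unif\|_{\TV},
\]
so to conclude $\dist_{\unif}(f_\coarse,g)\ge\Omega(\epsilon)$ you would need $\dist_\mu(f,f^\coarse)$ small. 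That term is controlled by the fraction of cells on which $f$ is non-constant, i.e.\ by $\bbs(f,n)/n^d$, and for an arbitrary $f$ that is $\epsilon$-far from monotone this can be $\Omega(1)$. A concentration/union-bound argument over monotone $g$ does not rescue this either: there are doubly-exponentially many monotone $g$ on $[n]^d$, but the randomness in your coupling lives in only $nd$ independent coordinates $p_{i,j}$.

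The paper closes this gap not by proving your distance-preservation lemma but by adding two further one-sided tests. It takes the cell representatives to be the \emph{infimal} corners $\block^{-\downarrow}(v)$, defines a cell $v$ to be a \emph{boundary block} when $f(\block^{-\downarrow}(v))\ne f(\block^{-\uparrow}(v))$, and then tests (i) that $f$ agrees with its coarsening off boundary blocks, and (ii) that the indicator of boundary blocks is a ``diagonal'' function (at most one boundary block per diagonal chain in $[n]^d$), using a separate $O(\epsilon^{-1}\log^2(1/\epsilon))$-query tester. Both tests accept a monotone $f$ with probability $1$. If they pass, there are at most $O(\epsilon n^d)$ boundary blocks, which \emph{is} the bound you need on $\dist_\mu(f,f^\coarse)$; only then is the BCS tester invoked on the coarsened function (with boundary blocks zeroed out). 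In short, the paper does not prove that far-from-monotone implies coarsening is far-from-monotone; it instead \emph{tests} the premise that the coarsening is a faithful proxy, and this extra pair of tests is what your outline is missing.
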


The main result of \cite{BCS20} is a ``domain reduction'' theorem, allowing a change of domain from
$[n]^d$ to $[r]^d$ where $r = \poly(d/\epsilon)$; by applying this theorem together with their
earlier $\widetilde O(\tfrac{d^{5/6}}{\epsilon^{4/3}} \poly\log(dn))$-query tester for the uniform
distribution on $[n]^d$, they obtain a tester for monotone functions with query complexity
independent of $n$. Our result replaces this domain reduction method with a simpler and more general
2-page argument, and gives a different generalization to the distribution-free case. See
\cref{section:monotonicity} for the proofs.

\begin{table}[t]
\caption{Testing results.}\label{table:testing}
\begin{tabularx}{470pt}{ X | p{55pt} | p{60pt} | p{50pt} | X }
  & $\unif(\pmset^d)$ & $\unif([n]^d)$ & Gaussian & $\forall$ Products \\
\hline
1-Sided Testing Monotonicity \newline (Query model)
  & $\widetilde O\left(\frac{\sqrt d}{\epsilon^2}\right)$ \newline
                \cite{KMS18}
  & $\widetilde O\left(\frac{d^{5/6}}{\epsilon^{4/3}}\right)$ \newline
                 \cite{BCS20}
  & $\widetilde O\left(\frac{d^{5/6}}{\epsilon^{4/3}}\right)$ \newline
                 \cite{BCS20}
  & $\widetilde O\left(\frac{d^{5/6}}{\epsilon^{4/3}}\right)$ queries,\hfill \newline
    $\widetilde O\left(\left(\frac{d}{\epsilon}\right)^3\right)$ samples\hfill \newline
     (Thm.~\ref{thm:monotonicity}) \\
\hline
1-Sided Testing Convex Sets \newline (Sample model)
  & --
  & --
  & $\left(\frac{d}{\epsilon}\right)^{(1+o(1))d}$ \newline $2^{\Omega(d)}$ \newline
    \cite{CFSS17}
  & $\left(\frac{d}{\epsilon}\right)^{(1+o(1))d}$ (Thm.~\ref{thm:convex tester}) \\
\hline
Tolerant Testing Functions of $k$ Convex Sets \newline (Sample model)
  & --
  & --
  & --
  & $\left(\frac{dk}{\epsilon}\right)^{O(d)}$ (Thm.~\ref{thm:convex distance approximator}) \\
\hline
Tolerant Testing $k$-Alternating Functions \newline (Sample model)
  & --
  & $\left(\frac{dk}{\tau}\right)^{O\left(\frac{k\sqrt d}{\tau^2}\right)}$ \newline
    $\tau = \epsilon_2-3\epsilon_1$ \newline
    \cite{CGG+19}
  & --
  & $\left(\frac{dk}{\tau}\right)^{O\left(\frac{k\sqrt d}{\tau^2}\right)}$ \newline
    $\tau = \epsilon_2-\epsilon_1$ \newline
    (Thm.~\ref{thm:testing k-alternating})
\end{tabularx}
\end{table}

\begin{table}[t]
\caption{\emph{Learning results. All learning algorithms are agnostic except that of
\cite{Vem10a}. The PTF result for the Gaussian follows from the two cited works but is not stated in
either. All statements are informal, see references for restrictions and
qualifications. For PTFs, $\psi(k,\epsilon) \define
\min\left\{O(\epsilon^{-2^{k+1}}),
2^{O(k^2)}\left(\log(1/\epsilon)/\epsilon^2\right)^{4k+2}\right\}$.}}\label{table:learning}
\begin{tabularx}{470pt}{X | p{55pt} | p{50pt} | p{90pt} | X}
& $\unif(\pmset^d)$ & $\unif([n]^d)$ & Gaussian & $\forall$ Products \\
\hline
Functions of $k$ \newline Convex Sets
  & $\Omega(2^d)$
  & --
  & $d^{O\left(\frac{\sqrt d}{\epsilon^4}\right)}, 2^{\Omega(\sqrt d)}$ \newline
          \cite{KOS08}
  & $O\left(\frac{1}{\epsilon^2}\left(\frac{6dk}{\epsilon}\right)^d\right)$
    (Thm.~\ref{thm:learning convex sets})
           \\
\hline
Functions of $k$ \newline Halfspaces
  & $d^{O\left(\frac{k^2}{\epsilon^4}\right)}$ \newline \cite{KKMS08}
  & $(dn)^{O\left(\frac{k^2}{\epsilon^4}\right)}$ \newline \cite{BOW10}
  & $d^{O\left(\frac{\log k}{\epsilon^4}\right)},$ \newline
    $\poly\left(d, \left(\frac{k}{\epsilon}\right)^k\right)$ \newline \cite{KOS08,Vem10a}
    (Intersections only)
  & $\left(\frac{dk}{\epsilon}\right)^{O\left(\frac{k^2}{\epsilon^4}\right)}$
    (Thm.~\ref{thm:halfspaces}) \\
\hline
Degree-$k$ PTFs
  & $d^{\psi(k,\epsilon)}$ \newline \cite{DHK+10}
  & $(dn)^{\psi(k,\epsilon)}$ \newline \cite{DHK+10,BOW10}
  & $d^{\psi(k,\epsilon)}$ \newline \cite{DHK+10,BOW10}
  & $\left(\frac{dk}{\epsilon}\right)^{\psi(k,\epsilon)}$ (Thm.~\ref{thm:ptfs}) \\
\hline
$k$-Alternating \newline Functions
  & $2^{\Theta\left(\frac{k \sqrt d}{\epsilon}\right)}$ \newline \cite{BCO+15}
  & $\left(\frac{dk}{\tau}\right)^{O\left(\frac{k \sqrt d}{\tau^2} \right)}$ \newline
    (Testing) \cite{CGG+19}
  & --
  & $\left(\frac{dk}{\epsilon}\right)^{O\left(\frac{k\sqrt d}{\epsilon^2}\right)}$
    (Thm.~\ref{thm:k-alternating}) \\
\hline
\end{tabularx}
\end{table}

\subsubsection{Learning Functions of Halfspaces}

Intersections of $k$ halfspaces have VC dimension $\Theta(dk\log k)$~\cite{BEHW89,CMK19}, so the
sample complexity of learning is known, but it is not possible to efficiently find $k$ halfspaces
whose intersection is correct on the sample, unless $\mathsf{P}=\mathsf{NP}$~\cite{BR92}. Therefore
the goal is to find efficient ``improper'' algorithms that output a function other than an
intersection of $k$ halfspaces. Several learning algorithms for intersections of $k$ halfspaces
actually work for arbitrary functions of $k$ halfspaces. We will write $\cB_k$ for the set of all
functions $[k] \to \zo$, and for any class $\cF$ of functions we will write $\cB_k \circ \cF$ as the
set of all functions $x \mapsto g(f_1(x), \dotsc, f_k(x))$ where $g \in \cB_k$ and each $f_i \in
\cF$. Then for $\cH$ the class of halfspaces, Klivans, O'Donnell, \& Servedio~\cite{KOS04} gave a
(non-agnostic) learning algorithm for $\cB_k \circ \cH$ over the uniform distribution on $\pmset^d$
with complexity $d^{O(k^2/\epsilon^2)}$, Kalai, Klivans, Mansour, \& Servedio~\cite{KKMS08}
presented an agnostic algorithm with complexity $d^{O(k^2/\epsilon^4)}$ in the same setting using
``polynomial regression''\!\!.

Polynomial regression is a powerful technique, so it is important to understand when it can be
applied. Blais, O'Donnell, \& Servedio~\cite{BOW10} studied how to generalize it to arbitrary
product distributions. With their method, they obtained an agnostic learning algorithm for $\cB_k
\circ \cH$ with complexity $(dn)^{O(k^2/\epsilon^4)}$ for product distributions $X_1 \times \dotsm
\times X_d$ where each $|X_i|=n$, and complexity $d^{O(k^2/\epsilon^4)}$ for the ``polynomially
bounded'' continuous distributions. This is not a complete generalization, because, for example, on
the grid $[n]^d$ its complexity depends on $n$. This prevents a full
generalization to the domain $\bR^d$. Their algorithm also requires some prior knowledge of the
support or support size. We use a different technique and fully generalize the polynomial regression
algorithm to arbitrary product distributions. See \cref{section:halfspaces} for the proof.

\begin{restatable}{restate-theorem}{thmhalfspaces}\label{thm:halfspaces}
There is a distribution-free, improper agnostic learning algorithm for $\cB_k \circ \cH$ under
(continuous or finite) product distributions over $\bR^d$, with time complexity
\[
\min\left\{
		\left(\frac{dk}{\epsilon}\right)^{O\left(\frac{k^2}{\epsilon^4}\right)},
		O\left(\frac{1}{\epsilon^2}\left(\frac{3dk}{\epsilon}\right)^d\right)
	\right\} \,.
\]
\end{restatable}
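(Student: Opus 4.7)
The plan is to combine the downsampling theorem of \cref{section:downsampling} with two different algorithms on the resulting discrete product distribution, one giving each of the two complexity bounds. The common first step is to observe that every $f \in \cB_k \circ \cH$ has small ``rectilinear boundary'': along any axis-aligned line, each halfspace flips value at most once, so $f$ changes value at most $k$ times. Feeding this into the downsampling lemma produces a random equal-mass partition of each coordinate into $r = O(dk/\epsilon)$ intervals such that, with high probability over the samples defining the partition, the coarsened function $\bar f : [r]^d \to \zo$ agrees with $f$ outside a set of measure at most $\epsilon$. It therefore suffices to agnostically learn $\bar f$ under the induced product distribution on $[r]^d$ up to error $\mathrm{opt} + O(\epsilon)$.

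For the second bound, once on $[r]^d$ with $r = 3dk/\epsilon$, we discard the structure of $\cB_k \circ \cH$ entirely and perform ERM over the class of \emph{all} $\zo$-valued functions on $[r]^d$ by assigning each cell its empirical majority label. A Hoeffding/union bound over the $r^d$ cells gives agnostic error $\epsilon$ from $O(r^d/\epsilon^2)$ samples, matching the claimed $O(\epsilon^{-2}(3dk/\epsilon)^d)$ complexity.

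For the first bound, I would run $L_1$ polynomial regression in the spirit of Kalai, Klivans, Mansour, \& Servedio on the coarsened grid. The needed ingredient is that every $\bar f \in \cB_k \circ \cH$ on $[r]^d$, equipped with \emph{any} product distribution, admits an $L_2$-approximator of degree $D = O(k^2/\epsilon^4)$ in the orthonormal Fourier basis adapted to the (unknown) marginals. This is built in two stages: via noise-sensitivity bounds for halfspaces under product distributions (along the lines of Peres's theorem as transferred by BOW10), every single halfspace has a degree-$O(1/\epsilon^2)$ $L_2$-approximator in this basis; then the standard Boolean-composition trick combines $k$ such approximators into an $O(k^2/\epsilon^4)$-degree $L_2$-approximator for any function of $k$ halfspaces. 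Given existence of such approximators, $L_1$ polynomial regression solves a linear program over the $(rd)^{O(D)}$-dimensional space of degree-$D$ polynomials, which on samples from $[r]^d$ runs in time $(dk/\epsilon)^{O(k^2/\epsilon^4)}$.

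The main obstacle I expect is the polynomial-approximation step for halfspaces under an \emph{unknown} product distribution on $[r]$: the orthonormal basis of $L_2([r], \mu_i)$ depends on the marginal $\mu_i$, so one must ensure that the space of degree-$D$ polynomials can be parametrized and optimized from samples without analytic knowledge of each $\mu_i$. The downsampling reduction is precisely what lets us pay only $r = \poly(dk/\epsilon)$ here instead of the $n$-dependence suffered by BOW10, recovering the full distribution-free generalization of polynomial regression claimed by the theorem.
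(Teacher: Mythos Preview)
Your overall architecture is right and matches the paper: downsample to $[r]^d$ with $r = O(dk/\epsilon)$, then either brute-force over all functions on $[r]^d$ (second bound) or run polynomial regression (first bound). The block-boundary observation and the brute-force half are essentially what the paper does in \cref{lemma:halfspace block boundary} and \cref{lemma:brute force}.

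The gap is in how you set up the polynomial-regression half. You treat the induced distribution on $[r]^d$ as an arbitrary unknown product distribution and then flag as your ``main obstacle'' that the orthonormal basis depends on the unknown marginals. But this obstacle is an artifact of missing the point of the downsampling step: \cref{lemma:continuous uniform blocks} guarantees that the induced distribution $\block(\mu)$ is close in total variation to the \emph{uniform} distribution on $[r]^d$. The paper therefore does all the Fourier analysis with respect to the uniform distribution, using the fixed Walsh basis (\cref{lemma:noise sensitivity to polynomial regression}), and then argues (inside \cref{lemma:general algorithm}) that an algorithm correct for uniform samples is also correct for $\block(\mu)$-samples because $n$ draws from the two are $o(1)$-close in TV. So the basis is known, and your stated obstacle dissolves.

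A second, related subtlety: you write ``every $\bar f \in \cB_k \circ \cH$ on $[r]^d$'', but the coarsened function $f^\block$ is \emph{not} a function of halfspaces on $[r]^d$ --- halfspaces do not downsample to halfspaces, since the cells are not evenly spaced. The paper handles this with \cref{lemma:noise sensitivity reduction}: for any class closed under linear maps (halfspaces qualify), the noise sensitivity of $f^\block$ on $[r]^d$ under the uniform distribution is bounded by the worst-case noise sensitivity of $\cH$ on $\pmset^d$. Combined with Peres's theorem and the union bound over $k$ pieces (\cref{prop:noise sensitivity composition}), this gives $\mathsf{ns}_{r,\delta}((\cB_k\circ\cH)^\block) = O(k\sqrt\delta)$ directly, without needing to separately approximate each halfspace and compose. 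Your two-stage ``approximate each halfspace, then compose'' description gets the right final degree $O(k^2/\epsilon^4)$, but the intermediate ``degree-$O(1/\epsilon^2)$'' claim is off (the $L_2$-approximation needed by KKMS has error $\epsilon^2$, not $\epsilon$), and in any case the paper's route through noise sensitivity of the composition is cleaner.
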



\subsubsection{Learning Polynomial Threshold Functions}
Degree-$k$ PTFs are another generalization of halfspaces. A function $f : \bR^d \to \pmset$ is a
degree-$k$ PTF if there is a degree-$k$ polynomial $p: \bR^d \to \bR$ such that $f(x) =
\sign(p(x))$. Degree-$k$ PTFs can be PAC learned in time $d^{O(k)}$ using linear
programming~\cite{KOS04}, but agnostic learning is more challenging. Diakonikolas~\emph{et
al.}~\cite{DHK+10} previously gave an agnostic learning algorithm for degree-$k$ PTFs in the uniform
distribution over $\pmset^d$ with time complexity $d^{\psi(k,\epsilon)}$, where
\[
\psi(k,\epsilon) \define \min\left\{O(\epsilon^{-2^{k+1}}),
2^{O(k^2)}\left(\log(1/\epsilon)/\epsilon^2\right)^{4k+2}\right\} \,.
\]
The main result of that paper is an upper bound on the noise sensitivity of PTFs. Combined with the
reduction of Blais \emph{et al.}~\cite{BOW10}, this implies an algorithm for the uniform
distribution over $[n]^d$ with complexity $(dn)^{\psi(k,\epsilon)}$ and for the Gaussian
distribution with complexity $d^{\psi(k,\epsilon)}$.

Our agnostic learning algorithm for degree-$k$ PTFs eliminates the dependence on $n$ and works for
any unknown product distribution over $\bR^n$, while matching the complexity of~\cite{DHK+10} for
the uniform distribution over the hypercube.  See \cref{section:ptfs} for the proof.

\begin{restatable}{restate-theorem}{thmptfs}\label{thm:ptfs}
There is an improper agnostic learning algorithm for degree-$k$ PTFs under
(finite or continuous) product distributions over $\bR^d$, with time complexity
\[
\min\left\{ \left(\frac{kd}{\epsilon}\right)^{\psi(k,\epsilon)} \;,\;
O\left(\frac{1}{\epsilon^2}\left(\frac{9dk}{\epsilon}\right)^d\right)
\right\} \,.
\]
\end{restatable}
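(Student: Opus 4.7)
The plan is to apply the downsampling framework of Section~\ref{section:downsampling} to reduce agnostic learning under an arbitrary product distribution to agnostic learning on a finite grid, and then invoke (for the first bound) the polynomial-regression algorithm of \cite{DHK+10,BOW10} on the coarsened distribution, or (for the second bound) a direct per-cell ERM.

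First I would coarsen the unknown product distribution $\mu = \mu_1\times\dotsm\times\mu_d$ by partitioning each coordinate into $r$ equi-quantile intervals of $\mu_i$, yielding a downsampled distribution on $[r]^d$. To apply the downsampling lemma I need a rectilinear isoperimetric bound for degree-$k$ PTFs: fix any $d-1$ coordinates and restrict a degree-$k$ PTF $f : \bR^d \to \zo$ to the remaining coordinate, which becomes the sign of a univariate degree-$k$ polynomial and hence changes value at most $k$ times. Summed over the $d$ coordinate directions, the $\mu$-mass of cells where $f$ is not constant is at most $dk/r$. Setting $r = O(dk/\epsilon)$ therefore ensures $\coarse(f)$ differs from $f$ on a set of measure at most $\epsilon/2$, which is absorbed into the agnostic error budget.

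Having reduced to agnostic learning on the coarse distribution over $[r]^d$, for the first bound I would invoke the polynomial-regression algorithm of \cite{DHK+10} together with the \cite{BOW10} reduction from uniform $[n]^d$ to general product distributions on $[r]^d$; its time complexity is $(dr)^{\psi(k,\epsilon)}$, which with $r = O(dk/\epsilon)$ yields $(dk/\epsilon)^{O(\psi(k,\epsilon))}$. For the second bound, I would bypass polynomial regression and observe that the coarse domain has only $r^d = (O(dk/\epsilon))^d$ cells; an empirical majority label at each cell, estimated from $O(\epsilon^{-2} r^d)$ samples via Hoeffding plus union bound, produces a hypothesis within $\epsilon$ of the best cell-constant classifier, and in particular within $\epsilon$ of the best coarse PTF. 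This yields the $O(\epsilon^{-2}(9dk/\epsilon)^d)$ bound (the constant $9$ accounting for the slack in the isoperimetric bound), and we output whichever of the two algorithms is more efficient.

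The main obstacle will be formalizing the rectilinear surface-area bound for PTFs within the precise notion of rectilinear isoperimetry used in Section~\ref{section:downsampling}, since the 1D sign-change argument is elementary but must be composed correctly with the quantile-based coarsening (which depends only on order statistics, so the argument is distribution-free along each coordinate). A secondary technical point is verifying that the \cite{DHK+10} $L_1$-polynomial approximation bound — the heart of their algorithm — transfers unchanged to the coarse distribution via \cite{BOW10}, so that no extra $\poly\log r$ factor escapes the $\psi(k,\epsilon)$ exponent.
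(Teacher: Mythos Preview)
Your overall strategy---bound the block boundary size of degree-$k$ PTFs, then apply polynomial regression for the first bound and a brute-force per-cell learner for the second---matches the paper's. However, your proposed bound on the block boundary size is wrong, and this is precisely the step the paper flags as ``much more difficult to obtain than for halfspaces.''

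The 1D argument you sketch does not work. Fixing $d-1$ coordinates and observing that the restriction has at most $k$ sign changes tells you that along each axis-parallel \emph{line} there are at most $k$ sign changes, but a column of blocks is a $d$-dimensional slab, not a line, and different lines in the same slab can have their sign changes in different blocks. Concretely, take $f(x,y) = \sign(x^2+y^2-1)$ on an $r\times r$ grid over $[-2,2]^2$, and look at the column with $y \in [1-4/r,\,1]$: the zero set sweeps through $x \in [-\sqrt{8/r},\sqrt{8/r}]$, so this single column contains $\Theta(\sqrt{r})$ non-constant blocks, not $k=2$. (More generally, degree-$k$ PTFs are not $k$-alternating in the partial order on $\bR^d$: even a halfspace like $\sign(y-x)$ alternates arbitrarily often along a monotone staircase, which is why the halfspace bound in \cref{lemma:halfspace block boundary} goes through unateness rather than through a chain argument.)

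The paper instead bounds $\bbs(\cP_k,r)$ via Warren's theorem on the number of connected components cut out by a degree-$k$ polynomial (\cref{thm:warren}), feeding this into a hyperplane-arrangement induction (\cref{prop:cutting bound}--\cref{lemma:bbs for low-components}) to obtain \cref{cor:ptf bbs bound}. Once that bound is in place, the remainder of your plan---noise sensitivity from \cite{DHK+10} transferred to $[r]^d$ (here via the Walsh basis, \cref{lemma:noise sensitivity reduction} and \cref{lemma:noise sensitivity to polynomial regression}, rather than the \cite{BOW10} reduction you cite, though the effect is the same), plus the brute-force learner of \cref{lemma:brute force} for the second term---is essentially what the paper does.
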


\subsubsection{Testing \& Learning Convex Sets}

One of the first properties (sets) of functions $\bR^d \to \zo$ to be studied in the property
testing literature is the set of indicator functions of convex sets, i.e.~functions $f : \bR^d \to
\zo$ where $f^{-1}(1)$ is convex. Write $\cC$ for this class of functions. This problem has been
studied in various models of testing \cite{Ras03, RV04, CFSS17, BMR19, BB20}. In this paper we
consider the \emph{sample-based} model of testing, where the tester receives only random examples of
the function and cannot make queries.  This model of testing has received a lot of recent attention
(e.g.~\cite{BBBY12, BMR19, BY19, CFSS17, GR16, Har19, RR21, BFH21}), partly because it matches the
standard sample-based model for learning algorithms.

Chen \emph{et al.}~\cite{CFSS17} gave a sample-based tester for $\cC$ under the Gaussian
distribution on $\bR^d$ with one-sided error and sample complexity $(d/\epsilon)^{O(d)}$, along with
a lower bound (for one-sided testers) of $2^{\Omega(d)}$. We match their upper bound while
generalizing the tester to be distribution-free under product distributions. See
\cref{section:convex sets} for proofs.

\begin{restatable}{restate-theorem}{thmconvextester}\label{thm:convex tester}
There is a sample-based distribution-free one-sided
$\epsilon$-tester for $\cC$ under (finite or continuous) product distributions that uses at most
$O\left(\left(\frac{6d}{\epsilon}\right)^d\right)$ samples.
\end{restatable}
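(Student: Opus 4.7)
The tester is the natural one: draw $N = \Theta\!\left((6d/\epsilon)^d\right)$ i.i.d.\ samples from the unknown product distribution $D$, query $f$ on each, and accept iff the convex hull of the positively labeled samples contains none of the negatively labeled samples. One-sidedness is immediate: if $f = \mathbf{1}_K$ for convex $K$, then $K \supseteq \mathrm{conv}(S^+)$ while $K \cap S^- = \emptyset$, so no negative sample lies in this convex hull. It remains to prove soundness: if $f$ is $\epsilon$-far from $\cC$, the tester rejects with probability at least $2/3$.

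For soundness I would apply the downsampling lemma of \cref{section:downsampling} to partition $\bR^d$ into $r^d$ axis-aligned cells defined by coordinate-wise quantiles of $D$, with $r = \Theta(d/\epsilon)$, each of $D$-mass $r^{-d}$. The rectilinear isoperimetric guarantee yields: for every convex $K \subseteq \bR^d$, the $D$-mass of cells meeting both $K$ and $K^c$ is at most $\epsilon/3$. Hence every $K$ is $(\epsilon/3)$-close in $D$-distance to its cell-closure $K^\sharp$ (the union of cells fully contained in $K$), and the finite family $\cK^\sharp = \{K^\sharp : K \in \cC\}$ forms an $(\epsilon/3)$-cover of $\cC$.

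The soundness argument is then a union bound over $\cK^\sharp$. Since $f$ is $\epsilon$-far from $\cC$, every $T \in \cK^\sharp$ satisfies $d_D(f,T) \geq 2\epsilon/3$; for fixed $T$, the probability that an $N$-sample is label-consistent with $\mathbf{1}_T$ (up to the small fraction of samples falling in boundary cells, handled by Chernoff) is at most $\exp(-\Omega(N\epsilon))$. I would bound the cover size using rectilinear isoperimetry a second time: a cell-closure is determined by its boundary cells on the grid, of which there are $O(r^{d-1})$, so $|\cK^\sharp| \leq \binom{r^d}{O(r^{d-1})} = \exp\!\left(O(r^{d-1}\log r)\right)$. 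With $N = \Theta\!\left((6d/\epsilon)^d\right)$, the exponent $N\epsilon = \Omega\!\left(6^d (d/\epsilon)^{d-1}\right)$ dominates $\log|\cK^\sharp| = O\!\left((d/\epsilon)^{d-1}\log(d/\epsilon)\right)$ by an exponential-in-$d$ factor, so the union-bound failure is $\ll 1/3$. To connect back: if the tester accepts, then $K^* := \mathrm{conv}(S^+)$ is sample-consistent, and its cell-closure $(K^*)^\sharp \in \cK^\sharp$ is sample-consistent outside the boundary cells of $K^*$, which is precisely the bad event the union bound excludes.

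\paragraph*{Main obstacle.} The principal technical step is the cover-size bound $|\cK^\sharp| \leq \exp(O(r^{d-1}\log r))$ together with the translation ``$K^*$ sample-consistent $\Rightarrow$ $(K^*)^\sharp$ approximately sample-consistent,'' so that a union bound on the finite cover $\cK^\sharp$ suffices. Both pieces ultimately rely on rectilinear isoperimetry—the counting bound on the cardinality of the boundary-cell set, the sample-consistency slack on its $D$-mass—but combining them cleanly requires care to ensure that the Chernoff slack on boundary-cell samples is absorbed by the constants in the chosen $N$. A cleaner alternative would be to import the Gaussian covering-number argument of~\cite{CFSS17} directly, substituting the downsampling cover for Gaussian concentration and bypassing explicit cell-pattern counting.
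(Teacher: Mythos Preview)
Your approach is correct and close in spirit to the paper's, but you analyze a different tester. You use the natural geometric test ``accept iff $\mathrm{conv}(S^+)$ contains no negative sample,'' whereas the paper's tester explicitly constructs a random grid (step~1 of the downsampling framework) and then accepts iff there exists $h \in \cC$ such that $f(x) = h^{\coarse}(x)$ on every sample $x$ lying outside a boundary block of~$h$. The soundness analyses are essentially the same union bound: the paper bounds the number of (boundary-set, $h^{\coarse}$) pairs by $2\binom{r^d}{\bbs(\cC,r)}$, which is your $|\cK^\sharp|$ count (your ``determined by its boundary cells'' is the paper's ``at most 2 choices of $h^{\coarse}$ per boundary set''), and then applies $(1-\Omega(\epsilon))^q$ per pair.

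What each approach buys: your tester is algorithmically simpler (no grid construction) and matches the [CFSS17] Gaussian tester, so the generalization is more transparent. The paper's tester sidesteps exactly the step you flag as the ``main obstacle'': by building the boundary-block tolerance into the acceptance condition itself, the event ``accept'' is already of the form ``some $(B,h^{\coarse})$ pair is sample-consistent outside $B$,'' so the union bound is immediate with no need to pass from $K^*$ to $(K^*)^\sharp$. Your route still works---acceptance of your tester implies acceptance of the paper's tester with $h = \mathrm{conv}(S^+)$---but note the union bound must range over \emph{pairs} $(B,T)$ rather than bare cell-closures $T$, since the allowed slack region depends on $B$; fortunately your count already reflects this. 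Two minor corrections: the boundary-cell bound is $O(dr^{d-1})$, not $O(r^{d-1})$ (this does not affect the final comparison with $N\epsilon$), and since the grid in your argument appears only in the analysis, you may take true quantiles of $D$ for continuous $D$ rather than sampling them.
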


A more powerful kind of tester is an \emph{$(\epsilon_1,\epsilon_2)$-tolerant} tester, which must
accept (with high probability) any function that is $\epsilon_1$-close to the property, while
rejecting functions that are $\epsilon_2$-far. Tolerantly testing convex sets has been studied by
\cite{BMR16} for the uniform distribution over the 2-dimensional grid, but not (to the best of our
knowledge) in higher dimensions. We obtain a sample-based tolerant tester (and distance)
approximator for convex sets in high dimension. In fact, recall that $\cB_k$ is the set of all
functions $\zo^k \to \zo$ and $\cB' \subset \cB_k$ any subset, so $\cB' \circ \cC$ is any property
of functions of convex sets. We obtain a distance approximator for any such property:

\begin{restatable}{restate-theorem}{thmconvexdistance}\label{thm:convex distance approximator}
Let $\cB' \subset \cB_k$.  There is a sample-based distribution-free algorithm under (finite or
continuous) product distributions that approximates distance to $\cB' \circ \cC$ up to additive
error $\epsilon$ using $O\left(\frac{1}{\epsilon^2}\left(\frac{3dk}{\epsilon}\right)^d \right)$
samples. Setting $\epsilon = (\epsilon_2-\epsilon_1)/2$ we obtain an
$(\epsilon_1,\epsilon_2)$-tolerant tester with sample complexity
$O\left(\frac{1}{(\epsilon_2-\epsilon_1)^2}\left(\frac{6dk}{\epsilon_2-\epsilon_1}\right)^d\right)$.
\end{restatable}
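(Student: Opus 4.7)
The plan is to reduce distance approximation for $\cB'\circ\cC$ to estimating distances against a finite $\epsilon$-cover in $L_1(\mu)$, where $\mu$ is the unknown product distribution, and then apply Hoeffding's inequality with a union bound. The $(\epsilon_1,\epsilon_2)$-tolerant tester follows immediately by taking $\epsilon=(\epsilon_2-\epsilon_1)/2$: accept if the estimated distance lies below $\tfrac12(\epsilon_1+\epsilon_2)$, reject otherwise.

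Set the resolution $r\define 3dk/\epsilon$. The downsampling framework of \cref{section:downsampling} partitions each marginal $\mu_i$ into at most $r$ intervals of measure at most $1/r$, inducing a partition of $\bR^d$ into at most $r^d$ axis-aligned \emph{blocks}, each of $\mu$-measure at most $1/r^d$. Call a set \emph{block-respecting} if it is a union of blocks, and let $\tilde\cC$ denote the family of block-respecting sets obtained from convex sets by ``rounding down'' (keeping only those blocks fully contained in the set). The key structural ingredient is the paper's rectilinear isoperimetric bound for convex sets: under any product distribution, the total $\mu$-measure of blocks that touch $\partial C$ is $O(d/r)=O(\epsilon/k)$. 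Hence $\mu(C\triangle\tilde C)\le\epsilon/k$, and by the triangle inequality every function $h=g(C_1,\ldots,C_k)\in\cB'\circ\cC$ has a twin $\tilde h=g(\tilde C_1,\ldots,\tilde C_k)\in\cB'\circ\tilde\cC$ with $\dist_\mu(h,\tilde h)\le\epsilon$; conversely each such $\tilde h$ is $\epsilon$-close to the convex $h$ it was rounded from, so the two classes are $\epsilon$-close in both directions.

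Take the finite cover $\cN\define\cB'\circ\tilde\cC$. Since a block-respecting set is specified by a subset of the $r^d$ blocks, $|\tilde\cC|\le 2^{r^d}$ and $|\cN|\le 2^{2^k}\cdot 2^{k r^d}$, so $\log|\cN|=O\!\left((3dk/\epsilon)^d\right)$ after absorbing the leading $k$ and the additive $2^k$ into a slightly inflated base of the exponent. Hoeffding applied to $\ind{f(x)\ne h'(x)}$, combined with a union bound over $\cN$, then shows that $m=O(\log|\cN|/\epsilon^2)=O\!\left(\epsilon^{-2}(3dk/\epsilon)^d\right)$ samples suffice to estimate $\dist_\mu(f,h')$ to within $\epsilon/2$ simultaneously for every $h'\in\cN$ with high probability. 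Outputting $\min_{h'\in\cN}\widehat{\dist}_\mu(f,h')$ then yields an estimate within $\epsilon$ of $\dist_\mu(f,\cB'\circ\cC)$, by the two-sided closeness derived above.

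The main obstacle is the rectilinear isoperimetric statement for convex sets under \emph{arbitrary} product distributions --- the claim that the total $\mu$-measure of blocks meeting $\partial C$ is only $O(d/r)$. This is the heart of the downsampling technique and is where the product structure of $\mu$ is used essentially; once that lemma is in hand from \cref{section:downsampling}, the rest is routine counting, Hoeffding, and a union bound.
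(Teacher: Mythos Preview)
Your approach is essentially the same as the paper's: build a random block partition at resolution $r=\Theta(dk/\epsilon)$, use the block-boundary bound for convex sets to get a finite $\epsilon$-cover of block-constant functions, then Hoeffding plus a union bound over the cover. The paper states this as a general lemma for any class with $r^{-d}\cdot\bbs(\cH,r)\le\epsilon/3$, takes the cover to be $\cH^\coarse$ with $|\cH^\coarse|\le 2^{r^d}$ directly, and handles the fact that blocks have only \emph{approximately} uniform mass via the TV bound from \cref{lemma:continuous uniform blocks}; your cover $\cN=\cB'\circ\tilde\cC$ picks up a spurious extra factor of $k$ in $\log|\cN|$, which you can remove simply by observing $\cN$ is a subset of all block-constant functions and bounding $|\cN|\le 2^{r^d}$ as the paper does.
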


General distribution-free learning of convex sets is not possible, since this class has infinite VC
dimension. However, they can be learned under the Gaussian distribution. Non-agnostic learning under
the Gaussian was studied by Vempala \cite{Vem10a,Vem10b}. Agnostic learning under the Gaussian was
studied by Klivans, O'Donnell, \& Servedio \cite{KOS08} who presented a learning algorithm with
complexity $d^{O(\sqrt{d}/\epsilon^4)}$, and a lower bound of $2^{\Omega(\sqrt d)}$.

Unlike the Gaussian, there is a trivial lower bound of $\Omega(2^d)$ in arbitrary product
distributions, because any function $f : \pmset^d \to \zo$ belongs to this class. However, unlike
the general distribution-free case, we show that convex sets (or any functions of convex sets) can
be learned under unknown product distributions.

\begin{restatable}{restate-theorem}{thmlearningconvexsets}\label{thm:learning convex sets}
There is an agnostic learning algorithm for $\cB_k \circ \cC$ under
(finite or continuous) product distributions over $\bR^d$, with time complexity
$O\left(\frac{1}{\epsilon^2} \cdot \left(\frac{6dk}{\epsilon}\right)^d\right)$.
\end{restatable}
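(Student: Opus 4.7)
My plan is to use downsampling to reduce the problem to agnostic learning on a grid, and then run ERM over the natural hypothesis class of functions that are constant on grid cells. Concretely, using the downsampling construction from Section~\ref{section:downsampling}, I would estimate (from samples) the $r$-quantiles of each coordinate marginal of $\cD$, for $r = \Theta(dk/\epsilon)$, and use them to partition $\bR^d$ into an axis-aligned product grid with $r^d$ cells, each of $\cD$-mass about $r^{-d}$. Let $\cH'$ denote the class of all functions $\bR^d \to \zo$ that are constant on each cell.

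The key approximation claim is that $\cH'$ is an $(\epsilon/2)$-cover of $\cB_k \circ \cC$ under $\cD$: for every $h \in \cB_k \circ \cC$ there exists $\tilde h \in \cH'$ with $\dist_{\cD}(h,\tilde h) \le \epsilon/2$. To prove it I would invoke the rectilinear isoperimetric inequality for convex sets from Section~\ref{section:downsampling}, applied to each of the $k$ convex sets $C_1,\dots,C_k$ defining $h$: the union of cells that meet $\partial C_i$ has $\cD$-mass $O(d/r)$. A union bound over $i \le k$ gives that the set of cells on which $h$ is non-constant has total mass $O(dk/r)$, which is at most $\epsilon/2$ for $r = \Theta(dk/\epsilon)$ with the right constant. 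Defining $\tilde h$ to equal $h$ on cells where $h$ is constant and (arbitrarily) $0$ elsewhere yields the approximation.

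Once the cover is in place, I would run ERM over $\cH'$. Since $\cH'$ is determined by an independent $\zo$-label on each of the $r^d$ cells, its VC dimension is exactly $r^d$, so standard agnostic uniform convergence gives that $N = O(r^d/\epsilon^2)$ samples suffice for the ERM output $\hat h$ to satisfy $\err(\hat h) \le \min_{\tilde h \in \cH'} \err(\tilde h) + \epsilon/2$ with constant probability. The ERM is trivial to implement: for each cell, output the label preferred by the majority of samples landing there (and anything for cells with no samples). Combining with the cover property yields $\err(\hat h) \le \min_{h \in \cB_k \circ \cC} \err(h) + \epsilon$, i.e.\ the agnostic learning guarantee, in time and sample complexity $O(r^d/\epsilon^2) = O\!\left(\epsilon^{-2}(O(dk/\epsilon))^d\right)$. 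Tracking the constant in the isoperimetric inequality (where a halving of $\epsilon$ for the cover doubles $r$ relative to the distance-approximator bound of Theorem~\ref{thm:convex distance approximator}) accounts for the factor $6$ in the final statement.

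\textbf{Main obstacle.} The substantive content lies in the Step~2 approximation, namely the rectilinear isoperimetric inequality for a single convex set under an arbitrary product distribution with quantile-based grid partitions, which should be the centerpiece of Section~\ref{section:downsampling}; once one has a bound of the form $O(d/r)$ on the boundary mass, extending to $\cB_k \circ \cC$ is just a union bound, and combining with standard VC-dimension ERM is routine. A minor additional point to handle carefully is that the quantile boundaries are estimated from samples, so one must argue that sampling enough points to identify the cells accurately does not dominate the $O(r^d/\epsilon^2)$ budget.
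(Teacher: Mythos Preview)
Your proposal is correct and follows essentially the same approach as the paper. The paper packages the argument into Lemma~\ref{lemma:brute force} (the ``brute force'' learner): sample a random grid to make $\block(\mu)$ close to uniform, bound $\Pr[f\neq f^{\coarse}]$ via the block boundary size estimate $\bbs(\cB_k\circ\cC,r)\le 2dkr^{d-1}$ (your ``rectilinear isoperimetric inequality'' plus union bound), and then agnostically learn the class of all functions $[r]^d\to\pmset$ using $O(r^d/\epsilon^2)$ samples from the log-cardinality bound; setting $r=\lceil 6dk/\epsilon\rceil$ so that $r^{-d}\bbs\le\epsilon/3$ gives the stated complexity. Your VC-dimension/majority-vote phrasing is the same ERM step, and your $\epsilon/2$ cover is the paper's $\epsilon/3$ split in slightly different accounting.
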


\subsubsection{Testing \& Learning $k$-Alternating Functions}

A $k$-alternating function $f : X \to \pmset$ on a partial order $X$ is one where for any chain $x_1
< \cdots < x_m$ in $X$, $f$ changes value at most $k$ times. Learning $k$-alternating functions on
domain $\pmset^d$ was studied by Blais \emph{et al.}~\cite{BCO+15}, motivated by the fact that these
functions are computed by circuits with few negation gates. They show that $2^{\Theta(k
\sqrt{d}/\epsilon)}$ samples are necessary and sufficient in this setting.  Canonne \emph{et
al.}~\cite{CGG+19} later obtained an algorithm for $(\epsilon_1,\epsilon_2)$-tolerant testing
$k$-alternating functions, when $\epsilon_2 > 3\epsilon_1$, in the uniform distribution over
$[n]^d$, with query complexity $(kd/\tau)^{O(k \sqrt{d} / \tau^2)}$, where $\tau =
\epsilon_2-3\epsilon_1$.

We obtain an agnostic learning algorithm for $k$-alternating functions that matches the query
complexity of the tester in~\cite{CGG+19}, and nearly matches the complexity of the (non-agnostic)
learning algorithm of~\cite{BCO+15} for the uniform distribution over the hypercube. See
\cref{section:k alternating} for proofs.

\begin{restatable}{restate-theorem}{thmkalternating}\label{thm:k-alternating}
There is an agnostic learning algorithm for $k$-alternating functions under
(finite or continuous) product distributions over $\bR^d$ that runs in time
at most
\[
\min\left\{
	\left(\frac{dk}{\epsilon}\right)^{O\left(\frac{k \sqrt d}{\epsilon^2}\right)},
	O\left(\frac{1}{\epsilon^2}\left(\frac{3kd}{\epsilon}\right)^d\right)
	\right\} \,.
\]
\end{restatable}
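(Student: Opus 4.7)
My plan is to use the downsampling framework of \cref{section:downsampling} to reduce agnostic learning of $k$-alternating functions under an arbitrary product distribution on $\bR^d$ to the same problem on the hypergrid $[r]^d$ under its uniform distribution, for an appropriate $r = \Theta(dk/\epsilon)$, and then to solve the resulting problem on $[r]^d$ in two different ways to obtain the two claimed bounds.

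For the downsampling step, I would partition each coordinate axis of $\bR^d$ into $r$ intervals each carrying equal mass under the unknown marginal, producing a measure-preserving coarsening $\bR^d \to [r]^d$ that sends the unknown product distribution $\mu$ to the uniform distribution on the grid. The key ingredient is a rectilinear isoperimetric bound for the class $\cF_k$ of $k$-alternating functions: along each coordinate axis, every $k$-alternating function has at most $k$ sign changes on any monotone chain, so the $\mu$-mass of the ``axis-$i$ boundary cells'' of any $f \in \cF_k$ is $O(k/r)$ per coordinate. Summing over the $d$ axes, the coarsening-induced error is $O(dk/r) = O(\epsilon)$, so by the downsampling lemma the distance from $f$ to $\cF_k$ under $\mu$ is preserved up to $O(\epsilon)$ after passing to the coarsened function on $[r]^d$ under the uniform distribution. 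An agnostic learner on $[r]^d$ can then be lifted to one on $\bR^d$ with only $O(\epsilon)$ additional error.

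For the grid problem, the first bound would come from $L_1$-polynomial regression in a product Haar/Walsh basis, using the fact that $k$-alternating functions on $[r]^d$ admit an $\epsilon$-approximation in $L_2$ by polynomials of degree $D = O(k\sqrt d/\epsilon^2)$. This approximation bound follows from noise-sensitivity estimates for $k$-alternating functions on the hypergrid analogous to those underlying the tester of \cite{CGG+19}: the total influence is $O(k\sqrt d)$ because each axis contributes at most $k$ to the chain-averaged influence, and standard Fourier-tail arguments then yield the stated degree. Polynomial regression at degree $D$ on $[r]^d$ runs in time $(rd)^{O(D)}$, which with $r = \Theta(dk/\epsilon)$ yields $(dk/\epsilon)^{O(k\sqrt d/\epsilon^2)}$. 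The second bound uses the brute-force alternative of empirical risk minimization over the class of all functions $[r]^d \to \pmset$, which has $2^{r^d}$ elements; Chernoff plus union bounds give sample complexity $O(r^d/\epsilon^2)$, matching $O(\epsilon^{-2}(3kd/\epsilon)^d)$.

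The main obstacle I expect is establishing the rectilinear isoperimetric bound for $\cF_k$ in the precise form required by the downsampling framework, including the correct $k$-per-coordinate dependence; this requires carefully aggregating the chain-alternation bound over both axis-parallel chains and all $d$ coordinates under the product distribution $\mu$. A secondary technical point is ensuring that the polynomial-approximation bound on $[r]^d$ is stated in a basis compatible with the product structure used in downsampling; this should follow from a product Haar/Walsh construction on the grid combined with the noise-sensitivity calculation of \cite{CGG+19}.
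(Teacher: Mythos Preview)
Your plan is essentially the paper's proof: bound $\bbs(\cF_k,r)\le kdr^{d-1}$ so that $r=\lceil 3dk/\epsilon\rceil$ suffices, observe that $f^\block$ is again $k$-alternating, and then plug into \cref{lemma:general algorithm} together with the spanning set of \cref{lemma:k-monotone grid basis} (from \cite{CGG+19}) for the first bound and into \cref{lemma:brute force} for the second.

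One concrete correction to your block-boundary sketch: the ``axis-$i$ boundary cells'' / axis-parallel chains argument you outline does not go through as stated. If $u<v$ are two blocks on an axis-parallel line in $[r]^d$ (agreeing in all but one coordinate), then points $x\in\block^{-1}(u)$ and $y\in\block^{-1}(v)$ are in the \emph{same} interval in every other coordinate and hence need not satisfy $x<y$ in $\bR^d$; so you cannot extract a $(k{+}1)$-alternating chain in $\bR^d$ from $k{+}1$ non-constant blocks on an axis-parallel line. The fix (which is exactly \cref{lemma:monotone block boundary}) is to partition $[r]^d$ into at most $dr^{d-1}$ \emph{diagonals} $\{v+\lambda\vec 1:\lambda\in\bZ\}$; along a diagonal, consecutive blocks are strictly ordered in every coordinate, so any $x\in\block^{-1}(u),\,y\in\block^{-1}(v)$ with $u<v$ are comparable in $\bR^d$, and the chain argument yields at most $k$ non-constant blocks per diagonal. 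This gives $\bbs\le kdr^{d-1}$ and the $O(dk/r)$ coarsening error you want. A second minor point: you phrase the coarsening as an exact equal-mass partition, but since $\mu$ is unknown the paper uses a sampled grid and \cref{lemma:continuous uniform blocks} to make $\block(\mu)$ only approximately uniform; this is already absorbed into \cref{lemma:general algorithm} and \cref{lemma:brute force}, so nothing further is needed once you verify their hypotheses.
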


We also generalize the tolerant tester of \cite{CGG+19} to be distribution-free under product
distributions, and eliminate the condition $\epsilon_2 > 3\epsilon_1$.

\begin{restatable}{restate-theorem}{thmtestingkalternating}\label{thm:testing k-alternating}
For any $\epsilon_2 > \epsilon_1 > 0$, let $\tau = (\epsilon_2-\epsilon_1)/2$, there is a
sample-based $(\epsilon_1,\epsilon_2)$-tolerant tester for $k$-alternating functions using
$\left(\frac{dk}{\tau}\right)^{O\left(\frac{k\sqrt d}{\tau^2}\right)}$ samples, which is
distribution-free under (finite or continuous) product distributions over $\bR^d$.
\end{restatable}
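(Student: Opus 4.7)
The plan is to bootstrap the agnostic learning algorithm from \cref{thm:k-alternating} into a tolerant tester via the standard ``learn-then-estimate'' reduction, which works cleanly whenever the agnostic learner is \emph{proper}. First, I would run the learner of \cref{thm:k-alternating} with accuracy $\tau/4$ and constant confidence under the unknown product distribution $\mu$, at a cost of $(dk/\tau)^{O(k\sqrt{d}/\tau^2)}$ samples, to obtain a hypothesis $h$ satisfying $\dist_\mu(f,h) \leq \dist_\mu(f,\cF_k) + \tau/4$ with high probability, where $\cF_k$ denotes the class of $k$-alternating functions $\bR^d\to\pmset$. The key observation is that the learner produced by the downsampling framework is naturally proper: after downsampling to a coarse grid $[r]^d$, empirical risk minimization runs over $k$-alternating functions on that grid, and the block-wise lift back to $\bR^d$ remains $k$-alternating because any chain in $\bR^d$ projects to a chain in $[r]^d$ and so can induce at most $k$ alternations in $h$.

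Next, I would draw $O(\tau^{-2})$ additional fresh samples $x_1,\ldots,x_m \sim \mu$ together with their labels $f(x_i)$, and compute the empirical disagreement $\widehat{d} = \frac{1}{m}\sum_{i=1}^m \ind{f(x_i)\neq h(x_i)}$. A Hoeffding bound gives $|\widehat{d}-\dist_\mu(f,h)|\leq \tau/4$ with high probability.

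The decision rule is to accept iff $\widehat{d} \leq \epsilon_1 + \tau$. Correctness splits into two cases. If $\dist_\mu(f,\cF_k)\leq\epsilon_1$, then $\dist_\mu(f,h)\leq \epsilon_1+\tau/4$ and thus $\widehat{d}\leq \epsilon_1+\tau/2 < \epsilon_1+\tau$, so the tester accepts. If $\dist_\mu(f,\cF_k)\geq\epsilon_2=\epsilon_1+2\tau$, then because $h\in\cF_k$ (by properness) we have $\dist_\mu(f,h)\geq\epsilon_2$, so $\widehat{d}\geq\epsilon_2-\tau/4=\epsilon_1+7\tau/4 > \epsilon_1+\tau$, and the tester rejects. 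A union bound over the two failure events preserves constant success probability, and the total sample count is dominated by the call to the learner, matching the stated $(dk/\tau)^{O(k\sqrt{d}/\tau^2)}$ bound.

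The main obstacle is confirming the properness of the hypothesis output by \cref{thm:k-alternating}; this is precisely what frees us from the $\epsilon_2>3\epsilon_1$ restriction of \cite{CGG+19}, since properness converts what would otherwise be a multiplicative gap requirement into the additive gap $\epsilon_2-\epsilon_1=2\tau$ that can be resolved by a single Hoeffding estimation. Verifying properness reduces to the chain-projection argument sketched above, together with a check that the ERM step inside the proof of \cref{thm:k-alternating} genuinely outputs a grid-$k$-alternating function rather than something from a larger hypothesis class; if the learner were formulated improperly, one would modify it to perform ERM over the (still only slightly larger) class of $k$-alternating functions on the coarse grid, which does not change the sample complexity.
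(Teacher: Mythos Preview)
Your reduction is sound \emph{provided} the agnostic learner is proper, but the learner of \cref{thm:k-alternating} achieving the $(dk/\tau)^{O(k\sqrt d/\tau^2)}$ bound is \emph{not} proper: it is the polynomial-regression algorithm of \cref{lemma:general algorithm}, which outputs $\sign$ of a low-degree Walsh expansion, not a $k$-alternating function. Your sentence ``empirical risk minimization runs over $k$-alternating functions on that grid'' misdescribes what happens inside that proof. Your suggested patch --- replacing polynomial regression by ERM over $k$-alternating functions on $[r]^d$ --- does yield a proper learner, but the only sample bound you can cite for it is $O(r^d/\tau^2)=(dk/\tau)^{O(d)}$ (via the $\log|\cH^\block|$ bound), which does not match the claimed $(dk/\tau)^{O(k\sqrt d/\tau^2)}$ when $d$ is large relative to $k^2/\tau^4$. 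So the argument as written either has a false premise (properness of the regression learner) or the wrong complexity (ERM learner).

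The paper's proof handles exactly this obstacle, and it is worth seeing how. Rather than running the learner once, it observes that after the grid $X$ is fixed, the regression learner's output on $[r]^d$ is determined by its $m$ labeled samples, so there are at most $(2r^d)^m$ possible outputs; each of these (improper) outputs is then rounded to its nearest $k$-alternating function, producing a proper $(\tau/2)$-cover $\cK\subset\cH$ of size at most $(2r^d)^m$. One then draws $q=O(\tau^{-2}\log|\cK|)=O(\tau^{-2}\,dm\log r)$ fresh samples and estimates $\dist_\mu(f,h)$ uniformly over $h\in\cK$ by Hoeffding plus a union bound, accepting iff $\min_{h\in\cK}\dist_Q(f,h)<\epsilon_1+\tau$. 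Since $m=(dk/\tau)^{O(k\sqrt d/\tau^2)}$, this gives the stated sample complexity while preserving properness of the comparison class, which is precisely what eliminates the $\epsilon_2>3\epsilon_1$ restriction.
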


\subsection{Techniques}\label{section:techniques}

What connects these diverse problems is a notion of rectilinear surface area or isoperimetry that we
call ``block boundary size''\!\!.  There is a close connection between learning \& testing and
various notions of isoperimetry or surface area (e.g.~\cite{CS16,KOS04,KOS08,KMS18}). We show that
testing or learning a class $\cH$ on product distributions over $\bR^d$ can be reduced to testing
and learning on the \emph{uniform} distribution over $[r]^d$, where $r$ is determined by the block
boundary size, and we call this reduction ``downsampling''\!\!. The name \emph{downsampling} is used
in image and signal processing for the process of reducing the resolution of an image or reducing
the number of discrete samples used to represent an analog signal. We adopt the name because our
method can be described by analogy to image or signal processing as the following 2-step process:
\begin{enumerate}
  \item Construct a ``digitized'' or ``pixellated'' image of the function $f : \bR^d \to \pmset$ by
    sampling from the distribution and constructing a grid in which each cell has roughly equal
    probability mass; and
	\item Learn or test the ``low-resolution'' pixellated function.
\end{enumerate}
As long as the function $f$ takes a constant value in the vast majority of ``pixels''\!\!, the low
resolution version seen by the algorithm is a good enough approximation for testing or learning.
The block boundary size is, informally, the number of pixels on which $f$ is not constant.

This technique reduces distribution-free testing and learning problems to the uniform distribution
in a way that is conceptually simpler than in the prior work \cite{BOW10,BCS20}. However, some
technical challenges remain.  The first is that it is not always easy to bound the number of
``pixels'' on which a function $f$ is not constant -- for example, for PTFs. Second, unlike in the
uniform distribution, the resulting downsampled function class on $[r]^d$ is not necessarily ``the
same'' as the original class -- for example, halfspaces on $\bR^d$ are not downsampled to halfspaces
on $[r]^d$, since the ``pixels'' are not of equal size. Thus, geometric arguments may not work,
unlike the case for actual images.

A similar technique of constructing ``low-resolution'' representations of the input has been used
and rediscovered ad-hoc a few times in the property testing literature; in prior work, it was
restricted to the uniform distribution over $[n]^d$ \cite{KR00,Ras03,FR10,BY19,CGG+19} (or the
Gaussian in \cite{CFSS17}).  With this paper, we aim to provide a unified and generalized study of
this simple and powerful technique.

\subsection{Block Boundary Size}

Informally, we define the \emph{$r$-block boundary size} $\bbs(\cH,r)$ of a class $\cH$ of functions
$\bR^d \to \zo$ as the maximum number of grid cells on which a function $f \in \cH$ is non-constant,
over all possible $r \times \dotsm \times r$ grid partitions of $\bR^d$ (which are not necessarily
evenly spaced) -- see \cref{section:downsampling} for formal definitions. Whether downsampling can
be applied to $\cH$ depends on whether
\[
  \lim_{r \to \infty} \frac{\bbs(\cH,r)}{r^d} \to 0 \,,
\]
and the complexity of the algorithms depends on how large $r$ must be for the non-constant blocks to
vanish relative to the whole $r^d$ grid. A general observation is that any function class $\cH$
where downsampling can be applied can be learned under unknown product distributions with a finite
number of samples; for example, this holds for convex sets even though the VC dimension is infinite.

\begin{proposition}[Consequence of \cref{lemma:brute force}]
Let $\cH$ be any set of functions $\bR^d \to \zo$ (measurable with respect to continuous product
distributions) such that
\[
  \lim_{r \to \infty} \frac{\bbs(\cH,r)}{r^d} = 0 \,.
\]
Then there is some function $\delta(d,\epsilon)$ such that $\cH$ is distribution-free learnable
under product distributions, up to error $\epsilon$, with $\delta(d,\epsilon)$ samples.
\end{proposition}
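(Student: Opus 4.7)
The plan is to apply the downsampling philosophy directly: first construct a rectilinear grid from samples that approximately equi-partitions the unknown product distribution, then do brute-force empirical risk minimization over all functions that are constant on each grid cell. Since $\bbs(\cH,r)/r^d \to 0$ by hypothesis, pick $r = r(d,\epsilon,\cH)$ to be the least integer such that $\bbs(\cH,r)/r^d \le \epsilon/3$; this $r$ is finite and depends only on $d$, $\epsilon$, and $\cH$.

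First I would estimate the $d$ marginal CDFs $F_1,\dotsc,F_d$ from samples. For each coordinate, the Dvoretzky--Kiefer--Wolfowitz inequality gives uniform convergence of the empirical CDF at rate $O(1/\sqrt{m})$ with $m$ samples; using $m = \poly(r^d, 1/\epsilon)$ samples and a union bound over the $d$ coordinates, all $d$ empirical CDFs are within additive error $\epsilon/(3 r^d)$ of the true CDFs. Using the empirical $(i/r)$-quantiles of each coordinate as cut points yields a partition of $\bR^d$ into an $r \times \dotsm \times r$ grid in which every cell has true probability mass $(1 \pm o(1))/r^d$. Continuity of the marginals guarantees that exact quantiles exist; the finite-sample deviation from them is absorbed by the DKW bound.

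Next, I would brute-force learn over the $2^{r^d}$ functions that are constant on each grid cell. For any $f \in \cH$, approximate $f$ by the grid-constant function $\tilde f$ that on each cell takes the majority value of $f$ under the conditional distribution; on the at most $\bbs(\cH,r)$ cells where $f$ is non-constant, $\tilde f$ errs on at most the full mass of the cell, contributing total disagreement at most $\bbs(\cH,r) \cdot (1+o(1))/r^d \le \epsilon/2$. Hence the best grid-constant hypothesis is within $\epsilon/2$ of the optimal $f \in \cH$. Standard uniform convergence (Hoeffding plus a union bound over the $2^{r^d}$ hypotheses) then shows that $O(r^d/\epsilon^2)$ additional samples let empirical risk minimization return a grid-constant hypothesis whose true error is within $\epsilon/2$ of the best grid-constant one. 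Combining the two $\epsilon/2$ slacks gives an agnostic learner with excess error at most $\epsilon$, using a total sample count $\delta(d,\epsilon) = \poly(r^d, 1/\epsilon)$ that depends only on $d$ and $\epsilon$ through $r$.

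The main obstacle is conceptual rather than technical: we need to know in advance a valid $r = r(d,\epsilon,\cH)$, which is guaranteed only because the hypothesis forces $\bbs(\cH,r)/r^d \to 0$ and hence the threshold $\epsilon/3$ is eventually crossed. For the specific classes treated later in the paper (halfspaces, PTFs, convex sets, $k$-alternating functions), explicit isoperimetric bounds on $\bbs(\cH,r)$ yield quantitative values of $r$; here we only use the qualitative existence of such an $r$, so the bulk of the work is absorbed into the referenced \cref{lemma:brute force}.
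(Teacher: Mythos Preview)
Your proposal is correct and is essentially the same approach as the paper: choose $r$ so that $\bbs(\cH,r)/r^d\le\epsilon/3$ (possible by hypothesis), build a near-uniform $r^d$-cell grid from samples, and do brute-force ERM over the $2^{r^d}$ grid-constant hypotheses---exactly the content of \cref{lemma:brute force}, which the paper simply invokes. The only inessential differences are that you estimate quantiles via DKW whereas the paper uses order statistics directly (\cref{lemma:continuous uniform blocks}), and you approximate $f$ by its cell-majority rather than by $f^{\coarse}=f\circ\blockpoint\circ\block$; both variants work, and since the proposition asks only for \emph{some} finite $\delta(d,\epsilon)$, the looser sample count your DKW step yields is immaterial.
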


For convex sets, monotone functions, $k$-alternating functions, and halfspaces, $\bbs(\cH,r)$ is
easy to calculate. For degree-$k$ PTFs, it is more challenging. We say that a function $f : \bR^d
\to \zo$ induces a connected component $S$ if for every $x,y \in S$ there is a continuous curve in
$\bR^d$ from $x$ to $y$ such that $f(z)=f(x)=f(y)$ for all $z$ on the curve, and $S$ is a maximal
such set. Then we prove a general lemma that bounds the block boundary size by the number of
connected components induced by functions $f \in \cH$.

\begin{lemma}[Informal, see \cref{lemma:bbs for low-components}]
Suppose that
for any axis-aligned affine subspace $A$ of affine dimension $n \leq d$, and any
function $f \in \cH$, $f$ induces at most $k^n$ connected components in $A$. Then for $r =
\Omega(dk/\epsilon)$, $\bbs(\cH,r) \leq \epsilon \cdot r^d$.
\end{lemma}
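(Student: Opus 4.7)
I would prove the lemma by induction on $d$, aiming for the explicit bound $\bbs(\cH, r) \leq d k r^{d-1}$; the conclusion then follows by setting $r \geq dk/\epsilon$. The base case $d = 1$ is immediate: a function with at most $k$ connected components on $\bR$ has at most $k - 1$ transitions, hence at most $k$ bad intervals in any grid of $r$ intervals.

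For the inductive step $d \geq 2$, I would decompose each bad block $B = B' \times I_d^{a_d}$ (where $B' \subseteq \bR^{d-1}$ is the cross-section $(d-1)$-block) into two types using an axis-aligned path argument: for any $x, y \in B$ with $f(x) \neq f(y)$, the $d$-segment axis-aligned path from $x$ to $y$ must change $f$ on some segment, so either (A) some slice $f_c|_{B'}$ is non-constant for some $c \in I_d^{a_d}$, or (B) every slice is constant on $B'$ but $f$ varies in the $d$-direction within $B$. Case (B) is straightforward: $f|_B$ coincides with the 1-d function $c \mapsto f(b, c)$ for any $b \in B'$, which has at most $k - 1$ transitions globally by hypothesis; so at most $k - 1$ values of $a_d$ are (B)-bad per $B'$, contributing $(k-1)r^{d-1}$ to the total.

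Case (A) is the main technical content. Each slice $f_c$ inherits the $k^n$-component hypothesis (since an axis-aligned $n$-subspace of a slice is still an axis-aligned $n$-subspace of $\bR^d$), so by the inductive hypothesis every slice has at most $(d-1) k r^{d-2}$ bad $(d-1)$-blocks. To bound the number of $(B', a_d)$ pairs that are (A)-bad without overcounting as $c$ varies, I would use an amortization/sweeping argument: pick a reference slice in each interval $I_d^{a_d}$ and account for its bad blocks directly (contributing $r \cdot (d-1) k r^{d-2} = (d-1) k r^{d-1}$), then charge any block $B'$ that becomes bad only at an interior slice against a $d$-direction transition of $f$ on some line through $B'$ at that height. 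Since each $d$-line admits at most $k - 1$ transitions globally, this amortized correction can be bounded, yielding case (A) $\leq (d-1) k r^{d-1}$. Adding case (B) gives $\bbs \leq d k r^{d-1}$, closing the induction.

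The main obstacle is making the amortization in case (A) rigorous -- in particular, uniformly bounding the number of ``entry events'' of $(d-1)$-blocks into the slicewise bad set as $c$ sweeps through $\bR$. A naive per-block accounting (bounding entry events for each $B'$ in isolation) gives something like $O(k^{d-1} r^{d-1})$ and is too weak, so the argument must aggregate entry events across all $B'$ simultaneously. I expect this requires invoking the $k^n$-component hypothesis for $n > 1$ (especially $n = d$) to control the global topology of $f$'s level sets, ensuring that slicewise transitions aggregate to $O(k r^{d-1})$ rather than scaling with $k^{d-1}$.
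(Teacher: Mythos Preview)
Your approach has a genuine gap, and it is precisely the one you flag in Case~(A). The amortization you sketch does not work as stated: when a block $B'$ is bad at some interior height $c$ but not at the reference height $c_0$, you only know there exists \emph{some} point $b \in B'$ with $f(b,c_0) \neq f(b,c)$, and this $b$ depends on $c$. Different heights $c$ can witness different $b$'s, so you cannot charge all entry events for $B'$ to transitions on a single line $\{b\} \times \bR$, and the $k-1$ bound per line gives you nothing once $b$ is allowed to vary. Aggregating across blocks does not obviously help either: you would need to show that the total number of ``entry events'' over all $(d-1)$-blocks and all heights is $O(kr^{d-1})$, but the natural combinatorial objects here (sign changes of $f$ along vertical lines, summed over a transversal of blocks) are bounded by $k$ per line times $r^{d-1}$ lines only if you fix one line per block---and that is exactly what you cannot do. Your hope that the $n>1$ component bound rescues this is not substantiated, and I do not see how to make it work without essentially rebuilding the paper's argument.

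The paper takes a completely different route that avoids slicing. It defines $P_r(n,m)$ as the maximum number of connected components of an $n$-dimensional axis-aligned affine subspace after removing $\partial f$ together with $m$ families of $r-1$ parallel axis-aligned hyperplanes, observes that $\bbs(\cH,r) \leq P_r(d,d) - r^d$ (each non-constant block contributes at least two components, each constant block at least one), and then bounds $P_r(d,d)$ by adding the grid hyperplanes one at a time. The key topological fact is that adding one hyperplane $H$ to an arrangement increases the component count by at most the number of components of $H$ minus what was already removed, yielding the recursion $P_r(n,m) \leq P_r(n,m-1) + (r-1)P_r(n-1,m-1)$. Unrolling gives $P_r(d,d) \leq \sum_{i=0}^{d} \binom{d}{i} M(d-i)(r-1)^i$ with $M(n) \leq k^n$, and a direct calculation finishes. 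This arrangement-counting viewpoint is what makes the hypothesis $M(n) \leq k^n$ for \emph{all} $n$ enter cleanly---each $M(n)$ appears as a coefficient in the expansion---rather than through an ad hoc amortization.
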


This lemma in fact generalizes all computations of block boundary size in this paper (up to constant
factors in $r$). Using a theorem of Warren \cite{War68}, we get that any degree-$k$ polynomial
$\bR^d \to \pmset$ achieves value 0 in at most $\epsilon r^d$ grid cells, for sufficiently large $r
= \Omega(dk/\epsilon)$ (\cref{cor:ptf bbs bound}).

\subsection{Polynomial Regression}

The second step of downsampling is to find a testing or learning algorithm that works for the
uniform distribution over the (not necessarily evenly-spaced) hypergrid. Most of our learning
results use \emph{polynomial regression}.  This is a powerful technique introduced in~\cite{KKMS08}
that performs linear regression over a vector space of functions that approximately spans the
hypothesis class. This method is usually applied by using Fourier analysis to construct such an
approximate basis for the hypothesis class~\cite{BOW10,DHK+10,CGG+19}. This was the method used, for
example, by Blais, O'Donnell, \& Wimmer~\cite{BOW10} to achieve the $\poly(dn)$-time algorithms for
intersections of halfspaces.

We take the same approach but we use the Walsh basis for functions on domain $[n]^d$
(see e.g.~\cite{BRY14}) instead of the bases used in the prior works.  We show that if one can
establish bounds on the noise sensitivity in the Fourier basis for the hypothesis class restricted
to the uniform distribution over $\pmset^d$, then one gets a bound on the number of Walsh functions
required to approximately span the ``downsampled'' hypothesis class. In this way, we establish that
if one can apply standard Fourier-analytic techniques to the hypothesis class over the
\emph{uniform} distribution on $\pmset^d$ and calculate the block boundary size, then the results
for the hypercube essentially carry over to the distribution-free setting for product distributions
on $\bR^d$.

An advantage of this technique is that both noise sensitivity and block boundary size grow at most
linearly during function composition: for functions $f(x) = g(h_1(x),\dotsc,h_k(x))$ where each
$h_i$ belongs to the class $\cH$, the noise sensitivity and block boundary size grow at most
linearly in $k$. Therefore learning results for $\cH$ obtained in this way are easy to extend to
arbitrary compositions of $\cH$, which is how we get our result for intersections of halfspaces.

\section{Downsampling}
\label{section:downsampling}

We will now introduce the main definitions, notation, and lemmas required by our main results. The
purpose of this section is to establish the main conceptual component of the downsampling technique:
that functions with small enough block boundary size can be efficiently well-approximated by a
``coarsened'' version of the function that is obtained by random sampling.
See Figure~\ref{fig:block partition} for an illustration of the following definitions.

\begin{figure}
	\centering
	\includegraphics[scale=0.6]{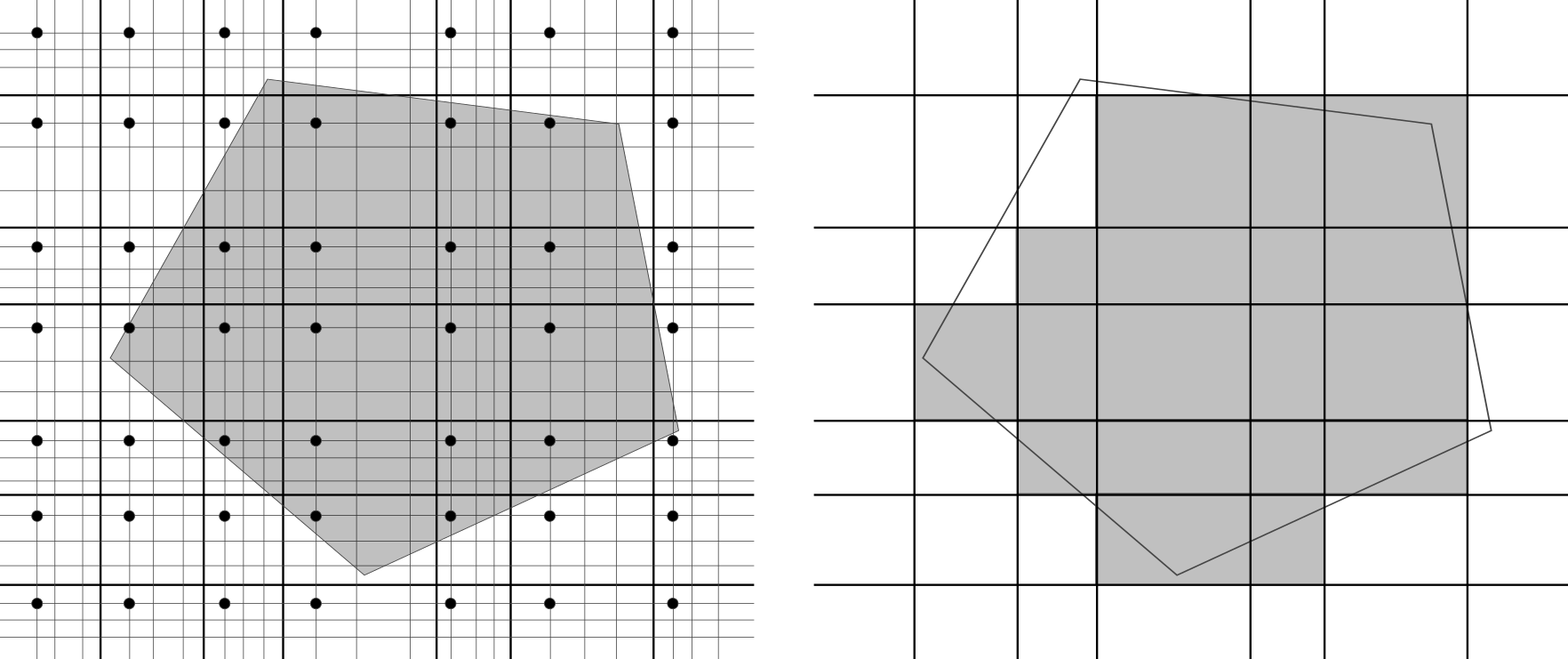}
	\caption{Left: Random grid $X$ (pale lines) with induced block partition (thick lines)
	and $\blockpoint$ values (dots), superimposed on $f^{-1}(1)$ (gray polygon).
	Right: $f^\coarse$ (grey) compared to $f$ (polygon outline).}\label{fig:block partition}
\end{figure}
\begin{definition}[Block Partitions]
An $r$-\emph{block partition} of $\bR^d$ is a pair of functions $\block
: \bR^d \to {[r]}^d$ and $\blockpoint : {[r]}^d \to \bR^d$ obtained as follows.
For each $i \in [d],j \in [r-1]$ let $a_{i,j} \in \bR$ such that $a_{i,j} <
a_{i,j+1}$ and define $a_{i,0}=-\infty, a_{i,r}=\infty$ for each $i$. For each
$i \in [d], j \in [r]$ define the interval $B_{i,j} = (a_{i,j-1},a_{i,j}]$ and a
point $b_{i,j} \in B_{i,j}$. The function $\block : \bR^d \to [r]^d$ is defined
by setting $\block(x)$ to be the unique vector $v \in [r]^d$ such that $x_i \in
B_{i,v_i}$ for each $i \in [d]$. The function $\blockpoint : [r]^d \to \bR^d$ is
defined by setting $\blockpoint(v) = (b_{1,v_1}, \dotsc, b_{d,v_d})$; note that
$\blockpoint(v) \in \block^{-1}(v)$ where $\block^{-1}(v) = \{ x \in \bR^d :
\block(x) = v\}$.
\end{definition}

\begin{definition}[Block Functions and Coarse Functions]
For a function $f : \bR^d \to \pmset$, we define $f^\block : [r]^d \to \pmset$
as $f^\block \define f \circ \blockpoint$ and $f^\coarse : \bR^d \to \bR$ as
$f^\coarse \define f^\block \circ \block = f \circ \blockpoint \circ \block$. For any set
$\cH$ of functions $\bR^d \to \pmset$, we define $\cH^\block \define \{
f^\block \;|\; f \in \cH \}$.  For a distribution $\mu$ over
$\bR^d$ and an $r$-block partition $\block : \bR^d \to [r]^d$ we define
the distribution $\block(\mu)$ over $[r]^d$ as the distribution of
$\block(x)$ for $x \sim \mu$.
\end{definition}

\begin{definition}[Induced Block Partitions]
When $\mu$ is a product
distribution over $\bR^d$, a \emph{random grid} $X$ of
length $m$ is the grid obtained by sampling $m$ points $x_1, \dotsc, x_m \in \bR^d$
independently from $\mu$ and for each $i \in [d], j \in [m]$ defining $X_{i,j}$
to be the $j^\mathrm{th}$-smallest coordinate in dimension $i$ among all sampled
points. For any $r$ that divides $m$ we define an $r$-block partition depending
on $X$ by defining for each $i \in [d], j \in [r-1]$ the point $a_{i,j} =
X_{i,mj/r}$ so that the intervals are $B_{i,j} \define
(X_{i,m(j-1)/r}, X_{i,mj/r}]$ when $j \in \{2,\dotsc,r-1\}$ and $B_{i,1} =
(-\infty, X_{i,m/r}], B_{i,r} = (X_{i,m(r-1)/r},\infty)$; we let the points
$b_{i,j}$ defining $\blockpoint$ be arbitrary. This is the $r$-block
partition \emph{induced} by $X$.
\end{definition}

\begin{definition}[Block Boundary Size]\label{def:block boundary size}
For a block partition $\block : \bR^d \to [r]^d$, a
distribution $\mu$ over $\bR^d$, and a function $f : \bR^d \to \pmset$,
we say $f$ is \emph{non-constant} on a block $v \in [r]^d$ if there are sets
$S,T \subset \block^{-1}(v)$ such that $\forall s \in S, t \in T: f(s) = 1, f(t)=-1$;
and $S,T$ have positive measure (in the product of Lebesgue measures).
For a function $f : \bR^d \to \pmset$ and a number $r$, we define the $r$-block
boundary size $\bbs(f,r)$ as the maximum number of blocks on which $f$
is non-constant, where the maximum is taken over all $r$-block
partitions $\block : \bR^d \to [r]^d$.  For a set $\cH$ of functions
$\bR^d \to \pmset$, we define $\bbs(\cH,r) \define \max\{ \bbs(f,r)
\;|\; f \in \cH\}$.
\end{definition}

The \emph{total variation distance} between two distributions $\mu,\nu$ over a
finite domain $\cX$ is defined as
\[
  \|\mu-\nu\|_\mathsf{TV} \define \frac{1}{2} \sum_{x \in \cX} |\mu(x)-\nu(x)|
  = \max_{S \subseteq \cX} |\mu(S)-\nu(S)| \,.
\]
The essence of downsampling is apparent in the next proposition. It shows that
the distance of $f$ to its coarsened version $f^\coarse$ is bounded by two
quantities: the fraction of blocks in the $r$-block partition on which $f$ is
not constant, and the distance of the distribution $\block(\mu)$ to uniform.
When both quantities are small, testing or learning $f$ can be done by
testing or learning $f^\coarse$ instead. The uniform distribution over a
set $S$ is denoted $\unif(S)$:
\begin{proposition}\label{prop:distance to coarse}
Let $\mu$ be a continuous product distribution over $\bR^d$, let $X$ be a
random grid, and let $\block : \bR^d \to [r]^d$ be the induced
$r$-block partition. Then, for any measurable $f : \bR^d \to \pmset$,
the following holds with probability 1 over the choice of $X$:
\[
  \Pru{x \sim \mu}{f(x) \neq f^\coarse(x)}
  \leq r^{-d}\cdot \bbs(f,r)
    + \|\block(\mu)-\unif([r]^d)\|_\mathsf{TV} \,.
\]
\end{proposition}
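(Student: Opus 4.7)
The plan is to split the disagreement probability over the $r^d$ blocks, zero out the constant blocks using \cref{def:block boundary size}, and then compare the remaining probability mass to uniform via the set formulation of total variation. The only role the random grid $X$ plays is to ensure the partition is well defined: since $\mu$ is a continuous product distribution, with probability $1$ over $X$ the sampled coordinates in each dimension are pairwise distinct, so $a_{i,j-1} < a_{i,j}$ strictly and every cell $\block^{-1}(v)$ has positive $\mu$-measure. I condition on this event.

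The first step is the decomposition
\[
\Pru{x\sim\mu}{f(x) \neq f^\coarse(x)}
= \sum_{v \in [r]^d}\Pru{x\sim\mu}{f(x)\neq f(\blockpoint(v)),\ \block(x)=v},
\]
using $f^\coarse(x) = f(\blockpoint(\block(x)))$. On a single block $v$ the value $f(\blockpoint(v))$ is a constant in $x$, so each summand is at most $\mu(\block^{-1}(v)) = \block(\mu)(v)$. When $f$ is constant on block $v$ in the sense of \cref{def:block boundary size} (one of the preimages $f^{-1}(\pm 1)\cap\block^{-1}(v)$ has measure zero), $f$ equals a single value $c_v$ on a full-measure subset of the block, and choosing $\blockpoint(v)$ inside that subset --- permitted because $\blockpoint(v)$ is an arbitrary point of $\block^{-1}(v)$ --- kills the summand. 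Letting $N \subseteq [r]^d$ denote the set of non-constant blocks gives
\[
\Pru{x\sim\mu}{f(x)\neq f^\coarse(x)} \leq \sum_{v \in N}\block(\mu)(v).
\]

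The second step applies the set characterization of total variation distance to $N \subseteq [r]^d$:
\[
\sum_{v\in N}\block(\mu)(v)
\leq \sum_{v\in N}\unif([r]^d)(v) + \|\block(\mu)-\unif([r]^d)\|_\TV
= \frac{|N|}{r^d} + \|\block(\mu)-\unif([r]^d)\|_\TV,
\]
and $|N|\leq\bbs(f,r)$ directly from \cref{def:block boundary size}. Combining the two displays yields the claimed inequality.

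There is no deep obstacle here; both steps are essentially one-line arguments. The only delicate point is that on a ``constant'' block $v$ a pathological choice of $\blockpoint(v)$ could in principle hit a measure-zero subset on which $f$ takes its minority value, which would inflate the summand all the way to $\mu(\block^{-1}(v))$. I would resolve this exactly as described above, by using the freedom in the definition of $\blockpoint$ to place $\blockpoint(v)$ in the full-measure part of every constant block; equivalently, since the statement concerns $f$ only up to $\mu$-null sets, one could simply replace $f$ by a measurable representative that is literally constant on each such block.
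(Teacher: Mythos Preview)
Your proof is correct and follows the same two-step structure as the paper's: reduce to the event that $\block(x)$ lies in a non-constant block, then pass from $\block(\mu)$ to $\unif([r]^d)$ via the set form of total variation. The only real difference is in how the measure-zero issue on constant blocks is dispatched. The paper treats $\blockpoint(v)$ as a \emph{random} point of its block (implicitly, one of the grid samples) and argues that with probability $1$ over $X$ it avoids the minority-value null set; you instead invoke the stated freedom in the definition to place $\blockpoint(v)$ deterministically in the full-measure majority set (or, equivalently, to replace $f$ by a representative that is literally constant there). Both resolutions are sound and lead to the same bound; yours is arguably more self-contained, since the paper's appeal to ``$\blockpoint(v)$ is random within $V$'' leans on a convention about the $b_{i,j}$ that the definition of the induced partition leaves as ``arbitrary.''
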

\begin{proof}
We first establish that, with probability 1 over $X$ and $x \sim \mu$,
if $f(x) \neq f^\coarse(x)$ then $f$ is non-constant on $\block(x)$.
Fix $X$ and suppose there exists a set $Z$
of positive measure such that for each $x \in Z, f(x) \neq f^\coarse(x)$
but $f$ is not non-constant on $\block(x)$, i.e.~for $V = \block^{-1}(\block(x))$, either
$\mu(V \cap f^{-1}(1)) = \mu(V)$ or $\mu(V \cap f^{-1}(-1)) = \mu(V)$. Then
there is $v \in [r]^d$ such that for $V = \block^{-1}(v)$, $\mu(Z \cap
V) > 0$.  Let $y = \blockpoint(v)$.  If $\mu(V \cap f^{-1}(f(y)) =
\mu(V)$ then $\mu(Z \cap V)=0$, so $\mu(V \cap f^{-1}(f(y)) = 0$. But
for random $X$, the probability that there exists $v \in [r]^d$ such
that $\mu(V \cap f^{-1}(\blockpoint(v))) = 0$ is 0, since
$\blockpoint(v)$ is random within $V$.

Assuming that the above event occurs,
\begin{align*}
\Pru{x \sim \mu}{f(x) \neq f^\coarse(x)}
  &\leq \Pru{x \sim \mu}{f \text{ is non-constant on } \block(x)} \\
  &\leq \Pru{v \sim [r]^d}{f \text{ is non-constant on } v}
    + \|\block(\mu)-\unif([r]^d)\|_\mathsf{TV} \,.
\end{align*}
Since $v \sim [r]^d$ is uniform, the probability of hitting a non-constant block
is at most $r^{-d} \cdot \bbs(f,r)$.
\end{proof}

Next we give a bound on the number of samples required to ensure that $\block(\mu)$ is close to
uniform. We need the following lemma.
\begin{lemma}
\label{lemma:uniform intervals}
Let $\mu$ be continuous probability distribution over $\bR$, $m,r \in \bN$ such that $r$ divides
$m$, and $\delta \in (0,1/2)$. Let $X$ be a set of $m$ points sampled
independently from $\mu$. Write $X = \{ x_1, \dotsc, x_m \}$ labeled such that $x_1 < \dotsm < x_m$
(and write $x_0 = -\infty$).  Then for any $i \in [r]$,
\[
  \Pr{ \mu\left( x_{(i-1)(m/r)}, x_{i(m/r)} \right] < \frac{1-\delta}{r} } \leq
4 \cdot e^{-\frac{\delta^2 m}{32 r}} \,.
\]
\end{lemma}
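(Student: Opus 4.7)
The plan is to reduce the statement to a standard Binomial tail bound via two simplifications. First, since $\mu$ is continuous with CDF $F$, by the probability integral transform the values $F(x_j)$ are i.i.d.~uniform on $[0,1]$, and $\mu((x_{(j)}, x_{(k)}]) = F(x_{(k)}) - F(x_{(j)}) = U_{(k)} - U_{(j)}$, where $U_{(1)} < \dotsb < U_{(m)}$ are the order statistics of $F(x_1),\dotsc,F(x_m)$. So it suffices to bound $\Pr{U_{(im/r)} - U_{((i-1)m/r)} < (1-\delta)/r}$, which removes any dependence on the specific $\mu$.

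The second simplification uses the exchangeability of uniform spacings. Setting $U_{(0)} = 0$ and $U_{(m+1)} = 1$, the spacings $S_j \define U_{(j)} - U_{(j-1)}$ for $j = 1,\dotsc,m+1$ are jointly uniform on the probability simplex, hence exchangeable. Since
\[
U_{(im/r)} - U_{((i-1)m/r)} = S_{(i-1)m/r + 1} + \dotsb + S_{im/r}
\]
is a sum of $m/r$ consecutive spacings, it has the same distribution as $S_1 + \dotsb + S_{m/r} = U_{(m/r)}$. Thus the probability in question equals $\Pr{U_{(m/r)} < (1-\delta)/r}$, a tail probability for a single order statistic whose expectation is approximately $1/r$.

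Finally, $\{U_{(m/r)} \leq t\}$ is the same event as $\{|\{j : U_j \leq t\}| \geq m/r\}$, so
\[
\Pr{U_{(m/r)} \leq (1-\delta)/r} = \Pr{\mathrm{Bin}(m, (1-\delta)/r) \geq m/r}.
\]
The Binomial has mean $m(1-\delta)/r$, and since $\delta \leq 1/2$ the multiplicative Chernoff bound $\Pr{\mathrm{Bin}(n,p) \geq (1+\eta)np} \leq e^{-\eta^2 np/3}$, applied with $\eta = \delta/(1-\delta) \leq 1$, gives an upper bound of order $e^{-\delta^2 m/(3r)}$, comfortably stronger than the claimed $4 e^{-\delta^2 m/(32r)}$.

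The main obstacle is identifying that exchangeability is the right tool. A naive union-bound approach that separately controls $\{U_{((i-1)m/r)} > (i-1)/r + \delta/(2r)\}$ and $\{U_{(im/r)} < i/r - \delta/(2r)\}$ forces a Chernoff inequality for a Binomial whose mean can be as large as $m$, which only produces a rate of $e^{-\delta^2 m/r^2}$ rather than the desired $e^{-\delta^2 m/r}$. Routing through the distributional equality $U_{(im/r)} - U_{((i-1)m/r)} \stackrel{d}{=} U_{(m/r)}$ collapses the problem to a single Binomial with mean $m(1-\delta)/r$, which is exactly where the correct rate comes from. The looser constants $4$ and $32$ in the statement are then easy to absorb, and could come from a slightly less tight form of Chernoff or from a union-bound variant.
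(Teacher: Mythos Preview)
Your proof is correct and takes a genuinely different, cleaner route than the paper. The paper does not use the exchangeability of uniform spacings; instead it conditions on $x^* = x_{(i-1)m/r}$, first bounds the probability that $\beta = \mu(-\infty,x^*]$ deviates from $(i-1)/r$ by more than a $(1\pm\delta/2)$ factor (two Chernoff bounds), and then, conditional on $\beta$ being in range, analyzes the number of the remaining points $X^* = \{x \in X : x > x^*\}$ falling in an interval of mass $(1-\delta)/r$ (another Chernoff bound). A union bound over these events and a symmetry reduction to $i-1 \leq r/2$ produce the factor $4$ and the constant $32$.

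Your approach collapses all of this by observing that $U_{(im/r)} - U_{((i-1)m/r)} \stackrel{d}{=} U_{(m/r)}$ via exchangeability of the Dirichlet$(1,\dotsc,1)$ spacings, so only a single Binomial tail bound with mean $m(1-\delta)/r$ is needed. This gives the sharper constant $e^{-\delta^2 m/(3r)}$ directly, with no conditioning, no symmetry reduction, and no multi-event union bound. The paper's argument, by contrast, has the advantage that it also yields the companion upper-tail estimate $\Pr{\mu(x_{(i-1)m/r}, x_{im/r}] > (1+\delta)/r}$ with essentially the same work (indeed that bound appears inside the paper's proof), whereas your route would require a separate (but equally short) Binomial lower-tail computation for that direction.
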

\begin{proof}
We assume that $i-1 \leq r/2$. If $i-1 > r/2$ then we can repeat the following analysis with
the opposite ordering on the points in $X$. Write $x^* = x_{(i-1)\tfrac{m}{r}}$ and $\beta =
\mu(-\infty, x^*]$. First suppose that $(1-\delta/2)\frac{i-1}{r} < \beta <
(1+\delta/2)\frac{i-1}{r} \leq (1+\delta/2)/2$; we will bound the probability of this event later.

Let $t \in \bR$ be the point such that $\mu(x^*,t] = (1-\delta)/r$ (which must exist since $\mu$ is
continuous). Let $\eta = \frac{\delta}{1-\delta} \geq \delta$.  Write $X^* = \{ x \in X : x > x^*
\}$. The expected value of $|X^* \cap (x^*,t]|$ is $|X^*| \frac{1-\delta}{r(1-\beta)} =
\left(1-\frac{i-1}{r}\right)\frac{1-\delta}{r(1-\beta)}$, where the factor $1-\beta$ in the
denominator is due to the fact that each element of $X^*$ is sampled from $\mu$ conditional on being
larger than $x^*$. The event $\mu(x^*, x_{i(m/r)}] < (1-\delta)/r$ occurs if and only if $|X^* \cap
(x^*,t]| > m/r$, which occurs with probability
\begin{align*}
  \Pr{ |X^* \cap (x^*,t]| > \frac{m}{r} }
= \Pr{ |X^* \cap (x^*,t]| > m\left(1-\frac{(i-1)}{r}\right) \frac{1-\delta}{r(1-\beta)}(1+\eta) }
\end{align*}
where
\begin{align*}
  1 + \eta
      &= \frac{(1-\beta)}{(1-\delta)\left(1-\frac{i-1}{r}\right)} 
      \geq \frac{\left(1-(1+\delta/2)\frac{i-1}{r}\right)}{(1-\delta)\left(1-\frac{i-1}{r}\right)}
      = \frac{1}{1-\delta} \left(1 - \frac{(\delta/2)(i-1)}{r-(i-1)}\right)  \\
      &\geq \frac{1 - \delta/2 }{1-\delta}
      = 1 + \frac{\delta}{2(1-\delta)} \geq 1 + \delta/2 \,.
\end{align*}
Since the expected value satisfies
\[
|X^*|\frac{1-\delta}{r(1-\beta)} \geq \frac{m}{r}(1-\frac{i-1}{r}) \frac{2(1-\delta)}{1 - \delta/2}
\geq \frac{m}{r} (1-\delta/2) \geq \frac{m}{2r}\,,
\]
the Chernoff bound gives
\[
  \Pr{ |X^* \cap (x^*,t]| > \frac{m}{r} }
  \leq \exp{-\frac{\delta^2 |X^*|(1-\delta)}{3 \cdot 4 \cdot r(1-\beta)}}
  \leq e^{-\frac{\delta^2 m}{3 \cdot 4 \cdot 2 r}} \,.
\]
Now let $t \in \bR$ be the point such that $\mu(x^*,t] = (1+\delta)/r$. The expected value of $|X^*
\cap (x^*,t]|$ is now $|X^*|\frac{1+\delta}{r(1-\beta)}$. The event $\mu(x^*, x_{i(m/r)}] >
(1+\delta)/r$ occurs if and only if $|X^* \cap (x^*,t]| < m/r$, which occurs with probability
\[
  \Pr{ |X^* \cap (x^*,t]| < \frac{m}{r} }
  = \Pr{ |X^* \cap (x^*, t]| < m\left(1-\frac{i-1}{r}\right)\frac{1+\delta}{r(1-\beta)}(1-\eta) }
\]
where
\begin{align*}
  1 - \eta
  &= \frac{1-\beta}{(1+\delta)(1-\tfrac{i-1}{r})}
  \leq \frac{1 - (1+\delta/2)\tfrac{i-1}{r}}{(1+\delta)\left(1-\frac{i-1}{r}\right)}
  = \frac{1}{1+\delta}\left(1 + \frac{(\delta/2)(i-1)}{r-(i-1)}\right) \\
  &\leq \frac{1+\delta/2}{1+\delta}
  = 1 - \frac{\delta/2}{1+\delta}
  \leq 1 - \frac{\delta}{4} \,.
\end{align*}
The expected value satisfies $|X^*| \frac{1+\delta}{r(1-\beta)} > m/r$, so the Chernoff bound gives
\[
  \Pr{ |X^* \cap (x^*,t]| < \frac{m}{r} }
  \leq \exp{-\frac{\delta^2 |X^*|(1+\delta)}{2 \cdot 4^2 \cdot r(1-\beta)}}
  \leq e^{-\frac{\delta^2 m}{2 \cdot 4^2}} \,.
\]
It remains to bound the probability that $(1-\delta/2)\frac{i-1}{r} < \beta <
(1+\delta/2)\frac{i-1}{r}$. Define $t \in \bR$ such that $\mu(-\infty, t] =
(1+\delta/2)\frac{i-1}{r}$. $\beta = \mu(-\infty,x^*] \geq (1+\delta/2)\frac{i-1}{r}$ if and only if
$x^* > t$, i.e.~$|X \cap (-\infty,t]| < \frac{i-1}{r}$.  The expected value of $|X \cap
(-\infty,t]|$ is $m\frac{(1+\delta/2)(i-1)}{r}$, so for $\eta = \frac{\delta/2}{1+\delta/2} \geq
\delta/3$, the Chernoff bound implies
\begin{align*}
  \Pr{|X \cap (-\infty,t]| < m\frac{i-1}{r}}
  &= \Pr{|X \cap (-\infty,t]| < m\frac{(1+\delta/2)(i-1)}{r}(1-\eta) } \\
  &\leq e^{-\frac{\delta^2 m(1+\delta/2)(i-1)}{18 r}}
  \leq e^{-\frac{\delta^2 m}{18 r}} \,.
\end{align*}
Now define $t \in \bR$ such that $\mu(-\infty, t] = (1-\delta/2)\frac{i-1}{r}$. $\beta =
\mu(-\infty,x^*] \leq (1-\delta/2)\frac{i-1}{r}$ if and only if $x^* < t$, i.e.~$|X \cap
(-\infty,t]| > \frac{i-1}{r}$. The expected value of $|X \cap (-\infty,t]|$ is
$m\frac{(1-\delta/2)(i-1)}{r}$, so for $\eta = \frac{\delta}{2-\delta} \geq \delta/2$,
\begin{align*}
  \Pr{|X \cap (-\infty,t]| > m\frac{i-1}{r}}
  &= \Pr{|X \cap (-\infty,t]| > m\frac{(1-\delta/2)(i-1)}{r}(1+\eta) } \\
  &\leq e^{-\frac{\delta^2 m(1-\delta/2)(i-1)}{2 \cdot 4 r}}
  \leq e^{-\frac{\delta^2 m}{4^2 r}} \,.
\end{align*}
The conclusion then follows from the union bound over these four events.
\end{proof}

\begin{lemma}
\label{lemma:continuous uniform blocks}
Let $\mu = \mu_1 \times \dotsm \times \mu_d$ be a product distribution over $\bR^d$ where each
$\mu_i$ is continuous. Let $X$ be a random grid with length $m$ sampled from $\mu$, and let $\block
: \bR^d \to [r]^d$ be the $r$-block partition induced by $X$. Then
\[
  \Pru{X}{ \|\block(\mu) - \unif([r]^d) \|_\TV > \epsilon }
    \leq 4 rd \cdot e^{-\frac{\epsilon^2 m}{18 r d^2}}
\]
\end{lemma}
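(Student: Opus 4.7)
The plan is to exploit the fact that $\mu$ is a product distribution and the induced $r$-block partition is axis-aligned, so that $\block(\mu)$ is itself a product distribution on $[r]^d$. Writing $p_i$ for the one-dimensional marginal defined by $p_i(j) \define \mu_i(B_{i,j})$, we have $\block(\mu) = p_1 \times \dotsm \times p_d$. By the subadditivity of total variation distance on product distributions (a standard hybrid/coupling argument),
\[
  \|\block(\mu) - \unif([r]^d)\|_\TV \;\leq\; \sum_{i=1}^d \|p_i - \unif([r])\|_\TV \,,
\]
so it suffices to control each one-dimensional marginal separately.

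For each fixed $i$, I would apply Lemma \ref{lemma:uniform intervals} to the continuous one-dimensional distribution $\mu_i$: the $i$-th coordinates of the $m$ sample points are i.i.d.\ draws from $\mu_i$, and the $j$-th bin $B_{i,j}$ is exactly the interval between the $((j-1)m/r)$-th and $(jm/r)$-th order statistics of these coordinates. Thus, for each pair $(i,j) \in [d] \times [r]$ and any $\delta \in (0,1/2)$, with probability at most $4 e^{-\delta^2 m/(32r)}$ the bin mass $p_i(j)$ falls below $(1-\delta)/r$. A union bound over the $rd$ pairs gives that, except on a set of probability at most $4rd \cdot e^{-\delta^2 m/(32r)}$, every bin in every dimension has mass at least $(1-\delta)/r$.

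On this good event, in each dimension the negative deviations $\max\{0,\, 1/r - p_i(j)\}$ are each at most $\delta/r$ and so sum to at most $\delta$; by conservation of mass, $\sum_j p_i(j) = 1 = \sum_j 1/r$, so the positive deviations sum to the same amount, yielding $\|p_i - \unif([r])\|_\TV \leq \delta$. Summing over the $d$ dimensions gives $\|\block(\mu) - \unif([r]^d)\|_\TV \leq d\delta$, and choosing $\delta = \epsilon/d$ produces the claimed deterministic bound with failure probability $4rd \cdot \exp\!\bigl(-\epsilon^2 m/(32 r d^2)\bigr)$; matching the stated $18$ in the exponent only requires a small sharpening of the constants inside Lemma \ref{lemma:uniform intervals}, not a new idea. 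There is no real obstacle in this argument: the only subtlety worth flagging is that Lemma \ref{lemma:uniform intervals} supplies a one-sided (lower) bound on each bin mass, and the two-sided control needed for TV distance has to be recovered via the normalization identity above rather than by a separate tail bound.
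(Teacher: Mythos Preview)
Your proposal is correct and follows the same skeleton as the paper: both recognize $\block(\mu)$ as the product $p_1 \times \dotsm \times p_d$ of the one-dimensional marginals $p_i(j) = \mu_i(B_{i,j})$, apply Lemma~\ref{lemma:uniform intervals} to each bin, and union-bound over the $rd$ pairs $(i,j)$.

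The one difference is in how the TV distance is controlled once the bin masses are bounded. The paper (implicitly using both the lower \emph{and} upper tail bounds established inside the proof of Lemma~\ref{lemma:uniform intervals}, though only the lower bound appears in its statement) bounds each cell probability $\prod_i p_i(v_i)$ pointwise by $(1 \pm 2\epsilon)r^{-d}$ and sums, choosing $\delta = 4\epsilon/(3d)$ to obtain the constant $18$ in the exponent. You instead use subadditivity of TV on products together with the normalization identity $\sum_j(p_i(j) - 1/r) = 0$ to recover a two-sided TV bound from the one-sided lower bound alone. Your route is arguably cleaner given what Lemma~\ref{lemma:uniform intervals} actually states, and it avoids the paper's unstated reliance on the upper tail; the price is the slightly worse constant $32$ you already flagged, which is indeed only a matter of constants and not of method.
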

\begin{proof}
For a fixed grid $X$ and each $i \in [d]$, write $p_i : [r] \to [0,1]$ be the probability
distribution on $[r]$ with $p_i(z) = \mu_i(B_{i,z})$. Then $\block(\mu) = p_1 \times \dotsm \times
p_d$.

Let $\delta = \frac{4\epsilon}{3d}$.  Suppose that for every $i,j \in [d] \times [r]$ it holds that
$\frac{1+\delta}{r} \leq p_i(j) \geq \frac{1-\delta}{r}$.  Note that $d\delta = \frac{4\epsilon}{3}
\leq \ln(1+2\epsilon) \leq 2\epsilon$. Then for every $v \in [r]^d$,
\begin{align*}
\Pru{u \sim \mu}{\block(u) = v}
&= \prod_{i=1}^d p_i(v_i) \begin{cases}
\leq {(1+\delta)}^d r^{-d}
\leq e^{d\delta} r^{-d} \leq (1+2\epsilon)r^{-d} \\
\geq {(1-\delta)}^d r^{-d}
\geq (1-d\delta) r^{-d} \geq (1-2\epsilon) r^{-d} \,.
\end{cases}
\end{align*}
So
\[
\| \block(\mu) - \unif([r]^d) \|_\TV
= \frac{1}{2} \sum_{v \in [r]^d} | \Pru{u \sim \mu}{\block(u) = v} - r^{-d} |
  \leq \frac{1}{2} \sum_{v \in [r]^d} 2\epsilon r^{-d} = \epsilon \,.
\]
By \cref{lemma:uniform intervals} and the union bound, the probability that there is some $i \in
[d], j \in [r]$ that satisfies $p_i(j) < (1-\delta)/r$ is at most
$4 rd \cdot e^{-\frac{\epsilon^2 m}{18 r d^2}}$.
\end{proof}

\section{Testing Monotonicity}
\label{section:monotonicity}

\subsection{Testing Monotonicity on the Hypergrid}\label{section:warm-up}

A good introduction to downsampling is the following short proof of the main result of Black,
Chakrabarty, \& Seshadhri~\cite{BCS20}. In an earlier work,~\cite{BCS18}, they gave an
$O((d^{5/6}/\epsilon^{4/3})\poly\log(dn))$ tester for the domain $[n]^d$, and in the later work they
showed how to reduce the domain $[n]^d$ to $[r]^d$ for $r = \poly(d/\epsilon)$.

Our monotonicity tester will use as a subroutine the following tester for \emph{diagonal} functions.
For a hypergrid $[n]^d$, a \emph{diagonal} is a subset of points $\{ x \in [n]^d : x = v + \lambda
\vec 1, \lambda \in \bZ \}$ defined by some $v \in [n]^d$. A function $f : [n]^d \to \zo$ is a
\emph{diagonal function} if it has at most one 1-valued point in each diagonal. 

\begin{lemma}
\label{lemma:diagonal tester}
There is an $\epsilon$-tester with one-sided error and query complexity
$O\left(\tfrac{1}{\epsilon}\log^2(1/\epsilon)\right)$ for diagonal functions on $[n]^d$.
\end{lemma}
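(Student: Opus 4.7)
My plan is to design a random-sampling tester with a two-level structure: an outer loop that samples $O(1/\epsilon)$ points uniformly from $[n]^d$, plus, for each sampled $1$-valued point $x$, an inner multi-scale probing subroutine on the diagonal through $x$ that spends $O(\log^2(1/\epsilon))$ queries trying to uncover a second $1$-valued point on that diagonal. If any pair of $1$-values on a common diagonal is found, the tester rejects; otherwise it accepts. One-sided error is immediate: a diagonal function has $\leq 1$ one per diagonal, so no such pair exists.

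The completeness analysis starts with a clean distance characterization. Let $M=|f^{-1}(1)|$ and let $D$ be the number of diagonals containing at least one $1$. Any diagonal function closest to $f$ is obtained by deleting ``extra'' $1$s per diagonal, so $\mathsf{dist}(f,\text{diagonal}) = (M-D)/n^d$, and $\epsilon$-farness gives $M-D \geq \epsilon n^d$. Writing $B$ for the set of $1$-points lying on \emph{bad} diagonals (those with $\geq 2$ ones), one gets $|B| \geq M-D \geq \epsilon n^d$, so a uniform sample lies in $B$ with probability $\geq \epsilon$. Thus in $O(1/\epsilon)$ outer draws, a constant number of samples lie in $B$ with constant probability.

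The inner subroutine DiagonalSearch($x$) runs over $O(\log(1/\epsilon))$ dyadic scales $2^j$: at each scale, it queries $O(\log(1/\epsilon))$ independent uniform positions on the segment of $d(x)$ within distance $2^j$ of $x$. The total cost is $O(\log^2(1/\epsilon))$ queries per call, which with $O(1/\epsilon)$ outer samples yields the claimed total $O(\tfrac{1}{\epsilon}\log^2(1/\epsilon))$ queries. For the analysis, one fixes a bad diagonal with $1$-points at positions $p_1 < \cdots < p_k$ in a line of length $\ell \le n$, with consecutive gaps $g_1,\dots,g_{k-1}$ satisfying $\sum_i g_i \le \ell$. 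A uniform $x \in \{p_1,\dots,p_k\}$ has a nearest neighbor $p'$ at some distance $d$, and at the scale $2^j$ with $2^j \ge d$ the subroutine hits $p'$ with probability $\Omega(\log(1/\epsilon)/2^j)$. Summing over the relevant scales and applying AM--HM to $g_1,\dots,g_{k-1}$ gives a lower bound on the success probability of DiagonalSearch in terms of $(k-1)^2/\ell$; averaging over $x \in B$ and using Cauchy--Schwarz together with the bound $D \le n^{d-1}$ on the number of diagonals should yield that the overall rejection probability per outer round is $\Omega(\epsilon/\log(1/\epsilon))$, which after $O(1/\epsilon)$ rounds is a constant; standard amplification gives high confidence.

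The main obstacle I expect is handling bad diagonals whose $1$-points are spread out, because for such diagonals the subroutine's per-call success probability is individually tiny, and a naive analysis loses polynomial factors. Controlling this requires a careful two-tier bookkeeping: splitting $B$ by ``typical neighbor distance'', and using the fact that the total length of all diagonals is at most $n^d$ to show the adversary cannot simultaneously make gaps too large on too many bad diagonals without violating $|B|\ge \epsilon n^d$. Getting exactly the claimed $O(\tfrac{1}{\epsilon}\log^2(1/\epsilon))$ query bound, rather than the easier $O(1/\epsilon^2)$ one that comes from pairing a uniform $x$ with a uniform $y$ on $d(x)$, hinges on this multi-scale analysis; a worse dependence would not suffice for the downstream monotonicity application but the conceptual outline is unchanged.
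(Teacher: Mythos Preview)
Your outline has a genuine gap. Consider the case $d=1$ (a single diagonal of length $n$) with exactly $\epsilon n$ ones spread at spacing $1/\epsilon$. Here $|B| = \epsilon n$, so an outer uniform draw lands in $B$ with probability $\epsilon$, and with $O(1/\epsilon)$ outer draws you get $O(1)$ hits. But for any such hit $x$, the nearest other $1$ is at distance $\approx 1/\epsilon$, which is your largest scale; at that scale the window has size $\Theta(1/\epsilon)$ and contains $O(1)$ other ones, so each of your $O(\log(1/\epsilon))$ probes succeeds with probability $O(\epsilon)$, and summing over all $O(\log(1/\epsilon))$ scales your inner subroutine succeeds with probability only $O(\epsilon\log^2(1/\epsilon))$. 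The overall rejection probability after $O(1/\epsilon)$ rounds is therefore $O(\epsilon\log^2(1/\epsilon))$, not a constant. Your proposed rescue via the bound $D \le n^{d-1}$ on the number of diagonals cannot help, since here $D=1$. There is also an arithmetic slip: even granting your claimed per-round rejection probability $\Omega(\epsilon/\log(1/\epsilon))$, multiplying by $O(1/\epsilon)$ rounds gives $\Omega(1/\log(1/\epsilon))$, which tends to $0$.

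The paper's tester balances differently. It buckets \emph{all} points $x\in[n]^d$ (not just $1$-valued ones) by the density $R(x)$ of $1$'s on $\mathrm{diag}(x)$ into levels $A_i=\{x:2^{-i}<R(x)\le 2^{-(i-1)}\}$ for $i\le k=\log(4/\epsilon)$, and shows by averaging that some $A_\ell$ has $|A_\ell|\ge \epsilon n^d 2^{\ell-1}/(2k)$. For each level $i$ it uses $\Theta(k/(\epsilon 2^{i}))$ \emph{outer} samples and $\Theta(2^{i})$ \emph{inner} samples drawn uniformly from the whole diagonal (not a geometric window around $x$), rejecting if two distinct $1$'s appear in the inner sample. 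At level $i=\ell$ the outer stage hits $A_\ell$ with constant probability, and on such a diagonal the $1$-density exceeds $2^{-\ell}$, so $\Theta(2^\ell)$ inner draws yield two distinct $1$'s with constant probability. The product $p_i q_i = \Theta(k/\epsilon)$ is constant across levels, giving the $O(k^2/\epsilon)=O(\tfrac{1}{\epsilon}\log^2(1/\epsilon))$ total. The missing idea in your sketch is precisely this density-based bucketing with the outer and inner sample budgets varying inversely across levels; a fixed $O(1/\epsilon)$ outer budget combined with a geometric-window inner search cannot achieve the required balance.
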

\begin{proof}
\newcommand{\diag}{\mathrm{diag}}
For each $t \in [n]$ let $D_t$ be the set of diagonals with length $t$. For any $x \in [n]^d$ let
$\diag(x)$ be the unique diagonal that contains $x$. For input $f : [n]^d \to \zo$ and any $x \in
[n]^d$, let $R(x) = \frac{|\{ y \in \diag(x) : f(y) = 1 \}|}{|\diag(x)|}$.

Suppose that $f$ is $\epsilon$-far from diagonal. Then $f$ must have at least $\epsilon n^d$
1-valued points; otherwise we could set each 1-valued point to 0 to obtain the constant 0 function.
Now observe
\begin{align*}
\Exu{x \sim [n]^d}{R(x)} 
&= \Exu{x \sim [n]^d}{ \sum_{t=1}^n \sum_{L \in D_t}
  \ind{\diag(x)=L} \frac{|\{ y \in L : f(y) = 1 \}|}{t}} \\
&= \sum_{t=1}^n \sum_{L \in D_t} \Pru{x \sim [n]^d}{x \in L} \frac{|\{ y \in L : f(y) = 1\}|}{t}
= \sum_{t=1}^n \sum_{L \in D_t} \frac{t}{n^d} \frac{|\{ y \in L : f(y) = 1\}|}{t} \\
&= \frac{1}{n^d} |\{ y \in [n]^d : f(y) = 1 \}| \geq \epsilon \,.
\end{align*}
For each $i$, define $A_i = \left\{ x \in [n]^d : \frac{1}{2^i} < R(x) \leq \frac{1}{2^{i-1}}
\right\}$.  Let $k = \log(4/\epsilon)$. Then
\begin{align*}
  \epsilon
  &\leq \Ex{R(x)}
  \leq \sum_{i=1}^{\infty} \frac{|A_i|}{n^d} \max_{x \in A_i} R(x)
  \leq \sum_{i=1}^{\infty} \frac{|A_i|}{n^d 2^{i-1}}
  \leq \sum_{i=1}^{k} \frac{|A_i|}{n^d 2^{i-1}} + \sum_{i=k+1}^{\infty} \frac{1}{2^{i-1}} \\
  &\leq \sum_{i=1}^{k} \frac{|A_i|}{n^d 2^{i-1}} + \frac{1}{2^{k-1}}
  \leq \sum_{i=1}^{k} \frac{|A_i|}{n^d 2^{i-1}} + \frac{\epsilon}{2} \\
  \implies \frac{\epsilon}{2} &\leq \sum_{i=1}^{k} \frac{|A_i|}{n^d 2^{i-1}} \,.
\end{align*}
Therefore there is some $\ell \in [k]$ such that $|A_\ell| \geq \frac{\epsilon n^d 2^{\ell-1}}{2k}$.

The tester is as follows.  For each $i \in [k]$:
\begin{enumerate}
\item Sample $p = \frac{k}{\epsilon 2^{i-2}}\ln(6)$ points $x_1,
\dotsc, x_p \sim [n]^d$.
\item For each $j \in [p]$, sample $q = 2^{i+2}\ln(12)$ points $y_1, \dotsc, y_q$ from $\diag(x_i)$ and reject
if there are two distinct 1-valued points in the sample.
\end{enumerate}
The query complexity of the tester is
$\sum_{i=1}^k 4^2 \ln(6)\ln(12) \frac{k}{\epsilon 2^i} 2^i = O\left(\frac{1}{\epsilon}
\log^2(1/\epsilon)\right)$.

The tester will clearly accept any diagonal function.  Now suppose that $f$ is $\epsilon$-far from
having this property, and let $\ell \in [k]$ be such that $|A_\ell| \geq \frac{\epsilon n^d
2^{\ell-2}}{k}$. On iteration $i=\ell$, the algorithm samples $p = \frac{k}{\epsilon
2^{\ell-2}}\ln(6)$ points $x_1, \dotsc, x_p$. The probability that $\forall j \in [p], x_j \notin
A_\ell$ is at most
\[
\left(1 - \frac{|A_\ell|}{n^d}\right)^p \leq \left(1 - \frac{\epsilon 2^{\ell-2}}{k}\right)^p
  \leq \exp{-\frac{\epsilon p 2^{\ell-2}}{k}} \leq 1/6 \,.
\]
Now assume that there is some $x_j \in A_\ell$, so that $R(x_j) > 2^{-\ell}$. Let $A,B \subset
\diag(x_j)$ be disjoint subsets that partition the 1-valued points in $\diag(x_i)$ into
equally-sized parts. Then for $y$ sampled uniformly at random from $\diag(x_j)$, $\Pr{y \in A},
\Pr{y \in B} \geq 2^{-(\ell+1)}$. The probability that there are at least 2 distinct 1-valued points
in $y_1, \dotsc, y_q$ sampled by the algorithm is at least the probability that one of the first
$q/2$ samples is in $A$ and one of the last $q/2$ samples is in $B$. This fails to occur with
probability at most $2(1-2^{-(\ell+1)})^{q/2} \leq 2e^{-q2^{-(\ell+2)}} \leq 1/6$. So the total
probability of failure is at most $2/6=1/3$.
\end{proof}

\begin{theorem}\label{thm:monotonicity tester grid}
There is a non-adaptive monotonicity tester on domain $[n]^d$ with one-sided error and query
complexity $\widetilde O\left(\frac{d^{5/6}}{\epsilon^{4/3}}\right)$.
\end{theorem}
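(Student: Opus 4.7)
The plan is to use the downsampling framework of \cref{section:downsampling} to reduce from the unknown-size hypergrid $[n]^d$ to the fixed-size hypergrid $[r]^d$ with $r=\Theta(d/\epsilon)$, and then invoke the known monotonicity tester of Black--Chakrabarty--Seshadhri on $[r]^d$, whose $\widetilde O(d^{5/6}/\epsilon^{4/3}\cdot \poly\log(dr))$-query complexity collapses to $\widetilde O(d^{5/6}/\epsilon^{4/3})$ once $r=\poly(d/\epsilon)$ is absorbed into the $\poly\log$. \cref{lemma:diagonal tester} is used as an internal subroutine of this tester.

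First I would bound the block boundary size: for any monotone $f:[n]^d\to\{0,1\}$, the restriction to any axis-aligned affine subspace is again monotone, with a connected down-set as its $0$-region and a connected up-set as its $1$-region. The \emph{bbs-for-low-components} lemma sketched in the introduction therefore applies with $k=2$ and yields $\bbs(f,r)\le (\epsilon/4) r^d$ for $r=\Theta(d/\epsilon)$. For the uniform distribution on $[n]^d$ I would then take the deterministic equal-sized $r$-block partition (padding if $r\nmid n$) and set $\blockpoint(v)=\min B(v)$, so that $\block(\unif([n]^d))=\unif([r]^d)$ exactly, the TV term in the discrete analog of \cref{prop:distance to coarse} vanishes, and $\dist(f,f^\coarse)\le \epsilon/4$.

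Because $\min B(\cdot)$ respects the coordinate order, $f$ monotone on $[n]^d$ implies $f^\block$ is monotone on $[r]^d$; conversely, if $f^\block$ were $(\epsilon/4)$-close to some monotone $g$ on $[r]^d$, then $g\circ\block$ would be monotone on $[n]^d$ and within $\epsilon/4+\epsilon/4<\epsilon$ of $f$. Contrapositively, $f$ being $\epsilon$-far from monotone implies $f^\block$ is at least $\epsilon/4$-far from monotone on $[r]^d$. The final tester simply runs BCS18 on $f^\block$ with distance parameter $\epsilon/4$, simulating each query $f^\block(v)$ by one query $f(\blockpoint(v))$. Completeness is inherited from the one-sidedness of BCS18, soundness from the distance-preservation step above, and non-adaptivity from the fact that the partition is fixed in advance and each query to $f^\block$ maps to a single query to $f$.

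The main obstacle I expect is the block-boundary bound. While the general lemma from the introduction packages it, a direct proof has to bound $\bbs$ uniformly over all (not merely equal-sized) $r$-block partitions; the natural route is to charge each non-constant block to an axis-parallel flipping pair of points it contains and exploit that a monotone $f$ flips at most once along any axis-aligned line of $[n]^d$, giving a bound of $O(dr^{d-1})$. A minor secondary point is the rounding needed when $r\nmid n$, which I would handle by letting block widths differ by at most one; this perturbs $\block(\unif([n]^d))$ from uniform only negligibly and can be absorbed into the constant in $\epsilon/4$.
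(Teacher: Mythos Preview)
Your soundness argument has a genuine gap. You write that if $f^\block$ is $(\epsilon/4)$-close to a monotone $g$ on $[r]^d$, then $g\circ\block$ is within $\epsilon/4+\epsilon/4$ of $f$, using $\dist(f,f^\coarse)\le\epsilon/4$. But that last bound was derived from the block-boundary estimate $\bbs(f,r)\le(\epsilon/4)r^d$, which you proved only for \emph{monotone} $f$. In the soundness direction $f$ is arbitrary, and $\dist(f,f^\coarse)$ can be $\Omega(1)$. Concretely: for each block $v$ with $v_1\le r/2$, set $f$ to be $0$ at $\min B_v$ and $1$ elsewhere in $B_v$; for $v_1>r/2$ do the opposite. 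Then $f^\block$ is the monotone dictator on coordinate $1$, so your tester accepts, yet $f$ is $\approx 1/2$-far from monotone.

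This is exactly the obstacle the paper's proof is built to overcome, and it is why \cref{lemma:diagonal tester} is not merely ``an internal subroutine'' but the crux. The paper does not try to bound $\dist(f,f^\coarse)$ directly. Instead it defines the boundary-block indicator $b:[r]^d\to\{0,1\}$ (computable with two queries to $f$, at $\min B_v$ and $\max B_v$) and runs three one-sided tests: (i) that $f$ agrees with $f^\coarse$ on all non-boundary blocks, (ii) that $b$ is a diagonal function, and (iii) that a modified $h$ (equal to $f^\block$ off boundary blocks, $0$ on them) is monotone via BCS. For monotone $f$ all three hold exactly, preserving one-sidedness. If all three pass with parameter $\epsilon/4$, then (ii) forces at most $O(\epsilon r^d)$ boundary blocks even for non-monotone $f$, which together with (i) gives $\dist(f,h^\coarse)\le 3\epsilon/4$, and (iii) finishes the triangle inequality. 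Your plan is missing tests (i) and (ii); simply running BCS on $f^\block$ cannot detect that $f$ disagrees with $f^\coarse$ inside the blocks.
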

\begin{proof}
Set $r = \ceil{4d/\epsilon}$, and assume without loss of generality that $r$
divides $n$.
Partition $[n]$ into $r$ intervals $B_i = \{(i-1)(n/r) + 1, \dotsc, i(n/r)\}$. For each $v \in[r]^d$
write $B_v = B_{v_i} \times \dotsm \times B_{v_d}$. Define $\block : [n]^d \to [r]^d$ where
$\block(x)$ is the unique vector $v \in [r]^d$ such that $x \in B_v$. Define
$\block^{-\downarrow}(v) = \min\{x \in B_v\}$ and $\block^{-\uparrow}(v) = \max\{x \in B_v \}$,
where the minimum and maximum are with respect to the natural ordering on $[n]^d$. For $f : [n]^d
\to \zo$, write $f^\block : [r]^d \to \zo, f^\block(v) = f(\block^{-\downarrow}(v))$. We may
simulate queries $v$ to $f^\block$ by returning $f(\block^{-\downarrow}(v))$. We will call $v \in
[r]^d$ a \emph{boundary block} if $f(\block^{-\downarrow}(v)) \neq f(\block^{-\uparrow}(v))$.

The test proceeds as follows: On input $f : [n]^d \to \zo$ and a block $v \in [r]^d$, define the
following functions:
\begin{align*}
  g : [n]^d \to \zo, \qquad g(x) &=
      \begin{cases}
        f^\block(\block(x)) &\text{ if } \block(x) \text{ is not a boundary block} \\
        f(x)                &\text{ if } \block(x) \text{ is a boundary block.}
      \end{cases} \\
  b : [r]^d \to \zo, \qquad b(v) &=
      \begin{cases}
        0 &\text{ if } v \text{ is not a boundary block} \\
        1 &\text{ if } v \text{ is a boundary block.}
      \end{cases} \\
  h : [r]^d \to \zo, \qquad h(v) &=
      \begin{cases}
        f^\block(v) &\text{ if } v \text{ is not a boundary block} \\
        0           &\text{ if } v \text{ is a boundary block.}
      \end{cases}
\end{align*}
Queries to each of these functions can be simulated by 2 or 3 queries to $f$. The tester performs:
\begin{enumerate}
\item Test whether $g=f$, or whether $\dist(f,g) > \epsilon/4$, using $O(1/\epsilon)$ queries.
\item Test whether $b$ is diagonal, or is $\epsilon/4$-far from diagonal, using Lemma
\ref{lemma:diagonal tester}, with $O\left(\tfrac{1}{\epsilon}\log^2(1/\epsilon)\right)$ queries.
\item Test whether $h$ is monotone or $\epsilon/4$-far from monotone, using the tester of Black,
Chakrabarty, \& Seshadhri with $\widetilde O\left(\frac{d^{5/6}}{\epsilon^{4/3}}\right)$ queries.
\end{enumerate}
\begin{claim}
If $f$ is monotone, the tester passes all 3 tests with probability 1.
\end{claim}
\begin{proof}[Proof of claim]
To see that $g=f$, observe that if $v=\block(x)$ is not a boundary block then
$f(\block^{-\downarrow}(v)) = f(\block^{-\uparrow}(v))$. If $f(x) \neq f^\block(\block(x))$ then
$f(x) \neq f(\block^{-\downarrow}(v))$ and $f(x) \neq f(\block^{-\uparrow}(v))$ while
$\block^{-\downarrow}(v) \preceq x \preceq \block^{-\uparrow}(v)$, and this is a violation of the
monotonicity of $f$. Therefore $f$ will pass the first test with probability 1.

To see that $f$ passes the second test with probability 1, observe that if $f$ had 2 boundary blocks
in some diagonal, then there are boundary blocks $u,v \in [r]^d$ such that $\block^{-\uparrow}(u)
\prec \block^{-\downarrow}(v)$. But then there is $x,y \in [n]^d$ such that $\block(x)=u,
\block(y)=v$ and $f(x) = 1, f(y) = 0$; since $x \preceq \block^{-\uparrow}(u) \prec
\block^{-\downarrow}(v) \preceq y$, this contradicts the monotonicity of $f$. So $f$ has at most 1
boundary block in each diagonal.

To see that $h$ is monotone, it is sufficient to consider the boundary blocks, since all other
values are the same as $f^\block$. Let $v \in [r]^d$ be a boundary block, so there exist $x,y \in
[n]^d$ such that $\block(x)=\block(y)$ and $f(x)=1,f(y)=0$. Suppose $u \prec v$ is not a boundary
block (if it is a boundary block then $h(u)=h(v)=0$). If $h(u)=1$ then
$f(\block^{-\downarrow}(u))=1$, but $\block^{-\downarrow}(u) \prec \block^{-\downarrow}(v) \preceq
y$ while $f(\block^{-\downarrow}(u)) > f(y)$, a contradiction. So it must be that $h(u)=0$ whenever
$u \prec v$. For any block $u \in [r]^d$ such that $v \prec u$, we have $0 = h(v) \leq h(u)$, so
monotonicity holds. Since the tester of Black, Chakrabarty, \& Seshadhri has one-sided error, the
test passes with probability 1.
\end{proof}

\begin{claim}
\label{claim:monotonicity grid claim 2}
If $g$ is $\epsilon/4$-close to $f$, $b$ is $\epsilon/4$-close to diagonal, and $h$ is
$\epsilon/4$-close to monotone, then $f$ is $\epsilon$-close to monotone.
\end{claim}
\begin{proof}[Proof of claim]
Let $h^\coarse : [n]^d \to \zo$ be the function $h^\coarse(x) = h(\block(x))$. Suppose that $f(x)
\neq h^\coarse(x)$. If $v = \block(x)$ is not a boundary block of $f$ then $h^\coarse(x) = h(v) =
f^\block(v) = g(x)$, so $f(x) \neq g(x)$. If $v$ is a boundary block then $h^\coarse(x) = h(v) = 0$
so $f(x)=1$, and $b(v) = 1$.

Suppose for contradiction that there are more than $\tfrac{\epsilon}{2} r^d$ boundary blocks $v \in
[r]^d$, so there are more than $\tfrac{\epsilon}{2} r^d$ 1-valued points of $b$. Any diagonal
function  has at most $d r^{d-1}$ 1-valued points. Therefore the distance of $b$ to diagonal is at
least
\[
r^{-d}\left(\frac{\epsilon}{2} r^d - d r^{d-1}\right)
= \frac{\epsilon}{2} - \frac{d}{r}
= \frac{\epsilon}{2} - \frac{\epsilon}{4} = \frac{\epsilon}{4} \,,
\]
a contradiction. So $f$ has at most $\tfrac{\epsilon}{2} r^d$ boundary blocks. Now
\begin{align*}
\dist(f,h^\coarse)
&= \dist(f,g) + \Pru{x \sim [n]^d}{f(x)=1, \block(x) \text{ is a boundary block}}
\leq \frac{\epsilon}{4} + r^{-d} \cdot \frac{\epsilon r^d}{2} = \frac{3}{4} \epsilon \,.
\end{align*}
Let $p : [r]^d \to \zo$ be a monotone function minimizing the distance to $h$, and let $p^\coarse :
[n]^d \to \zo$ be the function $p^\coarse(x) = p(\block(x))$. Then
\[
  \dist(h^\coarse, p^\coarse) = \Pru{x \sim [n]^d}{h(\block(x)) \neq p(\block(x))} = \Pru{v \sim
[r]^d}{ h(v) \neq p(v) } \leq \epsilon / 4 \,.
\]
Finally, the distance of $f$ to the nearest monotone function is at most
\[
  \dist(f,p^\coarse) \leq \dist(f, h^\coarse) + \dist(h^\coarse,p^\coarse)
  \leq \frac{3}{4}\epsilon + \frac{1}{4} \epsilon = \epsilon \,. \qedhere
\]
\end{proof}
These two claims suffice to establish the theorem.
\end{proof}

\subsection{Monotonicity Testing for Product Distributions}

The previous section used a special case of downsampling, tailored for the uniform distribution over
$[n]^d$.  We will call a product distribution $\mu = \mu_1 \times \dotsm \times \mu_d$ over $\bR^d$
\emph{continuous} if each of its factors $\mu_i$ are continuous (i.e.~absolutely continuous with
respect to the Lebesgue measure). The proof for discrete distributions is in \cref{section:finite
algorithms}.

\thmmonotonicity*

\begin{proof}
We follow the proof of \cref{thm:monotonicity tester grid}, with some small changes. Let $r =
\ceil{16 d/\epsilon}$. The tester first
samples a grid $X$ with length $m = O\left(\frac{r d^2}{\epsilon^2}\log(rd)\right)$ and constructs
the induced $(r+2)$-block partition, with cells labeled $\{0, \dotsc, r+1\}^d$. We call a block $v
\in \{0, \dotsc, r+1\}^d$ \emph{upper extreme} if there is some $i \in [d]$ such that $v_i = r+1$,
and we call it \emph{lower extreme} if there is some $i \in [d]$ such that $v_i = 0$ but $v$ is not
upper extreme. Call the upper extreme blocks $U$ and the lower extreme blocks $L$. Note that $[r]^d
= \{0, \dotsc, r+1 \}^d \setminus (U \cup L)$.

For each $v \in [r]^d$, we again define $\block^{-\uparrow}(v), \block^{-\downarrow}(v)$ as,
respectively, the supremal and infimal point $x \in \bR^d$ such that $\block(x) = v$. The algorithm
will ignore the extreme blocks $U \cup L$, which do not have a supremal or an infimal point.
Therefore it is not defined whether these blocks are boundary blocks.

By \cref{lemma:continuous uniform blocks}, with probability at least $5/6$, we will have $\|
\block(\mu) - \unif(\{0,\dotsc,r+1\}) \|_\TV \leq \epsilon/8$.
We define $b,h$ as before, with domain $[r]^d$. Define $g$ similarly but with domain $\bR^d$ and
values
\[
  g(x) = \begin{cases}
    1    &\text{ if } \block(x) \in U \\
    0    &\text{ if } \block(x) \in L \\
    f(x) &\text{ if } \block(x) \in [n]^d \text{ is a boundary block} \\
    f^\block(\block(x)) &\text{ otherwise.}
  \end{cases}
\]
If $f$ is monotone, it may now be the case $f \neq g$, but we will have $f(x) = g(x)$ for all $x$
with $\block(x) \in [r]^d$, where the algorithm will make its queries. The algorithm will test
whether $f(x)=g(x)$ on all $x$ with $\block(x) \in [r]^d$, or $\epsilon/8$-far from this property,
which can be again done with $O(1/\epsilon)$ samples. Note that if $f$ is $\epsilon/8$-close to
having this property, then
\begin{align*}
  \dist_\mu(f,g)
  &\leq \Pru{x \sim \mu}{\block(x) \notin [n]^d} + \epsilon/8 \\
  &\leq \frac{d(r+2)^{d-1}}{(r+2)^d} + \epsilon/8 + \| \block(\mu) - \unif([r]^d \cup U \cup L)
\|_\TV \\
  &\leq  \frac{\epsilon}{16} + \frac{\epsilon}{8} + \frac{\epsilon}{4} \leq \frac{\epsilon}{2} \,.
\end{align*}
The algorithm then procedes as before, with error parameter $\epsilon/2$. To test whether $g=f$, the
algorithm samples from $\mu$ and throws away any sample $x \in \bR^d$ with $\block(x) \notin [r]^d$.
It then tests $b$ and $h$ using the uniform distribution on $[r]^d$. It suffices to prove the
following claim, which replaces \cref{claim:monotonicity grid claim 2}.

\begin{claim}
If $g$ is $\epsilon/2$-close to $f$, $b$ is $\epsilon/16$-close to diagonal, and $h$ is
$\epsilon/8$-close to monotone, then $f$ is $\epsilon$-close to monotone.
\end{claim}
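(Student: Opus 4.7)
The plan mirrors \cref{claim:monotonicity grid claim 2}: I would construct an explicit monotone function $p^\coarse : \bR^d \to \zo$ close to $f$ under $\mu$, by extending the best monotone approximation of $h$ on $[r]^d$ to the extreme blocks in $U \cup L$, and then bound $\dist_\mu(f, p^\coarse)$ by a triangle inequality through $g$ and an intermediate coarsening $h^\coarse$. The two new features compared to the $[n]^d$ proof are that (a) the extreme blocks need separate treatment to keep $p^\coarse$ monotone on all of $\bR^d$, and (b) bounds that hold under $\unif([r]^d)$ must be transferred to $\mu$ using $\|\block(\mu) - \unif(\{0,\dotsc,r+1\}^d)\|_\TV \leq \epsilon/8$, which holds with probability $\geq 5/6$ by \cref{lemma:continuous uniform blocks}.

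Concretely, let $p : [r]^d \to \zo$ be a monotone function minimizing $\dist_\unif(p,h)$, so $\dist_\unif(p,h) \leq \epsilon/8$, and extend it by $p^\coarse(x) = 1$ on $\block^{-1}(U)$, $p^\coarse(x) = 0$ on $\block^{-1}(L)$, and $p^\coarse(x) = p(\block(x))$ when $\block(x) \in [r]^d$. Monotonicity of $p^\coarse$ on $\bR^d$ follows from a short case analysis on $\block(x),\block(y) \in \{U,L,[r]^d\}$ whenever $x \preceq y$: the key observation is that $\block$ is coordinatewise monotone, so a coordinate equal to $r+1$ in $\block(x)$ forces the same coordinate of $\block(y)$ to be $r+1$ (hence $\block(x) \in U \Rightarrow \block(y) \in U$), and symmetrically $\block(y) \in L \Rightarrow \block(x) \in L$; this rules out every cross-case that could violate monotonicity. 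Define an auxiliary $h^\coarse$ that agrees with $p^\coarse$ on $\block^{-1}(U \cup L)$ and with $h \circ \block$ on the rest, so $g$ and $h^\coarse$ differ only on boundary blocks in $[r]^d$, and $h^\coarse$ and $p^\coarse$ differ only on interior blocks where $h \neq p$.

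For the distance bound, apply $\dist_\mu(f, p^\coarse) \leq \dist_\mu(f,g) + \dist_\mu(g, h^\coarse) + \dist_\mu(h^\coarse, p^\coarse)$. The first term is $\leq \epsilon/2$ by hypothesis. For the second, since $b$ is $\epsilon/16$-close to diagonal and any diagonal function on $[r]^d$ has at most $d r^{d-1}$ $1$-points, the number of boundary blocks is at most $(\epsilon/16) r^d + d r^{d-1} \leq (\epsilon/8) r^d$, using $r \geq 16 d/\epsilon$. Converting this to a probability under $\mu$ via the TV inequality (and $r^d/(r+2)^d \leq 1$) gives $\dist_\mu(g, h^\coarse) \leq \epsilon/8 + \epsilon/8 = \epsilon/4$, and the same TV conversion applied to the $(\epsilon/8) r^d$ interior blocks where $h \neq p$ yields $\dist_\mu(h^\coarse, p^\coarse) \leq \epsilon/4$. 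Summing gives $\dist_\mu(f, p^\coarse) \leq \epsilon$. The only genuinely subtle step is verifying monotonicity of $p^\coarse$, since the extreme blocks are unbounded and a naive lift of $p$ would not work; everything else is routine bookkeeping with the triangle inequality and \cref{lemma:continuous uniform blocks}.
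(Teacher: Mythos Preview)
Your proposal is correct and follows essentially the same route as the paper: define $p^\coarse$ by lifting the best monotone approximation $p$ of $h$ to $\bR^d$, setting it to $1$ on $U$ and $0$ on $L$ (which coincides with the paper's choice $p^\coarse = g$ on $U\cup L$), then bound $\dist_\mu(f,p^\coarse)$ via the triangle inequality through $g$, counting boundary blocks and blocks where $h\neq p$, and transferring to $\mu$ via the TV bound. Your version differs only cosmetically: you insert an explicit intermediate $h^\coarse$ and split the bound into three terms where the paper bounds $\dist_\mu(g,p^\coarse)$ directly in one step, and you spell out the monotonicity case analysis for $p^\coarse$ that the paper merely asserts.
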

\begin{proof}[Proof of claim]
Let $p : [r]^d \to \zo$ be a monotone function minimizing the distance to $h$. Then $p(v) \neq h(v)$
on at most $\frac{\epsilon r^d}{8}$ blocks $v \in [r]^d$. Define $p^\coarse : \bR^d \to \zo$ as
$p^\coarse(x) = p(\block(x))$ when $\block(x) \in [r]^d$, and $p^\coarse(x) = g(x)$ when $\block(x)
\in U \cup L$. Note that $p^\coarse$ is monotone.

By the triangle inequality,
\[
  \dist_\mu(f,p^\coarse) \leq \dist_\mu(f, g) + \dist_\mu(g, p^\coarse) \,.
\]
From above, we know $\dist_\mu(f,g) \leq \epsilon/2$. To bound the second term, observe that since
$b$ is $\epsilon/16$-close to diagonal, there are at most
\[
\frac{\epsilon}{16} r^d + dr^{d-1} \leq \frac{\epsilon}{16} r^d + \frac{d}{r} r^d \leq
\frac{\epsilon}{16} r^d + \frac{\epsilon}{16} r^d = \frac{\epsilon}{8} r^d
\]
boundary blocks. Then observe that if $g(x) \neq p^\coarse(x)$ then $\block(x) \in [r]^d$ and either
$\block(x)$ is a boundary block, or $g(x) = f^\block(\block(x)) = h(\block(x))$ and $h(\block(x))
\neq p(\block(x))$. Then
\begin{align*}
  \dist_\mu(g, p^\coarse)
  &\leq \left(\frac{1}{(r+2)^d} \sum_{v \in [r]^d}
    \ind{ v \text{ is a boundary block, or } h(v) \neq p(v) } \right) \\
  &\qquad + \|\block(\mu) - \unif(\{0,\dotsc,r+1\}^d) \|_\TV \\
  &\leq \frac{\epsilon r^d}{8 r^d} + \frac{\epsilon r^d}{8 r^d} + \frac{\epsilon}{4}
  \leq \frac{\epsilon}{2} \,. \qedhere
\end{align*}
\end{proof}
\end{proof}

\section{Learning and Testing Functions of Convex Sets}\label{section:convex sets}

In this section we present our learning and testing results for functions of $k$ convex sets: an
agnostic learning algorithm, a sample-based distance approximator, and a sample-based one-sided
tester. All our algorithms will follow from more general results that actually hold for any class
$\cH$ with bounded $r$-block boundary size; this shows that bounded block-boundary size is
sufficient to guarantee learnability in product distributions.

Let $\cC$ be the set of functions $f : \bR^d \to \pmset$ such that $f^{-1}(1)$
is convex. Let $\cB_k$ be the set of all Boolean functions $h : \pmset^k \to \pmset$.
\begin{definition}[Function Composition]
For a set $\cH$ of functions $h : \bR^d \to \pmset$,
we will define the composition $\cB_k \circ \cH$ as the set of functions of the form
$f(x) = g(h_1(x), \dotsc, h_k(x))$ where $g \in \cB_k$ and each $h_i$ belongs to $\cH$.
\end{definition}

\begin{proposition}\label{prop:resolution of composed functions}
Let $\cH$ be any class of functions $\bR^d \to \pmset$ and fix any $r$. Then
$\bbs(\cB_k \circ \cH,r) \leq k \cdot \bbs(\cH,r)$.
\end{proposition}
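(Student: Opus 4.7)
The plan is to fix an arbitrary $r$-block partition $\block : \bR^d \to [r]^d$, and show that the set of blocks on which $f = g(h_1,\dotsc,h_k)$ is non-constant is contained in the union over $i \in [k]$ of the sets of blocks on which $h_i$ is non-constant. Once this containment is established, the conclusion is immediate: the number of blocks on which $f$ is non-constant is at most $\sum_{i=1}^k \bbs(h_i,r) \leq k \cdot \bbs(\cH,r)$, and taking the maximum over all $r$-block partitions (and over $f \in \cB_k \circ \cH$) proves the proposition.

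\textbf{Key step (the contrapositive).} I would argue by contrapositive: fix a block $v \in [r]^d$ and suppose that \emph{every} $h_i$ is constant on $v$, in the sense of \cref{def:block boundary size}. Then for each $i$ there is a value $b_i \in \pmset$ such that the set $E_i \define \{ x \in \block^{-1}(v) : h_i(x) \neq b_i \}$ has measure zero (otherwise both $\{h_i = +1\} \cap \block^{-1}(v)$ and $\{h_i = -1\} \cap \block^{-1}(v)$ would have positive measure, making $h_i$ non-constant on $v$). Setting $E \define \bigcup_{i=1}^k E_i$, which still has measure zero, we see that on the full-measure set $\block^{-1}(v) \setminus E$, the tuple $(h_1(x),\dotsc,h_k(x))$ equals $(b_1,\dotsc,b_k)$, so $f(x) = g(b_1,\dotsc,b_k)$ is a single fixed value. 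Thus if $f$ is non-constant on $v$ (which requires both $f^{-1}(1) \cap \block^{-1}(v)$ and $f^{-1}(-1) \cap \block^{-1}(v)$ to have positive measure), then at least one of the $h_i$ must be non-constant on $v$.

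\textbf{Putting it together and the main subtlety.} Letting $N(f) \subseteq [r]^d$ and $N(h_i) \subseteq [r]^d$ denote the sets of blocks on which $f$ and $h_i$, respectively, are non-constant, the contrapositive gives $N(f) \subseteq \bigcup_{i=1}^k N(h_i)$, so $|N(f)| \leq \sum_i |N(h_i)| \leq k \cdot \bbs(\cH,r)$. Taking the maximum over all $r$-block partitions yields $\bbs(f,r) \leq k \cdot \bbs(\cH,r)$, and taking the maximum over $f \in \cB_k \circ \cH$ gives the proposition. The only subtle point is the measure-zero bookkeeping in the contrapositive step, specifically that a finite union of measure-zero sets is still measure zero so that we can simultaneously pin down all $k$ coordinates of $(h_1,\dotsc,h_k)$ on a full-measure subset of the block; this is the only place the argument could go wrong and is essentially automatic once phrased correctly.
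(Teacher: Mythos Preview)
Your proposal is correct and takes essentially the same approach as the paper: the paper's proof is the one-line observation that if $f = g(h_1,\dotsc,h_k)$ is non-constant on a block then some $h_i$ is non-constant on that block, followed by the same union bound. Your write-up is simply more careful than the paper's about the measure-zero bookkeeping implicit in \cref{def:block boundary size}, which the paper glosses over.
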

\begin{proof}
If $f(\cdot) = g(h_1(\cdot),\dotsc,h_k(\cdot))$ is not constant on
$\block^{-1}(v)$ then one of the $h_i$ is not constant on that block. Therefore
$\bbs(f,r) \leq \sum_{i=1}^k \bbs(h_i,r) \leq k \cdot \bbs(\cH,r)$.
\end{proof}

\begin{lemma}
For any $r$, $\bbs(\cB_k \circ \cC,r) \leq 2dkr^{d-1}$.
\end{lemma}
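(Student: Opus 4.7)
The plan is to use the composition proposition (\cref{prop:resolution of composed functions}) to reduce to the single-convex-set case: since $\bbs(\cB_k \circ \cC, r) \leq k \cdot \bbs(\cC, r)$, it suffices to show $\bbs(\cC, r) \leq 2d r^{d-1}$ for a single convex set $K \subseteq \bR^d$.

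Fix such a $K$ and an $r$-block partition. My plan is a charging argument that assigns each non-constant block to some pair $(i, s) \in [d] \times \{+, -\}$ in a way that at most $r^{d-1}$ blocks receive each charge, giving a total bound of $2d r^{d-1}$. The natural attempt is, for each direction $i \in [d]$ and each column $C_{i,w}$ (with $w \in [r]^{d-1}$), to mark the topmost (charge $(i, +)$) and bottommost (charge $(i, -)$) non-constant blocks in the column. This gives $\leq 2r^{d-1}$ marks per direction and $\leq 2d r^{d-1}$ marks total, so the bound follows if every non-constant block is extremal in at least one direction.

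To show every non-constant $B_v$ is extremal in some direction, I would argue by contradiction using convexity. If $B_v$ is non-extremal in every direction $i$, then in each column $C_{i, v_{-i}}$ there exist non-constant blocks strictly above and below $B_v$, each containing a $K$-point $q^{+,i}$ or $q^{-,i}$. By convexity of $K$, the segment from $q^{+,i}$ to $q^{-,i}$ lies entirely in $K$ and passes through $B_v$ (its $i$-th coordinate sweeps across $B_v$'s $i$-interval, and the other coordinates stay in the convex cross-section $B_{-i, v_{-i}}$). Combining the existence of these $2d$ $K$-points surrounding $B_v$ in all axis directions with the existence of a $K^c$-point $y \in B_v$ and a supporting hyperplane separating $y$ from $K$ should produce a contradiction along the hyperplane's dominant axis direction.

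The main obstacle is making this contradiction rigorous in general dimension: the supporting hyperplane at $y$ has some normal $n$, but the dominant-coordinate argument alone does not immediately contradict the existence of $K$-neighbors along the corresponding axis when $n$ is significantly tilted with respect to the grid (for instance, a $45^\circ$-rotated diamond in a coarse grid can have a middle non-constant cell that is not extremal among $K$-intersecting cells in either direction, so the charging must be chosen carefully). A natural backup is to proceed by strong induction on $d$: charge any non-extremal non-constant block to a non-constant block of the $(d-1)$-dimensional convex cross-section $K \cap \{x_i = a_{i,j}\}$ for the appropriate breakpoint, then apply the inductive bound $\bbs(\cC, r) \leq 2(d-1) r^{d-2}$ and combine with the direct extremal-block count to close the recursion.
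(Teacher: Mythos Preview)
Your first charging argument does fail, and your own counterexample is real: take $K = \{(x,y) : |x-y| \le \epsilon\}$ with a $3\times 3$ grid whose breakpoints are at $\pm 1$. The center cell is non-constant, and so are the two cells flanking it in \emph{each} axis column, so the center cell is non-extremal in both directions and receives no charge.

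Your backup is the right skeleton---induction on $d$ via $(d-1)$-dimensional slices---and this is exactly what the paper does, but two pieces of your sketch need to change for the arithmetic to close. First, you must fix a \emph{single} direction, say $i=1$: if you keep the ``extremal in some direction'' count from your first attempt you have already spent the full $2dr^{d-1}$ budget before charging anything to cross-sections, and if you charge to slices in all $d$ directions you pick up an extra factor of $d$. Second, the split should not be extremal versus non-extremal in the column; a non-extremal non-constant block need not have a non-constant face on any particular breakpoint hyperplane, so the charge to the slice is not automatic. The paper instead splits the non-constant blocks $B_v$ by whether $f$ is constant on the \emph{upper face} $\{a_{1,v_1}\}\times B_{2,v_2}\times\cdots\times B_{d,v_d}$. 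If the upper face is non-constant, $B_v$ projects to a non-constant block of the convex slice $K\cap\{x_1=a_{1,v_1}\}$, and there are only $r-1$ such slices, each contributing at most $2(d-1)r^{d-2}$ by induction. If the upper face is constant, a short direct convexity argument shows at most two such blocks per axis-$1$ line, giving $\le 2r^{d-1}$. Summing, $2r^{d-1}+(r-1)\cdot 2(d-1)r^{d-2} < 2dr^{d-1}$.
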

\begin{proof}
We prove $\bbs(\cC,r) \leq 2dr^{d-1}$ by induction on $d$; the result will hold
by Proposition~\ref{prop:resolution of composed functions}. Let $\bbs(\cC,r,d)$
be the $r$-block boundary size in dimension $d$.  Recall that block $v \in
[r]^d$ is the set $B_v = B_{1,v_1} \times \dotsm \times B_{d,v_d}$ where
$B_{i,j} = (a_{i,j-1},a_{i,j}]$ for some $a_{i,j-1} < a_{i,j}$. Let $f \in \cC$.

For $d=1$, if there are 3 intervals $B_{1,i_1},B_{1,i_2},B_{1,i_3}$, $i_1 < i_2
< i_3$, on which $f$ is not constant, then within each interval the function
takes both values $\pmset$. Thus, there are points $a \in B_{1,i_1}, b \in
B_{1,i_2}, c \in B_{1,i_3}$ such that $f(a)=1,f(b)=-1,f(c)=1$, which is a
contradiction.

For each block $B_v$, let $A_v = \{a_{1,v_1}\} \times B_{2,v_2} \times \dotsc
\times B_{d,v_d}$ be the ``upper face''.  For $d > 1$, let $P \subseteq [r]^d$
be the set of non-constant blocks $B_v$ such that $f$ is constant on the upper
face and let $Q$ be the set of non-constant blocks that are non-constant on the
upper face, so that $\bbs(f,r,d) = |P|+|Q|$. We argue that $|P| \leq 2r^{d-1}$:
for a vector $w \in [r]^{d-1}$ define the line $L_w \define \{ v \in [r]^d \;|\;
\forall i > 1, v_i=w_i\}$. If $|P \cap L_w| \geq 3$ then there are $t,u,v \in
L_w$ with $t < u < v$ such that $f$ is constant on $A_t,A_u,A_v$ but
non-constant on $B_t,B_u,B_v$. Let $x,y,z$ be points in $B_t,B_u,B_v$
respectively such that $f(x)=f(y)=f(z)=1$. If $f$ is constant $-1$ on $A_t$ or
$A_u$ then there is a contradiction since the lines through $(x,y)$ and $(y,z)$
pass through $A_t,A_u$; so $f$ is constant 1 on $A_t,A_u$. But then there is a
point $q \in A_u$ with $f(q)=-1$, which is a contradiction since it is within the
convex hull of $A_t,A_u$. So $|L_w \cap P| < 3$; since there are at most
$r^{d-1}$ lines $L_w$, $|P| \leq 2r^{d-1}$.

To bound $|Q|$, observe that for each block $v \in Q, f$ is non-constant on the
plane $\{a_{1,v_1}\} \times \bR^{d-1}$, there are $(r-1)$ such planes, $f$ is
convex on each, and the $r$-block partition induces an $r$-block partition on
the plane where $f$ is non-constant on the corresponding block. Then, by
induction $|Q| \leq (r-1) \cdot \bbs(\cC,r,d-1) \leq 2(d-1)(r-1)r^{d-2}$. So
\[
  \bbs(\cC,r,d) \leq 2\left[ (d-1)(r-1)r^{d-2} + r^{d-1} \right] < 2dr^{d-1}
\,.\qedhere
\]
\end{proof}

The above two lemmas combine to show that $r^{-d} \cdot \bbs(\cB_k \circ \cC, r)
\leq r^{-d}(2dkr^{d-1}) = 2dk/r \leq \epsilon$ when $r = \ceil{2dk/\epsilon}$.

\subsection{Sample-based One-sided Tester}
First, we prove a one-sided sample-based tester for convex sets.

\thmconvextester*

\begin{proof}
We prove the result for continuous distributions. The proof for finite distributions is in
\cref{thm:finite algorithms}.

On input distribution $\mu$ and function $f$, let $r = \ceil{6d/\epsilon}$ so that $r^{-d}\cdot
\bbs(\cC,r) \leq \epsilon/3$.
\begin{enumerate}
\item Sample a grid $X$ of size $m = O\left(\frac{rd^2}{\epsilon^2}\log(rd/\epsilon)\right)$ large
enough that Lemma~\ref{lemma:continuous uniform blocks} guarantees
$\|\block(\mu)-\unif([r]^d)\|_\TV < \epsilon/9$ with probability $5/6$.
\item Take $q = O\left(\frac{r^d}{\epsilon}\right)$ samples $Q$ and accept if there exists $h \in
\cC$ such that $f(x) = h^\coarse(x)$ on all $x \in Q$ that are not in a boundary block of $h$.
\end{enumerate}
This tester is one-sided since for any $h \in \cC$, $h(x)=h^\coarse(x)$ for all $x \in Q$ that are
not in a boundary block, regardless of whether the $r$-block
decomposition induced by $X$ satisfies $\|\block(\mu)-\unif([r]^d)\|_\TV \leq \epsilon/3$. Now
suppose that $\dist_\mu(f,\cC) > \epsilon$, and suppose that $\|\block(\mu)-\unif([r]^d)\|_\TV \leq
\epsilon$. For $h \in \cC$, let $B_h \subseteq [r]^d$ be the set of non-constant blocks. If $\exists
h \in \cC$ such that $\Pru{x \sim \mu}{h^\coarse(x) \neq f(x) \wedge \block(x) \notin B_h} <
\epsilon/9$, then
\begin{align*}
  \dist_\mu(f,h^\coarse)
    &\leq \Pru{x \sim \mu}{\block(x) \in B_h}
      + \Pru{x \sim \mu}{h^\coarse(x) \neq f(x) \wedge \block(x) \notin B_h} \\
    &\leq r^{-d}\cdot\bbs(\cC,r) + \|\block(\mu)-\unif([r]^d)\|_\TV + \frac{\epsilon}{9} \\
    &\leq \left(\frac{1}{3}+\frac{2}{9}\right)\epsilon = \frac{5}{9} \cdot \epsilon \,.
\end{align*}
Therefore
\begin{align*}
  \dist_\mu(f,h)
    &\leq \dist_\mu(f,h^\coarse) + \dist_\mu(h^\coarse,h) \\
    &\leq \dist_\mu(f,h^\coarse) + r^{-d} \cdot \bbs(\cC,r) + \|\block(\mu)-\unif([r]^d)\|_\TV \\
    &\leq \frac{5}{9}\epsilon + \frac 1 3 \cdot \epsilon + \frac 1 9 \epsilon = \epsilon \,,
\end{align*}
a contradiction. So it must be that for every $h \in \cC, \Pr{f(x) \neq h^\coarse(x) \wedge x \notin
B_h} \geq \epsilon/9$. There are at most ${r^d \choose \tfrac{\epsilon}{3} r^d} \leq
(3e/\epsilon)^{\epsilon r^d / 3}$ choices of boundary set $B$. Because the 1-valued blocks must be
the convex hull of the boundary points, for each boundary set $B$ there are at most 2 choices of
function $h^\coarse$ with boundary $B$ (with a second choice occurring when the complement of
$h^\coarse$ is also a convex set with the same boundary). Therefore, by the union bound, the
probability that $f$ is accepted is at most
\[
    \left(\frac{3e}{\epsilon}\right)^{\frac{\epsilon}{3}r^d} \cdot {\left(1-\frac\epsilon
9\right)}^q \leq e^{\epsilon\left(\frac{r^d}{3} - \frac{q}{9}\right)} \,,
\]
which is at most $1/6$ for sufficiently large $q = O\left(r^d + \frac{1}{\epsilon}\right)$.
\end{proof}

\subsection{Sample-based Distance Approximator}
Our sample-based distance approximator follows from the following general result.
\begin{lemma}
For any set $\cH$ of functions $\bR^d \to \pmset$, $\epsilon > 0$, and $r$ satisfying $r^{-d} \cdot
\bbs(\cH,r) \leq \epsilon/3$, there is a sample-based distribution-free algorithm for product
distributions that approximates distance to $\cH$ up to additive error $\epsilon$ using
$O\left(\frac{r^d}{\epsilon^2} \right)$ samples.
\end{lemma}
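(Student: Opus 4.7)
The plan is to reduce the problem to uniform convergence on a finite grid via downsampling. The algorithm samples a grid to build the $r$-block partition, then draws a fresh batch of $q = O(r^d/\epsilon^2)$ samples and uses them to empirically estimate $\dist_\mu(f, h^\coarse)$ for every candidate $h \in \cH$, returning the minimum. The key observations are: (i) by the downsampling proposition, $\dist_\mu(h, h^\coarse) \leq r^{-d}\bbs(\cH,r) + \|\block(\mu)-\unif([r]^d)\|_\TV$, which we will drive below $2\epsilon/3$; and (ii) once the block partition is fixed, each $h^\coarse$ depends on $h$ only through $h^\block \in \cH^\block \subseteq \{\pm 1\}^{[r]^d}$, and this class has size at most $2^{r^d}$, so Hoeffding plus a union bound gives uniform convergence with $O(r^d/\epsilon^2)$ samples.

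The algorithm I would specify is the following. First, draw a random grid $X$ of length $m = O\bigl(r d^2 \log(rd)/\epsilon^2\bigr)$ to define the induced $r$-block partition $\block : \bR^d \to [r]^d$; by \cref{lemma:continuous uniform blocks}, with probability at least $11/12$ we have $\|\block(\mu)-\unif([r]^d)\|_\TV \leq \epsilon/3$. Second, draw a fresh sample $Q$ of size $q = O(r^d/\epsilon^2)$ from $\mu$, with the constant chosen so that by Hoeffding and a union bound over all $2^{r^d}$ functions $g : [r]^d \to \pmset$,
\[
\Pr\Bigl[\,\exists\, g : [r]^d \to \pmset:\ \bigl|\widehat D(g) - \Pru{x \sim \mu}{f(x) \neq g(\block(x))}\bigr| > \epsilon/3\,\Bigr] \leq 1/12,
\]
where $\widehat D(g) \define \tfrac{1}{q}|\{x \in Q : f(x) \neq g(\block(x))\}|$. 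Third, output $\widehat D^* \define \min_{h \in \cH} \widehat D(h^\block)$.

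For the correctness analysis, condition on both good events above. For any $h \in \cH$,
\[
\Pru{x \sim \mu}{f(x) \neq h^\block(\block(x))} = \dist_\mu(f, h^\coarse),
\]
and by \cref{prop:distance to coarse} applied to $h$ itself,
\[
\dist_\mu(h, h^\coarse) \leq r^{-d}\bbs(\cH,r) + \|\block(\mu)-\unif([r]^d)\|_\TV \leq \epsilon/3 + \epsilon/3 = 2\epsilon/3.
\]
Combining with the uniform convergence guarantee via the triangle inequality yields
\[
\bigl|\widehat D(h^\block) - \dist_\mu(f, h)\bigr| \leq \epsilon/3 + 2\epsilon/3 = \epsilon,
\]
and taking the minimum over $h \in \cH$ on both sides gives $|\widehat D^* - \dist_\mu(f, \cH)| \leq \epsilon$, as desired. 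The total sample budget is $m + q = O(r^d/\epsilon^2)$, since whenever $\bbs(\cH,r)/r^d \leq \epsilon/3$ one is in the regime $r \geq \Omega(d)$ and $m$ is absorbed into $q$.

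The main subtlety is not the probabilistic analysis, which is straightforward Hoeffding plus union bound, but rather making sure the additive error budget of $\epsilon$ is split cleanly across the three independent sources of loss: the block-boundary slack $r^{-d}\bbs(\cH,r)$, the TV distance between $\block(\mu)$ and uniform on $[r]^d$, and the empirical estimation error. Each is forced to at most $\epsilon/3$ so that the triangle inequality closes. Note that this is a purely sample-complexity statement — the algorithm takes a minimum over $\cH^\block$, which may not be efficiently enumerable in general, and each concrete application in the paper supplies its own way of computing (or approximating) this minimum.
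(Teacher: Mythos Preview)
Your proposal is correct and follows essentially the same argument as the paper: sample a grid to build the $r$-block partition, use $\cH^\coarse$ (of size at most $2^{r^d}$) as an $O(\epsilon)$-cover of $\cH$ via \cref{prop:distance to coarse}, and apply Hoeffding plus a union bound over the cover to get uniform convergence with $O(r^d/\epsilon^2)$ samples. The only caveat is your justification for absorbing $m$ into $q$: the hypothesis $r^{-d}\bbs(\cH,r) \leq \epsilon/3$ does not in general force $r = \Omega(d)$ (e.g.\ $\cH$ could consist of constant functions), so the claim $m = O(r^d/\epsilon^2)$ is not automatic---though it holds in all the applications in the paper, and the paper's own proof glosses over the same point.
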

\begin{proof}
On input distribution $\mu$ and function $f : \bR^d \to \zo$, let $r = 3dk/\epsilon$, then:
\begin{enumerate}
\item Sample a grid $X$ of size $m = O(\frac{rd^2}{\epsilon^2}\log\frac{rd}{\epsilon})$ large enough that
Lemma~\ref{lemma:continuous uniform blocks} guarantees $\|\block(\mu)-\unif([r]^d)\|_\TV <
\epsilon/3$ with probability $5/6$.
\item Let $\cH^\coarse$ be the set of all functions $h^\coarse$ where $h \in \cH$; note that
$|\cH^\coarse| \leq 2^{r^d}$.
\item Draw $q = O\left(\frac{r^d}{\epsilon^2}\right)$ samples $Q$ and output the distance on $Q$ to
the nearest function in $\cH^\coarse$.
\end{enumerate}
We argue that with probability at least $5/6$, $\cH^\coarse$ is an $\tfrac{5}{6}\epsilon$-cover of
$\cH$.  With probability at least $5/6$, $\|\block(\mu)-\unif([r]^d)\|_\TV < \epsilon/6$. Then by
Proposition~\ref{prop:distance to coarse}, for any $h \in \cH$,
\[
  \Pru{x \sim \mu}{h(x) \neq h^\coarse(x)} \leq r^{-d} \cdot \bbs(f,r)
    + \|\block(\mu)-\unif([r]^d)\|_\TV
  \leq \left(\frac{2}{3} + \frac{1}{6}\right)\epsilon
  = \frac{5}{6}\epsilon \,,
\]
so $\cH^\coarse$ is a $\tfrac 5 6 \epsilon$-cover; assume this event occurs.

Write $\dist_Q(f,g) \define \frac{1}{q} \sum_{x \in Q}\ind{f(x) \neq g(x)}$.
By the union bound and Hoeffding's inequality, with $q$ samples we fail to get an estimate of
$\dist_\mu(f,\cH^\coarse)$ up to additive error $\frac 1 6 \epsilon$ with probability at most
\[
  |\cH^\coarse| \cdot \max_{h^\coarse \in \cH^\coarse} \Pru{Q}{ \left|\dist_\mu(f,h^\coarse) -
\dist_Q(f,h^\coarse)\right| > \frac{1}{6}\epsilon}
    \leq |\cH^\coarse| \exp{-2\frac{q \epsilon^2}{36}} < \frac{1}{6}
\]
for appropriately chosen $q = O\left(\frac{1}{\epsilon^2} \log(|\cH^\coarse|)\right) =
O\left(\frac{r^d}{\epsilon^2}\right)$.  Assume this event occurs. We want to show that
$|\dist_Q(f,\cH^\coarse)-\dist_\mu(f,\cH)| \leq \epsilon$.  Let $h \in \cH$ minimize
$\dist_\mu(f,h)$ so $\dist_\mu(f,h)=\dist_\mu(f,\cH)$. Then
\begin{align*}
\dist_Q(f,\cH^\coarse)
  &\leq \dist_Q(f,h^\coarse) 
  \leq \dist_\mu(f,h^\coarse) + \frac{\epsilon}{6} \\
  &\leq \dist_\mu(f,h) + \dist_\mu(h,h^\coarse) + \frac{\epsilon}{6} 
  \leq \dist_\mu(f,\cH) + \epsilon \,.
\end{align*}
Now let $g \in \cH$ minimize $\dist_Q(f,g^\coarse)$ so
$\dist_Q(f,g^\coarse)=\dist_Q(f,\cH^\coarse)$. Then
\begin{align*}
  \dist_Q(f,\cH^\coarse)
    &= \dist_Q(f,g^\coarse) \geq \dist_\mu(f,g^\coarse) - \frac{\epsilon}{6}
    \geq \dist_\mu(f,h^\coarse) - \frac{\epsilon}{6} \\
    &\geq \dist_\mu(f,h) - \dist_\mu(h,h^\coarse) - \frac{\epsilon}{6}
    \geq \dist_\mu(f,h) - \epsilon \,,
\end{align*}
which concludes the proof.
\end{proof}

Applying the bound on $\bbs(\cB_k,r)$ we conclude:
\thmconvexdistance*

\subsection{Agnostic Learning}
We begin our learning results with an agnostic learning algorithm for functions of $k$ convex sets:
the class $\cB_k \circ \cC$.
For a distribution $\cD$ over $\bR^d \times \pmset$ and an $r$-block partition
$\block : \bR^d \to [r]^d$, define the distribution $\cD^\block$ over $[r]^d \times \pmset$
as the distribution of $(\block(x),b)$ when $(x,b) \sim \cD$.
\begin{lemma}\label{lemma:brute force}
Let $\cH$ be any set of functions $\bR^d \to \pmset$, let $\epsilon > 0$,
and suppose $r$ satisfies ${r^{-d} \cdot \bbs(\cH,r) \leq \epsilon/3}$. Then
there is an distribution-free agnostic learning algorithm for continuous product distributions
that learns $\cH$ in $O\left(\frac{r^d+rd^2\log(rd/\epsilon)}{\epsilon^2}\right)$ samples
and time.
\end{lemma}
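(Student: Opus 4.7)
The plan is to run empirical risk minimization (ERM) over the class of functions that are constant on each cell of a randomly-induced block partition, and to use \cref{prop:distance to coarse} to argue that this class contains a good approximation to any target $h \in \cH$. A key simplification is to perform ERM over \emph{all} block-constant functions rather than over $\cH^\coarse$: this avoids enumerating $\cH$ and yields time polynomial in the sample count.

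First, I would draw $m = O((rd^2/\epsilon^2)\log(rd/\epsilon))$ unlabeled points from $\mu$ (either fresh, or the $x$-coordinates of labeled samples) and build the random grid together with its induced $r$-block partition $\block$. By \cref{lemma:continuous uniform blocks}, $m$ can be chosen so that $\|\block(\mu) - \unif([r]^d)\|_\TV \leq \epsilon/6$ with probability at least $5/6$. Combined with the assumption $r^{-d}\bbs(\cH,r) \leq \epsilon/3$, \cref{prop:distance to coarse} then yields $\Pru{x \sim \mu}{h(x) \neq h^\coarse(x)} \leq \epsilon/2$ for every $h \in \cH$; so, under any distribution $\cD$ on $\bR^d \times \pmset$ with $x$-marginal $\mu$, each $h^\coarse$ has $\err_\cD$ within $\epsilon/2$ of $h$.

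Next, take $q = O(r^d/\epsilon^2)$ labeled samples $Q$ from $\cD$ and, for each block $v \in [r]^d$, set $\hat h(v)$ to be the plurality label of samples of $Q$ landing in $v$ (arbitrary if the block is empty); output the hypothesis $x \mapsto \hat h(\block(x))$. This is exactly ERM over $\cG \define \{g \circ \block : g : [r]^d \to \pmset\}$, the class of block-constant functions, which has cardinality $2^{r^d}$. By Hoeffding's inequality and a union bound over $\cG$, $q = O(r^d/\epsilon^2)$ samples force $|\err_Q(g') - \err_\cD(g')| \leq \epsilon/4$ uniformly over $g' \in \cG$ with probability at least $5/6$. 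Letting $h^* \in \cH$ minimize $\err_\cD(h^*)$ and using $h^{*,\coarse} \in \cG$, the standard ERM chain
\[
\err_\cD(\hat h \circ \block) \leq \err_Q(\hat h \circ \block) + \tfrac{\epsilon}{4} \leq \err_Q(h^{*,\coarse}) + \tfrac{\epsilon}{4} \leq \err_\cD(h^{*,\coarse}) + \tfrac{\epsilon}{2} \leq \err_\cD(h^*) + \epsilon
\]
gives the desired agnostic guarantee.

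Sample and time complexity are both $O((r^d + rd^2\log(rd/\epsilon))/\epsilon^2)$: constructing the grid is linear in $m$, and for each labeled point we locate its block by coordinate-wise binary search and update a plurality counter. No single step is a serious obstacle; the only point to watch is that running ERM over all of $\cG$, rather than over $\cH^\coarse$, is what buys us the polynomial-time implementation while still competing with the best $h \in \cH$ via the containment $\cH^\coarse \subseteq \cG$.
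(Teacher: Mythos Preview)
Your proposal is correct and follows essentially the same approach as the paper: build the random grid, invoke \cref{lemma:continuous uniform blocks} and \cref{prop:distance to coarse} to show $h^\coarse$ is close to $h$, then agnostically learn over all $2^{r^d}$ block-constant functions and output $g \circ \block$. The paper phrases step two as agnostic learning of functions $[r]^d \to \pmset$ from the pushforward $\cD^\block$ via a black-box finite-class learning bound, whereas you spell out the equivalent ERM over $\cG = \{g \circ \block\}$ on $\bR^d$ and its per-block plurality implementation; this is the same algorithm with slightly different constants in the error budget.
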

\begin{proof}
On input distribution $\cD$:
\begin{enumerate}
\item Sample a grid $X$ of size $m =
O(\frac{rd^2}{\epsilon^2}\log(rd/\epsilon))$ large enough that Lemma~\ref{lemma:continuous uniform blocks} guarantees
$\|\block(\mu)-\unif([r]^d)\|_\mathsf{TV} < \epsilon/3$ with probability $5/6$,
where $\block : \bR^d \to [r]^d$ is the induced $r$-block partition.
\item Agnostically learn a function $g : [r]^d \to \pmset$ with error $\epsilon/3$
	and success probability $5/6$ using $O(r^d/\epsilon^2)$ samples
	from $\cD^\block$. Output the function $g \circ \block$.
\end{enumerate}
The second step is accomplished via standard learning results
(\cite{SB14} Theorem 6.8): the number of samples required for agnostic learning is
bounded by $O(1/\epsilon^2)$ multiplied by the logarithm of the number of functions in the class,
and the number of functions $[r]^d \to
\pmset$ is $2^{r^d}$. Assume that both steps succeed, which occurs
with probability at least $2/3$. Let $f \in \cH$ minimize $\Pru{(x,b)
\sim \cD}{f(x) \neq b}$.  By Proposition~\ref{prop:distance to coarse},
 \[
\Pru{x \sim \mu}{f(x) \neq f^\coarse}
  \leq r^{-d} \cdot \bbs(f,r) + \|\block(\mu)-\unif([r]^d)\|_\mathsf{TV}
  < 2\epsilon/3 \,.
\]
Then
\begin{align*}
	\Pru{(x,b) \sim \cD}{g(\block(x)) \neq b}
	&= \Pru{(v,b) \sim \cD^\block}{g(v) \neq b}
	\leq \Pru{(v,b) \sim \cD^\block}{f^\block(v) \neq b} + \epsilon/3 \\
	&= \Pru{(x,b) \sim \cD}{f^\coarse(x) \neq b} + \epsilon/3
	< \Pru{(x,b) \sim \cD}{f(x) \neq b} + \epsilon \,. \qedhere
\end{align*}
\end{proof}

Lemma~\ref{lemma:brute force} then gives the following result
for continuous product distributions, with the result for finite distributions
following from Theorem~\ref{thm:finite algorithms}.

\thmlearningconvexsets*

\section{Learning Functions of Halfspaces}\label{section:halfspaces}
A halfspace, or linear threshold function, is any function $h : \bR^d \to \pmset$ such that for some
$w \in \bR^d, t \in \bR, h(x) = \sign(\inn{w,x}-t)$, where $\sign(z) = 1$ if $z \geq 0$ and $-1$
otherwise. Let $\cH$ be the set of halfspaces. Recall that downsampling reduces learning $\cH$ in
$\bR^d$ to learning $\cH^\block$ over $[r]^d$, and $\cH^\block$ is \emph{not} the set of halfspaces
over $[r]^d$. Fortunately, agnostically learning a halfspaces $h$ is commonly done by giving a bound
on the degree of a polynomial $p$ that approximates $h$~\cite{KOS04,KOS08,KKMS08}, and we will
show that a similar idea also suffices for learning $\cH^\block$.  We first present a general
algorithm based on ``polynomial regression''\!\!, and then introduce the Fourier analysis necessary
to apply the general learning algorithm to halfspaces, polynomial threshold functions, and
$k$-alternating functions.

\subsection{A General Learning Algorithm}
The learning algorithm in this section essentially replaces step 2 of the brute
force algorithm (Lemma~\ref{lemma:brute force}) with the ``polynomial regression''
algorithm of Kalai \emph{et al.}~\cite{KKMS08}. Our general algorithm is
inspired by an algorithm of Canonne \emph{et al.}~\cite{CGG+19} for tolerantly
testing $k$-alternating functions over the uniform distribution on $[n]^d$; we
state the regression algorithm as it appears in~\cite{CGG+19}. For a set $\cF$
of functions, $\mathsf{span}(\cF)$ is the set of all linear combinations of
functions in $\cF$:

\begin{theorem}[\cite{KKMS08, CGG+19}]\label{thm:polynomial regression}
Let $\mu$ be a distribution over $\cX$,
let $\cH$ be a class of functions $\cX \to \pmset$ and $\cF$ a collection of
functions $\cX \to \bR$ such that for every $h \in \cH$, $\exists f \in
\mathsf{span}(\cF)$ where $\Exu{x \sim \mu}{{(h(x)-f(x))}^2} \leq \epsilon^2$.
Then there is an algorithm that, for any distribution $\cD$ over $\cX \times
\pmset$ with marginal $\mu$ over $\cX$, outputs a function $g : \cX \to \pmset$
such that $\Pru{(x,b) \sim \cD}{g(x)\neq b} \leq \inf_{h \in \cH} \Pru{(x,b)
\sim \cD}{g(x) \neq b} + \epsilon$, with probability at least $11/12$, using
at most $\poly(|\cF|, 1 /\epsilon)$ samples and time.
\end{theorem}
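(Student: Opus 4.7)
The plan is to run the standard $L_1$-polynomial-regression algorithm of Kalai, Klivans, Mansour, and Servedio, instantiated with the basis $\cF$ in place of low-degree monomials. First, draw $m = \poly(|\cF|, 1/\epsilon)$ i.i.d.\ labeled samples $(x_i, b_i)_{i=1}^m$ from $\cD$. Then solve the linear program that finds coefficients $\alpha_1,\dots,\alpha_{|\cF|} \in \bR$ minimizing the empirical $L_1$-loss $\tfrac{1}{m}\sum_{i=1}^m |f_\alpha(x_i) - b_i|$, where $f_\alpha = \sum_j \alpha_j f_j \in \mathsf{span}(\cF)$. Finally, clip $f_\alpha$ to $[-1,1]$ and output $g(x) = \sign(f_\alpha(x) - \theta)$ for the threshold $\theta \in [-1,1]$ that minimizes the empirical $0$-$1$ error over a $\poly(1/\epsilon)$-size grid of candidate thresholds.

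Correctness rests on the following pointwise observation: for any $h : \cX \to \pmset$ and any $f : \cX \to \bR$, a sign disagreement $\sign(f(x)) \ne h(x)$ forces $|h(x) - f(x)| \ge 1$, so
\[
\Pru{(x,b) \sim \cD}{\sign(f(x)) \ne b} \;\le\; \Pru{(x,b) \sim \cD}{h(x) \ne b} + \Exu{x \sim \mu}{|h(x) - f(x)|}.
\]
Applied to the optimal $h^\star \in \cH$ and the $f^\star \in \mathsf{span}(\cF)$ promised by the hypothesis, Cauchy--Schwarz gives $\Exu{x \sim \mu}{|h^\star - f^\star|} \le \sqrt{\Exu{x \sim \mu}{(h^\star - f^\star)^2}} \le \epsilon$. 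Hence the true $L_1$-minimizer over $\mathsf{span}(\cF)$, after thresholding, has $0$-$1$ error within $\epsilon$ of $\inf_{h \in \cH}\Pru{(x,b) \sim \cD}{h(x) \ne b}$. The threshold search is required because in the agnostic setting $h^\star$ need not be balanced, but an identical argument comparing $\sign(f - \theta)$ to $h^\star$ shows that some $\theta$ realizes the same guarantee, and only $O(1/\epsilon)$ threshold values matter up to $\epsilon$-error.

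The main obstacle is uniform convergence: we need the empirical near-minimizer of the $L_1$-loss to also be a true near-minimizer of the $L_1$-loss over the infinite family $\mathsf{span}(\cF)$. The key point is that the class of clipped linear combinations $\{\max(-1,\min(1, f_\alpha)) : \alpha \in \bR^{|\cF|}\}$ is a $|\cF|$-dimensional family of $[-1,1]$-valued functions, and therefore has pseudo-dimension $O(|\cF|)$. Standard agnostic uniform-convergence bounds (e.g.\ covering-number arguments in the style of Haussler, or Rademacher complexity for bounded linear classes) then imply that $m = \widetilde O(|\cF|/\epsilon^2)$ samples suffice for the empirical $L_1$-loss to lie within $\epsilon/3$ of its expectation simultaneously for every $f \in \mathsf{span}(\cF)$ (after clipping), and similarly for the $0$-$1$ losses of the $O(1/\epsilon)$ candidate thresholds. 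Chaining these $\epsilon/3$ errors together with the $L_2$-to-$L_1$ bound above yields the claimed total error of $\inf_{h \in \cH}\Pru{(x,b)\sim\cD}{h(x)\ne b} + \epsilon$ with probability at least $11/12$, in time and sample complexity $\poly(|\cF|, 1/\epsilon)$.
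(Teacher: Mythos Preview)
The paper does not prove this theorem; it is stated as a black box imported from \cite{KKMS08,CGG+19}, so there is no ``paper's own proof'' to compare against. Your sketch is essentially the standard $L_1$-polynomial-regression argument of Kalai--Klivans--Mansour--Servedio, and the overall strategy is correct.

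There is one small logical gap in your presentation. Your displayed inequality bounds $\Pr{\sign(f^\star(x))\ne b}$ for the \emph{promised} approximant $f^\star$, but the algorithm outputs a threshold of the empirical $L_1$-minimizer $\hat f$, for which you have no direct control of $\Ex{|h^\star(x)-\hat f(x)|}$. The clean chain is instead: $\Ex{|f^\star(x)-b|}\le \Ex{|h^\star(x)-b|}+\Ex{|f^\star(x)-h^\star(x)|}\le 2\cdot\mathrm{opt}+\epsilon$, so the true $L_1$-minimizer $f_{\min}\in\mathsf{span}(\cF)$ also has $\Ex{|f_{\min}(x)-b|}\le 2\cdot\mathrm{opt}+\epsilon$; then the random-threshold trick (for $f_{\min}$ clipped to $[-1,1]$, and $\theta$ uniform in $[-1,1]$, one has $\Pr{\sign(f_{\min}(x)-\theta)\ne b}=\tfrac12|f_{\min}(x)-b|$) yields $\min_\theta\Pr{\sign(f_{\min}(x)-\theta)\ne b}\le\tfrac12\Ex{|f_{\min}(x)-b|}\le\mathrm{opt}+\epsilon/2$. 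With this correction, together with the pseudo-dimension/uniform-convergence argument you describe for passing from population to empirical $L_1$ loss, the proof goes through with $\poly(|\cF|,1/\epsilon)$ samples and time.
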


Our general learning algorithm will apply to any hypothesis class that has
small $r$-block boundary size, and for which there is a set of functions $\cF$
that approximately span the class $\cH^\block$. This algorithm is improved
to work for finite (rather than only continuous) product distributions in
Lemma~\ref{lemma:finite general algorithm}.
\begin{lemma}\label{lemma:general algorithm}
Let $\epsilon > 0$ and let $\cH$ be a set of measurable functions $f : \bR^d \to
\pmset$ that satisfy:
\begin{enumerate}
\item There is some $r = r(d,\epsilon)$ such that $\bbs(\cH,r) \leq \frac{\epsilon}{3} \cdot r^d$;
\item There is a set $\cF$ of functions $[r]^d \to \bR$ satisfying:
$\forall f \in \cH, \exists g \in \mathsf{span}(\cF)$ such that for $v \sim
[r]^d, \Ex{(f^\block(v)-g(v))^2} \leq \epsilon^2/4$.
\end{enumerate}
Let $n = \poly(|\cF|,1/\epsilon)$ be the sample complexity of the algorithm in
Theorem~\ref{thm:polynomial regression}, with error parameter $\epsilon/2$.  Then there is an
agnostic learning algorithm for $\cH$ on continuous product distributions over $\bR^d$, that uses
$O(\max(n^2,1/\epsilon^2) \cdot rd^2\log(dr))$ samples and runs in time polynomial in the sample
size.
\end{lemma}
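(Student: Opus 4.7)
The strategy is to reduce agnostic learning of $\cH$ under $\mu$ to agnostic learning of $\cH^\block$ on $[r]^d$ under the pushforward $\block(\mu)$, where the downsampled class is approximately spanned by $\cF$, and then invoke \cref{thm:polynomial regression}. Concretely, first I would sample a random grid $X$ of length $m = \Theta(r d^2 \log(rd)/\delta^2)$ to obtain the induced $r$-block partition $\block : \bR^d \to [r]^d$; by \cref{lemma:continuous uniform blocks} this guarantees $\|\block(\mu) - \unif([r]^d)\|_\TV \leq \delta$ with probability $\geq 11/12$ for a suitable $\delta = \Theta(\epsilon^2)$. Second, I would simulate draws from $\cD^\block$ by applying $\block$ to the $\bR^d$-coordinate of each sample from $\cD$, and run the polynomial regression algorithm of \cref{thm:polynomial regression} on $\cD^\block$ with basis $\cF$ and error parameter $\Theta(\epsilon)$; the algorithm returns $g : [r]^d \to \pmset$, and I would output the hypothesis $g \circ \block$.

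The main technical point is verifying the hypothesis of \cref{thm:polynomial regression} under the marginal $\block(\mu)$, since Assumption~2 only supplies the $L^2$-spanning condition under $\unif([r]^d)$. For each $f \in \cH$, let $\widetilde g \in \mathsf{span}(\cF)$ achieve $\Exu{v \sim \unif([r]^d)}{(f^\block(v) - \widetilde g(v))^2} \leq \epsilon^2/4$. I would replace $\widetilde g$ by its pointwise truncation to $[-1,1]$; since $f^\block(v) \in \pmset$, truncation can only decrease the error, and now $(f^\block - \widetilde g)^2 \leq 4$ everywhere. The standard inequality $|\Exu{\mu_1}{h} - \Exu{\mu_2}{h}| \leq 2\|h\|_\infty \|\mu_1 - \mu_2\|_\TV$ transfers the bound to $\block(\mu)$ at the cost of $8\delta$, which is absorbed into the polynomial regression error parameter for $\delta$ small enough in terms of $\epsilon$.

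Finally, I would combine the error contributions. Let $f_* \in \cH$ minimize $\Pru{(x,b) \sim \cD}{f_*(x) \neq b}$. By \cref{prop:distance to coarse},
\[
  \Pru{x \sim \mu}{f_*(x) \neq f_*^\coarse(x)}
  \leq r^{-d} \bbs(\cH,r) + \|\block(\mu) - \unif([r]^d)\|_\TV \leq \epsilon/3 + \delta,
\]
so $\Pru{(v,b) \sim \cD^\block}{f_*^\block(v) \neq b} = \Pru{(x,b) \sim \cD}{f_*^\coarse(x) \neq b}$ differs from $\Pru{\cD}{f_*(x) \neq b}$ by at most $\epsilon/3 + \delta$. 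Combining this with the polynomial regression guarantee and using $\Pru{(x,b) \sim \cD}{g(\block(x)) \neq b} = \Pru{(v,b) \sim \cD^\block}{g(v) \neq b}$ yields the desired agnostic bound $\Pru{\cD}{g(\block(x)) \neq b} \leq \inf_{f \in \cH} \Pru{\cD}{f(x) \neq b} + \epsilon$ once the $\Theta(\epsilon)$ constants in $\delta$ and the polynomial regression error are chosen to sum below $\epsilon$.

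The hard part is calibrating $\delta$: the $L^2$ transfer requires $\delta = \Theta(\epsilon^2)$ (quadratically tighter than needed for the classification bookkeeping), which forces $m = \Theta(r d^2 \log(rd)/\epsilon^4)$ grid samples. This is precisely the source of the $\max(n^2, 1/\epsilon^2)$ factor in the stated sample complexity: since $n = \poly(|\cF|, 1/\epsilon)$ for polynomial regression is already $\Omega(1/\epsilon^2)$, we have $n^2 \geq 1/\epsilon^4$, so the grid-sampling cost is absorbed into $\max(n^2, 1/\epsilon^2) \cdot rd^2 \log(dr)$. Everything else (the coarsening error from \cref{prop:distance to coarse}, the reduction to $\cH^\block$, and the output step) is conceptually routine once the $L^2$ transfer is handled.
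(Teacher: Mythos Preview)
There is a genuine gap in the ``main technical point.'' You truncate the approximator $\widetilde g \in \mathsf{span}(\cF)$ to $[-1,1]$ in order to make $(f^\block - \widetilde g)^2$ bounded and then transfer the $L^2$ bound from $\unif([r]^d)$ to $\block(\mu)$ via TV distance. But the truncated function is no longer in $\mathsf{span}(\cF)$, so you have not verified the hypothesis of \cref{thm:polynomial regression} under the marginal $\block(\mu)$: that theorem requires, for each $h \in \cH^\block$, an approximator \emph{in the linear span}. Without truncation the transfer also fails: from $\Exu{v \sim \unif}{(f^\block(v)-\widetilde g(v))^2} \leq \epsilon^2/4$ one only gets the pointwise bound $(f^\block(v)-\widetilde g(v))^2 \leq r^d \epsilon^2/4$, so the TV transfer costs $O(r^d \epsilon^2 \delta)$ and would force $\delta = O(r^{-d})$, which is far too small. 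In particular, your claim that $\delta = \Theta(\epsilon^2)$ suffices is not justified, and the accounting that ties this to the $\max(n^2,1/\epsilon^2)$ factor does not go through.

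The paper sidesteps this entirely: rather than transferring the $L^2$ spanning hypothesis to $\block(\mu)$, it runs the regression algorithm \emph{as a black box} and transfers its success probability. Setting $\delta = 1/(12n)$ gives $\|\cD^\unif - \cD^\block\|_\TV < 1/(12n)$, hence $\|(\cD^\unif)^n - (\cD^\block)^n\|_\TV < 1/12$; since the regression algorithm succeeds with probability $\geq 11/12$ on an $n$-sample from $\cD^\unif$ (where condition~2 applies directly), it succeeds with probability $\geq 5/6$ on an $n$-sample from $\cD^\block$. The choice $\delta \asymp 1/n$ is what produces the $n^2$ in the grid length $m = O(rd^2 n^2 \log(rd))$. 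The rest of your error bookkeeping (via \cref{prop:distance to coarse} and the identity $\Pru{\cD}{g(\block(x))\neq b} = \Pru{\cD^\block}{g(v)\neq b}$) matches the paper.
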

\begin{proof}
We will assume $n > 1/\epsilon$.  Let $\mu$ be the marginal of $\cD$ on $\bR^d$.
For an $r$-block partition, let $\cD^\block$ be the distribution of
$(\block(x),b)$ when $(x,b) \sim \cD$. We may simulate samples from $\cD^\block$
by sampling $(x,b)$ from $\cD$ and constructing $(\block(x),b)$.  The algorithm
is as follows:
\begin{enumerate}
\item Sample a grid $X$ of length $m = O(rd^2n^2\log(rd))$ large enough that
	Lemma~\ref{lemma:continuous uniform blocks} guarantees
	$\|\block(\mu)-\unif([r]^d)\|_\mathsf{TV} < 1/12n$ with
	probability $5/6$.
Construct $\block : \bR^d \to [r]^d$ induced by $X$. We may construct the
$\block$ function in time $O(dm\log m)$ by sorting, and once constructed it
takes time $O(\log r)$ to compute.
\item Run the algorithm of Theorem~\ref{thm:polynomial regression} on a sample of $n$ points from
$\cD^\block$ to learn the class $\cH^\block$; that algorithm returns a function $g : [r]^d \to
\pmset$.  Output $g \circ \block$.\label{item:general algorithm learning}
\end{enumerate}
Assume that step 1 succeeds, which occurs with probability at least $5/6$.  By
condition 2, the algorithm in step 2 is guaranteed to work on samples
$(v,b) \in [r]^d \times \pmset$ where the marginal of $v$ is $\unif([r]^d)$; let
$\cD^\unif$ be the distribution of $(v,b)$ when $v \sim \unif([r]^d)$ and $b$
is obtained by sampling $(x,b) \sim (\cD \mid x \in \block^{-1}(v))$. The algorithm
of step~\ref{item:general algorithm learning} will succeed on $\cD^\unif$; we
argue that it will also succeed on the actual input $\cD^\block$ since
these distributions are close. Observe that for samples $(v,b) \sim
\cD^\unif$ and $(\block(x),b') \sim \cD^\block$, if $v=\block(x)$ then
$b,b'$ each have the distribution of $b'$ in $(x,b') \sim (\cD \mid
\block(x)=v)$. Therefore
\begin{align*}
  \|\cD^\unif - \cD^\block\|_\mathsf{TV}
  &= \|(v,b) - (\block(x),b')\|_\mathsf{TV}
  = \|v - \block(x)\|_\mathsf{TV} \\
  &= \|\block(\mu)-\unif([r]^d)\|_\mathsf{TV} 
  < \frac{1}{12n} \,.
\end{align*}
It is a standard fact that for product distributions $P^n,Q^n,
\|P^n-Q^n\|_\mathsf{TV} \leq n \cdot \|P-Q\|_\mathsf{TV}$; using this fact,
\[
\|(\cD^\unif)^n-(\cD^\block)^n\|_\mathsf{TV}
\leq n \cdot \|\cD^\unif - \cD^\block\|_\mathsf{TV}
< \frac{1}{12} \,.
\]
We will argue that step~\ref{item:general algorithm learning} succeeds with probability $5/6$;
i.e.~that with probability $5/6$,
\[
  \Pru{(v,b) \sim \cD^\block}{g(v) \neq b} < \inf_{h \in \cH} \Pru{(v,b) \sim
\cD^\block}{h^\block(v) \neq b} + \epsilon/2 \,.
\]
Let $E(S)$ be the event that success occurs given sample $S \in ([r]^d \times
\pmset)^n$. The algorithm samples $S \sim (\cD^\block)^n$ but the success guarantee of step~\ref{item:general algorithm learning} is for $(\cD^\unif)^n$; this step will still succeed with
probability $5/6$:
\begin{align*}
  \Pru{S \sim (\cD^\unif)^n}{E(S)}
  &\geq \Pru{S \sim (\cD^\block)^n}{E(S)} - \|(\cD^\unif)^n - (\cD^\block)^n\|_\mathsf{TV} \\
  &> \Pru{S \sim \cD^n}{E(S)} - \frac{1}{12} \geq \frac{11}{12}- \frac{1}{12} = \frac{5}{6}\,.
\end{align*}
Assume that each step succeeds, which occurs with probability at least $1-2\cdot(5/6)=2/3$.
By Proposition~\ref{prop:distance to coarse}, our condition 1, and the fact that $n > 1/\epsilon$,
we have for any $h \in \cH$ that
\begin{align*}
  \Pru{x \sim \mu}{h(x) \neq h^\coarse(x)}
  \leq r^{-d}\cdot \bbs(\cH,r) + \|\block(\mu)-\unif([r]^d)\|_\mathsf{TV}
  \leq \epsilon/3 + \frac{1}{12n} < \epsilon/2 \,.
\end{align*}
The output of the algorithm is $g \circ \block$, which for any $h \in \cH$
satisfies:
\begin{align*}
	\Pru{(x,b) \sim \cD}{g(\block(x)) \neq b}
	&= \Pru{(v,b) \sim \cD^\block}{g(v) \neq b}
	\leq \Pru{(v,b) \sim \cD^\block}{h^\block(v) \neq b} + \epsilon/2 \\
	&= \Pru{(x,b) \sim \cD}{h^\coarse(x) \neq b} + \epsilon/2 \\
	&\leq \Pru{(x,b) \sim \cD}{h(x) \neq b} + \Pru{x}{h(x) \neq h^\coarse(x)} + \epsilon/2 \\
	&< \Pru{(x,b) \sim \cD}{h(x) \neq b} + \epsilon \,.
\end{align*}
Then $\Pr{g(\block(x)) \neq b} \leq \inf_{h \in \cH}{h(x) \neq b} + \epsilon$, as desired.
\end{proof}
\subsection{Fourier Analysis on \texorpdfstring{${[n]}^d$}{[n]\string^d}}
We will show how to construct a spanning set $\cF$ to satisfy condition
2 of the general learning algorithm, by using noise sensitivity and the
Walsh basis. For any $n$, let $u \sim {[n]}^d$ uniformly
at random and draw $v \in {[n]}^d$ as follows: $v_i = u_i$ with probability
$\delta$, and $v_i$ is uniform in $[n]\setminus\{u_i\}$ with probability
$1-\delta$. The noise sensitivity of functions $[n]^d \to \pmset$
is defined as:
\[
  \mathsf{ns}_{n,\delta}(f) \define \Pru{u,v}{f(u) \neq f(v)} =
  \frac{1}{2}-\frac{1}{2}\cdot\Exu{u,v}{f(u)f(v)} \,.
\]
Note that we include $n$ in the subscript to indicate the size of the domain.
We will use $\mathsf{ns}_{r,\delta}(f)$ to obtain upper bounds on the spanning
set, and we will obtain bounds on $\mathsf{ns}_{r,\delta}$ by relating it
to $\mathsf{ns}_{2,\delta}$, for which many bounds are known.
For a function $f : {[n]}^d \to \pmset$, two vectors $u,v \in {[r]}^d$, and $x
\in \pmset^d$, define ${[u,v]}^x \in {[n]}^d$ as the vector with ${[u,v]}^x_i =
u_i$ if $x_i = 1$ and $v_i$ if $x_i=-1$. Then define $f_{u,v} : \pmset^d \to
\pmset$ as the function $f_{u,v}(x) = f({[u,v]}^x)$.  The next lemma is
essentially the same as the reduction in~\cite{BOW10}.
\begin{lemma}\label{lemma:noise sensitivity reduction}
Let $\cH$ be a set of functions $f : \bR^d \to \pmset$ such that for any linear
transformation $A \in \bR^{d \times d}$, the function $f \circ A \in
\cH$, and let $\block : \bR^d \to [r]^d$ be any $r$-block partition.
Let $\mathsf{ns}_{2,\delta}(\cH) = \sup_{f \in \cH}
\mathsf{ns}_{2,\delta}(f)$ where $\mathsf{ns}_{2,\delta}(f)$ is the
$\delta$-noise sensitivity on domain $\pmset^d$.  Then
$\mathsf{ns}_{r,\delta}(f^\block) \leq \mathsf{ns}_{2,\delta}(\cH)$.
\end{lemma}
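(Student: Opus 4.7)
The plan is to couple the $[r]^d$-noise-sensitivity experiment for $f^\block$ with an average of $\pmset^d$-noise-sensitivity experiments for two-point restrictions of $f$, reusing the notation $f_{P,Q}(x) \define f([P,Q]^x)$ from just above the lemma statement.

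First I would set up the coupling. Writing $B_i = \{\blockpoint_i(1), \dotsc, \blockpoint_i(r)\}$ for the set of blockpoints in coordinate $i$, I sample, independently across coordinates, an ordered pair $(P_i, Q_i) \in B_i \times B_i$ uniformly conditional on $P_i \neq Q_i$, and independently draw $(\hat u, \hat v) \in \pmset^d \times \pmset^d$ from the $\pmset^d$-noise distribution at rate $\delta$. Define $X \define [P,Q]^{\hat u}$ and $Y \define [P,Q]^{\hat v}$, so that, conditional on $(P, Q)$, the pair $(f(X), f(Y))$ realizes the $\pmset^d$-noise-sensitivity experiment for $f_{P,Q}$.

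Next I would verify that the joint law of $(X, Y)$ matches the joint law of $(\blockpoint(u), \blockpoint(v))$ from the definition of $\mathsf{ns}_{r,\delta}(f^\block)$. Both laws are products across coordinates, so it suffices to check one: with probability $\delta$ we have $\hat u_i = \hat v_i$, so $X_i = Y_i \in \{P_i, Q_i\}$, and marginalizing over $(P_i, Q_i)$ makes the common value uniform on $B_i$, matching the event $u_i = v_i$ (where $\blockpoint(u_i)$ is uniform on $B_i$). With probability $1 - \delta$ we have $\hat u_i \neq \hat v_i$, so $(X_i, Y_i)$ is uniform over ordered pairs of distinct elements of $B_i$, matching the event $u_i \neq v_i$. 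Hence
\[
\mathsf{ns}_{r,\delta}(f^\block) = \Exu{(P,Q)}{\mathsf{ns}_{2,\delta}(f_{P,Q})} \,.
\]

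Finally I would apply the closure hypothesis. For each realization of $(P, Q)$, the map $x \mapsto [P,Q]^x$ equals the affine map $x \mapsto Ax + b$ with $A = \mathrm{diag}((P-Q)/2)$ and $b = (P+Q)/2$, so $f_{P,Q}$ is the restriction to $\pmset^d$ of a composition of $f$ with this affine map; by the closure assumption on $\cH$, this composition lies in $\cH$, whence $\mathsf{ns}_{2,\delta}(f_{P,Q}) \leq \mathsf{ns}_{2,\delta}(\cH)$. Averaging finishes the proof. The one delicate point is this last step: the natural reduction requires $\cH$ to be closed under \emph{affine} (not merely linear) precomposition, and the hypothesis should be interpreted accordingly; all the function classes to which the lemma is applied in this paper (halfspaces, PTFs, $k$-alternating functions) are closed under affine maps, so the argument goes through verbatim.
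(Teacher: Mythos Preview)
Your proposal is correct and essentially identical to the paper's proof: both couple the $[r]^d$ noise experiment with an average over two-point restrictions by drawing a uniform pair of distinct values per coordinate together with the $\pmset^d$ noise pair, and both conclude via closure of $\cH$ under the relevant coordinate-wise map. Your one refinement is flagging that the map $x \mapsto [P,Q]^x$ is \emph{affine} rather than linear; the paper in fact writes ``translation and scaling, which is a linear transformation'' and glosses over the same point, so your caveat is well placed and matches how the lemma is actually used.
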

\begin{proof}
Let $u \sim [r]^d$ and let $v$ be uniform among the vectors $[r]^d$ where
$\forall i, v_i \neq u_i$. Now let $x \sim \pmset^d$ uniformly at random and let
$y$ be drawn such that $y_i = x_i$ with probability $\delta$ and $y_i = -x_i$
otherwise. Then ${[u,v]}^x$ is uniform in ${[r]}^d$, because ${[u,v]}^x_i$ is
$u_i$ or $v_i$ with equal probability and the marginals of $u_i,v_i$ are
uniform. ${[u,v]}^y_i= {[u,v]}^x_i$ with probability $1-\delta$ and is otherwise
uniform in $[r] \setminus\{{[u,v]}^x_i\}$. Let $f : {[r]}^d \to \pmset$ and
$\delta \in [0,1]$.  Let $(u',v')$ be an independent copy of $(u,v)$
and observe that $\mathsf{ns}_{r,\delta}(f^\block) = \Pr{f^\block(u') \neq
f^\block(v')}$.  Now observe that $({[u,v]}^x,{[u,v]}^y)$ has the same
distribution as $(u',v')$, so:
\begin{align*}
\Exu{u,v}{\mathsf{ns}_{2,\delta}(f_{u,v})}
&= \Exu{u,v}{\Pru{x,y\sim_\delta x}{f({[u,v]}^x) \neq f({[u,v]}^y)}} \\
&= \Exu{u,v,{(x,y)}_\delta}{\ind{f({[u,v]}^x) \neq f({[u,v]}^y)}}  \\
&= \Exu{u',v'}{\ind{f(u') \neq f(v')}} = \mathsf{ns}_{r,\delta}(f^\block) \,.
\end{align*}
For any $u,v \in [r]^d$, define the function $\Phi_{u,v} : \pmset^d \to [r]^d$
by $\Phi_{u,v}(x) = \blockpoint([u,v]^x)$. This function maps $\pmset^d$
to a set $\{b_{1,i_1}, b_{1,j_1}\} \times \dotsm \times \{b_{d,i_d},b_{d,j_d}\}$
and can be obtained by translation and scaling, which is a linear transformation.
Therefore $f_{u,v} = f \circ \Phi_{u,v}^{-1}$, so we are guaranteed that
$f_{u,v} \in \cH$. So
\[
  \mathsf{ns}_{r,\delta}(f) = \Exu{u,v}{\mathsf{ns}_{2,\delta}(f_{u,v})}
  \leq \mathsf{ns}_{2,\delta}(\cH) \,. \qedhere
\]
\end{proof}

We define the Walsh basis, an orthonormal basis of functions $[n]^d \to
\bR$; see e.g.~\cite{BRY14}.
Suppose $n=2^m$ for some positive integer $m$.
For two functions $f,g : {[n]}^d \to \bR$, define the inner product $\inn{f,g} =
\bE_{x \sim {[n]}^d}[f(x)g(x)]$. The Walsh functions $\{\psi_0, \dotsc,
\psi_m\}, \psi_i : [n] \to \pmset$ can be defined by $\psi_0 \equiv 1$ and
for $i \geq 1$, $\psi_i(z) \define (-1)^{\mathsf{bit}_i(z-1)}$ where
$\mathsf{bit}_i(z-1)$ is the $i^\mathrm{th}$ bit in the binary representation of
$z-1$, where the first bit is the least significant (see e.g.~\cite{BRY14}). It is easy
to verify that for all $i,j \in \{0, \dotsc, m\}$, if $i \neq j$ then
$\inn{\psi_i,\psi_j}=0$, and $\bE_{x \sim [n]}[\psi_i(x)] = 0$ when $i \geq 1$.
For $S \subseteq [m]$ define $\psi_S = \prod_{i \in S} \psi_i$ and note that for
any set $S \subseteq [m], S \neq \emptyset$,
\begin{equation}
\label{eq:walsh function has mean 0}
\Exu{x \sim [n]}{\psi_S(x)} = \Exu{x \sim [n]}{\prod_{i \in S} \psi_i(x)}
= \Exu{x \sim [n]}{(-1)^{\sum_{i \in S}\mathsf{bit}_i(x-1)}} = 0
\end{equation}
since each bit is uniform in $\zo$, while $\psi_\emptyset \equiv 1$.  For $S,T
\subseteq [m]$,
\[
\inn{\psi_S,\psi_T} = \bE_{x \sim [n]}[\psi_S(x)\psi_T(x)] =
\bE_x[\psi_{S \Delta T}(x)] \,,
\]
where $S \Delta T$ is the symmetric difference, so
this is 0 when $S \Delta T \neq \emptyset$ (i.e.~$S \neq T$) and 1 otherwise;
therefore $\{\psi_S : S \subseteq [m]\}$ is an orthonormal basis of functions
$[n] \to \bR$. Identify each $S \subseteq [m]$ with the number $s \in \{0,
\dotsc, n-1\}$ where $\mathsf{bit}_i(s) = \ind{i \in S}$. Now for every $\alpha
\in \{0,\dotsc,n-1\}^d$ define $\psi_\alpha : [n]^d \to \pmset$ as
$\psi_\alpha(x) = \prod_{i=1}^d \psi_{\alpha_i}(x_i)$ where $\psi_{\alpha_i}$ is
the Walsh function determined by the identity between subsets of $[m]$ and the
integer $\alpha_i \in \{0, \dotsc, n-1\}$. It is easy to verify that the set
$\{\psi_\alpha : \alpha \in {\{0,\dotsc,n-1\}}^d\}$ is an orthonormal basis.
Every function $f : {[n]}^d \to \bR$ has a unique representation $f =
\sum_{\alpha \in \{0,\dotsc,n-1\}^d} \hat f(\alpha) \psi_\alpha$ where $\hat{f}(\alpha) = \inn{f,\psi_\alpha}$.

\newcommand{\stab}{\mathsf{stab}}
For each $x \in {[n]}^d$ and $\rho \in [0,1]$ define $N_\rho(x)$ as the
distribution over $y \in {[n]}^d$ where for each $i \in [d]$, $y_i = x_i$ with
probability $\rho$ and $y_i$ is uniform in $[n]$ with probability $1-\rho$.
Define $T_\rho f(x) \define \Exu{y \sim N_\rho(x)}{f(y)}$ and $\stab_\rho(f)
\define \inn{f,T_\rho f}$. For any $\alpha \in {\{0,\dotsc,n-1\}}^d$,
\begin{align*}
  T_\rho \psi_\alpha(x)
  &= \Exu{y \sim N_\rho(x)}{\psi_\alpha(y)}
  = \Exu{y \sim N_\rho(x)}{\prod_{i=1}^d \psi_{\alpha_i}(y_i)}
  = \prod_{i=1}^d \Exu{y_i \sim N_\rho(x_i)}{\psi_{\alpha_i}(y_i)}  \\
  &= \prod_{i=1}^d \left[ \rho \psi_{\alpha_i}(x_i)
    + (1-\rho)\Exu{z \sim [n]}{\psi_{\alpha_i}(z)} \right] \,.
\end{align*}
If $\alpha_i \geq 1$ then $\Exu{z \sim [n]}{\psi_{\alpha_i}(z)} = 0$; otherwise,
$\psi_1 \equiv 1$ so $\Exu{y_i \sim N_\rho(x_i)}{\psi_0(y_i)} = 1$. Therefore
\[
  T_\rho \psi_\alpha(x) = \rho^{|\alpha|} \psi_\alpha(x) \,,
\]
where $|\alpha|$ is the number of nonzero entries of $\alpha$; so
$\widehat{T_\rho f}(\alpha) = \inn{\psi_\alpha,T_\rho f} = \inn{T_\rho
\psi_\alpha, f} = \rho^{|\alpha|} \widehat f(\alpha)$. Since $T_\rho$ is a
linear operator,
\[
  \stab_\rho(f) = \inn{f,T_\rho f} = \sum_\alpha \rho^{|\alpha|} \hat{f}{(\alpha)}^2 \,.
\]
Note that for $f : \pmset^d \to \pmset$, $\stab_\rho(f)$ is the usual notion of
stability in the analysis of Boolean functions.
\begin{proposition}
For any $f : {[n]}^d \to \pmset$ and any $\delta,\rho \in [0,1]$,
$\mathsf{ns}_{n,\delta}(f) = \frac{1}{2} - \frac{1}{2} \cdot
\stab_{1-\frac{n}{n-1}\delta}(f)$.
\end{proposition}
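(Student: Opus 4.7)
The plan is to show that the $(u,v)$-coupling defining $\mathsf{ns}_{n,\delta}(f)$ has the same joint distribution as the $(x,y)$-coupling with $x \sim \unif([n]^d)$ and $y \sim N_\rho(x)$ appearing in $\stab_\rho(f)$, provided $\rho = 1 - \frac{n}{n-1}\delta$. Once these joint distributions agree, $\Ex{f(u)f(v)} = \Ex{f(x)f(y)} = \stab_\rho(f)$, and combining this with the identity $\mathsf{ns}_{n,\delta}(f) = \frac{1}{2} - \frac{1}{2}\Ex{f(u)f(v)}$ from the definition of $\mathsf{ns}_{n,\delta}$ (using $f \in \pmset$) yields the proposition.

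Both couplings factor as products across the $d$ coordinates, and each has uniform marginal on $[n]$ in each first coordinate, so it suffices to match the conditional distribution of $v_i$ given $u_i$ with that of $y_i$ given $x_i$. Each of these conditionals is a two-atom mixture---a point mass at the original coordinate plus a uniform distribution on a set of size $n-1$ (for $\mathsf{ns}$) or of size $n$ (for $N_\rho$)---so each is determined by the probability assigned to any fixed $z \neq u_i$. Equating these off-diagonal transition probabilities produces
\begin{align*}
\frac{1-\rho}{n} = \frac{\delta}{n-1},
\end{align*}
which rearranges to $\rho = 1 - \frac{n}{n-1}\delta$, as required.

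The argument depends only on the coordinate-wise independence of both couplings, which is immediate from the definitions of $N_\rho$ and of the noise process. There is essentially no obstacle: the proof reduces to matching two simple one-dimensional conditional distributions, and the scaling factor $\frac{n}{n-1}$ appears because $N_\rho$ rerandomizes uniformly over all of $[n]$ (including the original coordinate), whereas the noise-sensitivity coupling rerandomizes uniformly over $[n] \setminus \{u_i\}$.
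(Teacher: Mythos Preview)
Your proof is correct and follows essentially the same approach as the paper: identify the noise-sensitivity coupling $(u,v)$ with the $N_\rho$ coupling $(x,y)$ coordinate-by-coordinate, then use $\mathsf{ns}_{n,\delta}(f) = \tfrac12 - \tfrac12\Ex{f(u)f(v)}$ and $\stab_\rho(f) = \Ex{f(x)f(y)}$. The only cosmetic difference is that the paper equates the \emph{diagonal} probability $\Pr{v_i = u_i} = 1-\delta$ with $\rho + \tfrac{1-\rho}{n}$, whereas you equate the off-diagonal probabilities; since both conditionals are symmetric over $[n]\setminus\{u_i\}$, these are equivalent.
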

\begin{proof}
For $v \sim N_\rho(u), v_i = u_i$ with probability $\rho +
\frac{1-\rho}{n}$, so in the definition of noise sensitivity, $v$ is distributed
as $N_\rho(u)$ where $(1-\delta) = \rho +\frac{1-\rho}{n}$, i.e.~$\delta = 1 -
\rho - \frac{1-\rho}{n} = (1-1/n) - \rho(1-1/n) = (1-\rho)(1-1/n)$; or, $\rho =
1 - \frac{n}{n-1}\delta$. By rearranging, we arrive at the conclusions.
\end{proof}
\begin{proposition}\label{prop:noise sensitivity to fourier degree}
For any $f : {[n]}^d \to \bR$ and $t = \frac{2}{\delta}$, $\sum_{\alpha : |\alpha|
\geq t} \hat f{(\alpha)}^2 \leq 2.32 \cdot \mathsf{ns}_{n,\delta}(f)$.
\end{proposition}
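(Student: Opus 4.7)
The plan is to reduce the tail bound on Fourier mass to a statement about the stability function, and then use that $1-\rho^k$ is monotone in $k$ together with an exponential estimate.

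First I would set $\rho \define 1 - \tfrac{n}{n-1}\delta$, which is the value tied to $\delta$-noise sensitivity by the preceding proposition. That gives $2\,\mathsf{ns}_{n,\delta}(f) = 1 - \stab_\rho(f)$ (for Boolean $f$ one has $\sum_\alpha \hat f(\alpha)^2 = 1$; the same identity continues to hold for the real-valued extension with $\|f\|_2^2$ replacing $1$, and in applications $f^\block$ is $\pm 1$-valued). Plugging in the Fourier formula $\stab_\rho(f) = \sum_\alpha \rho^{|\alpha|}\hat f(\alpha)^2$ and using Parseval, this becomes
\[
2\,\mathsf{ns}_{n,\delta}(f) \;=\; \sum_{\alpha} \bigl(1-\rho^{|\alpha|}\bigr)\hat f(\alpha)^2 .
\]
Each term is non-negative since $0 \le \rho \le 1$ (assuming $\delta \le (n-1)/n$, which is the only nontrivial regime), so we can lower-bound the right-hand side by restricting the sum to $|\alpha| \geq t$.

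Second, I would bound $1-\rho^{|\alpha|}$ from below uniformly on that tail. Because $1-\rho^k$ is non-decreasing in $k$, for any $\alpha$ with $|\alpha| \geq t = 2/\delta$ we have $1-\rho^{|\alpha|} \geq 1-\rho^t$. Since $\rho \le 1-\delta$, the standard inequality $(1-\delta)^{1/\delta} \le e^{-1}$ gives $\rho^t \le (1-\delta)^{2/\delta} \le e^{-2}$, hence $1-\rho^{|\alpha|} \geq 1 - e^{-2}$.

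Combining the two displays,
\[
2\,\mathsf{ns}_{n,\delta}(f) \;\geq\; \bigl(1-e^{-2}\bigr)\sum_{\alpha:\,|\alpha|\geq t}\hat f(\alpha)^2 ,
\]
so $\sum_{|\alpha|\geq t}\hat f(\alpha)^2 \leq \tfrac{2}{1-e^{-2}}\,\mathsf{ns}_{n,\delta}(f)$, and numerically $2/(1-e^{-2}) \approx 2.3130 \leq 2.32$. There is no real obstacle here; the only thing to keep an eye on is the constant, which requires the exponent $t = 2/\delta$ exactly (any smaller multiple would degrade $1-e^{-c}$ and inflate the $2.32$).
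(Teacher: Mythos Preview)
Your proof is correct and essentially identical to the paper's: both start from $2\,\mathsf{ns}_{n,\delta}(f) = \sum_\alpha (1-\rho^{|\alpha|})\hat f(\alpha)^2$, restrict to $|\alpha|\ge 2/\delta$, and use $(1-\delta)^{2/\delta}\le e^{-2}$. The only cosmetic difference is that the paper applies $\rho \le 1-\delta$ at the outset (replacing $\rho^{|\alpha|}$ by $(1-\delta)^{|\alpha|}$ before truncating), whereas you apply it at the end after truncating to $|\alpha|\ge t$; your side remark about the Boolean versus real-valued statement is also apt, since the paper's own proof implicitly uses Parseval with $\|f\|_2^2=1$.
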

\begin{proof}
Following~\cite{KOS04}:
\begin{align*}
2\mathsf{ns}_{n,\delta}(f)
&= 1-\sum_\alpha {\left(1-\frac{n}{n-1}\delta\right)}^{|\alpha|} \hat f{(\alpha)}^2
\geq 1 - \sum_\alpha {\left(1-\delta\right)}^{|\alpha|} \hat f{(\alpha)}^2 \\
&= \sum_\alpha (1-{\left(1-\delta\right)}^{|\alpha|}) \hat f{(\alpha)}^2  
\geq \sum_{\alpha:|\alpha|\geq 2/\delta}
  (1-{\left(1-\delta\right)}^{|\alpha|}) \hat f{(\alpha)}^2 \\
&\geq \sum_{\alpha:|\alpha|\geq 2/\delta}
  (1-{\left(1-\delta\right)}^{2/\delta}) \hat f{(\alpha)}^2
\geq (1-e^{-2})\sum_{\alpha:|\alpha|\geq 2/\delta} \hat f{(\alpha)}^2 \,.
\end{align*}
The result now holds since $2/(1-e^{-2}) < 2.32$.
\end{proof}

\begin{lemma}\label{lemma:noise sensitivity to polynomial regression}
Let $\cH$ be a set of functions $[n]^d \to \pmset$ where $n$ is a power of 2,
let $\epsilon,\delta > 0$ such that $\forall h \in \cH,
\mathsf{ns}_{n,\delta}(h) \leq \epsilon^2/3$, and let $t = \ceil{\frac{2}{\delta}}$.
Then there is a set $\cF$ of functions ${[n]}^d \to \bR$ of size $|\cF| \leq
{(nd)}^t$, such that that for any $h \in \cH$, there is a function $p \in
\mathsf{span}(\cF)$ where $\Ex{{(h(x)-p(x))}^2} \leq \epsilon^2$.
\end{lemma}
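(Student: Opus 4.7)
The proof plan is a direct application of Fourier analysis on $[n]^d$ via the Walsh basis, as developed just before the lemma statement.

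First I would take any $h \in \cH$ and write its Walsh expansion $h = \sum_\alpha \hat{h}(\alpha)\psi_\alpha$, where the sum ranges over $\alpha \in \{0,\dotsc,n-1\}^d$. Define the candidate spanning set
\[
\cF \define \{\psi_\alpha : \alpha \in \{0,\dotsc,n-1\}^d, |\alpha| < t\},
\]
and take $p \define \sum_{|\alpha| < t} \hat h(\alpha)\psi_\alpha \in \mathsf{span}(\cF)$, the low-degree truncation of $h$. By orthonormality of the Walsh basis (Parseval's identity),
\[
\Exu{x \sim [n]^d}{(h(x)-p(x))^2} = \sum_{|\alpha| \geq t} \hat h(\alpha)^2.
\]

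Next I would bound this tail. Since $t = \lceil 2/\delta \rceil \geq 2/\delta$, Proposition~\ref{prop:noise sensitivity to fourier degree} gives
\[
\sum_{|\alpha| \geq t} \hat h(\alpha)^2 \leq 2.32 \cdot \mathsf{ns}_{n,\delta}(h) \leq 2.32 \cdot \epsilon^2/3 < \epsilon^2,
\]
which yields the required approximation.

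Finally, for the size bound, I would count vectors $\alpha \in \{0,\dotsc,n-1\}^d$ with $|\alpha| < t$ by choosing the $i \leq t-1$ nonzero coordinates and their values:
\[
|\cF| = \sum_{i=0}^{t-1}\binom{d}{i}(n-1)^i \leq \sum_{i=0}^{t-1}(nd)^i \leq (nd)^t,
\]
where the last inequality follows from the geometric series (assuming the trivial case $nd \geq 2$; otherwise the class is essentially trivial). I do not expect any real obstacle: the only point requiring a small sanity check is the constant $2.32/3 < 1$ in the tail bound, which is what makes the hypothesis $\mathsf{ns}_{n,\delta}(h) \leq \epsilon^2/3$ exactly the right strength. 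Everything else is routine counting and Parseval.
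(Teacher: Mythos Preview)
Your proposal is correct and matches the paper's proof essentially line for line: both take $\cF$ to be the low-degree Walsh functions $\{\psi_\alpha : |\alpha| < t\}$, set $p$ to be the degree-$t$ truncation, invoke Parseval and Proposition~\ref{prop:noise sensitivity to fourier degree} to bound the tail by $2.32\cdot\epsilon^2/3 < \epsilon^2$, and then count $|\cF| \leq (nd)^t$. Your counting argument is in fact slightly more explicit than the paper's.
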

\begin{proof}
Let $p = \sum_{|\alpha| < t} \hat f(\alpha) \phi_\alpha$.
Then by Proposition~\ref{prop:noise sensitivity to fourier degree},
\begin{align*}
  \Ex{{(p(x)-f(x))}^2}
  &= \Ex{{\left(\sum_{|\alpha| \geq t} \hat f(\alpha) \phi_\alpha(x)\right)}^2}
  = \Ex{\sum_{|\alpha| \geq t}\sum_{|\beta| \geq t}
    \hat f(\alpha) \hat f(\beta) \phi_\alpha(x) \phi_\alpha(x) } \\
  &= \sum_{|\alpha| \geq t}\sum_{|\beta| \geq t}
    \hat f(\alpha) \hat f(\beta) \inn{\phi_\alpha,\phi_\beta}
  = \sum_{|\alpha| \geq t} \hat f{(\alpha)}^2 \leq \epsilon^2 \,.
\end{align*}
Therefore $p$ is a linear combination of functions $\phi_\alpha = \prod_{i=1}^d
\phi_{\alpha_i}$ where at most $t$ values $\alpha_i \in \{0, \dotsc, n-1\}$ are
not 0. There are at most ${((n-1)d)}^t$ such products since for each
non-constant $\phi_{\alpha_i}$ we choose $i \in [d]$ and $\alpha_i \in [n-1]$.
We may take $\cF$ to be the set of these products.
\end{proof}

\subsection{Application}
To apply Lemma~\ref{lemma:general algorithm}, we must give bounds on
$\bbs(\cB_k \circ \cH,r)$ and the noise sensitivity:

\begin{lemma}\label{lemma:halfspace block boundary}
Fix any $r$. Then $\bbs(\cB_k \circ \cH,r) \leq dkr^{d-1}$.
\end{lemma}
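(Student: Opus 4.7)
The plan is to combine Proposition~\ref{prop:resolution of composed functions}, which yields $\bbs(\cB_k \circ \cH, r) \leq k \cdot \bbs(\cH, r)$, with an inductive bound $\bbs(\cH, r) \leq dr^{d-1}$. The induction is on $d$, with base case $d = 1$ immediate: a halfspace in $\bR$ is a half-line, so only the block containing its endpoint can be non-constant, giving $\bbs(\cH,r,1) \leq 1$.

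For the inductive step, by reflecting each axis whose coefficient in the normal $w$ is negative (which preserves the block-partition structure up to relabeling), I may assume WLOG that $w_i \geq 0$ for all $i$, so that $h(x) = \sign(\langle w, x\rangle - t)$ is weakly monotone in each coordinate. Mimicking the convex set lemma, partition the non-constant blocks into $P$, those whose upper face $A_v = \{a_{1,v_1}\} \times B_{2,v_2} \times \dotsm \times B_{d, v_d}$ has $h$ constant (vacuously when $v_1 = r$), and $Q$, the rest.

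The crucial improvement over the convex case is showing $|P \cap L_w| \leq 1$ for every axis-1 line $L_w = \{v \in [r]^d : v_i = w_{i-1} \text{ for all } i \geq 2\}$, versus $\leq 2$ for general convex sets. Suppose for contradiction $v, v' \in P \cap L_w$ with $v_1 < v'_1$. If $h \equiv -1$ on $A_v$, then each $x \in B_v$ is coordinate-wise dominated by the point of $A_v$ obtained by replacing its first coordinate with $a_{1, v_1}$, so monotonicity forces $h \equiv -1$ on $B_v$, contradicting non-constancy. Hence $h \equiv +1$ on $A_v$. Since $a_{1, v'_1 - 1} \geq a_{1, v_1}$, monotonicity propagates $h \equiv +1$ from $A_v$ to the lower face $\{a_{1, v'_1 - 1}\} \times B_{2, v_2} \times \dotsm \times B_{d, v_d}$ of $B_{v'}$, and then (again by monotonicity) to all of $B_{v'}$, contradicting non-constancy of $B_{v'}$. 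Summing over the $r^{d-1}$ axis-1 lines yields $|P| \leq r^{d-1}$.

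To bound $|Q|$, each non-constant upper face lies in a $(d-1)$-dimensional axis-1 hyperplane $\{a_{1, j}\} \times \bR^{d-1}$ for some $j \in [r-1]$, on which $h$ restricts to a halfspace in one fewer dimension; by the inductive hypothesis, at most $(d-1)r^{d-2}$ such faces per plane are non-constant, giving $|Q| \leq (r-1)(d-1)r^{d-2}$. Combining, $\bbs(\cH, r, d) \leq r^{d-1} + (r-1)(d-1)r^{d-2} \leq dr^{d-1}$, which closes the induction. The main obstacle is the tighter $|P \cap L_w| \leq 1$ bound, which hinges on the axis-reflection step that converts an arbitrary halfspace into a monotone one, so that the ``upper face pulls the block up'' argument can force the entire block to be constant — a structural property available for halfspaces but not for general convex sets.
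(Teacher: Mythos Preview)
Your proof is correct, but the paper takes a shorter route. Both arguments begin identically: reduce to a single halfspace via Proposition~\ref{prop:resolution of composed functions}, then use unateness (your axis-reflection step) to reduce to a monotone halfspace. At that point, however, the paper simply invokes Lemma~\ref{lemma:monotone block boundary} (the $k$-alternating bound with $k=1$), whose proof partitions $[r]^d$ into at most $dr^{d-1}$ diagonal chains and observes that a monotone function has at most one non-constant block per chain. You instead rebuild this bound from scratch by induction on $d$, adapting the upper-face decomposition from the convex-set lemma and sharpening the per-line count from $2$ to $1$ using monotonicity. Your argument is self-contained and nicely illustrates why halfspaces beat general convex sets by the factor of $2$ in that lemma, but the diagonal-chain argument is both simpler and more general (it handles all $k$-alternating functions at once), which is why the paper prefers to cite it rather than redo the work.
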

\begin{proof}
Any halfspace $h(x) = \sign(\inn{w,x}-t)$ is unate, meaning there is a vector
$\sigma \in \pmset^d$ such that the function $h^\sigma \define
\sign(\inn{w,x^\sigma}-t)$, where $x^\sigma = (\sigma_1x_1, \dotsc,
\sigma_d x_d)$, is monotone. For any $r$-block partition $\block : \bR^d \to
[r]^d$ defined by values $\{a_{i,j}\}$ for $i \in [d], j \in [r-1]$, we can
define $\block^\sigma : \bR^d \to [r]^d$ as the block partition obtained by
taking $\{\sigma_i \cdot a_{i,j}\}$. The number
of non-constant blocks of $h$ in $\block$ is the same as that of $h^\sigma$ in
$\block^\sigma$, but $h^\sigma$ is monotone. Thus the bound on $\bbs$ for
monotone functions holds, so $\bbs(\cH,r) \leq dr^{d-1}$ by
Lemma~\ref{lemma:monotone block boundary}, and Lemma~\ref{prop:resolution of
composed functions} gives $\bbs(\cB_k \circ \cH,r) \leq dkr^{d-1}$.
\end{proof}

The bounds on noise sensitivity follow from known results for the hypercube.
\begin{proposition}\label{prop:noise sensitivity composition}
Let $h_1, \dotsc, h_k : {[n]}^d \to \pmset$ and let $g : \pmset^k \to \pmset$.
Let $f \define g \circ (h_1, \dotsc,h_k)$. Then $\mathsf{ns}_\delta(f) \leq
\sum_{i=1}^k \mathsf{ns}_\delta(h_i)$.
\end{proposition}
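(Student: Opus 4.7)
The plan is to prove this by unwinding the definition of noise sensitivity and applying a union bound, using the key observation that if $f$ takes different values at two points, then at least one of the constituent functions $h_i$ must also take different values at those points (since $g$ is deterministic).

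First I would set up the coupling. Let $u \sim [n]^d$ be uniform and let $v$ be the $\delta$-noisy copy of $u$, i.e.\ $v_i = u_i$ with probability $1-\delta$ and $v_i$ is uniform in $[n] \setminus \{u_i\}$ otherwise (matching the paper's definition of $\mathsf{ns}_{n,\delta}$). By definition, $\mathsf{ns}_\delta(f) = \Pr[f(u) \neq f(v)]$ and $\mathsf{ns}_\delta(h_i) = \Pr[h_i(u) \neq h_i(v)]$ under this same coupling.

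The key step is the implication: if $f(u) \neq f(v)$, then $g(h_1(u), \dotsc, h_k(u)) \neq g(h_1(v), \dotsc, h_k(v))$, so the vectors $(h_1(u), \dotsc, h_k(u))$ and $(h_1(v), \dotsc, h_k(v))$ cannot be equal, so there must exist some index $i \in [k]$ with $h_i(u) \neq h_i(v)$. Therefore the event $\{f(u) \neq f(v)\}$ is contained in the union $\bigcup_{i=1}^k \{h_i(u) \neq h_i(v)\}$. Applying the union bound gives
\[
  \mathsf{ns}_\delta(f) = \Pr[f(u) \neq f(v)] \leq \sum_{i=1}^k \Pr[h_i(u) \neq h_i(v)] = \sum_{i=1}^k \mathsf{ns}_\delta(h_i),
\]
as required.

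There is no real obstacle here; the entire content of the proposition is the containment of events together with a union bound, and it does not rely on anything about the structure of $g$ beyond being a well-defined function. The same argument would work verbatim on any domain and any noise operator, which is why the paper uses it as a plug-in black box for bounding noise sensitivity of compositions.
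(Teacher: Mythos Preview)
Your proposal is correct and matches the paper's own proof essentially verbatim: the paper also observes that $f(u)\neq f(v)$ implies $\exists i:h_i(u)\neq h_i(v)$ and then applies the union bound. (One cosmetic discrepancy: the paper's definition has $v_i=u_i$ with probability $\delta$, not $1-\delta$, but this has no bearing on the argument.)
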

\begin{proof}
For $u,v$ drawn from ${[n]}^d$ as in the definition of noise sensitivity, the
union bound gives
$\mathsf{ns}_\delta(f) = \Pru{u,v}{f(u) \neq f(v)}
  \leq \Pru{u,v}{\exists i : h_i(u) \neq h_i(v)}
  \leq \sum_{i=1}^k \mathsf{ns}_\delta(h_i)$.
\end{proof}

\begin{lemma}\label{lemma:halfspace noise sensitivity}
Let $f = g \circ (h_1, \dotsc, h_k) \in \cB_k \circ \cH$. For any $r$-block
partition $\block : \bR^d \to [r]^d$, and any $\delta \in [0,1],
\mathsf{ns}_{r,\delta}(f^\block) = O(k \sqrt \delta)$.
\end{lemma}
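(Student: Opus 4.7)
The plan is to combine three ingredients that are already in hand: the composition bound for noise sensitivity (Proposition~\ref{prop:noise sensitivity composition}), the reduction from noise sensitivity on $[r]^d$ to noise sensitivity on $\pmset^d$ (Lemma~\ref{lemma:noise sensitivity reduction}), and the classical fact (Peres' theorem) that a halfspace on the hypercube has $\delta$-noise sensitivity $O(\sqrt\delta)$.

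First I would observe that the block partition commutes with composition: since $\blockpoint$ maps $[r]^d$ into $\bR^d$, we have
\[
f^\block(v) = f(\blockpoint(v)) = g\bigl(h_1(\blockpoint(v)),\dotsc,h_k(\blockpoint(v))\bigr) = g\bigl(h_1^\block(v),\dotsc,h_k^\block(v)\bigr),
\]
so $f^\block = g \circ (h_1^\block, \dotsc, h_k^\block)$. Proposition~\ref{prop:noise sensitivity composition} (applied to functions on $[r]^d$) then gives
\[
\mathsf{ns}_{r,\delta}(f^\block) \leq \sum_{i=1}^k \mathsf{ns}_{r,\delta}(h_i^\block).
\]

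Next I would apply Lemma~\ref{lemma:noise sensitivity reduction} to each $h_i^\block$. This requires checking that the class $\cH$ of halfspaces is closed under the appropriate (affine-linear) transformations used by $\Phi_{u,v}$: if $h(x) = \sign(\inn{w,x}-t)$ and $\Phi$ is a translation-plus-scaling, then $h\circ\Phi$ is again of the form $\sign(\inn{w',x}-t')$, so $h\circ\Phi \in \cH$. Hence Lemma~\ref{lemma:noise sensitivity reduction} yields $\mathsf{ns}_{r,\delta}(h_i^\block) \leq \mathsf{ns}_{2,\delta}(\cH)$, where $\mathsf{ns}_{2,\delta}(\cH)$ is the supremum of $\delta$-noise sensitivity of halfspaces over $\pmset^d$.

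Finally, I would invoke Peres' theorem, which states that for any halfspace $h : \pmset^d \to \pmset$ and any $\delta \in [0,1]$, $\mathsf{ns}_{2,\delta}(h) = O(\sqrt\delta)$, with the constant independent of the dimension $d$. Plugging this into the previous two bounds gives
\[
\mathsf{ns}_{r,\delta}(f^\block) \leq k \cdot O(\sqrt\delta) = O(k\sqrt\delta),
\]
as claimed.

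There is no real obstacle here; all work has been front-loaded into the two reduction lemmas and Peres' theorem. The only thing to be careful about is the affine-versus-linear subtlety when applying Lemma~\ref{lemma:noise sensitivity reduction}, but halfspaces are closed under affine transformations (translations absorb into the threshold, positive scalings absorb into the weight vector), so this causes no trouble.
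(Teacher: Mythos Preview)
Your proof is correct and follows essentially the same route as the paper: invoke Peres' theorem for $\mathsf{ns}_{2,\delta}(\cH)=O(\sqrt\delta)$, check that $\cH$ is closed under the relevant (affine) linear maps so that Lemma~\ref{lemma:noise sensitivity reduction} applies to each $h_i^\block$, and then use Proposition~\ref{prop:noise sensitivity composition} to sum over the $k$ pieces. Your explicit observation that $f^\block = g\circ(h_1^\block,\dotsc,h_k^\block)$ is a helpful clarification that the paper leaves implicit.
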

\begin{proof}
It is known that $\mathsf{ns}_{2,\delta}(\cH) = O(\sqrt \delta)$ (Peres'
theorem~\cite{OD14}). Let $A$ be any full-rank linear transformation and let $h
\in \cH$, $h \circ A \in \cH$.  This holds since for some $w \in \bR^d, t \in
\bR$, $h(Ax) = \sign(\inn{w,Ax}-t) = \sign(\inn{Aw,x}-t)$, which is a halfspace.
Then Lemma~\ref{lemma:noise sensitivity reduction} implies
$\mathsf{ns}_{r,\delta}(h^\block) \leq \mathsf{ns}_{2,\delta}(\cH) = O(\sqrt
\delta)$ and we conclude with Proposition~\ref{prop:noise sensitivity
composition}.
\end{proof}

\thmhalfspaces*
\begin{proof}
Here we prove only the continuous distribution case. The finite case is proved in \cref{thm:finite
algorithms}.

For $r = \ceil{dk/\epsilon}$, we have $r^{-d}\cdot \bbs(\cB_k \circ \cH,r) \leq
\epsilon$ by Lemma~\ref{lemma:halfspace block boundary}, so condition 1 of
Lemma~\ref{lemma:general algorithm} holds. Lemma~\ref{lemma:halfspace noise
sensitivity} guarantees that for any $f \in \cB_k \circ \cH,
\mathsf{ns}_{r,\delta}(f^\block) = O(k \sqrt \delta)$. Setting $\delta =
\Theta(\epsilon^4/k^2)$ so that $\mathsf{ns}_{r,\delta}(f^\block) \leq
\epsilon^2/3$, we obtain via Lemma~\ref{lemma:noise sensitivity to polynomial
regression} a set $\cF$ of size $|\cF| \leq (rd)^{O(k^2/\epsilon^4)}$ satisfying
condition 2 of Lemma~\ref{lemma:general algorithm}. Then for $n =
\poly(|\cF|,1/\epsilon)$ we apply Lemma~\ref{lemma:general algorithm} to get an
algorithm with sample complexity
\[
  O\left(rd^2n^2\log(rd)\right)
  = O\left(\frac{d^3k}{\epsilon}\log(dk/\epsilon)\right)
  \cdot
\left(\frac{dk}{\epsilon}\right)^{O\left(\frac{k^2}{\epsilon^4}\right)} \,.
\]
The other time complexity follows from Lemma~\ref{lemma:brute force}.
\end{proof}

\section{Learning Polynomial Threshold Functions}\label{section:ptfs}
A \emph{degree-$k$ polynomial threshold function (PTF)} is a function $f : \bR^d
\to \pmset$ such that there is a degree-$k$ polynomial $p : \bR^d \to \bR$ in
$d$ variables where $f(x) = \sign(p(x))$; for example, a halfspaces is a
degree-1 PTF\@. Let $\cP_k$ be the set of degree-$k$
PTFs. As for halfspaces, we will give bounds on the noise sensitivity
and block boundary size and apply the general learning algorithm. The bound on
noise sensitivity will follow from known results on the hypercube~\cite{DHK+10},
but the bound on the block boundary size is much more difficult to obtain than
for halfspaces.

\subsection{Block-Boundary Size of PTFs}

A theorem of Warren~\cite{War68} gives a bound on the number of connected
components of $\bR^d$ after removing the 0-set of a degree-$k$ polynomial. This
bound (Theorem~\ref{thm:warren} below) will be our main tool.

\newcommand{\comp}{\mathsf{comp}}
A set $S \subseteq \bR^d$ is \emph{connected}\footnote{Here we are using the
fact that \emph{connected} and \emph{path-connected} are equivalent in $\bR^d$.}
if for every $s,t \in S$ there is a continuous function $p : [0,1] \to S$ such
that $p(0)=s,p(1)=t$. A subset $S \subseteq X$ where $X \subseteq \bR^d$ is a
\emph{connected component} of $X$ if it is connected and there is no connected
set $T \subseteq X$ such that $S \subseteq T$. Write $\comp(X)$ for the number of
connected components of $X$.

A function $\rho : \bR^d \to (\bR \cup \{*\})^d$ is called a \emph{restriction}
and we will denote $|\rho| = |\{i \in [d] : \rho(i) = *\}|$. The affine subspace
$A_\rho$ induced by $\rho$ is $A_\rho \define \{ x \in \bR^d \;|\; x_i = \rho(i)
\text{ if } \rho(i) \neq *\}$ and has affine dimension $|\rho|$.

For $n \leq d$, let $\cA_n$ be the set of affine subspaces
$A_\rho$ obtained by choosing a restriction $\rho$ with $\rho(i) = *$ when $i
\leq n$ and $\rho(i) \neq *$ when $i > n$, so in particular $\cA_d = \{\bR^d\}$.

Let $f : \bR^d \to \pmset$ be a measurable function and define the boundary of
$f$ as:
\[
  \partial f \define \{ x \in \bR^d \;|\; \forall \epsilon > 0, \exists y :
\|x-y\|_2 < \epsilon, f(y) \neq f(x) \} \,.
\]
This is equivalent to the boundary of the set of $+1$-valued points, and the
boundary of any set is closed.  Each measurable $f : \bR^d \to \pmset$ induces a
partition of $\bR^d \setminus
\partial f$ into some number of connected parts.
For a set $\cH$ of functions $f : \bR^d \to \pmset$ and $n \leq d$, write
\newcommand{\parts}{\mathsf{parts}}
\[
M(n) \define
  \max_{f \in \cH} \max_{A \in \cA_n} \comp(A \setminus \partial f) \,.
\]
For each $i \in [d]$ let $\cP_i$ be the set of hyperplanes of the form $\{x \in
\bR^d \;|\; x_i = a\}$ for some $a \in \bR$.
An \emph{$(r,n,m)$-arrangement} for $f$ is any set $A \setminus \left(\partial f
\cup \bigcup_{i=1}^m \bigcup_{j=1}^{r-1} H_{i,j}\right)$ where $A \in \cA_n$ and
$H_{i,j} \in \cP_i$ such that all $H_{i,j}$ are distinct. Write $R_f(r,n,m)$ for
the set of $(r,n,m)$-arrangements for $f$. Define
\[
  P_r(n,m) \define \max_{f \in \cH} \max\{ \comp(R) \;|\; R \in R_f(r,n,m) \}
\]
and observe that $P_r(n,0) = M(n)$.

\begin{proposition}\label{prop:bbs bound}
For any set $\cH$ of functions $f : \bR^d \to \pmset$ and any $r > 0$,
$\bbs(\cH, r) \leq P_r(d,d) - r^d$.
\end{proposition}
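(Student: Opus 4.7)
The plan is to turn any extremal block partition into a concrete $(r,d,d)$-arrangement and count connected components block by block. Fix $f \in \cH$ and an $r$-block partition $\block : \bR^d \to [r]^d$ witnessing $\bbs(f,r)$. Its defining hyperplanes $H_{i,j} = \{x : x_i = a_{i,j}\}$ for $i \in [d], j \in [r-1]$ are all distinct (different $a$-values within each $i$, different normal directions across different $i$), and $\bR^d \in \cA_d$, so
\[
R \define \bR^d \setminus \Bigl(\partial f \cup \bigcup_{i=1}^d \bigcup_{j=1}^{r-1} H_{i,j}\Bigr) \in R_f(r,d,d).
\]
Removing the $H_{i,j}$ alone already partitions $\bR^d$ into the $r^d$ open blocks $B_v \define \mathrm{int}(\block^{-1}(v))$, so every connected component of $R$ lies inside a single $B_v$, and it suffices to lower bound, for each $v \in [r]^d$, the number of connected components of $B_v \setminus \partial f$.

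I would then split into two cases. If $f$ is constant on block $v$, take the representative equal to that constant on all of $\block^{-1}(v)$; then every point of $B_v$ has a full neighborhood inside $\block^{-1}(v)$ on which $f$ is constant, so $\partial f \cap B_v = \emptyset$ and $B_v$ contributes exactly one component to $R$. If $f$ is non-constant on $v$, there are positive-measure subsets $S, T \subseteq B_v$ on which $f = 1$ and $f = -1$ respectively (we may shrink the sets from the non-constancy definition into $B_v$, since $B_v$ has full measure in $\block^{-1}(v)$). The two open sets $U_\pm = \{x : f \equiv \pm 1 \text{ on a neighborhood of } x\}$ are disjoint, together cover $\bR^d \setminus \partial f$, and each meets $B_v$ (almost every point of $S$ lies in $U_+$ and almost every point of $T$ in $U_-$ by the local-constancy characterization of $\partial f$), so $B_v \setminus \partial f$ has at least two connected components.

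Summing the two cases gives $\comp(R) \geq (r^d - \bbs(f,r)) + 2\bbs(f,r) = r^d + \bbs(f,r)$, hence $P_r(d,d) \geq \comp(R) \geq r^d + \bbs(f,r)$, and maximizing over $f \in \cH$ yields $\bbs(\cH, r) \leq P_r(d,d) - r^d$. The delicate step is the disconnection argument in the non-constant case, where measure-theoretic non-constancy must be converted into a genuine topological separation of $B_v \setminus \partial f$; this hinges on the local-constancy definition of $\partial f$ from the paper together with the mild assumption that $f$ is taken as a representative for which $\partial f$ is the topological boundary of $f^{-1}(1)$, an assumption that holds for all function classes considered here.
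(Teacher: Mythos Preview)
Your approach is essentially the same as the paper's: build the $(r,d,d)$-arrangement from the block-partition hyperplanes together with $\partial f$, then count components block by block to get $P_r(d,d) \geq r^d + \bbs(f,r)$. One small correction: for constant blocks you claim \emph{exactly} one component via ``taking a representative,'' but this per-block modification is not quite coherent (changing $f$ on one block can alter $\partial f$ elsewhere), and in any case you only need \emph{at least} one component, which is what the paper asserts and which follows immediately once $\partial f$ has Lebesgue measure zero; your final inequality is unaffected. Both your argument and the paper's rely on that measure-zero assumption (the paper uses it when writing ``since $B$ has Lebesgue measure $0$''), and you are right to flag it as the delicate step.
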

\begin{proof}
Consider any $r$-block partition, which is obtained by choosing values $a_{i,j}
\in \bR$ for each $i \in [d], j \in [r-1]$ and defining $\block : \bR^d \to
[r]^d$ by assigning each $x \in \bR^d$ the block $v \in [r]^d$ where $v_i$ is
the unique value such that $a_{i,v_i-1} < x_i \leq a_{i,v_i}$, where we define
$a_{i,0} = -\infty, a_{i,r} = \infty$. Suppose $v$ is a non-constant block, so
there are $x,y \in \block^{-1}(v) \setminus \partial f$ such that $f(x) \neq f(y)$.
Let $H_{i,j} = \{ x \in \bR^d \;|\; x_i = a_{i,j}\}$ and let $B = \partial f
\cup \bigcup_{i,j} H_{i,j}$. Consider the set $\bR^d \setminus B$. Since $x \notin \partial f$
there exists some small open ball $R_x$ around $x$ such that $\forall x' \in R_x, f(x') = f(x)$;
and since $x \in \block^{-1}(v)$, $R_x \cap \block^{-1}(v)$ is a set of positive
Lebesgue measure. Since $B$ has Lebesgue measure 0, we conclude that
$(R_x \cap \block^{-1}(v)) \setminus B$ has positive measure, so there
is $x' \in \block^{-1}(v) \setminus B$ with $f(x')=f(x)$. Likewise, there
is $y' \in \block^{-1}(v) \setminus B$ with $f(y')=f(y) \neq f(x')$. Therefore
$x',y'$ must belong to separate components, so $\block^{-1}(v)\setminus B$ is
partitioned into at least 2 components.  Meanwhile, each constant block
is partitioned into at least 1 component. So
\[
  P_r(d,d) \geq 2 \cdot (\#\text{ non-constant blocks}) + (\#\text{ constant
blocks}) = \bbs(\cH,r) + r^d \,. \qedhere
\]
\end{proof}

The following fact must be well-known, but not to us:
\begin{proposition}\label{prop:cutting bound}
Let $A$ be an affine subspace of $\bR^d$,
let $B \subset A$, and for $a \in \bR$ let $H = \{ x \in \bR^d \;|\; x_1 = a
\}$. Then
\[
  \comp(A \setminus (H\cup B)) - \comp(A \setminus B)
  \leq \comp(H \setminus B) \,.
\]
\end{proposition}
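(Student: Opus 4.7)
The plan is a bipartite multigraph argument. Set $X = A \setminus B$, $Y = A \setminus (H \cup B) = X \setminus H$, and $Z = (A \cap H) \setminus B = X \cap H$. First dispose of the trivial cases: $A \subseteq H$ forces $Y = \emptyset$ and $A \cap H = \emptyset$ forces $X = Y$, so in both the LHS is non-positive. In the main case, $A \cap H$ is a codimension-$1$ affine subspace of $A$ which separates $A$ into two open half-spaces $A^\pm = A \cap \{\pm(x_1 - a) > 0\}$; writing $X^\pm = X \cap A^\pm$, we get $Y = X^+ \sqcup X^-$ and hence $\comp(Y) = \comp(X^+) + \comp(X^-)$.

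Construct a bipartite multigraph $G$ with vertex set $V = V^+ \sqcup V^-$ consisting of the components of $X^+$ and $X^-$, and with one edge per component of $Z$. To assign an edge to a component $D$ of $Z$: for any $p \in D$, openness of $X$ in $A$ (using closedness of $B$, as in the applications) yields a small open ball $B_p \subset X$ around $p$, and the half-balls $B_p \cap A^\pm$ are connected open subsets of $X^\pm$ lying in unique components $C^\pm(D;p)$. A lifting argument shows $C^\pm(D;p)$ depends only on $D$: given $p, q \in D$, a path $\gamma \subseteq D$ from $p$ to $q$ is covered by finitely many balls $B_{p_i} \subset X$ (by compactness of $\gamma$), and the $A^+$-halves of consecutive balls overlap (since consecutive balls meet along $\gamma \subset A \cap H$, hence also in $A^+$) and glue into a connected subset of $X^+$ joining the $p$- and $q$-halves, forcing $C^+(D;p) = C^+(D;q)$. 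The same argument handles $C^-$, and I add one edge between $C^+(D)$ and $C^-(D)$ for each $D$.

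Next I plan to show $\comp(G) = \comp(X)$. An edge labelled $D$ arises from a point of $D \subseteq Z \subseteq X$ adjacent to both $C^+(D)$ and $C^-(D)$, so these vertices lie in the same component of $X$. Conversely, two vertices in the same component $C$ of $X$ are joined by a piecewise-linear path in $C$ crossing $A \cap H$ transversally at only finitely many points, each in $Z$; each crossing corresponds to a unique edge of $G$, yielding a walk in $G$ between the vertices. Applying the elementary graph inequality $|V(G)| - \comp(G) \leq |E(G)|$ (a spanning forest has $|V(G)| - \comp(G)$ edges) then gives
\[
\comp(Y) - \comp(X) = |V(G)| - \comp(G) \leq |E(G)| = \comp(Z) = \comp((A \cap H) \setminus B),
\]
which is the desired $\comp(H \setminus B)$ when $H$ is interpreted via its trace $A \cap H$ in $A$, as is natural in the subsequent applications.

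The main obstacle I anticipate is the well-definedness of the edge map $D \mapsto (C^+(D), C^-(D))$: the path-lifting argument must be carried out carefully, and it relies essentially on both the openness of $X$ in $A$ (requiring $B$ closed) and the path-connectedness of each component $D$ of $Z$. Once this is in place, the rest of the argument is a direct counting identity on the bipartite graph $G$.
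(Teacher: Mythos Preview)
Your proposal is correct and follows essentially the same strategy as the paper: build a graph on the components of $A \setminus (H \cup B)$, show its connected components correspond to those of $A \setminus B$, bound the edges by the components of $(A\cap H)\setminus B$, and conclude via $|V|-\comp(G)\le |E|$. The one organizational difference is the direction of the edge/$Z$-component correspondence. The paper declares an edge between every pair $(S,T)$ on opposite sides lying in a common component of $A\setminus B$, and then constructs an \emph{injection} from edges into components of $H\setminus B$ (this is where their Claim is used). You instead declare one edge \emph{per} component $D$ of $Z$ and prove the endpoint pair $(C^+(D),C^-(D))$ is well-defined via your path-lifting/ball-covering argument; the work then shifts to checking that these edges already suffice to make $\comp(G)=\comp(X)$, which you handle with the PL-path-crossing argument. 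The two ``hard'' steps (the paper's Claim and your well-definedness lemma) are essentially the same lemma read in opposite directions. Both arguments tacitly need $B$ closed so that $A\setminus B$ is open; you flag this, and the paper uses it as well when it takes open balls inside a component $U$.
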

\begin{proof}
Let $G$ be the graph with its vertices $V$ being the components of $A \setminus
(H \cup B)$ and the edges $E$ being the pairs $(S,T)$ where $S,T$ are components
of $A \setminus (H \cup B)$ such that $\forall s \in S, s_1 < a$, $\forall t \in
T, t_1 > a$, and there exists a component $U$ of $A \setminus B$ such that $S,T
\subset U$. Clearly $\comp(A \setminus (H \cup B)) = |V|$; we will show that
$\comp(A \setminus B)$ is the number of connected components of $G$ and that
$|E| \leq \comp(H \setminus B)$. This suffices to prove the statement. We will
use the following claim:
\begin{claim}
Let $U$ be a connected component of $A \setminus B$.  If $S,T \in V$ and there
is a path $p : [0,1] \to U$ such that $p(0) \in S, p(1) \in T$ and either
$\forall \lambda,p(\lambda)_1 \leq a$ or $\forall \lambda, p(\lambda)_1 \geq a$,
then $S=T$.
\end{claim}
\begin{proof}[Proof of claim]
Assume without loss of generality that $p(\lambda)_1 \leq a$ for all $\lambda$.
Let $P = \{ p(\lambda) \;|\; \lambda \in [0,1]\}$. Since $U$ is open we can
define for each $\lambda$ a ball $B(\lambda) \ni p(\lambda)$ such that
$B(\lambda) \subset U$. Consider the sets $B_a(\lambda) \define \{ x \in
B(\lambda) \;|\; x_1 < a \}$, which are open, and note that for all
$\alpha,\beta \in [0,1]$, if $p(\alpha) \in B(\beta)$ then $B_a(\alpha)\cap
B_a(\beta) \neq \emptyset$ since $p(\alpha)_1,p(\beta)_1 \leq a$.

Assume for contradiction that there is $\lambda$ such that $B_a(\lambda)$ is not
connected to $S$ or $T$; then let $\lambda'$ be the infimum of all such
$\lambda$, which must satisfy $\lambda' > 0$ since $p(0) \in S$. For any
$\alpha$, if $p(\alpha) \in B(\lambda')$ and $B_a(\alpha)$ is connected to $S$
or $T$ then since $B_a(\lambda') \cap B_a(\alpha)$ it must be that
$B_a(\lambda)$ is connected as well; therefore $B_a(\alpha)$ is not connected to
either $S$ or $T$. But since $p$ is continuous, there is $\alpha < \lambda'$
such that $p(\alpha) \in B(\lambda')$, so $\lambda'$ cannot be the infimum,
which is a contradiction. Therefore every $\lambda$ has $B_a(\lambda)$ connected
to either $S$ or $T$. If $S \neq T$, this is a contradiction since there must
then be $\alpha,\beta$ such that $p(\alpha) \in B(\beta)$ but
$B_a(\alpha),B_a(\beta)$ are connected to $S,T$ respectively.  Therefore $S=T$.
\end{proof}
We first show that $\comp(A \setminus B)$ is the number of graph-connected
components of $G$.  Suppose that vertices $(S,T)$ are connected, so there is a
path $S = S_0, \dotsc, S_n = T$ in $G$. Then there are connected components
$U_i$ of $A \setminus B$ such that $S_{i-1},S_i \subset U_i$; so $S_i \subset
U_i \cap U_{i+1}$, which implies that $\bigcup_i U_i \subset A \setminus B$ is
connected.  Therefore we may define $\Phi$ as mapping each connected component
$\{S_i\}$ of $G$ to the unique component $U$ of $A \setminus B$ with $\bigcup_i
S_i \subset U$. $\Phi$ is surjective since for each component $U$ of $A
\setminus B$ there is some vertex $S$ (a component of $A \setminus (H \cup B)$)
such that $S \subseteq U$: this is $U$ itself if $U \cap H = \emptyset$. For
some connected component $U$ of $A \setminus B$, let $S,T \subseteq U$ be
vertices of $G$, and let $s \in S, t \in T$; since $U$ is connected, there is a
path $p : [0,1] \to U$ such that $s=p(0),t=p(1)$. Let $S=S_0, \dotsc, S_n = T$
be the multiset of vertices such that $\forall \lambda \in [0,1], \exists i :
p(\lambda) \in S_i$; let $\psi(\lambda) \in \{0, \dotsc, n\}$ be the index such
that $p(\lambda) \in S_{\psi(\lambda)}$, and order the sequence such that if
$\alpha < \beta$ then $\psi(\alpha) \leq \psi(\beta)$ (note that we may have
$S_i = S_j$ for some $i < j$ if $p(\lambda)$ visits the same set more than
once). Then for any $i$, $S_i,S_{i+1} \subseteq U$ since the path visits both
and is contained in $U$. If $S_i,S_{i+1}$ are on opposite sides of $H$, then
there is an edge $(S_i,S_{i+1})$ in $G$; otherwise, the above claim implies
$S_i=S_{i+1}$. Thus there is a path $S$ to $T$ in $G$; this proves that $\Phi$
is injective, so $\comp(A \setminus B)$ is indeed the number of graph-connected
components of $G$.

Now let $(S,T) \in E$, so there is a component $U$ of $A \setminus B$ such that
$S,T \subset U$. For any $s \in S, t \in T$ there is a continuous path $p_{s,t}
: [0,1] \to U$ where $p_{s,t}(0)=s,p_{s,t}(1)=t$. There must be some $z \in
[0,1]$ such that $p_{s,t}(z) \in H$, otherwise the path is a path in $\bR^d
\setminus B$ and $S=T$. Since $p_{s,t}(z) \in H \cap U$, so $p_{s,t}(z) \notin
B$, there is some component $Z \in \cC_H$ containing $p_{s,t}(z)$. We will map
the edge $(S,T)$ to an arbitrary such $Z$, for any choice $s,t,z$, and show that
it is injective.  Suppose that $(S,T),(S',T')$ map to the same $Z \in \cC_H$.
Without loss of generality we may assume that $S,S'$ lie on the same side of $H$
and that $\forall x \in S\cup S', x_1 < a$. Then there are $s \in S, s' \in S',
t \in T, t' \in T'$, and $z,z' \in [0,1]$ such that $p_{s,t}(z),p_{s',t'}(z') \in
Z$. Then since $Z$ is a connected component, we may take $z,z'$ to be the least
such values that $p_{s,t}(z),p_{s',t'}(z) \in Z$, and connected
$p_{s,t}(z),p_{s',t'}(z)$ by a path in $Z$ to obtain a path $q : [0,1] \to U$
such that $q(0)=s,q(1)=s'$, and $\forall \lambda, q(\lambda)_1 \leq a$. Then by
the above claim, $S=S'$; the same holds for $T,T'$, so the mapping is injective.
This completes the proof of the proposition.
\end{proof}

\begin{proposition}\label{prop:induction}
For any set $\cH$ of measurable functions $f : \bR^d \to \pmset$ and any $r > 1$,
\[
  P_r(n,m) \leq P_r(n,m-1) + (r-1) \cdot P_r(n-1,m-1) \,.
\]
\end{proposition}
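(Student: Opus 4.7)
The plan is to peel off the $r-1$ axis-parallel hyperplanes in the $m$-th coordinate direction one at a time, and charge each removal to the number of connected components it induces on the corresponding hyperplane (which is itself an $(r,n{-}1,m{-}1)$-arrangement in the induced axis-aligned subspace). The main tool is \cref{prop:cutting bound}, applied with the $m$-th coordinate playing the role of the first coordinate (the proposition's statement is symmetric in the coordinate direction, so this substitution is valid).

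Concretely, fix $f \in \cH$, $A \in \cA_n$, and hyperplanes $H_{i,j} \in \cP_i$ for $i \in [m], j \in [r{-}1]$ achieving the maximum in $P_r(n,m)$; write $R^* = A \setminus \left(\partial f \cup \bigcup_{i \leq m, j \leq r-1} H_{i,j}\right)$. Set $B = \partial f \cup \bigcup_{i \leq m-1, j \leq r-1} H_{i,j}$ and $B_k = B \cup H_{m,1} \cup \dotsm \cup H_{m,k}$ for $k = 0, \dotsc, r{-}1$, so that $A \setminus B_0$ is an $(r,n,m{-}1)$-arrangement and $A \setminus B_{r-1} = R^*$. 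Applying \cref{prop:cutting bound} successively with $H = H_{m,k}$ (in coordinate direction $m$) and $B$ replaced by $B_{k-1}$ gives
\[
\comp(A \setminus B_k) - \comp(A \setminus B_{k-1}) \leq \comp(H_{m,k} \setminus B_{k-1}) \,.
\]
Summing over $k = 1, \dotsc, r{-}1$ yields $\comp(R^*) \leq \comp(A \setminus B) + \sum_{k=1}^{r-1} \comp(H_{m,k} \setminus B_{k-1})$, and the first term is at most $P_r(n,m{-}1)$ by definition.

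It remains to show $\comp(H_{m,k} \setminus B_{k-1}) \leq P_r(n{-}1, m{-}1)$ for each $k$. Because the hyperplanes $H_{m,1}, \dotsc, H_{m,k-1}$ are parallel to $H_{m,k}$, they do not intersect it, so
\[
H_{m,k} \setminus B_{k-1} = (H_{m,k} \cap A) \setminus \Big(\partial f \cup \bigcup_{i \leq m-1, j \leq r-1} (H_{i,j} \cap H_{m,k})\Big) \,.
\]
Here $H_{m,k} \cap A$ is an axis-aligned affine subspace of dimension $n{-}1$, and the sets $H_{i,j} \cap H_{m,k}$ are axis-aligned hyperplanes within it in coordinate directions $1, \dotsc, m{-}1$; hence this is an $(r, n{-}1, m{-}1)$-arrangement for $f$ (up to the innocuous relabeling that swaps the role of coordinate $m$ with one of the free coordinates of $\cA_{n-1}$, which the outer maximum over $f \in \cH$ and $A \in \cA_{n-1}$ absorbs). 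Therefore $\comp(H_{m,k} \setminus B_{k-1}) \leq P_r(n{-}1,m{-}1)$, and summing gives the claimed inequality.

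The only subtle step is the last paragraph: verifying that each ``slice'' $H_{m,k} \cap A$ really does fit the template of an $(r,n{-}1,m{-}1)$-arrangement; aside from this bookkeeping and the symmetric use of \cref{prop:cutting bound}, the argument is a direct telescoping.
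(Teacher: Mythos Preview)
Your proof is correct and follows essentially the same approach as the paper: peel off the $r-1$ hyperplanes in direction $m$ one at a time, apply \cref{prop:cutting bound} at each step, and bound each slice $H_{m,k}\setminus B_{k-1}$ by $P_r(n-1,m-1)$. If anything, you are slightly more careful than the paper in noting that the parallel hyperplanes $H_{m,1},\dotsc,H_{m,k-1}$ contribute nothing to the slice and in flagging the coordinate relabeling needed to match the definition of $\cA_{n-1}$.
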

\begin{proof}
Let $A \in \cA_n$ and $a_{i,j} \in \bR$, $i\in [m],j \in [r-1]$ such that the
number of connected components in $A \setminus B$, where $B = \partial f
\cup \bigcup_{i,j} H_{i,j}$ and $H_{i,j} = \{ x \in A \;|\; x_i = a_{i,j}\}$, is
$P_r(n,m)$. For $0 \leq k \leq r-1$ let
\[
B_k \define \partial f \cup
  \left(\bigcup_{i=1}^{m-1} \bigcup_{j=1}^{r-1} H_{i,j}\right) \cup
  \left(\bigcup_{j=1}^{k} H_{m,j}\right) \,,
\]
so that $B = B_{r-1}$ and $B_k = B_{k-1} \cup H_{m,k}$. Since $B_0$ is an
$(r,n,m-1)$-arrangement, $\comp(A \setminus B_0) \leq P_r(n,m-1)$. For
$k > 0$,  since $B_k$ is obtained from $B_{k-1}$ by
adding a hyperplane $H_{m,k}$, Proposition~\ref{prop:cutting bound} implies
\[
\comp(A \setminus B_k)
\leq \comp(A \setminus B_{k-1}) + \comp(H \setminus B_{k-1})
\leq \comp(A \setminus B_{k-1}) + P_r(n-1,m-1) \,,
\]
because $H \setminus B_{k-1}$ is an $(r,n-1,m-1)$-arrangement. Iterating $r-1$
times, once for each added hyperplane, we arrive at
\begin{align*}
  P_r(n,m)
    &= \comp(A \setminus B) \\
    &= \comp(A \setminus B_0) + \sum_{k=1}^{r-1}\left(
      \comp(A \setminus B_k) - \comp(A \setminus B_{k-1}) \right) \\
    &\leq P_r(n,m-1) + (r-1) P_r(n-1,m-1) \,. \qedhere
\end{align*}
\end{proof}

\begin{lemma}\label{lemma:bernoulli sum}
For any set $\cH$ of measurable functions $\bR^d \to \pmset$ and any $r$,
\[
  P_r(d,d) \leq (r-1)^d
	+ \sum_{i=0}^{d-1}  \binom{d}{i}  \cdot M(d-i) \cdot (r-1)^i \,.
\]
\end{lemma}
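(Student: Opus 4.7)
The plan is to unroll the recursion of \cref{prop:induction} by induction, and then recognize that the final term matches $(r-1)^d$ because a $0$-dimensional affine subspace contributes at most one connected component.

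Concretely, I would first prove by induction on $m$ (for $0 \leq m \leq n \leq d$) the auxiliary bound
\[
  P_r(n,m) \;\leq\; \sum_{k=0}^{m} \binom{m}{k} (r-1)^k \, M(n-k) \,.
\]
The base case $m=0$ is exactly the identity $P_r(n,0) = M(n)$ noted before \cref{prop:bbs bound}. For the inductive step, \cref{prop:induction} gives $P_r(n,m) \leq P_r(n,m-1) + (r-1) P_r(n-1,m-1)$; substituting the inductive hypothesis into each term and reindexing the second sum by $k \mapsto k-1$ yields
\[
  P_r(n,m) \;\leq\; \sum_{k=0}^{m} \left[ \binom{m-1}{k} + \binom{m-1}{k-1} \right] (r-1)^k \, M(n-k) \,,
\]
and Pascal's identity collapses the bracket to $\binom{m}{k}$, closing the induction.

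Specializing to $n=m=d$ then gives
\[
  P_r(d,d) \;\leq\; \binom{d}{d} (r-1)^d M(0) \;+\; \sum_{i=0}^{d-1} \binom{d}{i} (r-1)^i M(d-i) \,.
\]
It remains to argue $M(0) \leq 1$: an affine subspace $A \in \cA_0$ is a single point $\{x\}$, so $A \setminus \partial f$ is either empty or $\{x\}$ itself, which has at most one connected component. Plugging $M(0)\leq 1$ into the displayed bound gives exactly the statement of the lemma.

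No step here is really an obstacle: the recursion is linear, the binomial coefficients arise from the standard ``lattice-path'' unrolling with a weight of $(r-1)$ per diagonal step, and the only mild subtlety is being careful with Pascal's identity at the boundary indices $k=0$ and $k=m$ (using the convention $\binom{m-1}{-1} = \binom{m-1}{m} = 0$) and with verifying $M(0) \leq 1$.
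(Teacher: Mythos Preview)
Your proof is correct and follows essentially the same approach as the paper: both prove the auxiliary bound $P_r(n,m) \leq \sum_{k=0}^m \binom{m}{k}(r-1)^k M(n-k)$ by induction on $m$ using \cref{prop:induction} and Pascal's identity, then specialize to $n=m=d$. The only cosmetic difference is that the paper simply \emph{defines} $M(0):=1$ as a convention, whereas you justify $M(0)\leq 1$ directly from the fact that a $0$-dimensional affine subspace is a point.
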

\begin{proof}
Write $s = r-1$ for convenience.
We will show by induction the more general statement that for any $m \leq n \leq
d$,
\[
	P_r(n,m) \leq \sum_{i=0}^m \binom{m}{i} \cdot M(n-i) \cdot s^i
\]
where we define $M(0) \define 1$. In the base case, note that
$P_r(n,0) = M(n)$.  Assume the statement holds for
$P_r(n',m')$ when $n' \leq n, m' < m$.  Then by
Proposition~\ref{prop:induction},
\begin{align*}
P_r(n,m)
&\leq P_r(n,m-1) + s \cdot P_r(n-1,m-1) \\
&\leq \sum_{i=0}^{m-1} {m-1 \choose i} \cdot M(n-i) \cdot s^i
    + \sum_{i=0}^{m-1} {m-1 \choose i} \cdot M(n-1-i) \cdot s^{i+1} \\
&\leq \sum_{i=0}^{m-1} {m-1 \choose i} \cdot M(n-i) \cdot s^i
    + \sum_{i=1}^{m} {m-1 \choose i-1} \cdot M(n-i) \cdot s^{i} \\
&= M(n) + M(n-m) \cdot s^m + \sum_{i=1}^{m-1}\left({m-1 \choose i} + {m-1
\choose i-1}\right) \cdot M(n-i) \cdot s^i \\
&= \sum_{i=0}^m \binom{m}{i} \cdot M(n-i) \cdot s^i \,. \qedhere
\end{align*}
\end{proof}

\begin{lemma}\label{lemma:bbs for low-components}
Let $\cH$ be a set of functions $f : \bR^d \to \pmset$ such that for some $k
\geq 1$, $M(n) \leq k^n$.  Then for any $\epsilon > 0$ and $r \geq
3dk/\epsilon$, $\bbs(\cH,r) < \epsilon \cdot r^d$.
\end{lemma}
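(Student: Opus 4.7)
The plan is to chain the two previously proved results in the natural way and finish with an elementary $(1+x)^d$ estimate. First, \cref{prop:bbs bound} gives $\bbs(\cH,r) \leq P_r(d,d) - r^d$, so it suffices to bound $P_r(d,d)$. And \cref{lemma:bernoulli sum} already delivers
\[
P_r(d,d) \;\leq\; (r-1)^d + \sum_{i=0}^{d-1} \binom{d}{i}\, M(d-i)\, (r-1)^i \;\leq\; (r-1)^d + \sum_{i=0}^{d-1} \binom{d}{i}\, k^{d-i}\, (r-1)^i,
\]
where the second inequality uses the hypothesis $M(n) \leq k^n$ (note $M(0)$ is not needed because the sum stops at $i=d-1$).

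The key observation I would highlight is that the right-hand side is exactly $(k + r - 1)^d$: the sum is the binomial expansion of $(k + (r-1))^d$ missing only its $i=d$ term, and that missing term is $k^0 (r-1)^d = (r-1)^d$, which is precisely the stand-alone term in front. So $\bbs(\cH,r) \leq (r+k-1)^d - r^d = r^d\left[\left(1 + \tfrac{k-1}{r}\right)^d - 1\right]$.

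The remaining task is to check that $(1 + (k-1)/r)^d - 1 < \epsilon$ whenever $r \geq 3dk/\epsilon$. For this I would use the elementary fact $(1+x)^d \leq 1 + 2dx$ for $dx \leq 1$, provable by a one-line term-by-term comparison using $i! \geq 2^{i-1}$. Setting $x = (k-1)/r$, the hypothesis gives $dx \leq d(k-1)/r \leq \epsilon/3$. The case $\epsilon > 1$ is dispatched separately by the trivial bound $\bbs(\cH,r) \leq r^d$; in the remaining case $\epsilon \leq 1$ we have $dx \leq 1/3 \leq 1$, so the estimate applies and yields $\bbs(\cH,r) \leq 2d(k-1) r^{d-1} \leq 2dk\, r^{d-1}$, and then $r \geq 3dk/\epsilon$ gives $\bbs(\cH,r)/r^d \leq 2dk/r \leq 2\epsilon/3 < \epsilon$.

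There is no real obstacle in the argument; the only point needing minor care is fitting the constant in the $(1+x)^d \leq 1 + O(dx)$ estimate tightly enough to justify the factor $3$ in the hypothesis $r \geq 3dk/\epsilon$, and separating out the trivial regime $\epsilon > 1$.
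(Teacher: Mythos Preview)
Your proof is correct and, in fact, cleaner than the paper's. Both arguments start identically from \cref{prop:bbs bound} and \cref{lemma:bernoulli sum}, arriving at
\[
  \bbs(\cH,r) \;\leq\; (r-1)^d + \sum_{i=0}^{d-1}\binom{d}{i} k^{d-i}(r-1)^i \;-\; r^d .
\]
At this point the paper discards the negative contribution $(r-1)^d - r^d$, splits the remaining sum at $i \approx d/2$, bounds $\binom{d}{i}$ by $d^{\min(i,d-i)}$ on each half, and assembles several geometric-series estimates to reach $\epsilon$. You instead observe that the upper bound is \emph{exactly} the binomial expansion of $(k+r-1)^d$, reducing the problem to the single elementary inequality $(1+x)^d \leq 1 + 2dx$ for $dx \leq 1$, applied with $x=(k-1)/r$. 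This bypasses the sum-splitting entirely and makes the role of the hypothesis $r \geq 3dk/\epsilon$ transparent: it forces $d(k-1)/r \leq \epsilon/3$, from which $\bbs(\cH,r)/r^d \leq 2\epsilon/3$ drops out immediately. The paper's approach is more robust in that it never needs the closed binomial form and would adapt more easily to hypotheses weaker than $M(n) \leq k^n$; your approach is shorter and yields a slightly better constant.
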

\begin{proof}
Write $s = r-1$.  By Proposition~\ref{prop:bbs bound} and Lemma~\ref{lemma:bernoulli sum}, the
probability that $v$ is a non-constant block is
\begin{align*}
  \frac{\mathsf{bbs}(r)}{r^d}
  &\leq r^{-d}\left(P_r(d,d) - r^d\right)
  \leq r^{-d}\left(\sum_{i=0}^{d-1}\left[{d \choose i} \cdot M(d-i)
    \cdot s^i\right] + s^d - r^d\right) \\
  &\leq r^{-d} \sum_{i=0}^{d-1}{d \choose i} \cdot M(d-i) \cdot s^i \,.
\end{align*}
Split the sum into two parts:
\begin{align*}
&
  \sum_{i=0}^{\floor{d/2}}{d \choose i} \cdot \frac{M(d-i) \cdot s^i}{r^d}
+ \sum_{i=1}^{\ceil{d/2}-1} {d \choose i} \cdot \frac{M(i) \cdot s^{d-i}}{r^d} \\
&\qquad\leq
  \sum_{i=0}^{\floor{d/2}}{d \choose i} \cdot \frac{k^{d-i} \cdot r^i}{r^d}
+ \sum_{i=1}^{\ceil{d/2}-1} {d \choose i} \cdot \frac{k^i \cdot r^{d-i}}{r^d} \\
&\qquad\leq
  \sum_{i=0}^{\floor{d/2}}\frac{d^i k^{d-i} \cdot r^i}{r^d}
+ \sum_{i=1}^{\ceil{d/2}-1} \frac{d^i k^i \cdot r^{d-i}}{r^d}
\qquad\leq
  \sum_{i=0}^{\floor{d/2}}\frac{\epsilon^{d-i}}{3^{d-i}d^{d-i}}
+ \sum_{i=1}^{\ceil{d/2}-1} \frac{\epsilon^i}{3^i} \\
&\qquad\leq
  \frac{\epsilon}{3} + \sum_{i=2}^{\ceil{d/2}-1} \frac{\epsilon^i}{3^i}
  + \floor{d/2} \cdot \frac{\epsilon^{\ceil{d/2}}}{3^{\ceil{d/2}}d^{\ceil{d/2}}}
\qquad\leq
  \frac{\epsilon}{3} + \frac{\epsilon}{3} \sum_{i=1}^{\infty} \frac{\epsilon^i}{3^i}
  + \frac{\epsilon^{\ceil{d/2}}}{3^{\ceil{d/2}}}
\leq \epsilon \,. \qedhere
\end{align*}
\end{proof}

It is a standard fact that for degree-$k$ polynomials, $M(1) \leq k$, and a
special case of a theorem of Warren bounds gives a bound for larger dimensions:
\begin{theorem}[\cite{War68}]\label{thm:warren}
Polynomial threshold functions $p : \bR^d \to \pmset$ of degree $k$ have
$M(n) \leq 6(2k)^n$.
\end{theorem}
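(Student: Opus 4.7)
The plan is to reduce the claim to a classical statement about the number of connected components of the non-zero set of a real polynomial in $n$ variables, and then appeal to Warren's bound.

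First I would fix $f \in \cP_k$ with $f(x) = \sign(q(x))$ for some degree-$k$ polynomial $q : \bR^d \to \bR$, together with an arbitrary $A \in \cA_n$. Because $A$ is axis-aligned of affine dimension $n$, it is parameterized by $n$ free coordinates $y = (y_1,\dotsc,y_n)$, and substituting this parameterization into $q$ yields a polynomial $\tilde q(y)$ of degree at most $k$ in $n$ variables such that the restriction of $f$ to $A$ equals $\sign(\tilde q(y))$. Since $q$ is continuous, $\partial f \cap A$ is contained in the zero set $\{ x \in A : q(x) = 0 \}$, and consequently
\[
\comp(A \setminus \partial f) \;\leq\; \comp\bigl(\bR^n \setminus \{ y : \tilde q(y) = 0\}\bigr).
\]

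Second, I would invoke Warren's classical estimate: for any real polynomial of degree $k$ in $n$ variables, the number of connected components of its non-zero set in $\bR^n$ is at most $6(2k)^n$. The standard proof perturbs $\tilde q$ into a smooth generic hypersurface (for instance by replacing $\tilde q$ with $\tilde q^2 - \eta$ for small $\eta > 0$), bounds the sum of Betti numbers of the perturbed variety using the Oleinik--Petrovsky--Milnor--Thom estimate of order $k(2k-1)^{n-1}$, and converts this Betti number bound into a bound on the number of connected components of the complement. Taking the maximum over $f \in \cP_k$ and $A \in \cA_n$ then yields $M(n) \leq 6(2k)^n$.

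The main obstacle is entirely the underlying Warren/Milnor--Thom bound, which is classical and would be cited from \cite{War68} rather than reproved. The restriction step to an axis-aligned affine subspace is routine, because degree is preserved under substitution of constants for some of the coordinates, so the reduction from the $d$-dimensional statement to the $n$-variable polynomial statement is immediate.
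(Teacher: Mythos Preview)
Your proposal is correct and matches the paper's treatment: the paper does not prove this statement either, but simply cites it as a special case of Warren's theorem, and your unpacking of the reduction (restrict $q$ to the axis-aligned affine subspace to get a degree-$k$ polynomial $\tilde q$ in $n$ variables, observe $\partial f \cap A \subseteq \{\tilde q = 0\}$, then invoke Warren's component bound) is exactly the intended reading.

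One minor point worth making explicit in the ``consequently'' step: from $\partial f \cap A \subseteq \{\tilde q = 0\}$ you get $A \setminus \{\tilde q = 0\} \subseteq A \setminus \partial f$, and to conclude that the component count can only go down you need that every component of $A \setminus \partial f$ meets $A \setminus \{\tilde q = 0\}$; this holds because when $\tilde q \not\equiv 0$ its zero set has empty interior in $A$, so $A \setminus \{\tilde q = 0\}$ is dense, while $A \setminus \partial f$ is open in $A$. The degenerate case $\tilde q \equiv 0$ needs a word, but is harmless.
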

Since $M(1) \leq \sqrt{24} k$ and $6(2k)^n \leq (\sqrt{24}k)^n$, for $n > 1$,
Lemma~\ref{lemma:bbs for low-components} gives us:
\begin{corollary}\label{cor:ptf bbs bound}
For $r \geq 3 \sqrt{24} dk/\epsilon$, $r^{-d} \cdot \bbs(\cP_k,r) < \epsilon$.
\end{corollary}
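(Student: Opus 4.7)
The plan is to apply Lemma~\ref{lemma:bbs for low-components} with a suitable constant, using Warren's theorem (Theorem~\ref{thm:warren}) to verify the hypothesis. The lemma needs $M(n) \leq K^n$ for some $K \geq 1$ and then yields $\bbs(\cH, r) < \epsilon r^d$ provided $r \geq 3dK/\epsilon$. So the whole task reduces to massaging the Warren bound $M(n) \leq 6(2k)^n$ into the clean form $K^n$ and picking the right $K$.

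The natural choice, as the excerpt already hints, is $K = \sqrt{24}\, k$. I would verify the hypothesis in two cases. For $n \geq 2$, the inequality $6(2k)^n \leq (\sqrt{24}\, k)^n$ reduces to $6 \leq 6^{n/2}$, which is immediate. For $n = 1$, Warren's theorem would only give $M(1) \leq 12k$, which is larger than $\sqrt{24}\, k \approx 4.9\, k$, so I would handle this case by a direct elementary argument: the restriction of a degree-$k$ PTF $\sign(p)$ to any axis-aligned $1$-dimensional affine subspace $A \in \cA_1$ has the form $\sign(q(t))$ for a univariate polynomial $q$ of degree at most $k$, which has at most $k$ real roots and therefore partitions $A \setminus \partial f$ into at most $k+1 \leq 2k \leq \sqrt{24}\, k$ connected intervals.

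Combining the two cases, $M(n) \leq K^n$ holds uniformly over $n \in \{1, \dotsc, d\}$ with $K = \sqrt{24}\, k$. Plugging into Lemma~\ref{lemma:bbs for low-components} gives $\bbs(\cP_k, r) < \epsilon r^d$ for all $r \geq 3dK/\epsilon = 3\sqrt{24}\, dk/\epsilon$, which is exactly the corollary.

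I do not anticipate any real obstacle; the whole argument is essentially bookkeeping. The one subtlety worth noting is that Warren's theorem bounds connected components of a sign-pattern of a polynomial on $\bR^n$ as an intrinsic object, whereas $M(n)$ is defined via affine subspaces $A \in \cA_n \subset \bR^d$; but since each $A \in \cA_n$ is axis-aligned, the restriction of a degree-$k$ polynomial $p$ to $A$ (parametrized by the $n$ free coordinates) is again a polynomial of degree at most $k$, so Warren's theorem applies to the restriction and the reduction is immediate.
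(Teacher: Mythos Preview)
Your proposal is correct and follows exactly the paper's approach: use the elementary bound $M(1) \leq k+1 \leq \sqrt{24}\,k$ for the one-dimensional case, verify $6(2k)^n \leq (\sqrt{24}\,k)^n$ for $n \geq 2$, and invoke Lemma~\ref{lemma:bbs for low-components} with $K = \sqrt{24}\,k$. The paper compresses this into a single sentence before the corollary, but the content is identical.
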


\subsection{Application}

As was the case for halfspaces, our reduction of noise sensitivity on $[r]^d$ to
$\pmset^d$ requires that the class $\cP_k$ is invariant under linear
transformations:
\begin{proposition}\label{prop:linear transform ptfs}
For any $f \in \cP_k$ and full-rank linear transformation $A \in \bR^{d \times
d}$, $f \circ A \in \cP_k$.
\end{proposition}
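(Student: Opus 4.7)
The plan is to unwind the definitions and observe that composition of a polynomial with a linear map is again a polynomial of no larger degree. First I would fix $f \in \cP_k$ and write $f(x) = \sign(p(x))$ where $p : \bR^d \to \bR$ is a polynomial of degree at most $k$. Then by definition $(f \circ A)(x) = \sign(p(Ax))$, so it suffices to show that the function $q(x) \define p(Ax)$ is a polynomial in $x$ of degree at most $k$.

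The key step is the degree bound on $q$. Writing $p(y) = \sum_{\alpha : |\alpha| \leq k} c_\alpha y^\alpha$ in multi-index notation, substitute $y = Ax$; each coordinate $(Ax)_i = \sum_j A_{ij} x_j$ is a linear form in $x$, so each monomial $(Ax)^\alpha$ expands as a polynomial of degree $|\alpha| \leq k$ in $x$. Taking the linear combination over $\alpha$, $q$ is a polynomial of degree at most $k$. Hence $f \circ A = \sign(q)$ lies in $\cP_k$.

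I do not anticipate a main obstacle — note also that the hypothesis that $A$ be full-rank is not actually required for this closure statement; it will only matter elsewhere (e.g.\ in the application via Lemma~\ref{lemma:noise sensitivity reduction}). The entire argument is a two-line consequence of the algebraic fact that polynomials are closed under precomposition by linear (or even affine) maps, with degree weakly decreasing.
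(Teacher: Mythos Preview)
Your proposal is correct and follows essentially the same approach as the paper: both write $p$ as a sum of monomials, substitute $y = Ax$, and observe that each monomial $(Ax)^\alpha$ is a polynomial of degree $|\alpha| \leq k$ since each $(Ax)_i$ is linear in $x$. Your observation that the full-rank hypothesis is unnecessary here is also correct.
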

\begin{proof}
Let $f(x) = \sign(p(x))$ where $p$ is a degree-$k$ polynomial and let $c_q
\prod_{i=1}^d x_i^{q_i}$ be a term of $p$, where $c \in \bR$ and $q \in
\bZ_{\geq 0}^d$ such that $\sum_i q_i \leq k$.  Let $A_i \in \bR^d$ be the
$i^\text{th}$ row of $A$. Then
\[
  \prod_{i=1}^d (Ax)_i^{q_i}
  = \prod_{i=1}^d \left(\sum_{j=1}^d A_{i,j}x_j\right)^{q_i}
  = p_q(x)
\]
where $p_q(x)$ is some polynomial of degree at most $\sum_{i=1}^d q_i \leq k$.
Then $p \circ A = \sum_q c_q p_q$ where $q$ ranges over $\bZ_{\geq 0}$ with
$\sum_i q_i \leq k$, and each $p_q$ has degree at most $k$, so $p \circ A$ is a
degree-$k$ polynomial.
\end{proof}
The last ingredient we need is the following bound of Diakonikolas \emph{et
al.}~on the noise sensitivity:
\begin{theorem}[\cite{DHK+10}]\label{thm:ptf noise sensitivity}
Let $f : \pmset^d \to \pmset$ be a degree-$k$ PTF\@. Then for any $\delta \in
[0,1]$,
\begin{align*}
  \mathsf{ns}_{2,\delta}(f) &\leq O(\delta^{1/2^k}) \\
  \mathsf{ns}_{2,\delta}(f) &\leq 2^{O(k)} \cdot
\delta^{1/(4k+2)}\log(1/\delta) \,.
\end{align*}
\end{theorem}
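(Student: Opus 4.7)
The plan is to bound $\Pr_{x \sim \pmset^d, y \sim_{\delta} x}[\sign(p(x)) \neq \sign(p(y))]$, where $p$ is a degree-$k$ polynomial. The key observation is that a sign change requires $|p(x)|$ to be small relative to the noise-induced fluctuation $|p(y)-p(x)|$. So the proof splits into two ingredients: (i) an anti-concentration bound on $|p(x)|$, saying $\Pr[|p(x)| \le t\|p\|_2] \le O(k \cdot t^{1/k})$ in the "well-behaved" regime, and (ii) a concentration bound showing that $|p(y)-p(x)|$ is typically $O(\sqrt{\delta}\,\|p\|_2)$. Balancing $t$ against $\sqrt{\delta}$ gives the final rate.

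For the weaker bound $\mathsf{ns}_{2,\delta}(f) \leq O(\delta^{1/2^k})$, I would proceed by induction on $k$. The base case $k=1$ is Peres' theorem, giving $O(\sqrt{\delta})$ for halfspaces. For the inductive step, consider the maximum "influence" of a variable on $p$, i.e.\ the variance of the partial derivative. If every variable has small influence (the \emph{regular} case), one can invoke a hypercube analogue of Carbery--Wright anti-concentration for degree-$k$ polynomials to bound $\Pr[|p(x)| \le t\|p\|_2]$. If some coordinate $i$ has a large influence, restrict $x_i \in \pmset$ and apply the inductive hypothesis to the restricted degree-$(k-1)$ PTFs, losing a factor in $\delta$ with each induction step — which is what drives the double-exponential $2^k$ in the exponent.

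For the sharper bound $2^{O(k)} \delta^{1/(4k+2)} \log(1/\delta)$, I would use a \emph{regularity lemma} for PTFs: any degree-$k$ PTF can be expressed as a shallow decision tree whose internal nodes query individual variables and whose leaves are regular PTFs (all influences small). On each regular leaf, an invariance principle reduces noise sensitivity analysis from the hypercube to the Gaussian measure, where Carbery--Wright gives the clean $O(k\cdot t^{1/k})$ anti-concentration bound, which together with Gaussian concentration of $p(y)-p(x)$ yields a sensitivity of roughly $\delta^{1/(2k)}$ per leaf. Summing over the tree, trading tree depth against regularity parameter, and paying a $\log(1/\delta)$ factor in the union bound, produces the $1/(4k+2)$ exponent.

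The main obstacle is the anti-concentration step: a naive degree-$k$ polynomial over $\pmset^d$ need not satisfy $\Pr[|p(x)| \le t\|p\|_2] = O(t^{1/k})$ without the regularity assumption — one can have all mass concentrated near zero using a low-influence coordinate structure. So the core technical work is proving (or citing) the regularity lemma and invariance principle that let us reduce to the regular Gaussian setting where Carbery--Wright applies, and then carefully tracking the constants in the decision-tree decomposition so that both the tree depth and the per-leaf sensitivity are controlled by a single parameter that can be optimized against $\delta$.
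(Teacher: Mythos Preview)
This theorem is not proved in the paper: it is quoted as a black-box result from \cite{DHK+10} and used without proof. So there is no ``paper's own proof'' to compare against. Your sketch is in fact a reasonable outline of the argument in \cite{DHK+10} itself---the $O(\delta^{1/2^k})$ bound there does proceed by the regular/irregular dichotomy with induction on $k$, and the sharper $\delta^{1/(4k+2)}$ bound does go through a PTF regularity lemma (decision tree with regular leaves), an invariance principle to pass to Gaussians, and Carbery--Wright anti-concentration. For the purposes of the present paper, however, none of this is needed: the theorem is simply cited and applied.
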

Putting everything together, we obtain a bound that is polynomial in $d$ for any
fixed $k,\epsilon$, and which matches the result of Diakonikolas \emph{et
al.}~\cite{DHK+10} for the uniform distribution over $\pmset^d$.


\thmptfs*

\begin{proof}
We prove the continuous case here; the finite case is proved in \cref{thm:finite algorithms}.

Let $r = \ceil{9dk/\epsilon}$, so that by Corollary~\ref{cor:ptf bbs bound},
condition 1 of Lemma~\ref{lemma:general algorithm} is satisfied. Due to
Proposition~\ref{prop:linear transform ptfs}, we may apply Theorem~\ref{thm:ptf
noise sensitivity} and Lemma~\ref{lemma:noise sensitivity reduction} to conclude
that for all $f \in \cP_k$
\begin{align*}
\mathsf{ns}_{r,\delta}(f^\block)
  &\leq O(\delta^{1/2^k}) \\
\mathsf{ns}_{r,\delta}(f^\block)
  &\leq 2^{O(k)} \cdot \delta^{1/(4k+2)}\log(1/\delta) \,.
\end{align*}
In the first case, setting $\delta = O(\epsilon^{2^{k+1}})$ we get
$\mathsf{ns}_{r,\delta}(f^\block) < \epsilon^2/3$, so by Lemma~\ref{lemma:noise
sensitivity to polynomial regression} we get a set $\cF$ of functions $[r]^d \to
\bR$ of size $|\cF| \leq (rd)^{O\left(\frac{1}{\epsilon^{2^{k+1}}}\right)}$
satisfying condition 2 of Lemma~\ref{lemma:general algorithm}. For $n =
\poly(|\cF|,1/\epsilon)$, Lemma~\ref{lemma:general algorithm} implies an
algorithm with sample size
\[
  O(rd^2n^2\log(rd))
  = O\left(\frac{d^3k}{\epsilon}\log(dk/\epsilon)\right) \cdot
    \left(\frac{kd}{\epsilon}\right)^{O\left(\frac{1}{\epsilon^{2^{k+1}}}\right)}
\,.
\]
In the second case, setting $\delta =
O\left(\left(\frac{2^{O(k)}\log(2^k/\epsilon)}{\epsilon^2}\right)^{4k+2}\right)$,
we again obtain $\mathsf{ns}(f^\block)_{r,\delta} \leq \epsilon^2/3$ and get
an algorithm with sample size
\[
  \left(\frac{kd}{\epsilon}\right)^{2^{O(k^2)}\left(\frac{\log(1/\epsilon)}{\epsilon^2}\right)^{4k+2}} \,.
\]
The final result is obtained by applying Lemma~\ref{lemma:brute force}.
\end{proof}

\section{Learning \& Testing \texorpdfstring{$k$}{k}-Alternating Functions}\label{section:k alternating}
A function $f : \cX \to \pmset$ on a partial order $\cX$ is \emph{$k$-alternating}
if for every chain $x_1 < \dotsc < x_{k+2}$ there is $i \in [k+1]$ such that
$f(x_i) = f(x_{i+1})$. Monotone functions are examples of 1-alternating functions. We consider
$k$-alternating functions on $\bR^d$ with the usual partial order: for $x,y \in
\bR^d$ we say $x < y$ when $x_i \leq y_i$ for each $i \in [d]$ and $x \neq y$.
Once again, we must bound the block boundary size, which has been done already by
Canonne~\emph{et al.}. We include the proof because their work does not share
our definition of block boundary size, and because we use it in our short proof
of the monotonicity tester in Section~\ref{section:warm-up}.
\begin{lemma}[\cite{CGG+19}]\label{lemma:monotone block boundary}
The $r$-block boundary size of $k$-alternating functions is at most $kdr^{d-1}$.
\end{lemma}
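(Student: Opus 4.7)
The plan is to exploit the diagonal structure of $[r]^d$. Partition $[r]^d$ into diagonals of the form $\{v, v+\vec 1, v+2\vec 1,\dotsc\} \cap [r]^d$ where $\vec 1 = (1,\dotsc,1)$; each diagonal is uniquely indexed by its starting point $v$ satisfying $\min_\ell v_\ell = 1$, and the number of such starting points is $r^d - (r-1)^d$, which is at most $d r^{d-1}$ by the binomial expansion. So it suffices to show that every diagonal contains at most $k$ non-constant blocks.

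Fix a diagonal $\Delta$, fix a $k$-alternating $f$, and let $B_{u^{(1)}},\dotsc,B_{u^{(m)}}$ be the non-constant blocks along $\Delta$ in diagonal order, so that $u^{(j+1)}_\ell \geq u^{(j)}_\ell + 1$ for every coordinate $\ell \in [d]$. For each $j$ I will produce two points $p_1^{(j)}, p_2^{(j)} \in B_{u^{(j)}}$ with $p_1^{(j)} < p_2^{(j)}$ in the product order on $\bR^d$ and $f(p_1^{(j)}) \neq f(p_2^{(j)})$. Non-constancy of $B_{u^{(j)}}$ gives $s,t \in B_{u^{(j)}}$ with $f(s) \neq f(t)$; define $w^{(0)} = s$ and let $w^{(\ell)}$ be obtained from $s$ by overwriting its first $\ell$ coordinates by those of $t$, so that $w^{(d)} = t$. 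Each $w^{(\ell)}$ still lies in the axis-aligned box $B_{u^{(j)}}$, and since $f$ differs at the two endpoints, some consecutive pair $(w^{(\ell-1)}, w^{(\ell)})$ must have $f(w^{(\ell-1)}) \neq f(w^{(\ell)})$. These two points differ in exactly one coordinate, so one is strictly coordinatewise below the other; take them (in the correct order) as $p_1^{(j)}, p_2^{(j)}$.

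Now I glue the pairs into a single chain
\[
p_1^{(1)} < p_2^{(1)} < p_1^{(2)} < p_2^{(2)} < \cdots < p_1^{(m)} < p_2^{(m)}
\]
of length $2m$ in $\bR^d$. The within-pair relations were just constructed. For the between-pair relations $p_2^{(j)} < p_1^{(j+1)}$, note that $u^{(j+1)}_\ell \geq u^{(j)}_\ell + 1$ means the interval $B_{u^{(j+1)}_\ell}$ lies entirely to the right of $B_{u^{(j)}_\ell}$ for every $\ell$, so every coordinate of any point in $B_{u^{(j)}}$ is strictly less than the corresponding coordinate of any point in $B_{u^{(j+1)}}$. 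The chain has $2m-1$ consecutive pairs, at least $m$ of which (the within-pair steps) witness a change in the value of $f$; so the $k$-alternating assumption forces $m \leq k$.

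Combined with the bound of at most $dr^{d-1}$ diagonals, this gives $\bbs(f,r) \leq kdr^{d-1}$ for every $k$-alternating $f$. The only delicate step is the coordinate-swap, but it uses only that blocks are axis-aligned boxes, so I do not expect a real obstacle.
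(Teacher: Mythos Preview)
Your proof is correct and follows the same diagonal decomposition as the paper: cover $[r]^d$ by at most $dr^{d-1}$ diagonals and show each contains at most $k$ non-constant blocks. Your coordinate-swap step, producing two \emph{comparable} points per non-constant block and hence a chain of $2m$ points with at least $m$ alternations, is in fact more careful than the paper's argument, which picks one point per block and obtains a chain of $k+1$ points with $k$ alternations---which does not by itself contradict $k$-alternating; your construction is what actually forces $m\le k$.
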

\begin{proof}
Let $f$ be $k$-alternating, let $\block : \bR^d \to {[r]}^d$ be any block
partition and let $v_1, \dotsc, v_m$ be any chain in ${[r]}^d$. Suppose that
there are $k+1$ indices $i_1, \dotsc, i_{k+1}$ such that $f$ is not constant on
$\block^{-1}(v_{i_j})$. Then there is a set of points $x_1, \dotsc, x_{k+1}$
such that $x_j \in \block^{-1}(v_{i_j})$ and $x_j \neq x_{j+1}$ for each $j \in
[k]$. But since $v_{i_1} < \dotsm < v_{i_{k+1}}$, $x_1 < \dotsm < x_{k+1}$ also,
which contradicts the fact that $f$ is $k$-alternating. Then every chain in
${[r]}^d$ has at most $k$ non-constant blocks, and we may partition ${[r]}^d$
into at most $dr^{d-1}$ chains by taking the diagonals $v + \lambda \vec 1$
where $v$ is any vector satisfying $\exists i : v_i=1$ and $\lambda$ ranges over
all integers.
\end{proof}
Canonne \emph{et al.}~also use noise sensitivity bound to obtain a spanning set
$\cF$; we quote their result.
\begin{lemma}[\cite{CGG+19}]\label{lemma:k-monotone grid basis}
There is a set $\cF$ of functions ${[r]}^d \to \bR$, with size
\[
|\cF| \leq
\exp{O\left(\frac{k\sqrt{d}}{\epsilon^2} \log(rd/\epsilon)\right)} \,,
\]
such that for any $k$-alternating function $h : {[r]}^d \to \pmset$, there is $g : {[r]}^d \to \bR$
that is a linear combination of functions in $\cF$ and $\Exu{x \sim {[r]}^d}{{(h(x)-g(x))}^2} \leq
\epsilon^2$.
\end{lemma}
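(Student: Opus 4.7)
The plan is to derive this lemma as an instance of the Fourier-analytic framework built up in the preceding subsection: first bound the noise sensitivity of $k$-alternating functions on the downsampled grid $[r]^d$, and then invoke Lemma~\ref{lemma:noise sensitivity to polynomial regression} to get the spanning set $\cF$. Specifically, I would aim to show $\mathsf{ns}_{r,\delta}(h) \leq C\cdot k\sqrt{d}\,\delta$ for every $k$-alternating $h : [r]^d \to \pmset$ and every $\delta \in [0,1]$, for some absolute constant $C$. Given this, setting $\delta = \Theta(\epsilon^2/(k\sqrt d))$ makes $\mathsf{ns}_{r,\delta}(h) \leq \epsilon^2/3$, and Lemma~\ref{lemma:noise sensitivity to polynomial regression} then produces a Walsh-based spanning set of size $(rd)^{t}$ with $t = \lceil 2/\delta \rceil = O(k\sqrt d/\epsilon^2)$, matching the target bound $\exp\!\bigl(O(k\sqrt d/\epsilon^2 \cdot \log(rd/\epsilon))\bigr)$.

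To prove the noise-sensitivity bound, the natural route is to reduce to the hypercube via Lemma~\ref{lemma:noise sensitivity reduction}. The classical fact I would invoke is that a $k$-alternating function on $\pmset^d$ has total influence $O(k\sqrt d)$ (a $k$-alternating function decomposes into at most $k$ ``monotone layers,'' each with total influence at most $\sqrt d$ by the tight bound for monotone Boolean functions), and hence noise sensitivity $\mathsf{ns}_{2,\delta}(f) \leq O(k\sqrt d\,\delta)$ by the standard relation between $\mathsf{ns}$ and total influence. Pulling this through Lemma~\ref{lemma:noise sensitivity reduction} then gives the desired bound on $[r]^d$.

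The main obstacle here is that Lemma~\ref{lemma:noise sensitivity reduction} requires the class to be invariant under linear transformations, but $k$-alternating functions are not preserved under arbitrary linear maps (e.g.\ coordinate permutations may break the chain structure). However, the only transformations used in the proof of Lemma~\ref{lemma:noise sensitivity reduction} are the coordinate-wise affine maps $\Phi_{u,v}$, which are axis-aligned and only rescale/translate each coordinate separately. Such a map preserves $k$-alternation up to independent sign flips of the coordinates; and noise sensitivity is invariant under such sign flips (the definition is symmetric in each coordinate). So the bound $\mathsf{ns}_{2,\delta}(f_{u,v}) \leq O(k\sqrt d\,\delta)$ still holds for every restriction $f_{u,v}$, and the reduction goes through after this small observation.

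If one wished to avoid the detour through $\pmset^d$, an alternative is to bound $\mathsf{ns}_{r,\delta}$ directly on $[r]^d$ via the Efron--Stein decomposition: in each coordinate direction, a $k$-alternating function restricted to a line in $[r]^d$ changes value at most $k$ times, so the per-coordinate ``influence'' contribution is controlled, and summing over the $d$ coordinates (with a Cauchy--Schwarz step to save a factor of $\sqrt d$) yields the same $O(k\sqrt d\,\delta)$ bound. Either route feeds into the same final step: apply Lemma~\ref{lemma:noise sensitivity to polynomial regression} with $n = r$ to obtain a family $\cF$ of Walsh monomials of the claimed cardinality whose linear span $L^2$-approximates every $k$-alternating function on $[r]^d$ to within $\epsilon$.
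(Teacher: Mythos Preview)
Your proposal is correct and matches the approach that the paper attributes to \cite{CGG+19}: the paper does not prove this lemma itself but quotes it, noting in the sentence preceding the statement that ``Canonne \emph{et al.}~also use noise sensitivity bound to obtain a spanning set $\cF$.'' Your reconstruction via the bound $\mathsf{ns}_{2,\delta}(f) = O(k\sqrt d\,\delta)$ for $k$-alternating $f$ on $\pmset^d$ (from total influence $O(k\sqrt d)$), followed by the reduction in Lemma~\ref{lemma:noise sensitivity reduction} and then Lemma~\ref{lemma:noise sensitivity to polynomial regression}, is exactly the intended route, and you correctly handle the one subtlety---that Lemma~\ref{lemma:noise sensitivity reduction} as stated requires closure under linear maps, whereas only the coordinate-wise affine maps $\Phi_{u,v}$ are actually used, and these preserve $k$-alternation up to coordinate sign flips under which $\mathsf{ns}_{2,\delta}$ is invariant.
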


\thmkalternating*
\begin{proof}
We prove the continuous case; for the finite case see \cref{thm:finite algorithms}.

Let $r = \ceil{dk/\epsilon}$ and let $\block : \bR^d \to [r]^d$ be any $r$-block
partition. By Lemma~\ref{lemma:monotone block boundary},  the
first condition of Lemma~\ref{lemma:general algorithm} is satisfied. Now let $f
\in \cH$ and consider $f^\block$. For any chain $v_1 < v_2 < \dotsm < v_m$ in
$[r]^d$, it must be $\blockpoint(v_1) < \blockpoint(v_2) < \dotsm <
\blockpoint(v_m)$ since every $x \in \block^{-1}(v_i), y \in \block^{-1}(v_j)$
satisfy $x < y$ when $v_i < v_j$; then $f$ alternates at most $k$ times on the
chain $\blockpoint(v_1) < \dotsm < \blockpoint(v_m)$ and, since $f^\block(v_i) =
f(\blockpoint(v_i))$, $f^\block$ is also $k$-alternating.  Therefore the set
$\cF$ of functions given by Lemma~\ref{lemma:k-monotone grid basis} satisfies
condition 2 of Lemma~\ref{lemma:monotone block boundary}, and we have $n =
\poly(|\cF|,1/\epsilon) =
\exp{O\left(\frac{k\sqrt{d}}{\epsilon^2}\log(rd/\epsilon)\right)}$. Applying
Lemma~\ref{lemma:general algorithm} gives an algorithm with sample complexity
\[
  O\left(rd^2n^2\log(rd)\right)
  = O\left(\frac{d^3k}{\epsilon} \log(dk/\epsilon) \cdot
	\left(\frac{dk}{\epsilon}\right)^{O\left(\frac{k \sqrt d}{\epsilon^2}\right)}\right)
  = \left(\frac{dk}{\epsilon}\right)^{O\left(\frac{k \sqrt d}{\epsilon^2}\right)} \,.
\]
	The other sample complexity follows from Lemma~\ref{lemma:brute force}.
\end{proof}

\thmtestingkalternating*

\begin{proof}
The following argument is for the continuous case, but generalizes to the finite case using the
definitions in \cref{section:discrete distributions}.

Let $\cH$ be the class of $k$-alternating functions.
Suppose there is a set $\cK \subset \cH$, known to the algorithm, that is a $(\tau/2)$-cover. Then,
taking a set $Q$ of $q=O(\tfrac{1}{\epsilon^2}\log |\cK|)$ independent random samples from $\mu$ and
using Hoeffding's inequality,
\begin{align*}
  \Pru{Q}{\exists h \in \cK : |\dist_Q(f,h)-\dist_\mu(f,h)| > \frac \tau 2 }
  &\leq |\cK| \cdot \max_{h \in \cK} \Pr{ |\dist_Q(f,h) - \dist_\mu(f,h)| > \frac \tau 2 } \\
  &\leq |\cK| \cdot 2\exp{-\frac{q\tau^2}{2}} < 1/6 \,.
\end{align*}
Then the tester accepts if $\dist_Q(f,\cK) < \epsilon_1 + \tau$ and rejects otherwise; we now
prove that this is correct with high probability. Assume that the above estimation is accurate,
which occurs with probability at least $5/6$.  If $\dist_\mu(f,\cH) \leq \epsilon_1$ then
$\dist_\mu(f,\cK) \leq \dist_\mu(f,h) + \dist_\mu(h,\cK) \leq \epsilon_1 + \tau/2$. Then for $g \in
\cK$ minimizing $\dist_\mu(f,g)$,
\[
  \dist_Q(f,\cK) \leq \dist_Q(f,g) < \dist_\mu(f,g) + \frac \tau 2
  \leq \epsilon_1 + \tau \,,
\]
so $f$ is accepted. Now suppose that $f$ is accepted, so $\dist_Q(f,\cK) < \epsilon_1 + \tau$. Then
\[
  \dist_\mu(f,\cH) \leq \dist_\mu(f,g) \leq \dist_Q(f,g) + \frac \tau 2 < \epsilon_1 +
\frac{3}{2}\tau = \epsilon_1 + \frac{3}{4}(\epsilon_2-\epsilon_1) \leq \epsilon_2  \,.
\]
What remains is to show how the tester constructs such a cover $\cK$.

Consider the learning algorithm of Theorem~\ref{thm:k-alternating} with error parameter $\tau/12$,
so $r = \ceil{12dk/\tau}$.  Let $X$ be the grid constructed by that algorithm and let $\block :
\bR^d \to [r]^d$ be the induced $r$-block partition. We may assume that with probability at least
$5/6$, $\|\block(\mu)-\unif([r]^d)\|_\TV < \tau/12$; suppose that this event occurs. The learner
then takes $m = \left(\frac{dk}{\tau}\right)^{O\left(\frac{k \sqrt d}{\tau^2}\right)}$ additional
samples to learn the class $\cH^\block$ with domain $[r]^d$. For every $f \in \cH$ the learner has
positive probability of outputting a function $h : [r]^d \to \zo$ with $\Pru{v}{h(v) \neq
f^\block(v)} < \tau/12$ (where $v$ is chosen from $\block(\mu)$). Let $\cK'$ be the set of possible
outputs of the learner; then $\cK'$ is a $(\tau/12)$-cover for $\cH^\block$.  Construct a set
$\cK^\block$ by choosing, for each $h \in \cK'$, the nearest function $g \in \cK$ with respect to
the distribution $\block(\mu)$. Then $\cK^\block$ is a $(\tau/6)$-cover, since for any function
$f^\block \in \cH^\block$, if $h \in \cK'$ is the nearest output of the learner and $g \in
\cK^\block$ is nearest $h$, then by the triangle inequality $f^\block$ has distance at most $\tau/6$
to $g$ with respect to $\block(\mu)$. Finally, construct a set $\cK \subset \cH$ by taking each
function $h \in \cH$ such that $h^\coarse=h$ and $h^\block \in \cK^\block$ (note that there exists
$h \in \cH$ such that $h^\coarse=h$ since $h^\coarse$ is $k$-alternating when $h^\block$ is
$k$-alternating). Then $\cK$ is a $(\tau/2)$-cover since for any $f \in \cH$, when $h \in \cK$
minimizes $\Pru{v \sim \block(\mu)}{f^\block(v) \neq h^\block}$,
\begin{align*}
  &\dist_\mu(f,\cK) \\
    &\qquad\leq \dist_\mu(f,f^\coarse) + \dist_\mu(f^\coarse, \cK) \\
    &\qquad\leq r^{-d}\cdot \bbs(\cH,r) + \Pru{v \sim \block(\mu)}{f^\block(v)\neq h^\block(v)}
      + 2 \|\block(\mu)-\unif([r]^d)\|_\TV \\
    &\qquad< \tau/6 + \tau/6 + 2\tau/12
    \leq \tau/2 \,.
\end{align*}
Now we bound the size of $\cK^\block$. Since there are $m$ samples and each sample $v \sim
\block(\mu)$ is in $[r]^d$, labelled by $\zo$, there are at most $(r^d)^m 2^m$ possible sample
sequences, so at most $(2r^d)^m$ outputs of the learner (after constructing $X$), so $|\cK^\block|
\leq (2r^d)^m$. Therefore, after constructing $X$, the tester may construct $\cK^\block$ and run the
above estimation procedure, with $q = O\left(\frac{1}{\tau^2} dm \log r \right) =
\left(\frac{dk}{\tau}\right)^{O\left(\frac{k \sqrt d}{\tau^2}\right)}$.
\end{proof}

\appendix

\section{Discrete Distributions}
\label{section:discrete distributions}

We will say that a distribution $\mu_i$ over $\bR$ is \emph{finite}
if it is a distribution over a finite set $X \subset \bR$. In this section, we
extend downsampling to work for finite product distributions: distributions $\mu =
\mu_1 \times \dotsm \times \mu_d$ such that all $\mu_i$ are finite.  As
mentioned in the introduction, our algorithms have the advantage that they do
not need to know in advance whether the distribution is continuous or finite,
and if they are finite they do not need to know the support. This is in
contrast to the algorithms of Blais \emph{et al.}~\cite{BOW10}, which work for
arbitrary finite product distributions but must know the support (since it
learns a function under the ``one-out-of-$k$ encoding'').  Our algorithms have
superior time complexity for large supports.

We begin with an example of a pathological
set of functions that illustrates some of the difficulties in the generalization.

\begin{example}
The \emph{Dirichlet function} $f : \bR \to \pmset$ is the function that
takes value $1$ on all rational numbers and value $-1$ on all irrational
numbers. We will define the \emph{Dirichlet class} of functions as the
set of all functions $f : \bR^d \to \pmset$ such that $f(x)=-1$
on all $x$ with at least 1 irrational coordinate $x_i$, and $f(x)$
is arbitrary for any $x$ with all rational coordinates. Since the
Lebesgue measure of the set of rational numbers is 0, in any continuous
product distribution, any function $f$ in the Dirichlet class satisfies
$\Pr{f(x) \neq -1} = 0$; therefore learning this class is trivial in any
continuous product distribution since we may output the constant $-1$
function. And $\bbs(f,r) = 0$ for this class since no block contains
a set $S$ of positive measure containing $1$-valued points. On the other
hand, if $\mu$ is a finitely supported product distribution, then it may
be the case that it is supported \emph{only} on points with all
rational coordinates. In that case, the Dirichlet class of functions is
the set of all functions on the support, which is impossible to learn when
the size of the support is unknown (since the number of samples will depend on
the support size). It is apparent that our former definition of $\bbs$
no longer suffices to bound the complexity of algorithms when we allow
finitely supported distributions.
\end{example}

Another difficulty arises for finitely supported distributions with small
support: for example, the hypercube $\pmset^d$. Consider what happens when we
attempt to sample a uniform grid, as in the first step of the algorithms above.
We will sample many points $x$ such that $x_1 = 1$ and many points such that
$x_1 = -1$. Essentially, the algorithm takes a small domain $\pmset^d$ and
constructs the larger domain $[r]^d$, which is antithetical to the downsampling
method. A similar situation would occur in large domains $[n]^d$ where some
coordinates have exceptionally large probability densities and are sampled many
times.  Our algorithm must be able to handle such cases, so we must redefine the
grid sampling step and block partitions to handle this situation. To do so, we
introduce \emph{augmented samples}: for every sample point $x \sim \mu$ we will
append a uniformly random value in $[0,1]^d$.

\subsection{Augmented Samples \& Constructing Uniform Partitions}

For \emph{augmented points} $\overline a, \overline b, \in \bR \times [0,1]$,
where $\overline a = (a,a'), \overline b = (b,b')$, we will define a total order by
saying $\overline a < \overline b$ if $a < b$, or $a=b$ and $a' < b'$. Define
interval $(\overline a, \overline b] \define \{ \overline c \;|\;
\overline a < \overline c \leq \overline b \}$. For convenience, when
$\overline a \in \bR \times [0,1]$ and $\overline a = (a,a')$ we will
write $\xi(\overline a) = a$. If
$\overline x \in \bR^d \times [0,1]^d$ is an augmented vector (i.e.~each
coordinate $\overline x_i$ is an augmented point), we will write
$\xi(\overline x) = (\xi(x_1), \dotsc, \xi(x_d))$; and when
$S \subseteq \bR^d \times [0,1]^d$ is a set of augmented points, we will
write $\xi(S) = \{ \xi(\overline x) \;|\; \overline x \in S \}$.

\newcommand{\oblock}{\overline \block}
\begin{definition}[Augmented Block Partition]
An \emph{augmented $r$-block partition} of $\bR^d$ is a pair of functions
$\oblock : \bR^d \times [0,1]^d \to [r]^d$ and $\blockpoint : [r]^d \to \bR^d$
obtained as follows. For each $i \in [d], j \in [r-1]$ let $\overline a_{i,j}
\in \bR \times [0,1]$ such that $\overline a_{i,j} < \overline a_{i,j+1}$ and
define $\overline a_{i,0} = (-\infty, 0), \overline a_{i,r} = (\infty,1)$. For
each $i \in [d], j \in [r]$ define the interval $B_{i,j} = (\overline
a_{i,j-1}, \overline a_{i,j}]$ and a point $\overline b_{i,j} \in \bR \times
[0,1]$ such that $\overline a_{i,j} \leq \overline b_{i,j} \leq \overline
a_{i,j+1}$. The function $\oblock : \bR^d \times [0,1]^d \to [r]^d$ is defined
by setting $\oblock(\overline x)$ to be the unique vector $v \in [r]^d$ such
that $\overline x_i \in B_{i,v_i}$ for each $i \in [d]$.  Observe that
$\oblock^{-1}(v) \define \{ \overline x : \oblock(x)=v \}$ is a set of
augmented points in $\bR^d \times [0,1]$ and that it is possible for two
augmented points $\overline x,\overline y$ to satisfy $\xi(\overline x) =
\xi(\overline y)$ while $\oblock(\overline x) \neq \oblock(\overline y)$.
The function $\blockpoint : [r]^d \to \bR^d$ is defined by setting
$\blockpoint(v) = (\xi(\overline b_{1,v_1}), \dotsc, \xi(\overline
b_{d,v_d}))$; note that this is a non-augmented point.
\end{definition}

\begin{definition}[Block Functions and Coarse Functions]
For a function $f : \bR^d \to \pmset$ we will define the functions $f^\block :
[r]^d \to \pmset$ as $f^\block \define f \circ \blockpoint$ and for each $z \in
[0,1]^d$ we will define $f^\coarse_z : \bR^d \to \pmset$ as $f^\coarse_z(x)
\define f^\block(\oblock(x,z))$. Unlike in the continuous setting,
$f^\coarse_z$ depends on an additional variable $z \in [0,1]^d$, which is
necessary because a single point $x \in \bR^d$ may be augmented differently to
get different $\oblock$ values.
For a distribution $\mu$ over $\bR^d$ define the augmented distribution
$\overline \mu$ over $\bR^d \times [0,1]^d$ as the distribution of $(x,z)$ when
$x \sim \mu$ and $z$ is uniform in $[0,1]^d$. For an augmented $r$-block
partition $\oblock : \bR^d \times [0,1]^d \to [r]^d$ we define the distribution
$\oblock(\mu)$ over $[r]^d$ as the distribution of $\oblock(\overline x)$ for
$\overline x \sim \overline \mu$.
\end{definition}

\begin{definition}[Augmented Random Grid]
An \emph{augmented random grid} $\overline X$ of length
$m$ is obtained by sampling $m$ augmented points $\overline x_1, \dotsc,
\overline x_m \sim \overline \mu$ and for each $i \in [d], j \in [m]$ defining
$\overline X_{i,j}$ to the be $j^\mathrm{th}$ smallest coordinate in dimension
$i$ by the augmented partial order. For any $r$ that divides $m$ we define an
augmented $r$-block partition depending on $\overline X$ by defining for each
$i \in [d], j \in [r-1]$ the points $\overline a_{i,j} = \overline X_{i,mj/r}$,
(and $\overline a_{i,0} = (-\infty,0), \overline a_{i,r} = (\infty,1)$), so
that the intervals are $B_{i,j} = (\overline X_{i,m(j-1)/r}, \overline
X_{i,mj/r}]$ for $j \in \{2, \dotsc, r-1\}$ and $B_{i,1} = ((-\infty,0),
\overline X_{i,m/r}], B_{i,r} = (\overline X_{i,m(r-1)/r}, (\infty,1)]$. We set
the points $\overline b_{i,j}$ defining $\blockpoint : [r]^d \to \bR^d$ to be
$\overline b_{i,j} = \overline X_{i,k}$ for some $\overline X_{i,k} \in
B_{i,j}$.  This is the augmented $r$-block partition \emph{induced} by
$\overline X$.
\end{definition}

\newcommand{\obbs}{\overline \bbs}
\begin{definition}[Augmented Block Boundary]\label{def:augmented block boundary}
For an augmented block partition $\oblock : \bR^d \times [0,1]^d \to [r]^d$, a
distribution $\mu$ over $\bR^d$, and a function $f : \bR^d \to \pmset$,
we say $f$ is \emph{non-constant} on an augmented block
$\oblock^{-1}(v)$ if there are sets $\overline S,\overline T \subset
\oblock^{-1}(v)$ such that $\mu(\xi(\overline S)),\mu(\xi(\overline T))
> 0$ and for all $s \in S, t \in T: f(s) = 1, f(t)=-1$.
For a function $f : \bR^d \to \pmset$ and a number $r$, we define
the augmented $r$-block boundary size $\obbs(f,r)$
as the maximum number of blocks on which $f$ is non-constant with respect to a
distribution $\mu$, where the maximum is taken over all augmented $r$-block
partitions.
\end{definition}

The augmented block partitions satisfy analogous properties to the previously-defined
block partitions:
\begin{lemma}\label{lemma:finite uniform blocks}
Let $\overline X$ be an augmented random grid with length $m$ sampled from a finite
product distirbution $\mu$, and let $\overline \block : \bR^d \times [0,1]^d \to [r]^d$
be the augmented $r$-block partition induced by $\overline X$. Then
\[
	\Pru{\overline X}{ \| \overline \block (\mu) - \unif([r]^d) \|_\mathsf{TV} > \epsilon }
	\leq 4rd \cdot \exp{-\frac{m\epsilon^2}{18 rd^2}} \,.
\]
\end{lemma}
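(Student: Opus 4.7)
The plan is to reduce this to the continuous case already handled in \cref{lemma:continuous uniform blocks} by observing that the augmentation renders the marginal distributions atomless in the appropriate sense. Specifically, for each $i \in [d]$ the augmented marginal $\overline \mu_i$ on $\bR \times [0,1]$ (with the lexicographic total order defined in the section) is \emph{continuous along the order}: for any fixed augmented point $\overline a$, the probability that a sample equals $\overline a$ is $0$, because conditional on $\xi(\overline x)=a$ the second coordinate of $\overline x$ is uniform in $[0,1]$. More generally, for any value $q \in [0,1]$ and any $a \in \bR$, the function $\overline a \mapsto \Pru{\overline x \sim \overline \mu_i}{\overline x \leq \overline a}$ is continuous in $\overline a$ with respect to the order, and every value in $[0,1]$ is attained.

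This continuity is exactly what the proof of \cref{lemma:uniform intervals} requires. First, I would verify that \cref{lemma:uniform intervals} holds verbatim with $\mu$ replaced by $\overline \mu_i$ on $\bR \times [0,1]$ and intervals interpreted in the augmented order: the proof only uses that there exists a point $t$ with $\mu(x^*,t]$ equal to any prescribed value, together with standard Chernoff bounds on the number of samples falling in an interval. Both ingredients carry over because the augmented marginal is atomless along the order and the samples $\overline x_1,\dotsc,\overline x_m$ are i.i.d.\ from $\overline \mu_i$.

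Next I would repeat the proof of \cref{lemma:continuous uniform blocks} verbatim. For fixed $\overline X$, define $p_i : [r] \to [0,1]$ by $p_i(z) = \overline \mu_i(B_{i,z})$ where $B_{i,z}$ is the augmented interval from the definition of the induced partition; since the coordinates of $\overline \mu$ are independent, $\overline\block(\mu) = p_1 \times \dotsm \times p_d$. Setting $\delta = 4\epsilon/(3d)$, if every $p_i(j)$ lies in $[(1-\delta)/r,(1+\delta)/r]$, then the product computation from \cref{lemma:continuous uniform blocks} gives $\|\overline\block(\mu)-\unif([r]^d)\|_\TV \leq \epsilon$. The (augmented) version of \cref{lemma:uniform intervals} together with a union bound over the $rd$ choices of $(i,j)$ then yields the claimed failure probability $4rd \cdot \exp(-\epsilon^2 m/(18 r d^2))$.

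The only mild subtlety, and the step I expect to need the most care, is confirming that when we transport the bound on $\|\overline\block(\overline\mu) - \unif([r]^d)\|_\TV$ back to a statement about $\overline \block(\mu)$, nothing is lost: but by definition $\overline \block(\mu)$ \emph{is} the pushforward of $\overline \mu$ under $\overline \block$, so these are literally the same distribution and no reduction is needed. The rest is bookkeeping: checking that the boundary conventions $\overline a_{i,0}=(-\infty,0)$, $\overline a_{i,r}=(\infty,1)$ correctly include all of $\bR \times [0,1]$, and that ties among the sampled augmented coordinates occur with probability $0$ so that the order-statistics definition of $\overline X_{i,j}$ is unambiguous.
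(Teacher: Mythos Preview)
Your proposal is correct and follows the same reduction-to-the-continuous-case idea as the paper, but the paper implements it more cleanly. Rather than re-verifying \cref{lemma:uniform intervals} and \cref{lemma:continuous uniform blocks} for the augmented order, the paper constructs an explicit order-preserving embedding $\bR \times [0,1] \to \bR$: choosing $\eta = \tfrac{1}{2}\min_{a \neq b \in \supp(\mu_i)} |a-b|$, the map $(x,x') \mapsto x + \eta x'$ is an order isomorphism between the lexicographic order and the usual order on its image, and it pushes $\overline\mu_i$ forward to a genuinely continuous distribution $\mu_i'$ on $\bR$. Under this map the induced augmented $r$-block partition becomes literally an induced $r$-block partition in the sense of \cref{section:downsampling}, so $\overline\block(\mu) = \block(\mu')$ and \cref{lemma:continuous uniform blocks} applies as a black box. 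Your approach of checking atomlessness along the order and replaying the proofs is perfectly valid and arguably more intrinsic (it would work even if no nice embedding into $\bR$ existed), but it costs you the extra verification step that the paper avoids.
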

\begin{proof}
Let $\mu_i$ be a finitely supported distribution with support $S \subset \bR$. Let
$\eta = \frac{1}{2} \min_{a,b \in S} |a-b|$. Let $\mu'_i$ be the distribution of
$x_i + \eta z_i$ where $x_i \sim \mu_i$ and $z_i \sim [0,1]$ uniformly
at random; note that $\mu'_i$ is a continuous distribution over $\bR$. For
$\overline x = (x,x'), \overline y = (y,y') \in \bR \times [0,1]$, observe that
$\overline x < \overline y$ iff $x + \eta x' < y + \eta y'$. Therefore,
\[
	\Pru{\overline x, \overline y \sim \overline \mu_i}{\overline x < \overline y}
	= \Pru{x,y \sim \mu'_i}{x < y} \,.
\]
By replacing each finitely supported $\mu_i$ with $\mu'_i$ we obtain a
continuous product distribution $\mu'$ such that $\oblock(\mu)$ is the
same distribution as $\block(\mu')$, so by Lemma~\ref{lemma:continuous
uniform blocks} the conclusion holds.
\end{proof}
\begin{proposition}\label{prop:finite distance to coarse}
For any continuous or finite product distribution $\mu$ over $\bR^d$,
any augmented $r$-block partition $\oblock : \bR^d \times [0,1]^d \to [r]^d$ constructed
from a random grid $\overline X$,
and any $f : \bR^d \to \pmset$,
\[
	\Pru{x \sim \mu, z \sim [0,1]^d}{f(x) \neq f^\coarse_z(x)}
	\leq r^{-d} \cdot \obbs(f,r) + \|\oblock(\mu) - \unif([r]^d)\|_\mathsf{TV} \,.
\]
\end{proposition}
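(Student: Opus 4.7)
The plan is to adapt the argument of Proposition~\ref{prop:distance to coarse} to the augmented setting, with the auxiliary uniform coordinates in $[0,1]^d$ serving to break ties that arise when $\mu$ has atoms. As in the continuous case, I decompose the event $\{f(x) \neq f^\coarse_z(x)\}$ into (a) the event that $f$ is non-constant on the augmented block $v = \oblock(x,z)$, and (b) the complementary event, which I will show has $\overline{\mu}$-probability zero (with probability $1$ over the random augmented grid).

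The bulk of the work is event (b). Fix $\overline X$ and let $v = \oblock(x,z)$; suppose $f$ is \emph{not} non-constant on $\oblock^{-1}(v)$. Then one of $\mu(\xi(\oblock^{-1}(v)) \cap f^{-1}(\pm 1))$ equals zero, so $f$ is $\mu$-a.e.\ constant on the projection $\xi(\oblock^{-1}(v))$, say with value $a$. The representative $\blockpoint(v) = (\xi(\overline b_{1,v_1}), \ldots, \xi(\overline b_{d,v_d}))$ is a coordinatewise product of grid samples, where $\overline b_{i,v_i} = \overline X_{i,k}$ for some $k$ with $\overline X_{i,k} \in B_{i,v_i}$. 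In the finite case, $\mu(\{\blockpoint(v)\}) > 0$ since each factor is a support atom, so $f(\blockpoint(v)) = a$ directly. In the continuous case, each $\xi(\overline X_{i,k})$ is (conditional on its block) an absolutely continuous random variable, so $\blockpoint(v)$ avoids any fixed $\mu$-null set with probability $1$ over $\overline X$, again giving $f(\blockpoint(v)) = a$. In both cases $f^\coarse_z(x) = f(\blockpoint(v)) = a = f(x)$ for $\mu$-almost every $x$ with $(x,z) \in \oblock^{-1}(v)$, so event (b) has $\overline{\mu}$-probability zero.

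For event (a), under $(x,z) \sim \overline{\mu}$ the random variable $\oblock(x,z)$ is distributed exactly as $\oblock(\mu)$, so by the definition of total variation distance,
\[
\Pru{(x,z) \sim \overline{\mu}}{f \text{ non-constant on } \oblock(x,z)}
\leq \Pru{v \sim \unif([r]^d)}{f \text{ non-constant on } v} + \|\oblock(\mu) - \unif([r]^d)\|_\TV.
\]
The first term on the right is at most $r^{-d} \cdot \obbs(f,r)$ by Definition~\ref{def:augmented block boundary}, and combining with event (b) completes the proof.

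The main obstacle is the unified finite-and-continuous argument for event (b): one must use the augmentation carefully to ensure that the sampled grid points break all ties, and that in the continuous case $\blockpoint(v)$ remains a $\mu$-generic representative of $\xi(\oblock^{-1}(v))$ even though the $\overline b_{i,v_i}$ are formally drawn from the sample order statistics rather than uniformly in $B_{i,v_i}$. Once this is established, the remainder is a routine repackaging of the proof of Proposition~\ref{prop:distance to coarse}.
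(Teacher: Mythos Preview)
Your proposal is correct and follows essentially the same approach as the paper: show that $f(x) \neq f^\coarse_z(x)$ forces the augmented block $\oblock(x,z)$ to be non-constant, then bound the probability of landing in a non-constant block via the uniform distribution plus the TV correction.

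The one stylistic difference worth noting is that the paper splits the two cases rather than unifying them. For continuous $\mu$ it simply cites Proposition~\ref{prop:distance to coarse} together with $\bbs(f,r) \leq \obbs(f,r)$. For finite $\mu$ it gives a two-line deterministic argument: if $f(x) \neq f(\blockpoint(v))$ with $v = \oblock(x,z)$, then both $x$ and $\blockpoint(v)$ lie in $\xi(\oblock^{-1}(v))$ and each is a support atom of $\mu$ (the latter because $\blockpoint(v)$ is built coordinatewise from sampled grid points), so the block is non-constant by Definition~\ref{def:augmented block boundary}. This avoids the ``$\mu$-a.e.'' and ``with probability $1$ over $\overline X$'' language entirely in the finite case, which is cleaner than your unified treatment, though the underlying observation is the same.
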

\begin{proof}
The result for continuous product distributions holds by Proposition~\ref{prop:distance to coarse}
and the fact that $\bbs(f,r) \leq \obbs(f,r)$, so assume $\mu$ is a finite
product distribution, and let $S = \supp(\mu)$.

Suppose that for $(x,z)$ sampled from $\overline \mu$, $f(x) \neq
	f^\coarse_z(x)$, and let $v = \oblock(x,z)$. Then for $y =
	\blockpoint(v)$, $f(x) \neq f(y)$ and $x,y \in \xi(\oblock^{-1}(v))$.
	The points $x,y$ clearly have positive measure because $\mu$ is finite,
	so $v$ a non-constant block. Then
	\begin{align*}
		\Pru{x \sim \mu, z \sim [0,1]^d}{f(x) \neq f^\coarse_z(x)}
		&\leq \Pru{x,z}{\oblock(x,z) \text{ is non-constant}} \\
		&\leq \Pru{v \sim [r]^d}{v \text{ is non-constant}}
			+ \|\oblock(\mu) - \unif([r]^d)\|_\mathsf{TV} \,. \qedhere
	\end{align*}
\end{proof}

\subsection{Augmented Block-Boundary Size and Noise Sensitivity}
To obtain learning algorithms for $k$-alternating functions, functions of $k$ convex sets,
functions of $k$ halfspaces, and degree-$k$ PTFs, we must provide a bound on $\obbs$.

For a finite set $X \subset \bR^d$ and a function $f : \bR^d \to \pmset$, we will
call a function $f' : \bR^d \to \pmset$ a \emph{blowup} of $f$ (with respect to $X$)
if $\forall x \in X$ there exists an open ball $B_x \ni x$ where $\forall y \in B_x,
f'(y) = f(x)$.
We will call a set $\cH$ of functions $f : \bR^d \to \pmset$ \emph{inflatable} if
for every finite product set $X = X_1 \times \dotsm \times X_d$ and $f \in \cH$,
there exists $f' \in \cH$ that is a blowup of $f$ with respect to $X$.
\begin{proposition}\label{prop:obbs bound}
Let $\cH$ be a inflatable set of functions. Then $\obbs(\cH,r) \leq \bbs(\cH,r)$.
\end{proposition}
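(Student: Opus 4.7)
The plan is: given an augmented $r$-block partition $\oblock$ with many non-constant blocks for some $(f,\mu)$, use inflatability to find an $f' \in \cH$ that is locally constant on the set of witness points, then apply a small uniform perturbation of the boundaries to obtain a standard $r$-block partition under which the same blocks are non-constant for $f'$.

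I would fix an augmented $r$-block partition $\oblock$ with boundaries $\overline{a}_{i,j} = (a_{i,j}, a'_{i,j})$, a distribution $\mu$, and $f \in \cH$, and let $V \subseteq [r]^d$ be the set of blocks non-constant for $(f,\mu,\oblock)$. For each $v \in V$ I pick witness atoms $s^v, t^v \in \supp(\mu) \cap \xi(\oblock^{-1}(v))$ with $f(s^v)=1 \neq -1 = f(t^v)$; for continuous $\mu$ the claim reduces to the ordinary one, because augmented blocks with measure-zero $\xi$-projection cannot be non-constant and the remaining ones correspond directly to standard blocks. Set $X = \prod_i \{s^v_i, t^v_i : v \in V\}$, a finite product set; inflatability then yields $f' \in \cH$ and $\rho > 0$ such that $f' \equiv f(x)$ on the open ball $B(x,\rho)$ for every $x \in X$. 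For a small $\delta > 0$, define $b_{i,j} := a_{i,j} + j\delta$; since $\overline{a}_{i,j} < \overline{a}_{i,j+1}$ forces $a_{i,j} \leq a_{i,j+1}$, these values are strictly increasing in $j$, so they determine a valid standard partition $\block$.

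The key step is to argue that for $\delta < \rho/r$, every $v \in V$ is non-constant for $f'$ under $\block$. In coordinate $i$, $s^v_i \in [a_{i,v_i-1}, a_{i,v_i}]$, while the standard $i$-th interval of block $v$ is $(a_{i,v_i-1}+(v_i-1)\delta,\; a_{i,v_i}+v_i\delta]$; the inequalities $(v_i-1)\delta < r\delta < \rho$ and $v_i\delta > 0$ imply that this interval overlaps the inflation ball $(s^v_i - \rho, s^v_i + \rho)$ in a non-degenerate subinterval. Picking $\tilde{s}^v_i$ in that intersection for each $i$, and doing likewise for $\tilde{t}^v$, places $\tilde{s}^v, \tilde{t}^v$ simultaneously in $\block^{-1}(v)$ and in their respective inflation balls, so $f'(\tilde{s}^v) = 1$ and $f'(\tilde{t}^v) = -1$. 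Small open product balls around $\tilde{s}^v$ and $\tilde{t}^v$ remain inside $\block^{-1}(v)$ and inside the inflation balls, certifying $v$ as non-constant for $f'$ in the Lebesgue sense of \cref{def:block boundary size}. This gives $\bbs(f', r) \geq |V|$, and since $f' \in \cH$ we conclude $\bbs(\cH,r) \geq \obbs(\cH,r)$. The hard part is the geometric overlap check: it becomes delicate precisely when some $s^v_i$ coincides with a boundary value $a_{i,j}$ for $j \neq v_i$ or when several augmented boundaries share an $\bR$-coordinate, but the uniform additive shift $b_{i,j} = a_{i,j} + j\delta$ handles all such coincidences simultaneously once $\delta$ is chosen smaller than the common inflation radius divided by $r$.
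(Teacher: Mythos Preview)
Your proof is correct and follows essentially the same strategy as the paper: extract finitely many witness points from the non-constant augmented blocks, apply inflatability to obtain a blowup $f' \in \cH$ that is locally constant near them, and perturb the augmented boundaries into a valid standard $r$-block partition under which each witnessed block remains non-constant for $f'$ in the Lebesgue sense. The only real difference is in the perturbation scheme: the paper sets $c_{i,j} = a_{i,j} + \eta\, a'_{i,j}$ using the auxiliary coordinate, which has the pleasant feature that the map $\overline{x} = (x,x') \mapsto x + \eta x'$ carries $\oblock^{-1}(v)$ directly into $\block^{-1}(v)$, while you instead use the index via $b_{i,j} = a_{i,j} + j\delta$ and recover the witnesses by an explicit overlap argument---both work. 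One small point to tidy up: your coordinate-wise intersection places $\tilde s^v$ in the $\ell_\infty$ box of side $2\rho$ around $s^v$, not the Euclidean ball, so either interpret the inflation neighborhoods as $\ell_\infty$ boxes or take $\delta < \rho/(r\sqrt d)$.
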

\begin{proof}
	Let $\oblock : \bR^d \times [0,1]^d \to [r]^d$ be an augmented $r$-block partition
	defined by parameters $\overline b_{i,j} \in \bR \times [0,1]$ for $i \in [d], j \in [r-1]$,
	and write $\overline b_{i,j} = (b_{i,j}, b'_{i,j})$. Let $X = X_1 \times \dotsm \times X_d$
	be any finite product set, and let $f \in \cH$; we will bound the number of
	non-constant blocks
	We construct a (non-augmented)
	$r$-block partition as follows. Let $\eta > 0$ be sufficiently small that:
	\begin{itemize}
		\item $\forall x \in X$, the rectangle $R_x \define [x_1,x_1+\eta]\times \dotsm \times [x_d,x_d+\eta]$ is contained within $B_x$,
		\item $\forall i \in [d], [x_i,x_i+\eta] \cap X_i = \{x_i\}$; and
		\item $\forall i \in [d], b_{i,j}+\eta < b_{i,j+1}$ unless $b_{i,j}=b_{i,j+1}$.
	\end{itemize}
	Such an $\eta$ exists since the number of constraints is finite. Then
	define $\block : \bR^d \to [r]^d$ by the parameters $c_{i,j} = b_{i,j}
	+ \eta \cdot b'_{i,j}$. Note that $c_{i,j} = b_{i,j} + \eta \cdot
	b'_{i,j} \leq b_{i,j} + \eta < b_{i,j+1} \leq c_{i,j+1}$. Let $v \in
	[r]^d$ and suppose that $f$ is non-constant on $\oblock^{-1}(v)$, so
	there are $\overline x, \overline y \in \oblock^{-1}(v) \cap (X \times
	[0,1]^d)$ such that $f(x) \neq f(y)$, where $\overline x = (x,x'),
	\overline y = (y,y')$, and $\forall i \in [d], x_i,y_i \in (b_{i,v_i-1},b_{i,v_i}]$ where we
	define $(b,b]=\{b\}$. Consider $\block^{-1}(v) =
	(c_{1,v_1-1},c_{1,v_1}] \times \dotsm \times (c_{d,v_d-1},c_{d,v_d}]$.

	Since $\overline x_i \in (\overline b_{i,v_i-1},\overline b_{i,v_i}]$,
	$x_i \in (b_{i,v_i-1},b_{i,v_i}]$ (where we define $(b,b]=\{b\}$)
	and $x'_i \in (b'_{i,v_i-1},b'_{i,v_i}]$. Therefore
	$x_i + \eta \cdot x'_i \leq b_{i,v_i} + \eta \cdot b'_{i,v_i} = c_{i,v_i}$
	and $x_i + \eta \cdot x'_i > b_{i,v_i-1} + \eta \cdot b'_{i,v_i-1} = c_{i,v_i-1}$
	so $x + \eta \cdot x' \in \block^{-1}(v)$. Also, $x + \eta \cdot x'$ is in
	the rectangle $R_x \subset B_x$ so there is a ball around $x + \eta \cdot x'$, containing
	only points with value $f'(x)=f(x)$. Likewise, there is a ball around $y + \eta \cdot y'$ inside
	$\block^{-1}(v)$ containing only points with value $f'(y)=f(y) \neq
	f'(x)$. Since these balls must intersect $\block^{-1}(v)$ on sets with
	positive measure (in the product of Lebesgue measures), $f'$ is
	non-constant on $\block^{-1}(v)$, which proves the statement.
\end{proof}

\begin{lemma}\label{lemma:k-alternating inflatable}
The set $\cA_k$ of $k$-alternating functions is inflatable.
\end{lemma}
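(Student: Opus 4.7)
The plan is to construct $f'$ as the composition $f \circ \pi$, where $\pi : \bR^d \to X$ is a coordinate-wise ``rounding'' map that is (i) monotone (order-preserving), (ii) the identity on $X$, and (iii) locally constant around each point of $X$. These three properties will give exactly what we need: (iii) makes $f'$ a blowup of $f$, (ii) ensures it agrees with $f$ on $X$ itself, and (i) ensures $f'$ remains $k$-alternating.

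Concretely, I would do the following. For each coordinate $i \in [d]$, write $X_i = \{a_{i,1} < a_{i,2} < \dotsm < a_{i,n_i}\}$ and let $c_{i,j} = (a_{i,j}+a_{i,j+1})/2$ for $j < n_i$, with $c_{i,0} = -\infty$ and $c_{i,n_i} = \infty$. Define $\pi_i : \bR \to X_i$ by $\pi_i(t) = a_{i,j}$ whenever $t \in (c_{i,j-1}, c_{i,j}]$; each $\pi_i$ is non-decreasing, satisfies $\pi_i(a_{i,j}) = a_{i,j}$, and is constant on the open interval $(c_{i,j-1}, c_{i,j})$ which contains $a_{i,j}$ in its interior. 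Set $\pi(y) = (\pi_1(y_1), \dotsc, \pi_d(y_d))$ and $f' \define f \circ \pi$.

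To verify the blowup property, fix $x = (a_{1,j_1}, \dotsc, a_{d,j_d}) \in X$ and let $\eta > 0$ be smaller than the distance from each $a_{i,j_i}$ to $\{c_{i,j_i-1}, c_{i,j_i}\}$; then the open ball $B_x$ of radius $\eta$ around $x$ is contained in $\prod_i (c_{i,j_i-1}, c_{i,j_i})$, so $\pi$ sends every $y \in B_x$ to $x$ and $f'(y) = f(x)$. To verify that $f'$ is $k$-alternating, take any chain $y_1 < y_2 < \dotsm < y_m$ in $\bR^d$. Since each $\pi_i$ is non-decreasing, $\pi$ is monotone, so $\pi(y_1) \leq \pi(y_2) \leq \dotsm \leq \pi(y_m)$ in the product order on $X$. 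Collapsing consecutive duplicates yields a strictly increasing chain in $X$ on which $f$ alternates at most $k$ times (since $f \in \cA_k$); because $f'(y_j) = f(\pi(y_j))$ takes the same value on each block of consecutive duplicates, the sequence $f'(y_1), \dotsc, f'(y_m)$ alternates no more than that strictly-increasing chain does. Hence $f'$ is $k$-alternating, i.e., $f' \in \cA_k$.

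There is no real obstacle here; the only thing to be careful about is making sure the ``rounding'' is simultaneously order-preserving and locally constant at each grid point, which is the reason for placing $a_{i,j}$ strictly inside the interval $(c_{i,j-1}, c_{i,j}]$ rather than on its boundary. Note also that the construction only uses that $k$-alternation is preserved under precomposition with a monotone map from $\bR^d$ to a finite product sub-poset, so the same argument would apply to any class defined by a condition on chains.
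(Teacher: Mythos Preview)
Your proof is correct and follows essentially the same strategy as the paper's: both define $f' = f \circ \phi$ for a coordinatewise, order-preserving map $\phi : \bR^d \to X$ that fixes $X$ and is locally constant near each grid point. The only difference is the specific rounding rule---you round each coordinate to the nearest element of $X_i$ using midpoints, while the paper shifts by $\eta/2$ and then takes the largest element of $X_i$ not exceeding the shifted value; your version is arguably cleaner since the midpoint construction places each $a_{i,j}$ in the interior of its interval directly, without the auxiliary shift.
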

\begin{proof}
Let $f \in \cA_k$ and let $X = X_1 \times \dotsm \times X_d$ be a finite set.
	where we use the standard ordering on $\bR^d$. Let $u \in \bR^d$. We
	claim that the set $\{x \in X : x \leq u\}$ has a unique maximum.
	Suppose otherwise, so there are $x,y \leq u$ that are each maximal. Let
	$x \wedge y = (\max(x_1,y_1), \dotsc, \max(x_d,y_d))$. Then $x \vee y
	\in X$ and $x \wedge y > x,y$ but $u \geq x \wedge y$, a contradiction.
	For every $u \in \bR^d$, write $u^\downarrow$ for this unique maximum.
	Let $\eta > 0$ be small enough that $\forall x \in X, (x + \eta \cdot
	\vec 1)^\downarrow = x$; such a value exists since $X$ is finite.
	Define the map $\phi(u) = (u + (\eta/2)\cdot \vec 1)^\downarrow$ and
	$\forall u \in \bR^d$, we define $f'(u) \define f(\phi(u))$, and argue
	that this satisfies the required properties. It is clear by our choice
	of $\eta$ that $f'(x) = f((x + (\eta/2)\cdot \vec 1)^\downarrow) =
	f(x)$.  Since $\phi$ is order-preserving (i.e.~if $u < v$ then $\phi(u)
	\leq \phi(v)$), $f'$ is $k$-alternating. Now consider the ball $B(x)
	\define \{y \in \bR^d : \|y-x\|_2 < \eta/2\}$. Since $|y_i-x_i| <
	\eta/2$ for all $y \in B(x)$, we have $\phi(y) =
	(y_1+\eta/2,\dotsc,y_d+\eta/2)^\downarrow \leq
	(x_1+\eta,\dotsc,x_d+\eta)^\downarrow = x$, and $\phi(y) \geq (x_1,
	\dotsc, x_d)^\downarrow =x$ so $f'(y)=f(\phi(y))=f(x)$.
\end{proof}

\begin{lemma}\label{lemma:convex set inflatable}
The set $\cC$ of indicator functions of convex sets is inflatable.
\end{lemma}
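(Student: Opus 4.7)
The plan is to construct the blowup $f' \in \cC$ explicitly as the indicator of a slight convex enlargement of $f^{-1}(1)$. Let $S = f^{-1}(1)$, which is convex by assumption, and split $X$ into $X_+ = X \cap S$ and $X_- = X \setminus S$. If $X_+ = \emptyset$, the constant function $f' \equiv -1$ belongs to $\cC$ and is trivially a blowup, so assume $X_+ \neq \emptyset$.

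Let $T = \mathrm{conv}(X_+)$, which is a closed convex polytope (it is the convex hull of a finite set). Because $T \subseteq S$ and $X_- \cap S = \emptyset$, every $x \in X_-$ satisfies $x \notin T$, so since $T$ is closed and $X_-$ is finite, the quantity $\epsilon \define \min_{x \in X_-} d(x,T)$ is strictly positive (taking $\epsilon = \infty$ if $X_- = \emptyset$). Set $\eta = \epsilon/4$ and define
\[
S' \define T + \overline{B_\eta(0)} = \{ t + z : t \in T,\ \|z\|_2 \leq \eta\},
\]
which is convex as a Minkowski sum of convex sets. Let $f' \in \cC$ be the function with $f'^{-1}(1) = S'$.

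It remains to check that $f'$ is a blowup of $f$ with respect to $X$. For $x \in X_+$, we have $x \in T \subseteq S'$ and in fact the open ball $B_\eta(x)$ is contained in $S'$, so $f'(y) = 1 = f(x)$ for all $y \in B_\eta(x)$. For $x \in X_-$, every point of $S'$ is within distance $\eta$ of $T$, so $d(x, S') \geq d(x,T) - \eta \geq 3\eta > 0$; hence the open ball $B_{2\eta}(x)$ is disjoint from $S'$ and $f'(y) = -1 = f(x)$ on it. Taking $B_x$ to be the appropriate open ball in each case gives the required property, establishing that $\cC$ is inflatable. The only ``obstacle'' is making sure the inflation parameter $\eta$ is chosen small enough that the enlargement does not swallow any $-1$-valued sample point, and the choice $\eta = \epsilon/4$ handles this cleanly.
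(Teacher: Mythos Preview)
Your proof is correct and follows the same underlying idea as the paper: build $f'$ as the indicator of a small convex enlargement of a set that already classifies $X$ correctly, with the enlargement radius chosen small enough to avoid the points in $X_-$.

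The one genuine difference is the base set you enlarge. The paper enlarges $S = f^{-1}(1)$ itself, taking $S' = \{x : d(x,S) \le \delta/2\}$ with $\delta = \min_{x \in X\setminus S} d(x,S)$; this tacitly uses that $S$ is closed (otherwise $d(x,S)$ could vanish for $x \in X_- \cap \overline{S}$). You instead enlarge $T = \mathrm{conv}(X_+)$, which is automatically closed (a polytope), so $\epsilon = \min_{x\in X_-} d(x,T) > 0$ is guaranteed without any closedness assumption on $S$. This makes your argument slightly more robust and arguably cleaner, at the cost of replacing the original convex body by one that matches it only on $X$. Both approaches give a valid blowup in $\cC$.
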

\begin{proof}
	Let $f : \bR^d \to \pmset$ indicate a closed convex set, let $S = f^{-1}(1)$ be
	this set, and write $\delta(x) \define \min\{\|x-y\|_2 : y \in S\}$ (this
	minimum exists since $S$ is closed). Let $X$ be any finite set and let
	$\delta = \min\{\delta(x) : x \in X\setminus S\}$. Consider $S' = \{x :
	\delta(x) \leq \delta/2 \}$, and let $f'$ be the indicator function
	for this set. Then $f'(x)=f(x)$ for all $x \in X$. Finally, $S'$ is
	closed, and it is convex since for any two points $x,y$, it is
	well-known that the function $\lambda \mapsto \delta(\lambda x +
	(1-\lambda)y)$ is convex for $\lambda \in [0,1]$.
\end{proof}

\begin{lemma}\label{lemma:halfspace inflatable}
The set $\cH$ of halfspaces is inflatable.
\end{lemma}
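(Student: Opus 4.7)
The plan is to show that given a halfspace $h(y) = \sign(\langle w,y\rangle - t)$ and a finite product set $X = X_1 \times \dotsm \times X_d$, we can construct another halfspace $h' \in \cH$ which agrees with $h$ on $X$ and which is constant in an open ball around each point of $X$. The only obstruction is the convention $\sign(0) = 1$: any point $x \in X$ lying exactly on the hyperplane $\langle w,x\rangle = t$ has value $+1$, but arbitrarily nearby points on the wrong side have value $-1$, so no constant neighborhood exists. I will fix this by keeping $w$ the same and perturbing the threshold slightly away from $t$, so that no $x \in X$ lies on the perturbed hyperplane.

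Concretely, I would define $h'(y) = \sign(\langle w,y\rangle - t')$ with $t' = t - \epsilon$, where
\[
  0 < \epsilon < \min\{\, t - \langle w,x\rangle \;:\; x \in X,\ \langle w,x\rangle < t\,\}
\]
(and the minimum is taken to be $+\infty$ if the set is empty). The minimum is strictly positive because $X$ is finite, so such an $\epsilon$ exists.

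Next I would verify two properties. First, $h'(x) = h(x)$ for every $x \in X$: if $\langle w,x\rangle > t$ or $\langle w,x\rangle = t$ then $\langle w,x\rangle - t' \geq \epsilon > 0$, so both values are $+1$; if $\langle w,x\rangle < t$, then by the choice of $\epsilon$ we get $\langle w,x\rangle - t' < 0$, so both values are $-1$. Second, $\langle w,x\rangle - t' \neq 0$ for every $x \in X$: for $\langle w,x\rangle \geq t$ the quantity is at least $\epsilon > 0$, and for $\langle w,x\rangle < t$ the choice of $\epsilon$ rules out equality with $t' = t - \epsilon$.

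Once the second property is established, continuity of the affine map $y \mapsto \langle w,y\rangle - t'$ gives an open ball $B_x$ around each $x \in X$ on which $\sign(\langle w,y\rangle - t')$ is constantly equal to $\sign(\langle w,x\rangle - t') = h'(x) = h(x)$, which is precisely the blowup condition. The argument is short and there is no real obstacle: the only subtle point is handling the $\sign(0) = 1$ convention, and the threshold perturbation resolves it uniformly.
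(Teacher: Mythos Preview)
Your proof is correct and follows essentially the same approach as the paper: keep the weight vector $w$ fixed and perturb the threshold from $t$ to $t - \epsilon$ for a sufficiently small $\epsilon > 0$ (the paper specifically takes $\epsilon = \delta/2$ where $\delta$ is your minimum). Your write-up is in fact more careful than the paper's, explicitly handling the empty-minimum case and spelling out why the nonvanishing of $\langle w,x\rangle - t'$ yields the required open balls via continuity.
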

\begin{proof}
	It suffices to show that for any finite set $X$ (not necessarily a product set)
	and any halfspace $f(x) = \sign(\inn{w,x}-t)$, there is a halfspace
	$f'(x) =\sign(\inn{w,x}-t')$ such that $f(x)=f'(x)$ for all $x \in X$
	but $\inn{w',x}-t \neq 0$ for all $x \in X$; this is a commonly-used fact.
	Let $\delta = \min\{ -(\inn{w,x}-t) : \inn{w,x}-t < 0\}$. It must be the case
	that $\delta > 0$. Then $f'(x) = \sign(\inn{w,x}-t+\delta/2)$ satisfies the condition.
\end{proof}

\begin{lemma}\label{lemma:ptf inflatable}
The set $\cP_k$ of degree-$k$ PTFs is inflatable.
\end{lemma}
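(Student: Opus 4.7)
The plan is to mimic the halfspace proof from \cref{lemma:halfspace inflatable} but using a small additive perturbation of the defining polynomial rather than of a threshold. Let $f \in \cP_k$, so $f(x) = \sign(p(x))$ for some polynomial $p$ of degree at most $k$, and let $X = X_1 \times \dotsm \times X_d$ be a finite product set (finiteness is all that matters; the product structure is not used).

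First I would separate the points of $X$ according to the sign of $p$. Define
\[
\delta \define \min \bigl\{\,|p(x)| \;:\; x \in X,\; p(x) \neq 0 \,\bigr\} \,,
\]
with the convention $\delta = 1$ if the set is empty. Then $\delta > 0$ since $X$ is finite. Pick any $0 < \epsilon < \delta$ and set
\[
  p'(x) \define p(x) + \epsilon, \qquad f'(x) \define \sign(p'(x)) \,.
\]
Then $p'$ is also a polynomial of degree at most $k$, so $f' \in \cP_k$.

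Next I would verify that $f'$ agrees with $f$ on $X$ and is strictly nonzero at each such point. For any $x \in X$: if $p(x) > 0$ then $p(x) \geq \delta$, so $p'(x) \geq \delta + \epsilon > 0$; if $p(x) < 0$ then $p(x) \leq -\delta$, so $p'(x) \leq -\delta + \epsilon < 0$; and if $p(x) = 0$ then $p'(x) = \epsilon > 0$, which matches $\sign(p(x)) = \sign(0) = 1$ under the paper's convention. In every case $\sign(p'(x)) = \sign(p(x))$ and $p'(x) \neq 0$.

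Finally, since $p'$ is continuous and $p'(x) \neq 0$ at each $x \in X$, for each such $x$ there is an open ball $B_x \ni x$ on which $p'$ has constant sign, so $f'(y) = \sign(p'(y)) = \sign(p'(x)) = f(x)$ for all $y \in B_x$. Thus $f'$ is a blowup of $f$ with respect to $X$, proving that $\cP_k$ is inflatable. I do not foresee a real obstacle here; the only mild subtlety is handling the convention $\sign(0) = 1$, which is automatically taken care of by adding (rather than subtracting) $\epsilon$.
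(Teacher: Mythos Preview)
Your proof is correct and essentially the same as the paper's. The paper phrases it as a one-line reduction to \cref{lemma:halfspace inflatable} via the monomial embedding $x \mapsto (x^\alpha)_\alpha$, but the underlying perturbation in that halfspace lemma is exactly your ``add a small positive constant to the defining polynomial so that it is nonzero on $X$,'' after which continuity gives the required open balls.
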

\begin{proof}
This follows from the above proof for halfspaces, since for any finite $X$ we may
map $x \in X$ to its vector $(x_1^k, x_2^k, \dotsc)$ of monomials, so that any polynomial
$p(x)$ is a linear threshold function in the space of monomials.
\end{proof}

For a set $\cH$ of functions $f : \bR^d \to \pmset$ and an augmented $r$-block partition
$\oblock : \bR^d \to [r]^d$, we will write $\overline \cH^\block \define \{ f^\block : f \in \cH\}$
for the set of block functions $f^\block : [r]^d \to \pmset$; note that this is not necessarily
the same set of functions as $\cH^\block$ defined for continuous distributions. We must
show that the same learning algorithms used above for learning $\cH^\block$
will work also for $\overline \cH^\block$. For the brute-force learning
algorithm of Lemma~\ref{lemma:brute force}, this is trivial, but for the
regression algorithm in Lemma~\ref{lemma:general algorithm} we must show that
there exists a set $\cF$ such that each $f^\block \in \overline \cH^\block$ is
close to a function $g \in \mathsf{span}(\cF)$. For functions of halfspaces and
PTFs, we used the bound on noise sensitivity, Lemma~\ref{lemma:noise
sensitivity reduction}, to construct a set $\cF$ of functions suitable for the
regression algorithm. The proof for that lemma works without modification for
augmented block partitions, so we have the following:

\begin{lemma}\label{lemma:finite noise sensitivity}
Let $\cH$ be any family of functions $f : \bR^d \to \pmset$ such that, for any linear
transformation $A : \bR^d \to \bR^d$, if $f \in \cH$ then $f \circ A \in \cH$. Let
$\oblock : \bR^d \times [0,1]^d \to [r]^d$ be any augmented $r$-block partition.
Let $\mathrm{ns}_{2,\delta}(\cH) \define \sup_{f \in \cH} \mathsf{ns}_{2,\delta}(f)$.
Then $\mathrm{ns}_{r,\delta}(f^\block) \leq \mathsf{ns}_{2,\delta}(\cH)$.
\end{lemma}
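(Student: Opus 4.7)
The plan is to mimic the proof of Lemma~\ref{lemma:noise sensitivity reduction} essentially verbatim. The key observation is that $\mathsf{ns}_{r,\delta}(f^\block)$ depends only on $f^\block : [r]^d \to \pmset$ and the uniform distribution over $[r]^d$; it never touches the augmentation coordinates in $[0,1]^d$ or the distribution $\mu$. So the augmented setting adds no new content to this particular quantity.

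First I would set up the coupling. Sample $u \sim [r]^d$ uniformly, and let $v \in [r]^d$ be uniform subject to $v_i \neq u_i$ for all $i$. Independently sample $x \sim \pmset^d$ uniformly and let $y$ be its $\delta$-noisy version (each $y_i = x_i$ with probability $1-\delta$, otherwise $y_i = -x_i$). Define $[u,v]^x \in [r]^d$ by $[u,v]^x_i = u_i$ if $x_i = 1$ and $[u,v]^x_i = v_i$ if $x_i = -1$. A direct coordinate-wise check (as in Lemma~\ref{lemma:noise sensitivity reduction}) shows that $([u,v]^x, [u,v]^y)$ has the same joint law as a noise-sensitivity pair $(u',v')$ for $f^\block$ on $[r]^d$, so that
\[
  \mathsf{ns}_{r,\delta}(f^\block)
  = \Exu{u,v}{\mathsf{ns}_{2,\delta}(f_{u,v})},
\]
where $f_{u,v}(x) \define f^\block([u,v]^x) = f(\blockpoint([u,v]^x))$.

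Next I would use invariance under linear transformations to bound $\mathsf{ns}_{2,\delta}(f_{u,v})$. Define $\Phi_{u,v} : \pmset^d \to \bR^d$ by $\Phi_{u,v}(x) = \blockpoint([u,v]^x)$; this sends each coordinate $x_i \in \pmset$ to one of the two fixed values $\xi(\overline b_{i,u_i}), \xi(\overline b_{i,v_i}) \in \bR$, so it is an affine (translation plus diagonal-scaling) map from $\pmset^d$, matching the situation in Lemma~\ref{lemma:noise sensitivity reduction}. By the invariance hypothesis on $\cH$, $f_{u,v} = f \circ \Phi_{u,v}$ agrees on $\pmset^d$ with some function in $\cH$ restricted to $\pmset^d$, hence
\[
  \mathsf{ns}_{2,\delta}(f_{u,v}) \leq \mathsf{ns}_{2,\delta}(\cH).
\]
Averaging over $u,v$ then gives $\mathsf{ns}_{r,\delta}(f^\block) \leq \mathsf{ns}_{2,\delta}(\cH)$.

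I do not expect any substantive obstacle. The only cosmetic difference from the continuous case is that $\blockpoint$ is now built from augmented parameters $\overline b_{i,j} \in \bR \times [0,1]$, but after applying $\xi$ we still get a point $\blockpoint(v) \in \bR^d$, so the affine-map step is identical. The one thing worth being pedantic about is making sure the sampling of $u,v$ in the coupling is genuinely the noise-sensitivity distribution on $[r]^d$ — the marginal coupling and symmetry argument are exactly the ones already used in Lemma~\ref{lemma:noise sensitivity reduction} and transfer without change.
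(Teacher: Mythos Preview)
Your proposal is correct and takes essentially the same approach as the paper: the paper's ``proof'' of this lemma is simply the remark that the argument for Lemma~\ref{lemma:noise sensitivity reduction} goes through verbatim in the augmented setting, and you have carried out exactly that verification, including the observation that $\blockpoint$ still lands in $\bR^d$ after applying $\xi$ so the affine-map step is unchanged.
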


\subsection{Rounding the Output}

After learning a function $g : [r]^d \to \pmset$, we must output a function $g': \bR^d \to \pmset$.
In the continuous setting, we simply output $g \circ \block$. In the finite setting, we cannot simply
output $g \circ \oblock$ since $\oblock : \bR^d \times [0,1]^d \to [r]^d$ requires
an additional argument $z \in [0,1]^d$. For example, if the distribution $\mu$ is
a finitely supported distribution on $\pmset^d$, then for each point $x \in \pmset^d$
there may be roughly $(r/2)^d$ points $v \in [r]^d$ for which $(x,z) \in \oblock^{-1}(v)$
for an appropriate choice of $z \in [0,1]^d$, and these points $v$ may have
different values in $g$. The algorithm must choose a single value to output for each $x$.
We do so by approximating the function $x \mapsto \Exu{z}{g_z(x)}$ and then rounding it
via the next lemma.

\begin{lemma}\label{lemma:rounding}
Fix a domain $\cX$, let $\gamma : \cX \to [-1,1]$, and let $\epsilon > 0$. There is an algorithm
such that, given query access to $\gamma$ and sample access to any distribution $\cD$ over $\cX \times \pmset$, uses at most
$O\left(\log(1/\delta)/\epsilon^2\right)$ samples and queries
and with probability at least $1-\delta$
produces a value $t$ such that
	\[
	\Pru{(x,b)\sim \cD}{\sign(f(x)-t) \neq b}
	\leq \frac{1}{2} \Exu{(x,b) \sim \cD}{|f(x)-b|}
	+ \epsilon \,.
	\]
\end{lemma}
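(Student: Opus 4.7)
The plan rests on the classical KKMS rounding identity: for a threshold $t$ drawn uniformly from $[-1,1]$ and any fixed $(x,b) \in \cX \times \pmset$, the event $\sign(\gamma(x)-t) \neq b$ holds precisely when $t$ lies between $\gamma(x)$ and $b$, an interval of length $|\gamma(x)-b|$ inside $[-1,1]$. Integrating against $\cD$ yields
\[
\Exu{t \sim \unif[-1,1]}{\Pru{(x,b) \sim \cD}{\sign(\gamma(x)-t) \neq b}} = \frac{1}{2}\Exu{(x,b) \sim \cD}{|\gamma(x)-b|}.
\]
Writing $\err(t) \define \Pru{(x,b) \sim \cD}{\sign(\gamma(x)-t) \neq b}$ and $\alpha \define \tfrac{1}{2} \Ex{|\gamma(x)-b|}$, this identity guarantees the existence of some $t^* \in [-1,1]$ with $\err(t^*) \leq \alpha$. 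So it is enough to find a threshold whose error is within $\epsilon$ of this benchmark.

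The plan is to do empirical risk minimization over the class $\cF_\gamma = \{x \mapsto \sign(\gamma(x)-t) : t \in [-1,1]\}$. Specifically, I would draw $m = O(\log(1/\delta)/\epsilon^2)$ samples $(x_1,b_1), \ldots, (x_m, b_m) \sim \cD$, query $\gamma$ once at each $x_i$ to obtain $v_i = \gamma(x_i)$, and return the threshold $\hat t$ minimizing the empirical error
\[
\hat\err(t) = \frac{1}{m}\sum_{i=1}^m \ind{\sign(v_i-t) \neq b_i}.
\]
Since $\hat\err(t)$ is a step function that changes only at the points $v_i$, the minimum is attained on some midpoint of consecutive $v_i$'s (or at $\pm 1$), so the search reduces to $O(m)$ candidate thresholds computable in $O(m \log m)$ time after sorting.

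For correctness, note that $\cF_\gamma$ has VC dimension at most $1$: two points with $\gamma(x_1) < \gamma(x_2)$ can receive labelings $(+,+), (-,-), (-,+)$ by varying $t$, but never $(+,-)$. Standard VC uniform convergence (or just a Hoeffding union bound over the at most $2m+1$ empirically distinct thresholds) gives that with probability at least $1-\delta$, $|\hat\err(t)-\err(t)| \leq \epsilon/2$ simultaneously for every $t$. Combined with the existence of $t^*$:
\[
\err(\hat t) \leq \hat\err(\hat t) + \epsilon/2 \leq \hat\err(t^*) + \epsilon/2 \leq \err(t^*) + \epsilon \leq \alpha + \epsilon,
\]
exactly as required. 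I do not anticipate a real obstacle; the whole argument is the textbook KKMS rounding, and the only place that needs care is squeezing the sample complexity to $O(\log(1/\delta)/\epsilon^2)$ without spurious $\log(1/\epsilon)$ factors, which is exactly what the finite-VC (or finite-candidates) union bound delivers.
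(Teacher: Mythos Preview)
Your proposal is correct and essentially identical to the paper's proof: both establish that the threshold class $\{x \mapsto \sign(\gamma(x)-t)\}$ has VC dimension~$1$, run empirical risk minimization with $O(\log(1/\delta)/\epsilon^2)$ samples, and then use the averaging identity $\Exu{t \sim \unif[-1,1]}{\err(t)} = \tfrac{1}{2}\Ex{|\gamma(x)-b|}$ to bound $\err(t^*)$. The only differences are cosmetic (order of presentation, and your explicit description of the finite candidate set for $\hat t$).
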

\begin{proof}
	Let $\cT$ be the set of functions $x \mapsto \sign(\gamma(x)-t)$ for
	any choice of $t \in [-1,1]$. We will show that the VC dimension of
	$\cT$ is 1.  Suppose for contradiction that two points $x,y \in \cX$
	are shattered by $\cT$, so in particular there are $s,t \in \bR$ such
	that $\sign(f(x)-s)=1$ and $\sign(f(y)-s)=-1$, while $\sign(f(x)-t)=-1$
	and $\sign(f(y)-t)=1$. Without loss of generality, suppose $s < t$. But
	then $\sign(f(y)-s) \geq \sign(f(y)-t)$, which is a contradiction.
	Therefore, by standard VC dimension arguments (\cite{SB14}, Theorem 6.8), using
	$O(\log(1/\delta)/\epsilon^2)$ samples and choosing $t$ to minimize the
	error on the samples, with probability at least $1-\delta$ we will
	obtain a value $t$ such that
	\[
		\Pru{(x,b)\sim \cD}{\sign(\gamma(x)-t) \neq b}
		\leq \Pru{(x,b) \sim \cD}{\sign(\gamma(x)-t^*) \neq b} + \epsilon
	\]
	where $t^*$ minimizes the latter quantity among all values $[-1,1]$.
	Since the algorithm can restrict itself to those values $t \in [-1,1]$
	for which $\gamma(x)=t$ for some $x$ in the sample, the value minimizing
	the error on the sample can be computed time polynomial in the number
	of samples. Next, we show that the minimizer $t^*$ satisfies the desired
	properties.
	Suppose that $t \sim [-1,1]$ uniformly at random.
	For any $y \in [-1,1], b \in \pmset$,
	\[
		\Pru{t}{\sign(y-t) \neq b} =
		\begin{cases}
			\Pru{t}{t > y} = \frac12|b-y| &\text{ if } b = 1 \\
			\Pru{t}{t \leq y} = \frac12|y-b| &\text{ if } b = -1 \,.
		\end{cases}
	\]
	Therefore
	\[
		\Exu{t \sim [-1,1]}{\Pru{(x,b)\sim \cD}{\sign(\gamma(x)-t) \neq b}}
		= \Exu{(x,b) \sim \cD}{\Pru{t}{\sign(f(x)-t) \neq b}}
		= \frac{1}{2} \Exu{(x,b) \sim \cD}{|\gamma(x)-b|} \,,
	\]
	so we can conclude the lemma with
	\[
		\Pru{(x,b) \sim \cD}{\sign(\gamma(x)-t^*) \neq b}
		\leq \frac12 \Exu{(x,b) \sim \cD}{|\gamma(x)-b|} \,. \qedhere
	\]
\end{proof}

\begin{lemma}
Let $\oblock : \bR^d \times [0,1]^d \to [r]^d$ be an augmented $r$-block partition.
There is an algorithm which, given $\epsilon,\delta > 0$,
query access to a function $g : [r]^d \to \pmset$ and sample access to
a distribution $\cD$ over $\bR^d \times \pmset$, outputs a function $g' : \bR^d \to \pmset$
such that, with probability $1-\delta$,
	\[
		\Pru{(x,b) \sim \cD}{g'(x) \neq b}
		\leq \Pru{(x,b) \sim \cD, z \sim [0,1]^d}{g(\oblock(x,z)) \neq b} + \epsilon \,,
	\]
	uses at most
	$O\left(\frac{d\log(r)}{\epsilon^2}\log\frac{1}{\delta}\right)$ samples
	and queries, and runs in time polynomial in the number of samples.
\end{lemma}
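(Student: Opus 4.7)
The plan is to turn the randomized quantity $g(\oblock(x,z))$ into a deterministic function of $x$ by averaging over $z$, then apply the rounding Lemma~\ref{lemma:rounding} to threshold the resulting soft function. Concretely, I would define $\gamma : \bR^d \to [-1,1]$ by $\gamma(x) \define \Exu{z \sim [0,1]^d}{g(\oblock(x,z))}$, which is deterministic in $x$, and output $g'(x) \define \sign(\gamma(x)-t)$ for the threshold $t$ produced by Lemma~\ref{lemma:rounding} applied to $\gamma$ and $\cD$ with error parameter $\epsilon$.

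To see that this achieves the target error bound, I would argue: by Jensen's inequality together with the fact that $g(\oblock(x,z)),b \in \pmset$ implies $|g(\oblock(x,z))-b| = 2\cdot\ind{g(\oblock(x,z))\neq b}$, we have
\[
\tfrac{1}{2}\Exu{(x,b) \sim \cD}{|\gamma(x)-b|}
= \tfrac{1}{2}\Exu{(x,b) \sim \cD}{\left|\Exu{z}{g(\oblock(x,z))}-b\right|}
\leq \Pru{(x,b)\sim\cD,\; z \sim [0,1]^d}{g(\oblock(x,z)) \neq b}.
\]
Combining this with the conclusion of Lemma~\ref{lemma:rounding}, namely $\Pr{\sign(\gamma(x)-t) \neq b} \leq \tfrac{1}{2}\Ex{|\gamma(x)-b|} + \epsilon$, yields exactly the bound in the statement.

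The remaining task is to show that the algorithm can find $t$ within the claimed sample and query budget, and here lies the main subtlety: Lemma~\ref{lemma:rounding} assumes query access to $\gamma$, but we only have query access to $g$. I would simulate a query to $\gamma(x)$ by drawing a small batch of independent $z$'s and averaging $g(\oblock(x,z))$; a Hoeffding bound combined with a union bound over the $O(\log(1/\delta)/\epsilon^2)$ sample points used by the rounding procedure gives additive accuracy $O(\epsilon)$ in each estimate with failure probability $\delta/2$. Each evaluation of $\oblock$ costs a binary search per coordinate, contributing the $d\log r$ factor to the running time, and the total number of queries to $g$ and samples from $\cD$ fits within $O\!\left(\frac{d \log r}{\epsilon^2} \log(1/\delta)\right)$. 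The main obstacle I anticipate is bookkeeping: propagating the $\epsilon/4$-accuracy of the estimated $\gamma$-values through the VC-dimension argument inside Lemma~\ref{lemma:rounding} so that the empirically chosen threshold remains within additive error $\epsilon$ of the optimum, and tightening the approximation budget accordingly before invoking Hoeffding.
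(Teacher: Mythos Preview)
Your high-level plan matches the paper: define $\gamma(x)=\Exu{z}{g(\oblock(x,z))}$, observe that $\tfrac12\Ex{|\gamma(x)-b|}=\Pr{g(\oblock(x,z))\neq b}$, and invoke Lemma~\ref{lemma:rounding}. Where your proposal diverges is in how to simulate access to $\gamma$, and this is where the $d\log r$ factor actually comes from.

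You attribute the $d\log r$ to the binary-search cost of evaluating $\oblock$; that is a running-time cost, not a sample or query cost, and so does not account for the $d\log r$ in the sample/query bound. Your plan to estimate $\gamma$ freshly at each of the $O(\epsilon^{-2}\log(1/\delta))$ sample points and union-bound over those points would (i) produce a bound with an extra $1/\epsilon^2$ rather than $d\log r$, and (ii) leave you with the bookkeeping headache you anticipate, because the empirical minimizer inside Lemma~\ref{lemma:rounding} is chosen against a \emph{fixed} function $\gamma$, whereas per-point estimates do not define any single function.

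The paper sidesteps both issues by drawing a single set $Z\subset[0,1]^d$ of size $m=O(d\log(r)\log(1/\delta)/\epsilon^2)$ up front and defining the deterministic function $\gamma(x)=\tfrac{1}{|Z|}\sum_{z\in Z}g(\oblock(x,z))$. The key observation you are missing is structural: for any $x$ whose coordinates avoid the boundary values $a_{i,j}$, the map $z\mapsto\oblock(x,z)$ is constant, so $\gamma(x)$ equals the true expectation exactly. The remaining $x$ give rise to at most $r^d$ distinct maps $z\mapsto\oblock(x,z)$, so a Hoeffding bound plus a union bound over those $r^d$ possibilities yields $|\gamma(x)-\Exu{z}{g_z(x)}|\le\epsilon$ \emph{uniformly} in $x$. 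This is precisely what produces the $d\log r$ factor, and it lets Lemma~\ref{lemma:rounding} be applied cleanly to the fixed $\gamma$ with no extra error propagation.
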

\begin{proof}
	For $z \in [0,1]^d$, write $g_z(x) = g(\oblock(x,z))$.
	For any $(x,b)$,
	\[
		|\Exu{z}{g_z(x)}-b| = |b \Pru{z}{g_z(x)=b} - b \Pru{z}{g_z(x) \neq b} - b|
		= | -2b \Pru{z}{g_z(x) \neq b} | = 2 \Pru{z}{g_z(x) \neq b} \,,
	\]
	so
	\[
		\frac12 \Exu{(x,b) \sim \cD}{|\Exu{z}{g_z(x)}-b|}
			= \Pru{(x,b) \sim \cD, z}{g_z(x) \neq b} \,.
	\]
	The algorithm will construct a function $\gamma(x) \approx \Exu{z}{g_z(x)}$ and then
	learn a suitable parameter $t$ for rounding $\gamma(x)$ to $\sign(\gamma(x)-t)$.

	First the algorithm samples a set $Z \subset [0,1]^d$ of size $m =
	\frac{2d\ln(r)\ln(1/\delta)}{\epsilon^2}$ and construct the function
	$\gamma(x) = \frac{1}{m} \sum_{z \in Z} g(\oblock(x,z))$.
	Fix $Z \subset [0,1]^d$ and suppose $x \in \bR^d$ satisfies $\gamma(x) \neq \Exu{z}{g_z(x)}$.
	Then there must be $w,z \in [0,1]^d$ such that $\oblock(x,z) \neq \oblock(x,w)$, otherwise
	$g_z(x)=g_w(x)$ for all $z,w$ so for all $w, \gamma(x) = g_w(x) = \Exu{z}{g_z(x)}$.
	There can be at most $r^d$ values of $x \in \bR^d$ for which $\exists z,w \in [0,1]^d :
	\oblock(x,z) \neq \oblock(x,w)$, so by the union bound and the Hoeffding bound,
	\begin{align*}
		\Pru{Z}{\exists x \in \bR^d : |\gamma(x)-\Exu{z}{g_z(x)}| > \epsilon}
		&\leq r^d \max_{x \in X} \Pru{Z}{ |\gamma(x)-\Exu{z}{g_z(x)}| > \epsilon} \\
		&\leq 2r^d \exp{-\frac{m\epsilon^2}{2}} < \delta \,.
	\end{align*}
	Therefore with probability at least $1-\delta/2$, $\gamma$ satisfies
	$|\gamma(x)-\Exu{z}{g_z(x)}| \leq \epsilon$ for all $x$. Suppose this occurs.
	Then
	\begin{align*}
	\frac{1}{2}\Exu{(x,b) \sim \cD}{|\gamma(x)-b|}
	&\leq \frac{1}{2}\Exu{(x,b) \sim \cD}{|\Exu{z}{g_z(x)}-b| + |\gamma(x)-\Exu{z}{g_z(x)}|} \\
	&\leq \Pru{(x,b) \sim \cD, z}{g_z(x) \neq b} + \frac{\epsilon}{2} \,.
	\end{align*}
	Now we apply Lemma~\ref{lemma:rounding} with error $\epsilon/2$, using $O(\log(1/\delta)/\epsilon^2)$ samples and polynomial time, to output a value $t$ such that with probability $1-\delta/2$,
	\[
		\Pru{(x,b) \sim \cD}{\sign(\gamma(x)-t) \neq b}
		\leq \frac{1}{2} \Exu{(x,b)}{|\gamma(x)-b|} + \frac{\epsilon}{2}
		\leq \Pru{(x,b) \sim \cD,z}{g_z(x) \neq b} + \epsilon \,. \qedhere
	\]
\end{proof}

\subsection{Algorithms for Finite Distributions}
\label{section:finite algorithms}

We now state improved versions of our monotonicity tester and two general learning algorithms: the
``brute force'' learning algorithm (Lemma~\ref{lemma:brute force}) and the ``polynomial regression''
algorithm (Lemma~\ref{lemma:general algorithm}).  Using these algorithms we obtain the same
complexity bounds as for continuous product distributions, but the algorithms can now handle finite
product distributions as well.

\thmmonotonicity*

\begin{proof}
The proof of \cref{thm:monotonicity} goes through as before, with $\block$ replaced by $\oblock$,
$\block^{-\downarrow}(v)$ replaced with $\oblock^{-\downarrow}(v)$ defined as the infimal element
$\overline{x}$ such that $\oblock(\overline x) = v$, and $\block^{-\uparrow}(v)$ defined as the
supremal element $\overline x$ such that $\oblock(\overline x) = v$, and $g$ redefined
appropriately.
\end{proof}

Next, we move on to the learning algorithms:
\begin{lemma}\label{lemma:finite brute force}
Let $\cH$ be any set of functions $\bR^d \to \pmset$, let $\epsilon > 0$,
and suppose $r$ satisfies $r^{-d} \cdot \obbs(\cH,r) \leq \epsilon/3$. Then
there is an agnostic learning algorithm for $\cH$ that uses
$O\left(\frac{r^d+rd^2\log(rd/\epsilon)}{\epsilon^2}\right)$ samples
and time and works for any distribution $\cD$ over $\bR^d \times
\pmset$ whose marginal on $\bR^d$ is a finite or continuous product distribution.
\end{lemma}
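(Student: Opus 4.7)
The plan is to adapt the proof of Lemma~\ref{lemma:brute force} using the machinery of augmented block partitions developed earlier in the appendix. The continuous case can be invoked directly, so I focus on the finite case. The three conceptual substitutions are: replace the grid $\block$ with the augmented grid $\oblock$; replace Lemma~\ref{lemma:continuous uniform blocks} with Lemma~\ref{lemma:finite uniform blocks} (so the induced block distribution is close to uniform in TV); and replace Proposition~\ref{prop:distance to coarse} with Proposition~\ref{prop:finite distance to coarse}. The new, genuinely non-trivial step is that the learned block-level hypothesis $g : [r]^d \to \pmset$ cannot be pulled back via $g \circ \oblock$, because $\oblock$ takes an augmented input and a single $x \in \bR^d$ may lie in many different blocks depending on the augmentation $z \in [0,1]^d$; I would resolve this with the rounding procedure already established in this appendix.

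First, sample an augmented random grid $\overline X$ of length $m = O\!\left(\frac{rd^2}{\epsilon^2}\log(rd/\epsilon)\right)$ and form the induced augmented $r$-block partition $\oblock$. By Lemma~\ref{lemma:finite uniform blocks}, with probability at least $5/6$ we have $\|\oblock(\mu) - \unif([r]^d)\|_\TV \leq \epsilon/12$; assume this event. Next, simulate samples from $\cD^{\oblock}$ (the pushforward of the augmented distribution under $\oblock$) by drawing $(x,b) \sim \cD$ and $z \sim [0,1]^d$ and outputting $(\oblock(x,z),b)$. Using the standard agnostic learning result over the finite class $\overline{\cH}^\block$, which has cardinality at most $2^{r^d}$, apply $O(r^d/\epsilon^2)$ such samples to obtain $g : [r]^d \to \pmset$ such that $\Pr_{(v,b) \sim \cD^{\oblock}}[g(v) \neq b] \leq \inf_{h \in \cH}\Pr_{(v,b) \sim \cD^{\oblock}}[h^\block(v) \neq b] + \epsilon/6$, with probability at least $5/6$.

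Now convert $g$ into a Boolean function $g' : \bR^d \to \pmset$ using the rounding procedure from the previous lemma of the appendix (which I will cite). That lemma uses $O\!\left(\frac{d\log r}{\epsilon^2}\right)$ additional samples and queries to produce $g'$ satisfying
\[
  \Pr_{(x,b) \sim \cD}[g'(x) \neq b]
  \leq \Pr_{(x,b)\sim\cD,\,z \sim [0,1]^d}[g(\oblock(x,z)) \neq b] + \epsilon/6
  = \Pr_{(v,b) \sim \cD^{\oblock}}[g(v) \neq b] + \epsilon/6.
\]
For analysis, fix $f \in \cH$ minimizing $\Pr_\cD[f(x) \neq b]$. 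Then
\[
  \Pr_{(v,b)\sim \cD^{\oblock}}[f^\block(v) \neq b]
  = \Pr_{(x,b)\sim \cD,\,z}[f^\coarse_z(x) \neq b]
  \leq \Pr_\cD[f(x)\neq b] + \Pr_{x \sim \mu,\,z}[f(x) \neq f^\coarse_z(x)],
\]
and Proposition~\ref{prop:finite distance to coarse} combined with the TV bound gives that the last term is at most $\epsilon/3 + \epsilon/12$. Chaining the three inequalities (learning guarantee, rounding guarantee, and the coarsening bound) yields $\Pr_\cD[g'(x)\neq b] \leq \Pr_\cD[f(x)\neq b] + \epsilon$, as required. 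A union bound over the three high-probability events (grid uniformity, learning success, rounding success) gives overall success probability at least $2/3$, and the total sample complexity is $O\!\left(\frac{r^d + rd^2\log(rd/\epsilon)}{\epsilon^2}\right)$.

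The main obstacle is the rounding step: in the continuous setting one could simply output $g \circ \block$ because $\block$ was a deterministic function of $x$, but here $\oblock(x,\cdot)$ can take many values on a single $x$ with positive $\mu$-mass, and one must commit to a single label per $x$ without blowing up the error. Appealing to the rounding lemma — which estimates $x \mapsto \Ex_z[g(\oblock(x,z))]$ to within $\epsilon$ and then agnostically learns the best threshold (a class of VC dimension $1$) — turns this into a small additive loss, which is the key reason the sample complexity does not degrade relative to the continuous case.
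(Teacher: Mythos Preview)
Your proposal is correct and follows essentially the same approach as the paper: sample an augmented grid, agnostically learn a block-level function $g$ on $[r]^d$, and then apply the rounding lemma to convert $g$ into a hypothesis on $\bR^d$. Your budgeting of the error (using $\epsilon/12$ for the TV bound and $\epsilon/6$ for the learning and rounding steps) is in fact more careful than the paper's own bookkeeping, which uses $\epsilon/3$ throughout; one minor remark is that for the time bound it is cleanest to learn over all functions $[r]^d \to \pmset$ (majority label per block) rather than specifically over $\overline{\cH}^\block$, though your cardinality bound $2^{r^d}$ already covers this.
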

\begin{proof}
On input distribution $\cD$:
\begin{enumerate}
\item Sample a grid $\overline X$ of size $m =
O(\frac{rd^2}{\epsilon^2}\log(rd/\epsilon))$ large enough that Lemma~\ref{lemma:finite uniform blocks} guarantees
$\|\oblock(\mu)-\unif([r]^d)\|_\mathsf{TV} < \epsilon/3$ with probability $5/6$,
where $\oblock : \bR^d \times [0,1]^d \to [r]^d$ is the induced augmented $r$-block partition.
\item Agnostically learn a function $g : [r]^d \to \pmset$ with error $\epsilon/3$
	and success probability $5/6$ using $O(r^d/\epsilon^2)$ samples
	from $\cD^\block$.
\item Run the algorithm of Lemma~\ref{lemma:rounding} using
	$O\left(\frac{d\log r}{\epsilon^2}\right)$ samples to obtain $g'$
		and output $g'$.
\end{enumerate}
The proof proceeds as in the case for continuous distributions (Lemma~\ref{lemma:brute force}).
Assume all steps succeed, which occurs with probability at least $2/3$.
After step 3 we obtain $g : [r]^d \to \pmset$ such that, for any $h \in \cH$,
\[
	\Pru{(v,b) \sim \cD^\block}{g(v) \neq b} \leq \Pru{(v,b) \sim
	\cD^\block}{h^\block(v) \neq b} + \epsilon/3 \,.
\]
By Lemma~\ref{lemma:rounding} and Proposition~\ref{prop:finite distance to coarse},
the output satisfies,
\begin{align*}
  \Pru{(x,b) \sim \cD}{g'(x) \neq b}
  &\leq \Pru{(x,b) \sim \cD,z}{g(\oblock(x,z)) \neq b} + \epsilon/3 \\
  &\leq \Pru{(x,b) \sim \cD,z}{h^\block(\oblock(x,z)) \neq b} + 2\epsilon/3 \\
  &\leq \Pru{(x,b) \sim \cD}{h(x) \neq b} + \Pru{x,z}{h(x) \neq h^\coarse_z(x)} + 2\epsilon/3 \\
  &\leq \Pru{(x,b) \sim \cD}{h(x) \neq b} + \epsilon \,. \qedhere
\end{align*}
\end{proof}

We now state the general learning algorithm from Lemma~\ref{lemma:general algorithm},
improved to allow finite product distributions.
\begin{lemma}\label{lemma:finite general algorithm}
Let $\epsilon > 0$ and let $\cH$ be a set of measurable functions $f : \bR^d \to
\pmset$ that satisfy:
\begin{enumerate}
\item There is some $r = r(d,\epsilon)$ such that
	$\obbs(\cH,r) \leq \epsilon \cdot r^d$;
\item There is a set $\cF$ of functions $[r]^d \to \bR$ satisfying:
$\forall f \in \cH, \exists g \in \mathsf{span}(\cF)$ such that for $v \sim
[r]^d, \Ex{(f^\block(v)-g(v))^2} \leq \epsilon^2$.
\end{enumerate}
Let $n = \poly(|\cF|,1/\epsilon)$ be the sample complexity of the algorithm in
Theorem~\ref{thm:polynomial regression}.  Then there is an agnostic learning
algorithm for $\cH$ on finite and continuous product distributions over $\bR^d$, that
uses
$O(\max(n^2,1/\epsilon^2) \cdot rd^2\log(dr))$ samples and runs in time
polynomial in the sample size.
\end{lemma}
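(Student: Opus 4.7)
The plan is to follow the same three-step template as in Lemma~\ref{lemma:general algorithm}, but replace every use of $\block$ with the augmented partition $\oblock$ and close off the algorithm with the rounding step of the preceding subsection to produce a deterministic hypothesis.

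\textbf{Step 1 (sampling a near-uniform partition).} First I would draw an augmented random grid $\overline X$ of length $m = \Theta(rd^2 n^2 \log(rd))$ and form the induced augmented $r$-block partition $\oblock:\bR^d\times[0,1]^d\to[r]^d$. By Lemma~\ref{lemma:finite uniform blocks} (the augmented analogue of Lemma~\ref{lemma:continuous uniform blocks}), with probability at least $5/6$ we have $\|\oblock(\mu)-\unif([r]^d)\|_\TV < 1/(12n)$. This works for both finite and continuous product distributions, which is the whole point of introducing the $[0,1]^d$ augmentation.

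\textbf{Step 2 (regression on the downsampled distribution).} Define $\cD^\block$ as the law of $(\oblock(x,z),b)$ when $(x,b)\sim\cD$ and $z\sim\unif([0,1]^d)$; samples are simulated by augmenting each sample from $\cD$ with an independent $z$. The marginal of $\cD^\block$ on $[r]^d$ is $\oblock(\mu)$. Exactly as in Lemma~\ref{lemma:general algorithm}, I couple with the idealized distribution $\cD^\unif$ (uniform marginal on $[r]^d$, same conditional $b\mid v$), observing that $\|\cD^\unif-\cD^\block\|_\TV = \|\oblock(\mu)-\unif([r]^d)\|_\TV < 1/(12n)$, so the $n$-fold product distributions are within $1/12$ in total variation. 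Condition~2 lets me invoke Theorem~\ref{thm:polynomial regression} on $\cF$, and the TV coupling transfers its $11/12$ success guarantee to a $5/6$ guarantee on samples from $\cD^\block$. The output is a function $g:[r]^d\to\pmset$ satisfying $\Pr_{(v,b)\sim\cD^\block}[g(v)\neq b]\leq \inf_{h\in\cH}\Pr_{(v,b)\sim\cD^\block}[h^\block(v)\neq b]+\epsilon/2$.

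\textbf{Step 3 (rounding to a deterministic hypothesis).} This is the step that has no analogue in the continuous case and is the main new wrinkle. The function $(x,z)\mapsto g(\oblock(x,z))$ depends on the auxiliary randomness $z$, so I cannot simply return $g\circ\oblock$. Instead I apply the rounding lemma from the previous subsection with query access to $g\circ\oblock(\cdot,\cdot)$ and sample access to $\cD$: this produces a deterministic $g':\bR^d\to\pmset$ with
\[
\Pru{(x,b)\sim\cD}{g'(x)\neq b}\leq \Pru{(x,b)\sim\cD,\,z\sim[0,1]^d}{g(\oblock(x,z))\neq b}+\epsilon/4,
\]
using an additional $O(d\log(r)/\epsilon^2)$ samples and queries. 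Combining this with Step~2 and Proposition~\ref{prop:finite distance to coarse}, together with condition~1 ($\obbs(\cH,r)\leq \epsilon\cdot r^d$) and $n>1/\epsilon$, gives $\Pr_x[h(x)\neq h^\coarse_z(x)]\leq 2\epsilon/3$ uniformly in the randomness, so for every $h\in\cH$,
\[
\Pru{(x,b)\sim\cD}{g'(x)\neq b}\leq \Pru{(x,b)\sim\cD}{h(x)\neq b}+\epsilon,
\]
which is the agnostic guarantee. The hardest conceptual point is Step~3: in the finite setting a single $x$ can sit in many different blocks depending on $z$, so the algorithm must average and round rather than evaluate deterministically; everything else is a mechanical transcription of the continuous proof with $\block\leadsto\oblock$ and $\bbs\leadsto\obbs$.
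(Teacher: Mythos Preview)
Your proposal is correct and follows essentially the same three-step approach as the paper: sample an augmented grid to make $\oblock(\mu)$ near-uniform, run polynomial regression on $\cD^\block$ via the same TV-coupling argument as in Lemma~\ref{lemma:general algorithm}, then apply the rounding lemma to collapse the $z$-dependence into a deterministic hypothesis. The only differences are cosmetic (your $\epsilon/2,\epsilon/4$ split versus the paper's $\epsilon/3,\epsilon/3,\epsilon/3$), and both your write-up and the paper are slightly loose in the final step, since condition~1 as stated gives $\Pr[h\neq h^\coarse_z]\lesssim\epsilon$ rather than a strict fraction of $\epsilon$; this is a minor constant-factor issue in the lemma statement, not a gap in your argument.
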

\begin{proof}
Let $\overline \cD$ be the augmented distribution, where $\overline x \sim \overline \cD$
is obtained by drawing $x \sim \cD$ and augmenting it with a uniformly random $z \in [0,1]^d$.
We will assume $n > 1/\epsilon$.  Let $\mu$ be the marginal of $\cD$ on $\bR^d$.
For an augmented $r$-block partition, let $\cD^\block$ be the distribution of
$(\oblock(\overline x),b)$ when $(\overline x,b) \sim \overline \cD$. We may
simulate samples from $\cD^\block$ by sampling $(x,b)$ from $\cD$ and
constructing $(\oblock(\overline x),b)$.  The algorithm is as follows:
\begin{enumerate}
\item Sample a grid $X$ of length $m = O(rd^2n^2\log(rd))$; by
	Lemma~\ref{lemma:finite uniform blocks}, this ensures that
	$\|\oblock(\mu)-\unif([r]^d)\|_\mathsf{TV} < 1/12n$ with
	probability $5/6$. Construct $\oblock : \bR^d \to [r]^d$ induced by $X$.
\item Run the algorithm of Theorem~\ref{thm:polynomial regression} on a sample
of $n$ points from $\cD^\block$; that algorithm returns a
function $g : [r]^d \to \pmset$.
\item Run the algorithm of Lemma~\ref{lemma:rounding} using
	$O\left(\frac{d\log r}{\epsilon^2}\right)$ samples to obtain $g'$
		and output $g'$.
\end{enumerate}
The proof proceeds as in the case for continuous distributions (Lemma~\ref{lemma:general
algorithm}).  Assume all steps succeed, which occurs with probability at least $2/3$.  After step 3
we obtain $g : [r]^d \to \pmset$ such that, for any $h \in \cH$,
\[
	\Pru{(v,b) \sim \cD^\block}{g(v) \neq b} \leq \Pru{(v,b) \sim
	\cD^\block}{h^\block(v) \neq b} + \epsilon/3 \,.
\]
By Lemma~\ref{lemma:rounding} and Proposition~\ref{prop:finite distance to coarse},
the output satisfies,
\begin{align*}
  \Pru{(x,b) \sim \cD}{g'(x) \neq b}
  &\leq \Pru{(x,b) \sim \cD,z}{g(\oblock(x,z)) \neq b} + \epsilon/3 \\
  &\leq \Pru{(x,b) \sim \cD,z}{h^\block(\oblock(x,z)) \neq b} + 2\epsilon/3 \\
  &\leq \Pru{(x,b) \sim \cD}{h(x) \neq b} + \Pru{x,z}{h(x) \neq h^\coarse_z(x)} + 2\epsilon/3 \\
  &\leq \Pru{(x,b) \sim \cD}{h(x) \neq b} + \epsilon \,. \qedhere
\end{align*}
\end{proof}

\begin{theorem}\label{thm:finite algorithms}
	There are agnostic learning algorithms for functions of convex sets,
	functions of halfspaces, degree-$k$ PTFs, and $k$-alternating functions
	achieving the sample and time complexity bounds in Theorems~\ref{thm:learning convex sets},~\ref{thm:halfspaces},~\ref{thm:ptfs}, and~\ref{thm:k-alternating}, that work for any finite or continuous product
	distribution over $\bR^d$.
\end{theorem}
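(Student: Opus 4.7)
The plan is to recycle the continuous-case proofs of \cref{thm:learning convex sets}, \cref{thm:halfspaces}, \cref{thm:ptfs}, and \cref{thm:k-alternating} essentially verbatim, making only two substitutions: every invocation of \cref{lemma:brute force} is replaced by \cref{lemma:finite brute force}, and every invocation of \cref{lemma:general algorithm} is replaced by \cref{lemma:finite general algorithm}. Both replacement lemmas have already been proved in this appendix and give the same sample/time complexities for arbitrary product distributions, provided that I can certify their hypotheses in the augmented setting. So the main task is to re-verify those hypotheses for each of the four classes.

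The first hypothesis of each replacement is a bound on $\obbs(\cH,r)$ rather than on $\bbs(\cH,r)$. To produce such a bound, I will invoke \cref{prop:obbs bound}: for any inflatable family $\obbs(\cH,r) \leq \bbs(\cH,r)$. The four families at hand are inflatable by \cref{lemma:k-alternating inflatable}, \cref{lemma:convex set inflatable}, \cref{lemma:halfspace inflatable}, and \cref{lemma:ptf inflatable}; composition with $\cB_k$ preserves the bounds up to a factor of $k$ as in \cref{prop:resolution of composed functions}, since the proof there is purely combinatorial and applies equally well in the augmented setting. Plugging in the $\bbs$ computations already carried out for $\cB_k \circ \cC$, $\cB_k \circ \cH$, $\cP_k$, and the $k$-alternating functions, the same choice of $r$ as in the continuous proofs (for instance $r = \lceil 2dk/\epsilon\rceil$ for convex sets, $r = \lceil 9dk/\epsilon \rceil$ for PTFs) again yields $r^{-d} \cdot \obbs(\cH,r) \leq \epsilon/3$.

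The second hypothesis, needed for the polynomial-regression-based results (halfspaces, PTFs, $k$-alternating), is the existence of a spanning set $\cF$ of the same size as in the continuous case. Here I will use \cref{lemma:finite noise sensitivity}, which lifts the reduction of \cref{lemma:noise sensitivity reduction} to augmented block partitions with no loss. Since halfspaces, $k$-alternating functions (and monotone functions more generally), and degree-$k$ PTFs (via \cref{prop:linear transform ptfs}) are all invariant under full-rank linear transformations, the quantity $\mathsf{ns}_{r,\delta}(f^\block)$ is bounded by the same hypercube noise sensitivity as before, and \cref{lemma:noise sensitivity to polynomial regression} then produces the same $|\cF|$. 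With both hypotheses verified, the finite-distribution algorithms achieve exactly the parameters $n$ and $m$ computed in the continuous proofs.

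The only genuinely new subtlety, already anticipated in the design of \cref{lemma:finite brute force} and \cref{lemma:finite general algorithm}, is that the hypothesis returned by the inner learner lives on $[r]^d$, whereas in the finite setting $\oblock$ needs an auxiliary $z \in [0,1]^d$ to map $\bR^d$ into $[r]^d$, and different choices of $z$ can give different values. This is handled by the rounding step based on \cref{lemma:rounding}, which converts $x \mapsto \Exu{z}{g(\oblock(x,z))}$ into a deterministic hypothesis $g' : \bR^d \to \pmset$ at an additive cost of only $O\!\left(\frac{d \log r}{\epsilon^2}\right)$ samples and polynomial additional time, well within the stated bounds. The hardest conceptual piece has already been done upstream, namely showing each of the four classes is inflatable so that the elegant $\bbs$ calculations transfer to $\obbs$ without having to redo the geometry in the augmented setting.
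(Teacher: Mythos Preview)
Your overall approach matches the paper's proof closely: replace \cref{lemma:brute force} and \cref{lemma:general algorithm} by their augmented analogues, use \cref{prop:obbs bound} together with the inflatability lemmas to transfer the $\bbs$ bounds to $\obbs$, and use \cref{lemma:finite noise sensitivity} to re-derive the spanning sets. For convex sets, halfspaces, and PTFs this is correct and essentially identical to the paper.

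There is, however, a genuine error in your treatment of $k$-alternating functions. You assert that $k$-alternating functions (and monotone functions) are invariant under full-rank linear transformations, in order to apply \cref{lemma:finite noise sensitivity}. This is false: the natural partial order on $\bR^d$ is not preserved by general linear maps (e.g.\ a rotation of a monotone function is typically not monotone), so the hypothesis of \cref{lemma:finite noise sensitivity} fails for this class. Indeed, in the continuous case the paper never invokes \cref{lemma:noise sensitivity reduction} for $k$-alternating functions; instead, the proof of \cref{thm:k-alternating} argues directly that $f^\block$ is itself $k$-alternating (because $\blockpoint$ is coordinatewise order-preserving) and then applies \cref{lemma:k-monotone grid basis} to $f^\block$. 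That argument carries over verbatim to the augmented setting, since the augmented $\blockpoint$ is still coordinatewise non-decreasing, so $f^\block = f \circ \blockpoint$ remains $k$-alternating and \cref{lemma:k-monotone grid basis} supplies the same $\cF$. With this correction, your proof goes through.
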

\begin{proof}
	This follows from the same arguments as for each of those theorems,
	except with the bounds from Proposition~\ref{prop:obbs bound} and
	Lemmas~\ref{lemma:convex set inflatable},~\ref{lemma:halfspace inflatable},~\ref{lemma:ptf inflatable}, and~\ref{lemma:k-alternating inflatable} to
	bound $\obbs$; Lemma~\ref{lemma:finite noise sensitivity} to bound the
	noise sensitivity; and the improved general algorithms of Lemmas~\ref{lemma:finite brute force} and~\ref{lemma:finite general algorithm}.
\end{proof}

\section{Glossary}\label{glossary}

The notation $f(n) = \widetilde O(g(n))$ means that for some constant $c$, $f(n) = O(g(n)
\log^c(g(n)))$.

The \emph{natural ordering} on the set $[n]^d$ or $\bR^d$ is the partial order where for $x,y \in
\bR^d$, $x < y$ iff $\forall i \in [d], x_i \leq y_i$, and $x \neq y$.

\paragraph*{Property Testing.}
For a set $\cP$ of distributions over $X$ and a set $\cH$ of functions $X \to
\pmset$, a \textbf{distribution-free} property testing algorithm for $\cH$ under
$\cP$ is a randomized algorithm that is given a parameter $\epsilon > 0$. It has access to the input
probability distribution $\cD \in \cP$ via a \emph{sample oracle}, which returns an independent
sample from $\cD$. It has access to the input function $f : X \to \pmset$ via a \emph{query oracle},
which given query $x \in X$ returns the value $f(x)$.
A \textbf{two-sided} distribution-free testing algorithm must satisfy:
\begin{enumerate}
	\item If $f \in \cH$ then the algorithm accepts with probability at least $2/3$;
	\item If $f$ is $\epsilon$-far from $\cH$ with respect to $\mu$ then the algorithm
		rejects with probability at least $2/3$.
\end{enumerate}
A \textbf{one-sided} algorithm must accept with probability 1 when $f \in \cH$. An
\textbf{$(\epsilon_1,\epsilon_2)$-tolerant} tester must accept with probability at least $2/3$ when
$\exists h \in \cH$ such that $\Pru{x \sim \mu}{f(x) \neq h(x)} \leq \epsilon_1$ and reject when $f$
is $\epsilon_2$-far from $\cH$ with respect to $\mu$.

In the \textbf{query model}, the queries to the query oracle can be arbitrary. In the \textbf{sample
model}, the tester queries a point $x \in X$ if and only if $x$ was obtained from the sample oracle.

A tester in the query model is \textbf{adaptive} if it makes its choice of query based on the
answers to previous queries. It is \textbf{non-adaptive} if it chooses its full set of queries in
advance, before obtaining any of the answers.

The \textbf{sample complexity} of an algorithm is the number of samples requested from the sample
oracle. The \textbf{query complexity} of an algorithm is the number of queries made to the query
oracle.

\paragraph*{Learning.}
Let $\cH$ be a set of functions $X \to \pmset$ and let $\cP$ be a set of probability distributions
over $X$. A learning algorithm for $\cH$ under $\cP$ (in the \textbf{non-agnostic} or
\textbf{realizable}) model is a randomized algorithm that receives a parameter $\epsilon > 0$ and
has \emph{sample access} to an input function $f \in \cH$. Sample access means that the algorithm
may request an independent random example $(x,f(x))$ where $x$ is sampled from some input
distribution $\cD \in \cP$. The algorithm is required to output a function $g : X \to \pmset$ that,
with probability $2/3$, satisfies the condition
\[
  \Pru{x \sim \cD}{ f(x) \neq g(x) } \leq \epsilon \,.
\]

In the \textbf{agnostic} setting, the algorithm instead has sample access to an input distribution
$\cD$ over $X \times \zo$ whose marginal over $X$ is in $\cP$ (i.e.~it receives samples of the form
$(x,b) \in X \times \zo$). The algorithm is required to output a function $g : X \to \pmset$ that,
with probability $2/3$, satisfies the following condition: $\forall h \in \cH$,
\[
	\Pru{(x,b) \sim \cD}{g(x) \neq b} \leq \Pru{(x,b) \sim \cD}{h(x) \neq b} + \epsilon \,.
\]

A \textbf{proper} learning algorithm is one whose output must also satisfy $g \in \cH$; otherwise it
is \textbf{improper}.

\paragraph*{VC Dimension.} For a set $\cH$ of functions $X \to \pmset$, a set $S \subseteq X$
is \textbf{shattered} by $\cH$ if for all functions $f : S \to \pmset$ there is a function
$h \in \cH$ such that $\forall x \in S, h(x)=f(x)$. The \textbf{VC dimension} $\mathsf{VC}(\cH)$
of $\cH$ is the size of the largest set $S \subseteq X$ that is shattered by $\cH$.

\bibliographystyle{alpha}
\bibliography{references.bib}

\end{document}